\newtheorem{assumption}[theorem]{Assumption}
\newtheorem{fact}[theorem]{Fact}
\Crefname{equation}{Equation}{Equations}
\Crefname{assumption}{Assumption}{Assumptions}
\Crefname{fact}{Fact}{Facts}
\let\cref@old@algocf@nl@sethref\algocf@nl@sethref%
\renewcommand{\algocf@nl@sethref}[1]{%
  \cref@old@algocf@nl@sethref{#1}%
  \cref@constructprefix{AlgoLine}{\cref@result}%
  \@ifundefined{cref@AlgoLine@alias}%
    {\def\@tempa{AlgoLine}}%
    {\def\@tempa{\csname cref@AlgoLine@alias\endcsname}}%
  \xdef\cref@currentlabel{%
    [\@tempa][\arabic{AlgoLine}][\cref@result]%
    \csname p@AlgoLine\endcsname\csname theAlgoLine\endcsname}}%
\newcommand{\Ot}{\widetilde{O}}
\newcommand{\pmax}{p_{\max}}
\newcommand{\wmax}{w_{\max}}
\newcommand{\OPT}{\textup{OPT}}
\newcommand{\vopt}{\widetilde \OPT}
\newcommand{\mc}[1]{\mathcal{#1}}
\newcommand{\maxconv}[2]{\textsc{MaxConv}(#1, #2)\xspace}
\newcommand{\minconv}[2]{\textsc{MinConv}(#1, #2)\xspace}
\newcommand\restr[2]{{% we make the whole thing an ordinary symbol
  \left.\kern-\nulldelimiterspace % automatically resize the bar with \right
  #1 % the function
  \littletaller % pretend it's a little taller at normal size
  \right|_{#2} % this is the delimiter
  }}
\newcommand{\littletaller}{\mathchoice{\vphantom{\big|}}{}{}{}}
\newcommand{\Cl}{C^{(\ell)}}
\newcommand{\Tl}{T^{(\ell)}}
\newcommand{\Al}{A^{(\ell)}}
\newcommand{\Bl}{B^{(\ell)}}
\newcommand{\CUPL}{C^{(\ell +1)}}
\newcommand{\TUPL}{T^{(\ell+1)}}
\newcommand{\AUPL}{A^{(\ell+1)}}
\newcommand{\BUPL}{B^{(\ell+1)}}
\renewcommand{\subset}{\subseteq}
\renewcommand{\le}{\leqslant}
\renewcommand{\leq}{\leqslant}
\renewcommand{\ge}{\geqslant}
\renewcommand{\geq}{\geqslant}
\renewcommand{\epsilon}{\ensuremath\varepsilon}
\renewcommand{\phi}{\ensuremath{\varphi}}
\title{Even Faster Knapsack via Rectangular Monotone Min-Plus Convolution and Balancing}
\titlerunning{Faster Knapsack via Rectangular Monotone MinPlusConvolution}
\author{Karl Bringmann}{Saarland University and Max Planck Institute for Informatics, Saarland Informatics Campus, Saarbrücken, Germany \and \url{https://people.mpi-inf.mpg.de/~kbringma/}}{bringmann@cs.uni-saarland.de}{https://orcid.org/0000-0003-1356-5177}{}
\author{Anita Dürr}{Saarland University and Max Planck Institute for Informatics, Saarland Informatics Campus, Saarbrücken, Germany \and \url{https://www.mpi-inf.mpg.de/departments/algorithms-complexity/people/current-members/anita-duerr}}{aduerr@mpi-inf.mpg.de}{https://orcid.org/0000-0003-0440-5008}{}
\author{Adam Polak}{Bocconi University, Milan, Italy \and \url{https://adampolak.github.io/}}{adam.polak@unibocconi.it}{https://orcid.org/0000-0003-4925-774X}{Part of this work was done when Adam Polak was at Max Planck Institute for Informatics.}
\authorrunning{K. Bringmann, A. Dürr, and A. Polak}
\keywords{0-1-Knapsack problem, bounded monotone min-plus convolution, fine-grained complexity}
\begin{document}

\maketitle

\begin{abstract}
We present a pseudopolynomial-time algorithm for the Knapsack problem that has running time $\widetilde{O}(n + t\sqrt{p_{\max}})$, where $n$ is the number of items, $t$ is the knapsack capacity, and $p_{\max}$ is the maximum item profit. This improves over the $\widetilde{O}(n + t \, p_{\max})$-time algorithm based on the convolution and prediction technique by Bateni et al.~(STOC 2018). Moreover, we give some evidence, based on a strengthening of the Min-Plus Convolution Hypothesis, that our running time might be optimal.

Our algorithm uses two new technical tools, which might be of independent interest. First, we generalize the $\widetilde{O}(n^{1.5})$-time algorithm for bounded monotone min-plus convolution by Chi et al.~(STOC 2022) to the \emph{rectangular} case where the range of entries can be different from the sequence length. Second, we give a reduction from general knapsack instances to \emph{balanced} instances, where all items have nearly the same profit-to-weight ratio, up to a constant factor.

Using these techniques, we can also obtain algorithms that run in time
$\widetilde{O}(n + \textup{OPT}\sqrt{w_{\max}})$,
$\widetilde{O}(n + (nw_{\max}p_{\max})^{1/3}t^{2/3})$, and
$\widetilde{O}(n + (nw_{\max}p_{\max})^{1/3} \textup{OPT}^{2/3})$,
where $\textup{OPT}$ is the optimal total profit and $w_{\max}$ is the maximum item weight.
\end{abstract}

\section{Introduction}

In the Knapsack problem\footnote{Some related works refer to this problem as 0-1-Knapsack to distinguish it from its variants that allow picking an item multiple times in a solution, e.g., Multiple Knapsack or Unbounded Knapsack. In this paper we consider only the 0-1-Knapsack variant, hence we write Knapsack for short.} the input consists of a set of $n$ items, where item $i$ has weight $w_i \in \mathbb N$ and profit $p_i \in \mathbb N$, as well as a weight budget $t \in \mathbb N$ (also referred to as \emph{knapsack capacity}). The task is to compute the maximum total profit of any subset of items with total weight at most~$t$, i.e., we want to compute
$\OPT := \max\{ \sum_{i = 1}^n p_i x_i \mid x \in \{0, 1\}^n, \sum_{i = 1}^n w_i x_i \leq t\}$. 
Knapsack is one of the most fundamental problems in the intersection of computer science, mathematical optimization, and operations research.  
Since Knapsack is one of Karp's original 21 NP-complete problems \cite{Karp72}, we cannot hope for polynomial-time algorithms. However, when the input integers are small, we can consider pseudopolynomial-time algorithms where the running time depends polynomially on $n$ and the input integers. A well-known example is Bellman's dynamic programming algorithm that runs in time $O(n \cdot t)$, or alternatively in time $O(n \cdot \OPT)$~\cite{Bellman56}. 

Cygan et al.~\cite{CyganMWW19} and Künnemann et al.~\cite{KunnemannPS17} showed that under the Min-Plus Convolution Hypothesis there is no algorithm solving Knapsack in time $\Ot((n + t)^{2 - \epsilon})$ or $\Ot((n + \OPT)^{2 - \epsilon})$ for any constant $\epsilon > 0$. Hence in the regimes $t = \Theta(n)$ or $\OPT = \Theta(n)$ Bellman's dynamic programming algorithms are near-optimal. To overcome this barrier, recent works study the complexity of Knapsack in terms of two additional parameters: the maximum weight $\wmax$ and the maximum profit $\pmax$ of the given items. 
Note that we can assume without loss of generality that $\wmax \leq t$ and $\pmax \leq \OPT$. 
Clearly, by the same lower bounds as above there is no algorithm solving Knapsack in time $\Ot((n + \wmax)^{2 - \epsilon})$ or $\Ot((n + \pmax)^{2 - \epsilon})$ for any $\epsilon > 0$.
However, in certain regimes small polynomial dependencies on $\wmax$ and $\pmax$ can lead to faster algorithms compared to the standard dynamic programming algorithm. 
\cref{fig:table_knapsack} lists the results of prior work with this parameterization. To compare these running times, observe that we can assume without loss of generality that $t \leq n \cdot \wmax$ and $\OPT \leq n  \cdot \pmax$, since any feasible solution includes at most all $n$ items. We remark that most of the cited algorithms, including our contributions, are randomized. 

\begin{table}[h]
    \centering
    \caption{Pseudopolynomial-time algorithms for Knapsack.}
    \label{fig:table_knapsack}
    \footnotesize
    \begin{tabular}{ @{}ll@{} }
     \toprule
     \small{Reference} & \small{Running Time} \\
     \midrule
        Bellman \cite{Bellman56} & $O(n \cdot \min \{t, \OPT\})$ \\
        Pisinger \cite{Pisinger99} & $O(n \cdot \wmax \cdot \pmax)$ \\
        Kellerer and Pferschy \cite{KellererP04}, also \cite{BateniHSS18,AxiotisT19} & $\Ot(n + \min\{ t \cdot \wmax, \OPT \cdot \pmax \})$ \\
        Bateni, Hajiaghayi, Seddighin and Stein \cite{BateniHSS18} & $\Ot(n + t \cdot \pmax)$ \\
        Axiotis and Tzamos \cite{AxiotisT19} & $\Ot(n \cdot \min \{\wmax^2, \pmax^2 \})$ \\
        Polak, Rohwedder and W\k{e}grzycki \cite{PRW21} & $\Ot(n + \min \{\wmax^3, \pmax^3\})$ \\
        Bringmann and Cassis \cite{BringmannC22} & $\Ot(n + (t+\OPT)^{1.5})$ \\ 
        Bringmann and Cassis \cite{BringmannC23} & $\Ot(n \cdot \min \{\wmax \cdot \pmax^{2/3}, \pmax \cdot \wmax^{2/3}\})$ \\
        Jin \cite{Jin23} and He and Xu \cite{HeXu23} & $\Ot(n + \min\{\wmax^{5/2}, \pmax^{5/2}\}) $ \\
        Jin \cite{Jin23} & $\Ot(n \cdot \min\{\wmax^{3/2}, \pmax^{3/2}\}) $ \\
        Chen, Lian, Mao and Zhang \cite{CLMZ23} & $\Ot(n + \min \{ \wmax^{12/5}, \pmax^{12/5}\})$ \\
        Bringmann \cite{BringmannQuad23} and Jin \cite{JinQuad23} & $\Ot(n + \min\{\wmax^2, \pmax^2\})$ \\
        He and Xu \cite{HeXu23} & $\Ot(n^{3/2} \cdot \min\{\wmax, \pmax\})$ \\
        \cref{thm:knapsack_bmbm}, this work & $\Ot(n + t \sqrt{\pmax})$ \\
        \cref{thm:knapsack_bmbm_sym}, this work & $\Ot(n + \OPT \sqrt{\wmax} )$ \\
        \cref{thm:knapsack_bmdp}, this work & $\Ot(n + (n  \wmax \pmax)^{1/3} \cdot t^{2/3})$ \\
        \cref{thm:knapsack_bmdp_sym}, this work & $\Ot(n + (n  \wmax \pmax)^{1/3} \cdot \OPT^{2/3})$ \\
     \bottomrule
    \end{tabular}
\end{table}

\subsection{Our results}
Our main contribution is an $\Ot(n + t \sqrt{\pmax})$-time algorithm for Knapsack.

\begin{theorem}
\label{thm:knapsack_bmbm}
  There is a randomized algorithm for Knapsack that is correct with high probability and runs in time $\Ot(n + t \sqrt{\pmax})$.
\end{theorem}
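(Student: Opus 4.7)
The plan is to combine the two technical tools announced in the introduction: the rectangular bounded monotone min-plus convolution (our generalization of Chi et al.) and a reduction from arbitrary Knapsack instances to \emph{balanced} ones in which all items have profit-to-weight ratios within a constant factor. After an $\Ot(n)$ preprocessing step that discards trivial items (those with $w_i > t$ or $p_i = 0$), I would proceed in three stages.

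\emph{Balancing.} I would partition the items into $L = O(\log(\pmax \wmax))$ groups $G_1,\dots,G_L$ according to $\lfloor \log_2(p_i/w_i) \rfloor$, so that within each group $G_j$ all profit-to-weight ratios lie in an interval $[r_j, 2 r_j)$. For each group I would compute a DP table $f_j\colon \{0,\dots,t\} \to \{0, 1, \dots, \OPT\} \cup \{-\infty\}$ with $f_j(w)$ the maximum profit attainable from $G_j$ under weight budget $w$, and then recover the global DP table by max-plus convolving $f_1, \dots, f_L$ in sequence.

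\emph{Solving a balanced group.} I would build $f_j$ by divide-and-conquer on the items of $G_j$: split into halves, recurse, and combine via a max-plus convolution of the two half-tables. The key property of a balanced group is that any intermediate table $g$ on $\{0,\dots,\ell\}$ satisfies $r_j w / 2 \leq g(w) \leq 2 r_j w$ up to the profit of a single item, so the deviation $2 r_j w - g(w)$ is a monotone non-decreasing sequence of range $O(r_j \ell)$. This places each convolution in the bounded monotone regime, and I would invoke the rectangular Chi et al.\ subroutine in place of a generic min-plus convolution. The same near-linear envelope is preserved at the top-level merge of $f_1, \dots, f_L$, so the rectangular subroutine applies there as well.

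\emph{Main obstacle.} The hard part will be the accounting argument that makes all the convolutions together fit into $\Ot(t \sqrt{\pmax})$. At the top-level merge there are $L - 1$ calls on arrays of length $\leq t$ whose nontrivial range is $O(\pmax)$, so by the rectangular bound each costs $\Ot(t \sqrt{\pmax})$ and the total fits after absorbing $L$ into polylog factors. Inside a group, a convolution at recursion depth $d$ has length $\ell$ and value range $O(r_j \ell)$, contributing roughly $\Ot(\ell \sqrt{r_j \ell})$; using $r_j \wmax \leq \pmax$ and summing geometrically over levels of the recursion tree should telescope these contributions into the same $\Ot(t \sqrt{\pmax})$ budget. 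Tuning the divide-and-conquer recursion so that this telescoping is tight—without losing an extra factor of $L$ or of the recursion depth, and while keeping the deviations-from-linear-trend within the range for which the bounded monotone subroutine is efficient—is the key technical challenge, and it is precisely what motivates the rectangular (rather than square) version of bounded monotone min-plus convolution.
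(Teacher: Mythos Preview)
Your plan has a genuine gap in the accounting, and it stems from an overly optimistic bound on the value range of the arrays you convolve.

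Within a balanced group $G_j$ (ratios in $[r_j,2r_j)$), the profit table $g$ on $[0,\ell]$ has values in $[0,2r_j\ell]$, so the rectangular bounded monotone convolution costs $\Ot(\ell\sqrt{r_j\ell})$ as you say. But summing this over your recursion gives $\Ot(W_j^{3/2}r_j^{1/2})$ per group, where $W_j$ is the total weight you need to cover (clipped at $t$). Using $r_j\le\pmax/\wmax$ this is $\Ot(W_j^{3/2}\sqrt{\pmax/\wmax})$, and with $W_j$ as large as $t$ you get $\Ot(t^{3/2}\sqrt{\pmax/\wmax})$, which exceeds the target by a factor $\sqrt{t/\wmax}$. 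Concretely: take $n=t$ items of weight $1$ and profits in $\{\lceil\pmax/2\rceil,\pmax\}$; then a single group has $r_j\approx\pmax$, $W_j=t$, and your root convolution alone costs $\Ot(t\sqrt{\pmax t})$. Your attempt to shrink the range via the ``deviation'' $2r_jw-g(w)$ does not help: first, its range is still $\Theta(r_j\ell)$ (ratios span a factor of $2$, so no single linear function tracks $g$ to within $O(\pmax)$); second, it is not monotone in general---e.g.\ with items $(3,5),(3,5),(5,9)$ (all ratios in $[1,2)$) one has $g(7)=10$, $g(8)=14$, so $2w-g(w)$ drops from $4$ to $2$. The top-level merge across groups has the same problem: the $f_j$ have different linear trends $r_j$, so you cannot subtract a common one, and the raw value range is $\OPT$, not $\pmax$.

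The paper takes a different route that avoids this obstacle. Its balancing step (\cref{lem:reduction_balanced_knapsack}) is a proximity argument: starting from the maximum prefix solution, it shows that items with ratio far from the break ratio $\rho$ contribute only $O(\wmax)$ weight and $O(\pmax)$ profit to the symmetric difference with $\OPT$, so one reduces to a \emph{single} balanced instance rather than $O(\log)$ groups. For that instance (\cref{lem:knapsack_bmbm}), the algorithm randomly partitions the items into $2^q\approx t/\wmax\approx\OPT/\pmax$ parts and exploits Bernstein-type concentration: the optimal solution restricted to any part at level~$\ell$ has weight in an interval of length $\Ot(\sqrt{t\wmax/2^\ell})$ and profit in an interval of length $\Ot(\sqrt{\OPT\pmax/2^\ell})$. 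Thus at every node the array is both short and has small value range \emph{simultaneously}, and the rectangular convolution costs telescope to $\Ot(t\sqrt{\pmax})$. The randomness is essential precisely because it lets you restrict attention to a narrow weight window and a narrow profit window at once---something your deterministic divide-and-conquer cannot achieve.
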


Let us put this result in context. Bellman's algorithm and many other Knapsack algorithms in Table~\ref{fig:table_knapsack} run in pseudopolynomial time with respect to \emph{either} weights or profits. The first exception is Pisinger's $O(n\cdot\wmax\cdot\pmax)$-time algorithm~\cite{Pisinger99}, which offers an improvement in the regime where both weights and profits are small. Later, Bateni et al.~\cite{BateniHSS18} introduced the \emph{convolution and prediction} technique, which enabled them to improve over Pisinger's running time to $\Ot(n + t \pmax)$. Prior to our work, this was the best known pseudopolynomial upper bound in terms of $n$, $t$ and $\pmax$. In \cref{thm:knapsack_bmbm} we improve this running time by a factor $\sqrt{\pmax}$. We will discuss below that further improvements in terms of this parameterization seem difficult to obtain (see \cref{thm:lb}).

\subparagraph*{Further upper bounds.}
A long line of research~\cite{AxiotisT19,PRW21,Jin23,HeXu23,CLMZ23,BringmannQuad23,JinQuad23} recently culminated into an $\Ot(n+\wmax^2)$-time algorithm for Knapsack~\cite{BringmannQuad23, JinQuad23}, which matches the conditional lower bound ruling out time $\Ot((n + \wmax)^{2 - \epsilon})$ for any $\varepsilon > 0$~\cite{CyganMWW19,KunnemannPS17}. The biggest remaining open problem in this line of research is whether Knapsack can be solved in time $\Ot(n \cdot \wmax)$, which again would match the conditional lower bound and would be favourable if $n$ is smaller than $\wmax$. Our next result is a step in this direction: We design a Knapsack algorithm whose running time is the weighted geometric mean (with weights $1/3$ and $2/3$) of $\Ot(n \cdot \wmax)$ and the running time $\Ot(t \sqrt{\pmax})$ of~\cref{thm:knapsack_bmbm} (ignoring additive terms $\Ot(n)$).

\begin{theorem}
\label{thm:knapsack_bmdp}
    There is a randomized algorithm for Knapsack that is correct with high probability and runs in time $\Ot(n + (n \wmax \pmax)^{1/3} \cdot t^{2/3})$.
\end{theorem}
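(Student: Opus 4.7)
The plan is to extend the strategy behind \cref{thm:knapsack_bmbm} by introducing a chunking parameter $k$ that interpolates between block-level dynamic programming and global rectangular monotone min-plus convolution. Recall that \cref{thm:knapsack_bmbm} proceeds by (1) reducing a general Knapsack instance to $\Ot(1)$ balanced subinstances via the paper's balancing reduction, and then (2) on each such balanced subinstance using rectangular monotone min-plus convolution to compute the optimal profit table in $\Ot(t\sqrt{\pmax})$ time. To prove \cref{thm:knapsack_bmdp} I would keep step (1) unchanged and modify step (2) to exploit a tradeoff between up-front block preprocessing and the cost of the final convolutions.

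After reducing to a balanced subinstance in which all items share profit-to-weight ratio up to a constant factor, I would partition the $n' \le n$ items into $n'/k$ blocks of size $k$. For each block I would compute the Pareto set of attainable (weight, profit) pairs by a direct dynamic program in time $\Ot(k^2 \wmax)$, yielding a total preprocessing cost of $\Ot(nk\wmax)$. I would then merge the $n'/k$ per-block sequences pairwise in a balanced binary tree of depth $O(\log(n'/k))$. At each internal node I would truncate the children to the relevant weight range (capped by $t$) and combine them using the rectangular monotone min-plus convolution primitive developed earlier in the paper, exploiting the balancing guarantee that the profit range of each intermediate sequence is linearly controlled by its weight range. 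Choosing $k$ to balance the preprocessing cost against the dominant top-level convolution cost should yield the claimed $\Ot(n + (n\wmax\pmax)^{1/3}\,t^{2/3})$ bound, and indeed this running time is precisely the weighted geometric mean $\bigl(\Ot(n\wmax)\bigr)^{1/3}\bigl(\Ot(t\sqrt{\pmax})\bigr)^{2/3}$ that the introduction advertises.

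The main obstacle will be the tight accounting of the rectangular min-plus convolution costs across the $O(\log n)$ levels of the merge tree, and verifying that the resulting sum telescopes correctly to the target bound. In particular, one must invoke the convolution primitive in the regime where its rectangular improvement over the square case of Chi et al.\ actually kicks in, which hinges on the balancing reduction keeping the intermediate value ranges narrow enough at every level so that the $\sqrt{\cdot}$ savings in the primitive translate into the $1/3$ exponent on $\pmax$ and $\wmax$ in the final bound. A secondary point is combining the answers of the $\Ot(1)$ balanced subinstances into a single global knapsack solution, which I would address by a final series of rectangular monotone min-plus convolutions whose total cost is absorbed into the target bound.
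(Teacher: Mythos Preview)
Your high-level plan (balancing reduction, then a tunable base-case/combination tradeoff) matches the paper's, but the concrete instantiation has a genuine gap: a deterministic partition into blocks of size $k$ does not give small enough windows at the internal merge nodes. After merging $2^\ell$ blocks the intermediate sequence has index range $\min(2^\ell k\wmax, t)$ and, under balancedness, value range $\Theta(\min(2^\ell k\pmax, \OPT))$; summing the rectangular-convolution cost $\Ot(\text{length}\cdot\sqrt{\text{value range}})$ over all levels and nodes gives $\Ot(n\sqrt{t\wmax\pmax})$ \emph{independently of $k$} (the bottleneck sits at the level where the cap at $t$ first binds). Since $t\le n\wmax$, this is always at least $(n\wmax\pmax)^{1/3}t^{2/3}$, so no choice of $k$ brings the combination cost down to the target, and your full per-block DP of cost $\Ot(nk\wmax)$ has nothing to balance against.

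What the paper actually does is keep the \emph{random} splitting into $2^q$ groups from \cref{alg:knapsack_bmbm}, so that concentration (Bernstein) shrinks the relevant weight and profit windows at level $\ell$ to only $\Ot(\sqrt{t\wmax/2^\ell})$ and $\Ot(\sqrt{\OPT\pmax/2^\ell})$ --- this, not balancedness alone, is the mechanism that lets the rectangular savings compound across levels. The change relative to \cref{thm:knapsack_bmbm} is solely in the base case: instead of \cref{thm:Knapsack_SubsetSum}, it uses a He--Xu-style narrow-strip Bellman DP (\cref{lem:knapsack_newleaf}), which after a random permutation computes only an $\Ot(\sqrt{t\wmax/2^q})$-wide slice of the table and costs $\Ot(|\mc I_j^q|\sqrt{t\wmax/2^q})$ per group. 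The tunable parameter is $2^q$ (not a block size), set to $\Theta\bigl(n^{4/3}(\wmax/t)^{1/3}\pmax^{-2/3}\bigr)$ so that the base cost $n\sqrt{t\wmax/2^q}$ and the combination cost $t^{3/4}\pmax^{1/2}(\wmax 2^q)^{1/4}$ both equal $(n\wmax\pmax)^{1/3}t^{2/3}$.
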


We also show that one can change our previous two algorithms to obtain symmetric running times where weight and profit parameters are exchanged.

\begin{theorem}\label{thm:knapsack_bmbm_sym}
    There is a randomized algorithm for Knapsack that is correct with high probability and runs in time $\Ot(n + \OPT \sqrt{\wmax})$.
\end{theorem}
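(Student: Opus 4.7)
The plan is to mirror the proof of \cref{thm:knapsack_bmbm} with the roles of weights and profits exchanged. Knapsack admits two natural dynamic programming formulations. The \emph{weight-indexed} DP stores $T_W[w]$ = the maximum total profit using weight at most $w$; this is an array of length $O(t)$ whose per-item increments are bounded by $\pmax$. The \emph{profit-indexed} DP stores $T_P[p]$ = the minimum total weight needed to achieve profit at least $p$; this is an array of length $O(\OPT)$ whose per-item increments are bounded by $\wmax$. Adding an item is a $(\max, +)$-convolution with a one-spike function in the first case and a $(\min, +)$-convolution in the second case.

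\cref{thm:knapsack_bmbm} uses the weight-indexed DP together with the two ingredients highlighted in the abstract: the rectangular monotone min-plus convolution algorithm, and the reduction from general to \emph{balanced} instances where all items have profit-to-weight ratio within a constant factor. Both ingredients are symmetric with respect to weights and profits. Indeed, $(\max, +)$- and $(\min, +)$-convolutions reduce to one another by negating the inputs, so the rectangular monotone convolution algorithm applies to both variants; and the balancing condition depends only on the ratio $p_i/w_i$, which is preserved under the weight-profit swap (taking reciprocals turns a constant-factor window on $p_i/w_i$ into a constant-factor window on $w_i/p_i$).

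The plan for \cref{thm:knapsack_bmbm_sym} is therefore to rerun the algorithm of \cref{thm:knapsack_bmbm} on the profit-indexed formulation. Since the profit-indexed DP arrays have length $O(\OPT)$ with per-item entries bounded by $\wmax$, replacing $t$ by $\OPT$ and $\pmax$ by $\wmax$ throughout the analysis yields running time $\Ot(n + \OPT \sqrt{\wmax})$.

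The main thing to verify is that every step of the original proof---the preprocessing that produces a balanced instance, the divide-and-conquer recursion that amortizes the convolution costs, and the termwise bounds in the level-by-level accounting---carries over after the parameter swap. I expect this to be routine bookkeeping rather than a genuine obstacle, because both tools are by design symmetric in $w$ and $p$. The only mild subtlety is that the profit-indexed DP must allocate arrays of length $O(\OPT)$, so one needs a constant-factor upper bound on $\OPT$ up front; this can be obtained by a standard greedy $2$-approximation or by an exponential search doubling a guess for $\OPT$, both adding only polylogarithmic overhead that is absorbed by the $\Ot(n)$ term.
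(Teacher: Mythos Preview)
Your proposal is correct and follows essentially the same approach as the paper: the paper proves \cref{thm:knapsack_bmbm_sym} by combining the balancing reduction of \cref{lem:reduction_balanced_knapsack} with a symmetric version of the balanced-instance algorithm (\cref{lem:knapsack_bmbm_sym}, \cref{alg:knapsack_bmbm_sym}) that computes the weight sequence $\mc W_{\mc I}$ via min-plus convolutions in place of the profit sequence $\mc P_{\mc I}$ via max-plus convolutions, exactly as you outline. The paper handles the need for an a priori bound on $\OPT$ via the greedy approximation $\vopt$ of \cref{lem:approx_opt}, matching your suggestion.
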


\begin{theorem}\label{thm:knapsack_bmdp_sym}
    There is a randomized algorithm for Knapsack that is correct with high probability and runs in time $\Ot(n + (n \wmax  \pmax)^{1/3} \cdot \OPT^{2/3})$.
\end{theorem}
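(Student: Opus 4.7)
The plan is to derive this bound from Theorem~\ref{thm:knapsack_bmdp} by exploiting the weight--profit duality of Knapsack, mirroring exactly how Theorem~\ref{thm:knapsack_bmbm_sym} is obtained from Theorem~\ref{thm:knapsack_bmbm}. Knapsack admits two equivalent formulations: the \emph{weight-indexed} view, which computes for each weight budget $s \le t$ the maximum achievable profit, and the \emph{profit-indexed} view, which computes for each profit level $P \le \OPT$ the minimum total weight $W(P)$ needed to achieve profit $P$, and returns $\max\{P : W(P) \le t\}$. Under this swap, $t$ and $\OPT$ exchange roles, as do $\wmax$ and $\pmax$.

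My first step is to rerun the algorithm of Theorem~\ref{thm:knapsack_bmdp} under this duality. Its ingredients---a partitioning of items, the balancing reduction that groups items of nearly equal profit-to-weight ratio, and the rectangular monotone min-plus convolution subroutine---are all symmetric in weights and profits: balancing by profit-to-weight ratio is literally ratio-symmetric, and the convolution subroutine treats sequence length and value range as independent rectangular parameters. Running the dualized algorithm therefore yields the DP table $W(\cdot)$ within the running time bound of Theorem~\ref{thm:knapsack_bmdp} after the substitutions $t \mapsto \OPT$, $\wmax \mapsto \pmax$, and $\pmax \mapsto \wmax$. Since the factor $n \wmax \pmax$ is itself symmetric, we obtain exactly $\Ot(n + (n \wmax \pmax)^{1/3} \cdot \OPT^{2/3})$.

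The second step is to cope with the fact that $\OPT$ is not part of the input. I would use a standard doubling search: try guesses $\vopt = 2^i$ for $i = 0, 1, \ldots, \lceil \log_2(n \pmax) \rceil$, truncate the profit-indexed DP at $\vopt$, and return the largest profit whose associated minimum weight is at most $t$. The correct guess is the smallest $\vopt \ge \OPT$, and since the running time is geometric in $\vopt$ and the final iteration dominates, the $O(\log(n \pmax))$ doubling overhead is absorbed by the $\Ot(\cdot)$ notation.

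The main obstacle I anticipate is the careful verification that the bounded-monotone structure required by the rectangular min-plus convolution subroutine is preserved under the weight--profit swap. Concretely, one must check that the minimum-weight-for-given-profit vectors produced by the dualized DP remain bounded (by $n \wmax$) and monotone in the right direction, so that the convolution subroutine's preconditions are met. This is morally immediate from the inherent symmetry of the problem, but it should be verified in the details of the dualized reduction rather than simply invoked as a black box.
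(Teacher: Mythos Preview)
Your proposal is correct and follows essentially the same approach as the paper, which explicitly constructs the dual algorithm computing the minimum-weight sequence $\mc W_{\mc I}$ via min-plus convolutions and then combines it with the balancing reduction (see \cref{sec:sym_knapsack}). The only notable difference is that the paper avoids your doubling search entirely by using the greedy approximation $\vopt$ of \cref{lem:approx_opt}, which gives $\OPT \le \vopt \le \OPT + \pmax$ in $\Ot(n)$ time and lets the dual algorithm run once at the correct profit scale.
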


\subparagraph*{Lower bound?}

Finally, we give some argument why it might be difficult to improve upon
any of our running times
by a factor polynomial in any of the five parameters $n$, $\wmax$, $\pmax$, $t$ and $\OPT$. Specifically, we present a fine-grained reduction from the following variant of min-plus convolution.

\begin{restatable}[Bounded Min-Plus Convolution Verification Problem]{definition}{DefMPConvVer}
\label{def:mpconvver}
Given sequences $A[0 \ldots n-1], B[0 \ldots n-1]$, and $C[0 \ldots 2n-2]$ with entries in $\{0,1,\ldots,n\}$, determine whether for all $k$ we have $C[k] \le \min_{i+j=k} A[i] + B[j]$.
\end{restatable}

Min-plus convolution can be naively solved in time $O(n^2)$, and the Min-Plus Convolution Hypothesis postulates that this time cannot be improved to $O(n^{2-\varepsilon})$ for any $\varepsilon>0$ even for integer entries bounded by $M = \textup{poly}(n)$. For small $M$, min-plus convolution can be solved in time $\Ot(nM)$ using Fast Fourier Transform (FFT). Thus, $M=\Theta(n)$ is the smallest bound for which min-plus convolution conceivably might require quadratic time (although this is not asserted or implied by any standard hypothesis). This situation can be compared to the Strong 3SUM Hypothesis, which asserts hardness of the 3SUM problem with the smallest universe size that is not solved in subquadratic time by FFT.

Our reduction is not from the problem of computing the convolution, but only from the problem of verifying whether a given third sequence lower bounds the convolution element-wise. These two variants -- computation and verification -- are equivalent for the general unbounded min-plus convolution~\cite{CyganMWW19}, but no such equivalence is known for the bounded version (because the relevant reduction blows up the entries). 
We can show a reduction from the (potentially easier) verification problem.

\begin{restatable}{theorem}{ThmLB}
\label{thm:lb}
If Knapsack can be solved faster than the running time of any of \cref{thm:knapsack_bmbm,thm:knapsack_bmdp,thm:knapsack_bmbm_sym,thm:knapsack_bmdp_sym} by at least a factor polynomial in any of $n$, $\wmax$, $\pmax$, $t$, or $\OPT$, then Bounded Min-Plus Convolution Verification can be solved in time $O(n^{2-\varepsilon})$ for some $\varepsilon > 0$.
\end{restatable}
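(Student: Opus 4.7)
I would exhibit a fine-grained reduction from Bounded Min-Plus Convolution Verification (BMPCV) on length-$N$ inputs with entries in $\{0,\dots,N\}$ to a Knapsack instance whose parameters $n, \wmax, \pmax, t, \OPT$ are all positive polynomials in $N$ and are calibrated so that each of the four running times of \cref{thm:knapsack_bmbm,thm:knapsack_bmbm_sym,thm:knapsack_bmdp,thm:knapsack_bmdp_sym} evaluates to $\widetilde{\Theta}(N^2)$ on this instance. Because every Knapsack parameter is a positive power of $N$, any $P^\delta$-factor speed-up for some $P \in \{n,\wmax,\pmax,t,\OPT\}$ and constant $\delta > 0$ in any of the four bounds then yields, via the reduction, a BMPCV algorithm of running time $O(N^{2-\varepsilon})$ for some $\varepsilon>0$.

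\textbf{Parameter calibration.} Writing $n = N^a$, $\wmax = \pmax = N^b$, $t = \OPT = N^c$, the four bounds become $N^{c+b/2}$ (twice) and $N^{(a+2b+2c)/3}$ (twice); equating all four to $N^2$ reduces to $a + b = 2$ and $c = 2 - b/2$. I would work with the clean choice $a = b = 1$, $c = 3/2$, which gives target parameters $n = \widetilde{\Theta}(N)$, $\wmax = \pmax = \Theta(N)$, $t = \OPT = \Theta(N^{3/2})$, and simultaneously makes all four bounds $\widetilde{\Theta}(N^2)$ (the symmetric pairs match because the target is symmetric in $\wmax \leftrightarrow \pmax$ and $t \leftrightarrow \OPT$).

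\textbf{Constructing the reduction.} The starting point is the MinConv-Verification-to-Knapsack construction underlying the fine-grained equivalences of Cygan et al.~\cite{CyganMWW19} and K\"unnemann et al.~\cite{KunnemannPS17}: for each index $i$, create an A-item whose weight is a large type-marker plus $i$ and whose profit is $N - A[i]$; symmetrically B-items for $B[j]$; and a gadget encoding the $C[k]$ values into the capacity. Markers are calibrated so that the Knapsack optimum crosses a fixed threshold if and only if some $(i,j,k)$ with $i+j=k$ satisfies $A[i]+B[j] < C[k]$, i.e.\ witnesses a BMPCV violation. This plain reduction produces Knapsack parameters all of order $\Theta(N)$, which is only enough for an $\widetilde{\Theta}(N^{3/2})$ lower bound. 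To lift $t$ and $\OPT$ to $\Theta(N^{3/2})$ without inflating $\wmax,\pmax$, I would stretch the reduction by a factor $\sqrt N$, for instance by placing $\sqrt N$ independent copies of the base gadget inside a single Knapsack with disjoint marker scales (forcing any optimal solution to pick $\Theta(\sqrt N)$ items of weight $\Theta(N)$ each) or by splitting each index into two base-$\sqrt N$ coordinate items rigidly linked by a shared marker.

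\textbf{Main obstacle and conclusion.} The most delicate step is ensuring that this $\sqrt N$-fold amplification stays semantically tight: the stretched gadgets must not admit spurious near-optimal Knapsack solutions that mix items across different copies or coordinate positions in a way that fakes a violation, and the marker arithmetic must fit within $\wmax = O(N)$. I would handle this by placing marker weights on carefully chosen disjoint arithmetic progressions so that no cross-copy combination of items can match the capacity, paying at most a polylogarithmic overhead. Once the reduction is in place, the theorem follows by contrapositive: if Knapsack can be solved in time $\widetilde{O}(N^2)/P^\delta$ on the reduced instance for some $P = \Theta(N^\gamma)$ with $\gamma, \delta > 0$, then BMPCV is solved in $O(N^{2-\gamma\delta}) = O(N^{2-\varepsilon})$, as required.
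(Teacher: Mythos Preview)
Your parameter calibration is correct, but the route you sketch diverges substantially from the paper's proof and leaves a real gap at exactly the step you flag as the ``main obstacle''.

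The paper does \emph{not} try to build a single symmetric instance with $\wmax=\pmax=\Theta(N)$ and $t=\OPT=\Theta(N^{3/2})$. Instead it constructs two separate, \emph{asymmetric} instances: one with $\wmax,t=\Theta(N)$ and $\pmax,\OPT=\Theta(N^2)$ (which makes both $t\sqrt{\pmax}$ and $(n\wmax\pmax)^{1/3}t^{2/3}$ equal to $\Theta(N^2)$, handling \cref{thm:knapsack_bmbm,thm:knapsack_bmdp}), and the ``transposed'' instance with the roles of weights and profits swapped (handling \cref{thm:knapsack_bmbm_sym,thm:knapsack_bmdp_sym}). The enabling trick is a preprocessing step that you do not mention: first make the sequences monotone by setting $A[i]=a[i]+iN$, $B[j]=b[j]+jN$, $C[k]=c[k]+kN$. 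This blows the entries up to $\Theta(N^2)$, but in exchange monotonicity of $C$ lets the Knapsack capacity encode the relaxed constraint $i+j\ge k$ rather than $i+j=k$, so a \emph{single} capacity suffices. The resulting instance has only $4N-1$ items (one per index of $A$, $B$, $C$) with suitably offset weights and profits, and one argues directly that any optimal solution of profit above a fixed threshold must pick exactly one item of each type and witness a violation of~$(\star)$.

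Your ``stretching by $\sqrt N$'' is where I see a concrete problem. Placing $\sqrt N$ copies of a $\Theta(N)$-item base gadget yields $n=\Theta(N^{3/2})$ items, which breaks your own calibration: $(n\wmax\pmax)^{1/3}t^{2/3}$ then becomes $\Theta(N^{13/6})$, so an improvement to \cref{thm:knapsack_bmdp} or \cref{thm:knapsack_bmdp_sym} no longer yields a subquadratic BMPCV algorithm via this instance. Your alternative of splitting each index into two base-$\sqrt N$ coordinates is far from a proof: the predicate $A[i]+B[j]\ge C[i+j]$ couples $i$ and $j$ additively (with carries) and the values $A[i],B[j],C[k]$ are arbitrary, so there is no evident coordinate-wise decomposition, and ``disjoint arithmetic progressions for markers'' does not address this semantic coupling. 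The paper's two-instance approach sidesteps all of this: by accepting asymmetric parameters and covering the symmetric theorems with a second, transposed construction, it avoids any stretching and keeps the reduction a direct single-call gadget with $O(N)$ items.
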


Specifically, we show that if Knapsack with parameters $\wmax, t = \Theta(n)$ and $\pmax, \OPT = \Theta(n^2)$ can be solved in time $O(n^{2-\varepsilon})$, then so can Bounded Min-Plus Convolution Verification. The same holds for Knapsack with parameters $\wmax, t = \Theta(n^2)$ and $\pmax, \OPT = \Theta(n)$.

This gives some evidence that our running times achieved in \cref{thm:knapsack_bmbm,thm:knapsack_bmdp,thm:knapsack_bmbm_sym,thm:knapsack_bmdp_sym} are near-optimal. While this lower bound is not assuming a standard hypothesis from fine-grained complexity, it still describes a barrier that needs to be overcome by any improved algorithm. 

\subsection{Technical overview}
The algorithms in \cref{thm:knapsack_bmbm,thm:knapsack_bmdp,thm:knapsack_bmbm_sym,thm:knapsack_bmdp_sym} follow the \emph{convolve and partition} paradigm used in many recent algorithms for Knapsack and Subset Sum (see, e.g., \cite{Bringmann17,BringmannC22,BringmannC23,BateniHSS18}). 
Our general setup follows \cite{BringmannC23}: We split the items at random into $2^q$ groups. In the base case, for each group and each target weight $j$ we compute the maximum profit attainable with weight at most~$j$ using items from that group. These groups are then combined in a tree-like fashion by computing max-plus convolutions. A key observation is that those sequences are monotone non-decreasing with non-negative entries, and one can bound the range of entries. 
We deviate from \cite{BringmannC23} in the algorithms for solving the base case and for combining subproblems by max-plus convolution:
For the base case, we use improved variants of the Knapsack algorithms of~\cite{BringmannC22} or~\cite{HeXu23} to obtain \cref{thm:knapsack_bmbm,thm:knapsack_bmbm_sym} or \cref{thm:knapsack_bmdp,thm:knapsack_bmdp_sym}, respectively. 
For the combination by max-plus convolution, we use the specialized max-plus convolution algorithm that we discuss next.

\paragraph*{Rectangular Monotone Max-Plus Convolution}
The max-plus convolution of two sequences $A, B \in \mathbb Z^n$ is defined as the sequence $C \in \mathbb Z^{2n-1}$ such that $C[k] = \max_{i + j = k} \{A[i] + B[j]\}$. This is well-known to be equivalent to min-plus convolution, and is more relevant for Knapsack applications, therefore from now on we only consider max-plus convolution.
Despite the quadratic time complexity of max-plus convolution on general instances, there are algorithms running in strongly subquadratic time if we assume some structure on the input sequences, see \cite{ChanL15,ChiDXZ22_stocs}. 
In fact, fast algorithms for structured max-plus convolution are exploited in multiple Knapsack algorithms: Kellerer and Pferschy~\cite{KellererP04}, Axiotis and Tzamos~\cite{AxiotisT19} and Polak, Rohwedder, and W\k{e}grzycki~\cite{PRW21} use the SMAWK algorithm~\cite{AggarwalKMSW87}, which can be used to compute in linear time the max-plus convolution of two sequences where one is concave. Bringmann and Cassis~\cite{BringmannC23} develop a subquadratic algorithm to compute the max-plus convolution between two near-concave sequences, and use this algorithm to solve Knapsack in time $\Ot(n \cdot \min \{\wmax \cdot \pmax^{2/3}, \pmax \cdot \wmax^{2/3}\})$. Finally, another Knapsack algorithm of Bringmann and Cassis in \cite{BringmannC22} uses the algorithm due to Chi, Duan, Xie and Zhang~\cite{ChiDXZ22_stocs} that computes the max-plus convolution between monotone sequences of non-negative entries bounded by $O(n)$ in time $\Ot(n^{1.5})$.

To obtain our \cref{thm:knapsack_bmbm,thm:knapsack_bmdp,thm:knapsack_bmbm_sym,thm:knapsack_bmdp_sym} we exploit a modification of the algorithm of Chi, Duan, Xie and Zhang~\cite{ChiDXZ22_stocs}. In particular, the following theorem generalizes the result of \cite{ChiDXZ22_stocs} to monotone sequences with non-negative entries bounded by an arbitrary parameter $M$. 

\begin{restatable}[Slight modification of \cite{ChiDXZ22_stocs}]{theorem}{MPConv}
    \label{thm:MPConv}
    The min-plus or max-plus convolution of two monotone (either both non-decreasing or both non-increasing) sequences of length at most $n$ with entries in $\{0,1,\ldots,M\}$ can be computed by a randomized algorithm that is correct with high probability and runs in time $\Ot(n\sqrt{M})$.
\end{restatable}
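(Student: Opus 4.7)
Since min-plus and max-plus convolution are interreducible by negating and shifting the inputs and outputs (this preserves the entry-range bound up to a constant factor, and swaps non-decreasing with non-increasing monotonicity, both of which are accommodated by the statement), it suffices to prove the claim for the max-plus version.

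The plan is to follow the algorithm of Chi, Duan, Xie and Zhang almost verbatim, but treat the value-range bound as an independent parameter rather than letting it be tied to the sequence length. Their algorithm, as originally stated, computes the monotone max-plus convolution of length-$n$ inputs with entries in $\{0,\ldots,O(n)\}$ in time $\Ot(n^{1.5})$, using a combination of random sampling of witness positions and FFT-based witness recovery for the remaining positions, together with structural lemmas exploiting monotonicity and integrality. The exponent $3/2$ arises from balancing the sampling cost against the FFT cost; the relevant balancing parameter is essentially $\sqrt{R}$, where $R$ bounds the entry values. With $R = n$ one recovers $\Ot(n\sqrt{n}) = \Ot(n^{1.5})$, and the claim is that substituting $R = M$ throughout yields $\Ot(n\sqrt{M})$.

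Concretely, I would (i)~re-derive the running time of their algorithm with the entry-range bound abstracted as an independent parameter $R$, checking at each step that every runtime contribution is either $\Ot(n)$ or of the form $\Ot(n\sqrt{R})$; (ii)~verify that the structural lemmas (bounded witness gaps coming from monotonicity, and sparsity arguments coming from integrality of the entries) are sensitive only to the value range $R$ and the length $n$, with no hidden assumption that $R = \Theta(n)$; and (iii)~retune the FFT lengths, the sampling rate, and the block sizes in the helper subroutines so that they remain correct and efficient for arbitrary $M$, covering both the $M < n$ and $M > n$ regimes.

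The main obstacle lies in step~(iii). In the regime $M > n$, one must ensure that the FFT-based witness-recovery step, which encodes entries as polynomial coefficients of degree comparable to the value range, does not introduce an unwanted multiplicative factor; the standard fix is to split the value range into $O(M/n)$ blocks of width $n$ and run the helper FFT on each block, contributing only an $\Ot(M)$ additive overhead that is absorbed into $\Ot(n\sqrt{M})$ in the non-trivial regime $M = O(n^2)$. In the regime $M < n$, the sampling rate must be lowered so that the total cost of brute-force witness computation stays within $\Ot(n\sqrt{M})$, while the interpolation argument still covers all intermediate positions with high probability; this is a routine rebalancing. Once these parameter choices are reconciled with the structural lemmas, the $\Ot(n\sqrt{M})$ bound follows directly from the rebalanced analysis.
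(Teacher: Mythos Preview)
Your high-level plan---reparametrize the Chi--Duan--Xie--Zhang (CDXZ) analysis by an independent value bound $M$ and rebalance---is exactly what the paper does, and your reduction between min-plus/max-plus and non-decreasing/non-increasing is fine. But your description of the CDXZ convolution algorithm is off, and this leads you to propose fixes (value-range blocking, retuning a ``sampling rate'' over positions) that do not correspond to anything in the actual algorithm.

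The CDXZ convolution algorithm does not sample witness \emph{positions}. It samples a random prime $p \in [M^{\alpha}, 2M^{\alpha}]$, splits each entry into its quotient and remainder mod $p$, computes the quotient convolution $\widetilde C$ directly (using that the monotone inputs decompose into $O(M^{1-\alpha})$ constant intervals), and computes the remainder convolution $C^{*}$ by recursively halving the precision and at each level multiplying trivariate polynomials via FFT, filtering out false positives using the already-computed $\widetilde C$. Tracing the cost with $M$ kept separate from $n$ gives $\Ot(M^{2-2\alpha} + nM^{1-\alpha} + nM^{\alpha})$, and $\alpha = 1/2$ yields $\Ot(n\sqrt{M} + M)$ in expectation.

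The remaining issue is the additive $+M$ term. The paper handles this not by blocking the value range but by the one-line observation
\[
\min\{n^{2},\, n\sqrt{M} + M\} \;\le\; \Ot(n\sqrt{M}),
\]
i.e., fall back to the naive $O(n^{2})$ algorithm when $M > n^{2}$, and otherwise $M \le n\sqrt{M}$. A standard Las-Vegas-to-Monte-Carlo conversion then gives the high-probability guarantee. So your step~(iii) should be replaced by this much simpler argument; the blocking and ``sampling rate'' adjustments you describe are unnecessary and, as stated, do not map onto the algorithm's actual structure.
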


As a side result that might be of independent interest, we show that the assumption in \cref{thm:MPConv}, that both input sequences are monotone, can be replaced without loss of generality by the assumption that at least one input sequence is monotone, see \cref{thm:onetwomonotone}.

\begin{restatable}{theorem}{ThmOneTwoMinPlus}    
\label{thm:onetwomonotone}
    Suppose that there is an algorithm computing the max-plus convolution of two monotone non-decreasing sequences $A,B \in \{0, 1, \dots, M\}^n$ in time $T_2(n,M)$, and assume that $T_2(n,M)$ is monotone in $n$.
    Then there also is an algorithm computing the max-plus convolution of a monotone non-decreasing sequence $A \in \{0, 1, \dots, M\}^n$ and an arbitrary (i.e., not necessarily monotone) sequence $B \in \{0, 1, \dots, M\}^n$ in time $T_1(n,M)$ which satisfies the recurrence $T_1(n,M) \le 2\, T_1(n/2,M) + O(T_2(n,M))$.
    
    The same statement holds with ``non-decreasing'' replaced by ''non-increasing'', or with ``max-plus'' replaced by ``min-plus'', or both.
\end{restatable}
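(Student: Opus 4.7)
The plan is divide-and-conquer on $B$. Write $\star$ for max-plus convolution. Split $B$ into halves $B_L := B[0, \ldots, n/2-1]$ and $B_R := B[n/2, \ldots, n-1]$; then $(A \star B)[k] = \max((A \star B_L)[k],\, (A \star B_R)[k - n/2])$ for every valid $k$, and combining takes $O(n)$ time. It therefore suffices to handle each rectangular sub-convolution $A \star B_X$ (with $|A| = n$ monotone and $|B_X| = n/2$ arbitrary) in time $T_1(n/2, M) + O(T_2(n, M))$.

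For each such sub-problem I would split its output range $[0, 3n/2 - 2]$ at position $n - 1$. On the easy prefix $[0, n-1]$, the value $(A \star B_X)[k]$ is non-decreasing in $k$: no source $j \in [0, n/2-1]$ has been ``dropped'' yet (i.e., $k - j \le n - 1$), and each source's contribution $A[k-j] + B_X[j]$ is non-decreasing in $k$ by monotonicity of $A$. Hence on $[0, n-1]$ the convolution agrees with its prefix max, which equals $(A \star B_X')[k]$ for $B_X'[j] := \max_{j' \le j} B_X[j']$ (monotone). Padding $B_X'$ to length $n$ with its last value keeps it monotone and, using $A$ non-decreasing, does not change the output on $[0, n-1]$; so the easy prefix is produced by one call to $T_2(n, M)$.

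For the hard suffix $[n, 3n/2-2]$, I would use the substitution $k = n - 1 + k''$ with $k'' \in [1, n/2-1]$ and $j = k'' + j'$ to rewrite $(A \star B_X)[k]$ as a maximum in which only the second half $A_R := A[n/2, \ldots, n-1]$ of $A$ appears; a direct re-indexing then identifies this with $(A_R \star B_X)[n/2 - 1 + k'']$, an exact entry of the max-plus convolution of two length-$n/2$ sequences ($A_R$ monotone, $B_X$ arbitrary). One $T_1(n/2, M)$ call produces all hard-range values. Summing over the two halves of $B$ gives the recurrence. The non-increasing variant follows by reversing both input sequences, and the min-plus variant follows from the pointwise identity $\minconv{A}{B} = 2M - \maxconv{M-A}{M-B}$, which flips monotonicity direction and reduces to the already-treated cases. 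The hard part of the proof is spotting the structural split at position $n - 1$, after which one portion is handled by a single monotone-monotone call and the other reduces to a recursive call on just the second half of $A$; once this is noticed, verification is a mechanical index substitution.
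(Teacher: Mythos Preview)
Your proof is correct and rests on the same core observation as the paper: replacing $B$ by its prefix maxima $\textup{pm}(B)$ gives the correct max-plus convolution on any output index $k$ for which the witness-shifting argument stays within $A$'s index range, and the remaining output indices only involve the right half $A_R$ of $A$, yielding the recursion. The paper organizes this slightly differently---it splits both $A$ and $B$ into halves up front, handles $A_1 \star \textup{pm}(B_1)$ and $A_1 \star \textup{pm}(B_2)$ with two $T_2(n/2,M)$ calls, recurses on $A_2 \star B_1$ and $A_2 \star B_2$, and stitches the four pieces together via a four-case witness analysis---whereas you split only $B$ first, handle each rectangular sub-convolution $A \star B_X$ by an output-index split at $n-1$, pad $\textup{pm}(B_X)$ to length $n$ for a single $T_2(n,M)$ call on the prefix, and recurse on $A_R \star B_X$ for the suffix. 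Your packaging avoids the paper's four-case argument at the cost of calling $T_2$ on length-$n$ rather than length-$n/2$ inputs, which is immaterial since $T_2$ is assumed monotone in $n$; both routes yield the stated recurrence.
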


\paragraph*{Balancing}
In the above described Knapsack algorithm of \cref{thm:knapsack_bmbm,thm:knapsack_bmdp,thm:knapsack_bmbm_sym,thm:knapsack_bmdp_sym}, the sequences for which we want to compute the max-plus convolution are monotone non-decreasing and contain non-negative entries. To use \cref{thm:MPConv} for the computation of their max-plus convolution, we need to ensure that the entries also have bounded values. We will show that under the \emph{balancedness} assumption $t / \wmax = \Theta(\OPT / \pmax)$, and due to the random splitting, it suffices to consider entries in a small weight interval and in a small profit interval. In order to use this observation, the algorithms of \cref{thm:knapsack_bmbm,thm:knapsack_bmdp,thm:knapsack_bmbm_sym,thm:knapsack_bmdp_sym}  first reduce a Knapsack instance $(\mc I, t)$ to another instance $(\mc I', t')$ where the balancedness assumption is satisfied, and then solve Knapsack on this balanced instance $(\mc I', t')$. 

\begin{restatable}{lemma}{LemRedToBal}
\label{lem:reduction_balanced_knapsack}
Solving Knapsack can be reduced, in randomized time $\Ot(n + \wmax\sqrt{\pmax})$ (respectively $\Ot(n + \pmax \sqrt{\wmax})$), to solving a Knapsack instance for $O(\wmax)$ consecutive capacities (respectively $O(\pmax)$ consecutive profits), where the reduced instance satisfies $t / \wmax = \Theta(\OPT / \pmax)$ and consists of a subset of the items of the original instance; in particular, all relevant parameters $n$, $\wmax$, $\pmax$, $t$, and $\OPT$ of the reduced instance are no greater than those of the original instance.
\end{restatable}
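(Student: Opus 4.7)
The plan is to reduce the given Knapsack instance to a balanced one by classifying items according to their profit-to-weight ratios, fixing the extremes as ``always taken'' and ``never taken'' and leaving the middle ratios as the core. Concretely, a target density $\rho$ will be a constant-factor estimate of $\OPT/t$, and the core $\mc{I}'$ will contain exactly those items whose ratio $p_i/w_i$ lies in a fixed constant-size window around $\rho$; the uniform ratios in $\mc{I}'$ will automatically yield balancedness $t'/\wmax' = \Theta(\OPT'/\pmax')$. The overall Knapsack answer is then recovered by adding the profit of the always-taken set $S_+$ to the answer of the core instance at residual capacity $t - w(S_+)$, and the $O(\wmax)$ consecutive capacities in the lemma statement absorb the uncertainty about the boundary between $S_+$ and the core.

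First I would compute a constant-factor approximation $\vopt$ of $\OPT$ in $\Ot(n)$ time by the standard greedy $2$-approximation after sorting items by profit-to-weight ratio in $O(n \log n)$. Set $\rho := \vopt/t$ and partition the items by comparing $p_i/w_i$ to $c\rho$ and $\rho/c$ for a suitable constant $c$: $S_+$ consists of high-ratio items, $S_-$ of low-ratio items, and $\mc{I}'$ is everything in between. A standard exchange argument then shows that any optimal solution must include all of $S_+$ except items of total weight $O(\wmax)$ near the threshold (removing a higher-ratio item in favor of lower-ratio ones cannot improve profit), and must exclude all of $S_-$ (swapping in an item of ratio $\le \rho/c$ at the cost of a core item strictly decreases the profit). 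This produces a reduced instance on $\mc{I}'$ whose residual capacity ranges over a window of size $O(\wmax)$, exactly matching the $O(\wmax)$ consecutive capacities required by the lemma.

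For balancedness, every item of $\mc{I}'$ has ratio in $[\rho/c, c\rho]$, so $\pmax'/\wmax' = \Theta(\rho)$. Provided $t' \ge 2\wmax'$, a greedy argument certifies $\OPT' \ge \rho(t' - \wmax')/c$, while feasibility gives $\OPT' \le c\rho t'$; hence $\OPT'/\pmax' = \Theta(t'/\wmax')$ as required. The alternative case $t' < 2\wmax'$ corresponds to a small instance which can be dispatched directly within the same time budget. The reduced instance inherits all parameters from the original, since $\mc{I}'$ is a subset of the original items and $t' \le t$ after subtracting $w(S_+)$.

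The main obstacle is refining $\vopt$ enough to guarantee that the boundary between $S_+$ and $\mc{I}'$ is pinned down up to only $O(\wmax)$ weight: a loose $\vopt$ would either misclassify items from the core as being in $S_+$ (breaking the $O(\wmax)$ slack) or leave the core unbalanced. To this end I would exactly solve the Knapsack sub-instance restricted to items whose ratios are close to the $S_+/\mc{I}'$ threshold, whose capacity is $O(\wmax)$ and whose profit range is $O(\pmax)$. This amounts to a bounded monotone max-plus convolution on sequences of length $O(\wmax)$ with entries $O(\pmax)$, which \cref{thm:MPConv} solves in $\Ot(\wmax \sqrt{\pmax})$ time; together with $\Ot(n)$ preprocessing this gives the claimed reduction time $\Ot(n + \wmax\sqrt{\pmax})$. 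The symmetric statement, with $\wmax$ and $\pmax$ interchanged, follows by the same argument with the roles of profits and weights swapped throughout.
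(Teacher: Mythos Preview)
Your high-level plan---partition items by profit-to-weight ratio around a target density $\rho$, take the medium-ratio items as the balanced core, and handle the extremes separately---is exactly what the paper does. But two steps in your execution do not go through.

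First, the assertion that any optimal solution ``must exclude all of $S_-$'' is false, and the exchange argument you sketch does not prove it: if the optimal solution already uses all available good and medium items and still has leftover capacity, it can profitably add a low-ratio item with nothing to swap out. What \emph{is} true (and what the paper proves as \cref{cla:smalldelta}) is that the bad items appearing in an optimal solution have total weight $O(\wmax)$ and total profit $O(\pmax)$; symmetrically, the good items \emph{missing} from the optimal have the same bounds. Consequently your recovery formula $p(S_+)+\mc P_{\mc I'}[t-w(S_+)+\delta]$ is incorrect on both ends: it ignores the possible $O(\pmax)$ profit from bad items, and varying only the capacity by $O(\wmax)$ does not compensate for the $O(\pmax)$ profit you must \emph{subtract} when some of $S_+$ is dropped. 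The paper fixes both by solving a small Knapsack on $B$ (capacity $O(\wmax)$, profits capped at $O(\pmax)$) and a ``reverse'' Knapsack on $G$ (least-profitable subset of given weight), and then combining the three pieces with two bounded monotone max-plus convolutions. This combination, not any refinement of $\vopt$, is where the $\Ot(\wmax\sqrt{\pmax})$ cost actually comes from.

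Second, your balancedness argument for the core needs $w(\mc I')\ge t'$ so that greedy can fill the knapsack and give $\OPT'\ge(\rho/c)(t'-\wmax')$; with $t'=t-w(S_+)$ this is not guaranteed. The paper avoids this by taking $t'=w(P\cap M)$, the weight of the medium items inside the maximum prefix solution, which automatically ensures $w(M)\ge t'$.
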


\subsection{Outline}

After preliminaries in \cref{sec:preliminaries}, in \cref{sec:knapsack_algos} we present two Knapsack algorithms corresponding to \cref{thm:knapsack_bmbm,thm:knapsack_bmdp} that assume the balancedness assumption. 
We justify the balancedness assumption in \cref{sec:reduction} by proving \cref{lem:reduction_balanced_knapsack}.
In \cref{sec:lower_bound} we prove our conditional lower bound (\cref{thm:lb}). 
In \cref{sec:1to2MinConv} we prove the side result that in \cref{thm:MPConv} the assumption that both input sequences are monotone can be replaced without loss of generality by the assumption that at least one input sequence is monotone.

\cref{sec:sym_knapsack} contains variations of the Knapsack algorithms of \cref{sec:knapsack_algos} corresponding to \cref{thm:knapsack_bmbm_sym,thm:knapsack_bmdp_sym}. We explain how the result of \cite{ChiDXZ22_stocs} generalizes to \cref{thm:MPConv} in \cref{sec:adapt_chieetal_MPConv}. In \cref{sec:adapt_subsetsum_knapsack} we discuss how to derive from \cite{BringmannC22} the Knapsack algorithms (\cref{thm:Knapsack_SubsetSum,thm:Knapsack_SubsetSum_sym}) used in \cref{sec:knapsack_algos,sec:sym_knapsack}.

\section{Preliminaries}\label{sec:preliminaries}

We use the notation $\mathbb N = \{0, 1, 2, \dots\}$ and define $[n] := \{ 1, 2, \dots, n\}$ for $n \in \mathbb N$. 
Let $A[i_s \dots i_f]$ be an integer array of length $i_f -i_s + 1$ with start index $i_s$ and end index $i_f$.
We interpret out-of-bound entries as $-\infty$, and thus, when it is clear from context, simply denote the array $A[i_s \dots i_f]$ by $A$.
Then $-A$ is the entry-wise negation of $A$.
We call $A$ \emph{monotone non-decreasing} (respectively \emph{non-increasing}), if for every $i, j$ such that $i_s \leq i \leq j \leq i_f$ we have $A[i] \leq A[j]$ (resp. $A[i] \geq A[j]$). $A$ is \emph{monotone} if it is either monotone non-decreasing or monotone non-increasing.

\begin{definition}[Restriction to index and entry interval]
Suppose that $A$ is monotone and consider intervals $I \subset \mathbb N$ and $V \subset \mathbb Z$. 
We define the operation $D \gets A[I ; V]$ as follows. If there exist no index $i \in I$ with $A[i] \in V$, then set $D$ to the empty array. Otherwise, let $i_{\min} := \min\{i \in I : A[i] \in V\}$ and $i_{\max} := \max \{i \in I : A[i] \in V \}$, and set $D$ to the subarray $A[i_{\min} \dots i_{\max}]$. Note that since $A$ is monotone, for every $i \in \{i_{\min}, \dots, i_{\max}\}$ we have $A[i] \in V$. Thus $A[I ; V]$ returns the subarray of $A$ with indices in $I$ and values in $V$.

We sometimes abbreviate $A[\{0,\ldots,i\}; \{0,\ldots,v\}]$ by $A[0 \ldots i; 0 \ldots v]$.
\end{definition}

\subparagraph*{Max-plus convolution.}
Let $A[i_s \dots i_f]$ and $B[j_s \dots j_f]$ be two integer arrays of length $n := i_f - i_s + 1$ and $m := j_f -j_s + 1$, respectively. Assume without loss of generality that $n \geq m$.
The max-plus convolution problem on instance $(A,B)$ asks to compute the finite values of the array $C := \maxconv{A}{B}$, which is defined as $C[k] := \max_{i + j = k} \{A[i] + B[j]\}$ for every $k \in \mathbb N$; here $i,j$ range over all integers with $i+j=k$. Note that $C[k]$ is finite only for $k \in \{i_s + j_s, \dots, i_f + j_f \}$.

The min-plus convolution problem is defined analogously by replacing max by min. Note that the two operations are equivalent since $\minconv{A}{B} = - \maxconv{{-A}}{{-B}}$. In the context of min-plus convolution, we interpret out-of-bound entries as $\infty$ instead of $-\infty$.

When the sequences $A[i_s \dots i_f]$ and $B[j_s \dots j_f]$ are either both monotone non-decreasing or both monotone non-increasing, and with values contained in $\{0, 1, \dots, M\}$, for some integer $M$, then the problem of computing $\maxconv{A}{B}$ is called the \emph{bounded monotone} max-plus convolution problem. We call the general case with arbitrary $M$ \emph{rectangular}, as opposed to the \emph{square} bounded monotone max-plus convolution where $M = \Theta(n)$.
Chi et al.~\cite{ChiDXZ22_stocs} showed that square bounded monotone max-plus convolution can be solved in time $\Ot(n^{1.5})$. By slightly adapting their algorithm, we show in 
\cref{sec:adapt_chieetal_MPConv} that rectangular bounded monotone max-plus convolution can be solved in time $\Ot(n \sqrt{M})$.

\MPConv*

\subparagraph*{Knapsack.}
The Knapsack problem is defined as follows.
Let $\mc I = \{(w_1, p_1), (w_2, p_2), \allowbreak \dots, (w_n, p_n)\}$ be a (multi-)set of $n$ items, where item $i$ has weight $w_i$ and profit $p_i$. Let $t \in \mathbb N$ be a weight capacity. The goal is to compute $\OPT = \max \sum_{i = 1}^n p_i x_i$ subject to the constraints $x \in \{0,1\}^n$ and $\sum_{i= 1}^n w_i x_i \leq t$.
We denote by $\wmax = \max_i w_i$ the maximum weight and by $\pmax = \max_i p_i$ the maximum profit of the items in $\mc I$. Note that by removing items that have weight larger than $t$ we can assume without loss of generality that $\wmax \leq t$. Then every single item is a feasible solution, so $\pmax \leq \OPT$. 
If $t \ge n \wmax$ then all items can be picked in a solution and the result is $\OPT = \sum_i p_i$, so the instance is trivial; therefore we can assume without loss of generality that $t < n \wmax$.
Since any feasible solution contains at most all $n$ items we also have $\OPT \leq n \cdot \pmax$. We can also assume that $n \geq 10$, since for $n = O(1)$ a standard $O(2^n)$-time algorithm runs in time $O(1)$. 

We identify each item $(w_i, p_i) \in \mc I$ with its index $i \in [n]$ so that any subset of items $\mc J \subset \mc I$ can be identified with the set of indices $S \subseteq [n]$ such that $\mc J = \{ (w_i, p_i) : i \in S \}$. With slight abuse of notation we sometimes write $\mc J = S$. We define the partial weight and partial profit functions $w_{\mc J}(x) := \sum_{i \in \mc J} w_i x_i$ and $p_{\mc J}(x) := \sum_{i \in \mc J} p_i x_i$ for $\mc J \subset \mc I$. We also define the profit sequence $\mc P_{\mc J}[\cdot]$ such that
$$
\mc P_{\mc J}[k] = \max\{ p_{\mc J}(x) \ \vert\ x\in \{0, 1\}^n, w_{\mc J}(x) \leq k\}
$$
for any $k \in \mathbb N$. Note that the task is to compute $\OPT = \mc P_{\mc I}[t]$.

\subparagraph*{Computing $\boldsymbol{\mc P_{\mc I}}$.}
A standard way to compute (part of) the profit sequence $\mc P_{\mc I}$ is to use dynamic programming:

\begin{theorem}[Bellman~\cite{Bellman56}]\label{thm:Knapsack_DP}
    Given a Knapsack instance $(\mc I, t)$ and $k \in \mathbb N$, the sequence $\mc P_{\mc I}[0 \dots k]$ can be computed in time $O(|\mc I| \cdot k)$.
\end{theorem}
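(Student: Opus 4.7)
The plan is to give the classical Bellman dynamic program, phrased in our notation. I would define, for each $i \in \{0,1,\dots,n\}$ and $j \in \{0,1,\dots,k\}$, the quantity $f(i,j) := \mc P_{\{1,\dots,i\}}[j]$, i.e., the maximum profit attainable using a subset of the first $i$ items with total weight at most $j$. The observation is that any optimal solution for $f(i,j)$ either does not use item $i$ (in which case it is an optimal solution for $f(i-1,j)$) or uses item $i$ (which is only possible if $w_i \le j$, and removing it leaves an optimal solution of weight at most $j - w_i$ for the first $i-1$ items). This yields the recurrence
\[
f(i,j) \;=\; \max\bigl\{\, f(i-1,j),\; f(i-1, j - w_i) + p_i \,\bigr\}
\]
where we interpret $f(i-1, j - w_i)$ as $-\infty$ whenever $j < w_i$, together with the base case $f(0,j) = 0$ for all $j \in \{0,\dots,k\}$.

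Next, I would argue correctness by induction on $i$. For $i = 0$ the set of items is empty, so the only feasible $x$ is the zero vector and $f(0,j) = 0$. Assuming correctness for $i-1$, both candidates on the right-hand side of the recurrence correspond to feasible configurations for $f(i,j)$, and conversely any feasible configuration either sets $x_i = 0$ or $x_i = 1$, matching one of the two cases. Hence $f(i,j) = \mc P_{\{1,\dots,i\}}[j]$, and in particular $f(n,j) = \mc P_{\mc I}[j]$ for every $j \in \{0,\dots,k\}$.

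For the running time, I would store the table $f$ row by row: given row $i-1$, row $i$ is computed in $O(k)$ time because each entry requires $O(1)$ work (one comparison and at most one addition, with the out-of-range access returning $-\infty$). Iterating over $i \in \{1,\dots,n\}$ gives total time $O(n \cdot k) = O(|\mc I| \cdot k)$, and reading off $\mc P_{\mc I}[0 \dots k]$ from the last row is within this budget. No step of this proof is a real obstacle; the only minor care needed is the convention for out-of-bound indices, which we already fixed in the preliminaries.
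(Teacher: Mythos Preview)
Your proof is correct and is exactly the classical Bellman dynamic program. The paper does not actually prove this theorem (it is stated as a cited classical result), but your argument matches the one-line proof the paper gives for the analogous statement about $\mc W_{\mc I}$ in the appendix, so there is nothing to add.
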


Bringmann and Cassis exploit in \cite{BringmannC22} the fact that $\mc P_{\mc I}$ is monotone non-decreasing. They show that one can compute $\mc P_{\mc I}[0 \dots j; 0 \ldots j]$ in roughly the same time as it takes to compute a square bounded monotone max-plus convolution of length $j$. In 
\cref{sec:adapt_subsetsum_knapsack} 
we slightly generalize their algorithm so that it computes the entries of $\mc P_{\mc I}[0 \dots j; 0 \dots v]$. The modified algorithm uses rectangular instead of square bounded monotone max-plus convolutions. Combining the result with \cref{thm:MPConv}, we prove the following theorem in 
\cref{sec:adapt_subsetsum_knapsack}.

\begin{theorem}[Slight modification of \cite{BringmannC22}]\label{thm:Knapsack_SubsetSum}
    Given a Knapsack instance $(\mc I, t)$ and $v \in \mathbb N$, the sequence $\mc P_{\mc I}[0 \ldots t; 0 \ldots v]$ can be computed by a randomized algorithm that is correct with high probability and runs in time $\Ot(n + t \sqrt{v})$. 
\end{theorem}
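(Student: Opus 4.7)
The plan is to follow the Knapsack algorithm of Bringmann and Cassis~\cite{BringmannC22}, which computes $\mc P_{\mc I}[0 \ldots j; 0 \ldots j]$ in time roughly that of a single square bounded monotone max-plus convolution of length $j$, and to rectangularize it by substituting the rectangular version from \cref{thm:MPConv} as the convolution subroutine. I would first preprocess in time $O(n)$ by discarding every item with $p_i > v$: because profits are non-negative, no subset containing such an item has total profit in $[0,v]$, so removing them preserves $\mc P_{\mc I}[0 \ldots t; 0 \ldots v]$ and ensures $\pmax \le v$. I would then run the [BringmannC22] tree-based algorithm, which recursively partitions items and combines partial profit sequences $\mc P_{\mc J}$ via max-plus convolution---this is valid since for disjoint $\mc I, \mc J$ the sequence $\mc P_{\mc I \cup \mc J}$ equals the max-plus convolution of $\mc P_{\mc I}$ and $\mc P_{\mc J}$, by monotonicity and non-negativity of profits. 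At every node of the computation tree I would truncate the output to indices in $[0,t]$ and values in $[0,v]$; monotonicity and non-negativity guarantee that any entry contributing to $\mc P_{\mc I}[0 \ldots t; 0 \ldots v]$ decomposes into two non-negative child entries each lying in $[0,v]$, so no relevant information is lost.

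The crucial change is the convolution subroutine: each combination step is now a rectangular monotone max-plus convolution of two non-decreasing sequences of length at most $t$ with entries in $\{0, 1, \ldots, v\}$, which by \cref{thm:MPConv} runs in $\Ot(\ell \sqrt{v})$ time for sequences of length $\ell$, in place of the $\Ot(\ell^{1.5})$ time incurred by the original square-case algorithm. Retracing the length accounting of [BringmannC22]---which bounds the total length of all convolution inputs across the algorithm by $\Ot(t)$---this yields total convolution cost $\Ot(t\sqrt{v})$. Combined with the $O(n)$ preprocessing and the $\Ot(n)$ base case (a Bellman-style DP on small subproblems, where each item's singleton sequence has length at most $\wmax+1$), we obtain the claimed $\Ot(n + t\sqrt{v})$ runtime.

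The main obstacle is confirming that the [BringmannC22] algorithm as written performs only max-plus convolutions between monotone non-decreasing sequences, so that \cref{thm:MPConv} can be plugged in as a black box, and that the length accounting remains valid once we enforce the smaller value cap $v$ rather than the original cap of $j$. Both should be routine since the profit sequences $\mc P_{\mc J}$ are monotone non-decreasing by definition, and truncation in the value coordinate can only shrink sequence lengths relative to the analysis in [BringmannC22]; nevertheless, verifying these details carefully is exactly what \cref{sec:adapt_subsetsum_knapsack} is devoted to.
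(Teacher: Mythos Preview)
Your proposal is correct and takes essentially the same approach as the paper: plug the rectangular convolution of \cref{thm:MPConv} into the Bringmann--Cassis framework while truncating profits to $[0,v]$ throughout, and then read off $\Ot(n + t\sqrt{v})$ from the existing length accounting. Your sketch of the underlying algorithm is slightly imprecise---the actual \cite{BringmannC22} routine first partitions items into dyadic weight/profit groups $G_{(a,b)}$ and uses random splitting plus color coding within each group, rather than a single recursive tree with a Bellman-style base case---but this does not affect the validity of your plan, and \cref{sec:adapt_subsetsum_knapsack} carries out precisely the verification you anticipate.
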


\subparagraph*{Approximating $\OPT$.}
We use the following variant of the greedy algorithm for Knapsack. 
\begin{lemma}[e.g.~{\cite[Theorem 2.5.4]{KPP04}}]\label{lem:approx_opt}
    Given a Knapsack instance $(\mc I, t)$, one can compute $\vopt \in \mathbb N$ such that $\OPT \leq \vopt \leq \OPT + \pmax$ and $\pmax \leq \vopt \leq n \cdot \pmax$ in $\Ot(n)$ time.
\end{lemma}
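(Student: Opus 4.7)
The plan is to use the classical Dantzig greedy/LP-relaxation analysis of Knapsack. First I would sort the items in non-increasing order of the profit-to-weight ratio $p_i/w_i$ (ties broken arbitrarily); this takes $O(n \log n) = \Ot(n)$ time. If $\sum_{i=1}^n w_i \le t$, the instance is trivial and returning $\vopt := \sum_{i=1}^n p_i = \OPT$ satisfies all claimed bounds. Otherwise, scan the sorted items maintaining the cumulative weight and identify the \emph{break item}: the smallest index $k$ such that $\sum_{i=1}^{k} w_i > t$. This step runs in $O(n)$ time.

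The standard LP-relaxation argument then gives
\[
    \OPT \;\le\; \sum_{i=1}^{k-1} p_i \,+\, p_k \cdot \frac{t - \sum_{i=1}^{k-1} w_i}{w_k} \;\le\; \sum_{i=1}^{k} p_i,
\]
since the LP optimum upper-bounds $\OPT$ and is attained by taking items $1,\dots,k-1$ integrally and item $k$ fractionally. Conversely, the set $\{1,\dots,k-1\}$ is a feasible integral solution of profit $\sum_{i=1}^{k-1} p_i$, so
\[
    \textstyle \sum_{i=1}^{k} p_i \,-\, \pmax \;\le\; \sum_{i=1}^{k} p_i - p_k \;=\; \sum_{i=1}^{k-1} p_i \;\le\; \OPT.
\]

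Finally I would set $\vopt := \min\bigl\{\sum_{i=1}^{k} p_i,\ n \pmax\bigr\}$ and verify the four claimed inequalities. The bound $\vopt \ge \OPT$ uses both $\sum_{i=1}^k p_i \ge \OPT$ (from the LP) and $n\pmax \ge \OPT$ (at most $n$ items, each of profit at most $\pmax$). The bound $\vopt \le \OPT + \pmax$ follows from the first term of the minimum, and $\vopt \le n\pmax$ is explicit. Finally, $\vopt \ge \pmax$ follows from $\vopt \ge \OPT \ge \pmax$, where the last inequality holds because the preliminaries guarantee $\wmax \le t$, making any single item a feasible solution. There is no substantive obstacle here: the running time is dominated by the sort, and the numerical bounds are a textbook consequence of Dantzig's analysis, matching the reference given in the statement.
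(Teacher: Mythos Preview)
Your proof is correct and follows essentially the same approach as the paper: both sort by profit-to-weight ratio, identify the break item, and set $\vopt$ to the profit of the greedy prefix plus the break item, using the LP relaxation for the upper bound on $\OPT$ and feasibility of the prefix for the lower bound. The only cosmetic differences are that the paper writes $\vopt := \sum_{i=1}^{\min\{i^*+1,n\}} p_i$ (with $i^*=k-1$) rather than handling the trivial case separately, and it omits your $\min$ with $n\pmax$, which is redundant since $\sum_{i=1}^k p_i \le n\pmax$ automatically.
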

\begin{proof}
The greedy algorithm works as follows. Sort and relabel the elements in decreasing order of profit-to-weight ratio such that $p_1 / w_1 \geq  p_2 / w_2  \geq \cdots \geq p_n / w_n$.
Select the maximum prefix of items $\{1,2,\ldots,i^*\}$ such that $\sum_{i=1}^{i^*} w_i \le t$. 
We define $\vopt := \sum_{i=1}^{\min\{i^* + 1, n\}} p_i$.

The fractional solution $x^{\textup{LP}}$ which fully selects the items in $\{1,2,\ldots,i^*\}$ and selects a $(t - w_{\mc I}(x))/w_{i^* + 1}$-fraction of item $i^* + 1$ is the optimal solution to the linear programming relaxation of the Knapsack problem (see \cite[Theorem 2.2.1]{KPP04}). Thus, $\OPT \leq p_{\mc I}(x^{\textup{LP}}) \le \vopt$. Since we assumed without loss of generality that $\wmax \le t$, each single item fits into the knapsack, which implies $\pmax \le \OPT \le \vopt$.
Since the solution $\{1,2,\ldots,i^*\}$ is feasible and item $i^* + 1$ has profit at most $\pmax$, we have $\vopt \le \OPT + \pmax$. Finally, we have $\vopt \le \sum_{i=1}^n p_i \le n \cdot \pmax$. 
\end{proof}

\subparagraph*{Pareto optimum of $\boldsymbol{\mc P_{\mc I}}$.}
The sequence $\mc P_{\mc I}$ is monotone non-decreasing, so we can define the \emph{break points} of $\mc P_{\mc I}$ as the integers $k \in \mathbb N$ such that $\mc P_{\mc I}[k - 1] < \mc P_{\mc I}[k]$. In particular, $\mc P_{\mc I}$ is constant between two break points, and thus it is enough to focus on the values taken at break points of $\mc P_{\mc I}$. 
For every break point $k \in \mathbb N$, there exists $x \in \{0, 1\}^n$ with $w_{\mc I}(x) = k$ and $\mc P_{\mc I}[k] = p_{\mc I}(x)$. We call such a vector a \emph{Pareto optimum} of $\mc P_{\mc I}$. Indeed, by the definition of $\mc P_{\mc I}$, if a vector $y \in \{0, 1\}^n$ has higher profit $p_{\mc I}(y) > p_{\mc I}(x) = \mc P_{\mc I}[w_{\mc I}(x)]$ then it necessarily has higher weight $w_{\mc I}(y) > w_{\mc I}(x)$. We observe the following property of Pareto optima. 

\begin{lemma}\label{lem:pareto_maximum}
    Let $x \in \{0, 1\}^n$ be a Pareto optimum of $\mc P_{\mc I}$ and let $\mc J \subset \mc I$. Consider a vector $y \in \{0, 1\}^n$ such that $w_{\mc J}(y) \leq w_{\mc J}(x)$ and $p_{\mc J}(y) \geq p_{\mc J}(x)$. Then necessarily $p_{\mc J}(y) = p_{\mc J}(x)$.
\end{lemma}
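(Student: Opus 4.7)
The plan is to prove the lemma by a standard swap argument. Given the Pareto optimum $x$ and the hypothetical vector $y$ that beats $x$ on the subset $\mc J$, I construct a hybrid vector $z \in \{0,1\}^n$ that coincides with $y$ on coordinates corresponding to items in $\mc J$ and with $x$ on coordinates corresponding to items in $\mc I \setminus \mc J$. The claim then follows by comparing $z$ to $x$ as candidate Knapsack solutions over the full item set $\mc I$.

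First I would compute the weight of $z$: since the two index sets $\mc J$ and $\mc I \setminus \mc J$ partition $\mc I$, we have $w_{\mc I}(z) = w_{\mc J}(y) + w_{\mc I \setminus \mc J}(x)$, and by the assumption $w_{\mc J}(y) \le w_{\mc J}(x)$ this is bounded above by $w_{\mc J}(x) + w_{\mc I \setminus \mc J}(x) = w_{\mc I}(x)$. Analogously, the profit decomposes as $p_{\mc I}(z) = p_{\mc J}(y) + p_{\mc I \setminus \mc J}(x) \ge p_{\mc J}(x) + p_{\mc I \setminus \mc J}(x) = p_{\mc I}(x)$ using $p_{\mc J}(y) \ge p_{\mc J}(x)$.

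Next I would invoke the Pareto optimality of $x$. Because $x$ is a Pareto optimum, $p_{\mc I}(x) = \mc P_{\mc I}[w_{\mc I}(x)]$, which is by definition the maximum profit attainable by any $0/1$-vector of weight at most $w_{\mc I}(x)$. Since $z$ is such a vector (we just bounded $w_{\mc I}(z) \le w_{\mc I}(x)$), we get $p_{\mc I}(z) \le p_{\mc I}(x)$. Combined with the reverse inequality derived above, this forces $p_{\mc I}(z) = p_{\mc I}(x)$, and cancelling the common term $p_{\mc I \setminus \mc J}(x)$ gives $p_{\mc J}(y) = p_{\mc J}(x)$, as required.

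There is no real obstacle here; the only thing to be slightly careful about is ensuring that the decomposition of $w_{\mc I}$ and $p_{\mc I}$ over $\mc J$ and $\mc I \setminus \mc J$ is valid, which is immediate from the fact that both functions are linear in the indicator vector and from the partition of indices. The argument uses nothing beyond the definitions of $w_{\mc J}, p_{\mc J}, \mc P_{\mc I}$ and the defining property of a Pareto optimum stated in the paragraph preceding the lemma.
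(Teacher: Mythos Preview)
Your proof is correct and essentially identical to the paper's: both construct the hybrid vector that agrees with $y$ on $\mc J$ and with $x$ on $\mc I \setminus \mc J$, then use Pareto optimality of $x$ to rule out a strict profit improvement. The only cosmetic difference is that the paper phrases it as a proof by contradiction (assuming $p_{\mc J}(y) > p_{\mc J}(x)$ and deriving $p_{\mc I}(y') > p_{\mc I}(x)$ with $w_{\mc I}(y') \le w_{\mc I}(x)$), whereas you argue directly via $p_{\mc I}(z) \le \mc P_{\mc I}[w_{\mc I}(x)] = p_{\mc I}(x)$.
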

\begin{proof}
    Suppose for the sake of contradiction that $p_{\mc J}(y) > p_{\mc J}(x)$. Consider the vector $y'$ that is equal to $y$ on $\mc J$ and equal to $x$ on $\mc I \setminus \mc J$. Then $p_{\mc I} (y') = p_{\mc J}(y) + p_{\mc I \setminus \mc J}(x) > p_{\mc J}(x) + p_{\mc I \setminus \mc J}(x) = p_{\mc I}(x)$. We also have $w_{\mc I}(y') = w_{\mc J}(y) + w_{\mc I \setminus \mc J}(x) \leq w_{\mc J}(x) + w_{\mc I \setminus \mc J}(x) = w_{\mc I}(x)$. This contradicts $x$ being a Pareto optimum.
\end{proof}

\medskip
We use $\Ot$-notation to hide poly-logarithmic factors in the input size $n$ and the largest input number $U$, i.e., $\Ot(T) := \bigcup_{c \geq 0} O(T \log^c(n \cdot U))$. In particular, for Knapsack we hide polylogarithmic factors in $n,\wmax,\pmax$.
Many subroutines that we use throughout the paper are randomized and compute the correct output with probability at least $1-1/n$. Standard boosting improves the success probability to $1-1/n^{10}$ at the cost of only a constant factor increase in running time. We can therefore assume that these subroutines have success probability $1-1/n^{10}$.

\section{Knapsack algorithm for balanced instances}\label{sec:knapsack_algos}

In this section we focus on balanced Knapsack instances, i.e., instances satisfying $t/\wmax =
\Theta(\OPT/\pmax)$. We call this the \emph{balancedness assumption}. In \cref{sec:reduction} 
we show that any Knapsack instance can be reduced to a balanced instance (see \cref{lem:reduction_balanced_knapsack}). 
Combined with the following \cref{lem:knapsack_bmbm,lem:knapsack_bmdp}, this proves \cref{thm:knapsack_bmbm,thm:knapsack_bmdp}.

\begin{restatable}{lemma}{knapsackBMBM}
    \label{lem:knapsack_bmbm}
        For any Knapsack instance $(\mc I, t)$ satisfying $t / \wmax  = \Theta (\OPT / \pmax)$ the sequence $\mc P_{\mc I}[T ; P]$ for $T := [t - \sqrt{t \cdot \wmax}, t + \sqrt{t \cdot \wmax}]$, $P := [\vopt - \sqrt{\vopt \cdot \pmax}, \allowbreak \vopt + \sqrt{\vopt \cdot \pmax}]$ and $\OPT \leq \vopt \leq \OPT + \pmax$ can be computed by a randomized algorithm in time $\Ot(n + t \sqrt{\pmax})$.
\end{restatable}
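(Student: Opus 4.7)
The plan is to follow the \emph{convolve-and-partition} paradigm, specialised to exploit the balancedness assumption. First, compute $\vopt$ using \cref{lem:approx_opt}. Set $q := \lceil \log_2(t/\wmax) \rceil$ and independently assign each item to one of $2^q$ buckets uniformly at random, obtaining leaf groups $\mc I_1, \dots, \mc I_{2^q}$, viewed as the leaves of a balanced binary tree of depth $q$. For every node $g$ at level $\ell$ (leaves at level $0$, root at level $q$), let $\mc I^{(\ell)}_g$ denote the union of items in its subtree, and compute $\mc P_{\mc I^{(\ell)}_g}[I^{(\ell)}; V^{(\ell)}]$ restricted to the weight window
\[
I^{(\ell)} := \bigl[t/2^{q-\ell} \pm C\sqrt{t\wmax/2^{q-\ell}}\,\log n\bigr]
\]
and the analogous profit window $V^{(\ell)}$ obtained from the same formula with $t,\wmax$ replaced by $\vopt,\pmax$, for a suitable constant $C$. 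Since $I^{(q)} \supseteq T$ and $V^{(q)} \supseteq P$, the desired sequence is a sub-array of what is produced at the root.

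Correctness relies on a concentration argument. For each break point $k \in T$ of $\mc P_{\mc I}$ fix a Pareto optimum $x^*_k$ witnessing $\mc P_{\mc I}[k]$; there are at most $|T|+1 = \Ot(\sqrt{t\wmax})$ such $k$. Since items are assigned to buckets independently, $w_{\mc I^{(\ell)}_g}(x^*_k)$ and $p_{\mc I^{(\ell)}_g}(x^*_k)$ are sums of independent bounded random variables with means $w_{\mc I}(x^*_k)/2^{q-\ell}$ and $p_{\mc I}(x^*_k)/2^{q-\ell}$ and variances $\Ot(\wmax t/2^{q-\ell})$ and $\Ot(\pmax \vopt/2^{q-\ell})$, so a Bernstein bound with $C$ large enough places them inside $I^{(\ell)}$ and $V^{(\ell)}$ with probability $1-n^{-10}$. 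A union bound over the $O(n)$ tree nodes and the $\Ot(\sqrt{t\wmax})$ break points shows that w.h.p.\ every $x^*_k$ lies in all windows simultaneously. \cref{lem:pareto_maximum} then implies that the restriction of $x^*_k$ to any subtree $g$ realises $\mc P_{\mc I^{(\ell)}_g}[w_{\mc I^{(\ell)}_g}(x^*_k)]$, so bottom-up max-plus convolution of the restricted sequences correctly produces every entry of $\mc P_{\mc I}[T;P]$.

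For the running time, each leaf satisfies $|\mc I_g| = \Ot(n\wmax/t)$ w.h.p., and the upper endpoints of $I^{(0)}, V^{(0)}$ are $\Ot(\wmax)$ and $\Ot(\pmax)$ thanks to balancedness, so \cref{thm:Knapsack_SubsetSum} computes each leaf sequence in $\Ot(|\mc I_g| + \wmax\sqrt{\pmax})$; summed over the $2^q = \Theta(t/\wmax)$ leaves this totals $\Ot(n + t\sqrt{\pmax})$. At level $\ell \ge 1$, each convolution acts on monotone non-decreasing sequences of length $\Ot(\sqrt{t\wmax/2^{q-\ell}})$ whose entries, after subtracting the lower endpoint of $V^{(\ell-1)}$, lie in $\{0,\dots,\Ot(\sqrt{\vopt\pmax/2^{q-\ell}})\}$; \cref{thm:MPConv} handles it in $\Ot\bigl(\sqrt{t\wmax/2^{q-\ell}}\cdot(\vopt\pmax/2^{q-\ell})^{1/4}\bigr)$. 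Summing over the $2^{q-\ell}$ nodes per level yields the per-level cost $\Ot(\sqrt{t\wmax}(\vopt\pmax)^{1/4}\cdot 2^{(q-\ell)/4})$; this geometric series is dominated by $\ell=1$, giving total merge cost $\Ot(\sqrt{t\wmax}(\vopt\pmax)^{1/4} 2^{q/4}) = \Ot(t\sqrt{\pmax})$ after substituting $2^q = \Theta(t/\wmax)$ and $\vopt = \Theta(\pmax\cdot t/\wmax)$.

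The main obstacle will be the correctness argument: the restrictions to index windows $I^{(\ell)}$ \emph{and} to value windows $V^{(\ell)}$ at every internal node must be shown simultaneously lossless. This requires combining \cref{lem:pareto_maximum} with the Bernstein bound level by level, so that the pair of in-window subtree contributions of each witness $x^*_k$ certifies the corresponding entry of the convolution at its parent, and leveraging the polylogarithmic slack in the window widths (absorbed into $\Ot$) so that the union bound over all break points and tree nodes goes through.
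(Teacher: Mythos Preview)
Your approach is essentially the same as the paper's: random partition into $2^q$ groups with $2^q = \Theta(t/\wmax)$, leaf computation via \cref{thm:Knapsack_SubsetSum}, tree-wise merging via \cref{thm:MPConv} on windows of size $\Ot(\sqrt{t\wmax/2^{q-\ell}}) \times \Ot(\sqrt{\vopt\pmax/2^{q-\ell}})$, correctness by Bernstein concentration plus \cref{lem:pareto_maximum}, and the identical geometric-series running-time analysis. Two small points you gloss over that the paper handles explicitly: (i) the union bound needs care when $\sqrt{t\wmax}$ (or $\sqrt{\vopt\pmax}$) is not polynomial in $n$, which the paper resolves by falling back to Bellman's DP when $2\pmax \ge n^3$; and (ii) to certify a non-break-point entry $k\in T$ you need the preceding break point, which may lie just outside $T$, so the paper enlarges the window to $T' = [t-2\sqrt{\Delta_w}, t+\sqrt{\Delta_w}]$ and shows this interval always contains a break point.
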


\begin{restatable}{lemma}{knapsackBMDP}
    \label{lem:knapsack_bmdp}
        For any Knapsack instance $(\mc I, t)$ satisfying $t / \wmax  = \Theta (\OPT / \pmax)$ the sequence $\mc P_{\mc I}[T ; P]$ for $T := [t - \sqrt{t \cdot \wmax}, t + \sqrt{t \cdot \wmax}]$, $P := [\vopt - \sqrt{\vopt \cdot \pmax}, \allowbreak \vopt + \sqrt{\vopt \cdot \pmax}]$ and $\OPT \leq \vopt \leq \OPT + \pmax$ can be computed by a randomized algorithm in time $\Ot(n + (n \wmax \pmax)^{1/3} \cdot t^{2/3})$.
    \end{restatable}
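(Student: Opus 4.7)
The plan is to follow the convolve-and-partition framework used in the proof of \cref{lem:knapsack_bmbm}, but to substitute a different base-case subroutine. Specifically, randomly partition $\mc I$ into $2^q$ subgroups (with $q$ to be fixed at the end of the analysis); for each subgroup $G$, compute the relevant portion of $\mc P_G$ by a base-case algorithm; then combine these sequences pairwise in a balanced binary tree of height $q$, using \cref{thm:MPConv} at each merge. The balancedness assumption $t/\wmax = \Theta(\vopt/\pmax)$, together with Chernoff-type concentration applied to the random partition, implies that at each level of the tree it is enough to track only a narrow strip of each sequence in both weight and profit coordinates.

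For the base case, rather than invoking the subset-sum-style algorithm \cref{thm:Knapsack_SubsetSum} used in \cref{lem:knapsack_bmbm}, I will use plain Bellman dynamic programming (\cref{thm:Knapsack_DP}). For each subgroup $G$ of expected size $n/2^q$, compute $\mc P_G[0 \ldots K_0]$ for $K_0 = \Ot(t/2^q + \sqrt{t\wmax/2^q})$; the choice of $K_0$ is dictated by the balancedness assumption and the Chernoff radius of the random partition, and guarantees that the level-$0$ weight strip $[W_0 - L_0, W_0 + L_0]$ around the expected restricted weight $W_0 = t/2^q$ of any Pareto optimum is covered. Summing across the $2^q$ subgroups, the total base-case cost is $\Ot(n \cdot K_0)$.

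The merging analysis proceeds as in \cref{lem:knapsack_bmbm}. At merging level $\ell$, each of the $2^{q-\ell}$ partial sequences can be restricted to a weight strip of width $L_\ell = \Ot(\sqrt{t\wmax \cdot 2^\ell/2^q})$ with entry range $H_\ell = \Ot(\sqrt{\vopt\pmax\cdot 2^\ell/2^q})$, by the same balancedness and Chernoff arguments. Applying \cref{thm:MPConv} to each merge then costs $\Ot(L_\ell \sqrt{H_\ell})$; summing the $2^{q-\ell-1}$ merges at each level and the $q$ levels, and simplifying via $\vopt = \Theta(t\pmax/\wmax)$ supplied by balancedness, the total merging cost telescopes to $\Ot(t^{3/4}\wmax^{1/4}\sqrt{\pmax}\cdot 2^{q/4})$.

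The final step is to choose $q$ so as to balance the base-case and merging contributions. Setting $n\sqrt{t\wmax/2^q}$ equal to the merging cost yields $2^q = \Theta(n^{4/3}\wmax^{1/3}/(t^{1/3}\pmax^{2/3}))$, and substituting back gives the claimed total running time $\Ot(n + (n\wmax\pmax)^{1/3} t^{2/3})$. I expect the main technical obstacle to lie in the level-wise concentration argument: proving that, under the balancedness assumption, it is safe to restrict at every level to strips of the claimed widths (i.e., that every Pareto optimum of $\mc P_{\mc I}$ in the outer strip $T \times P$ decomposes into restrictions that fall within each intermediate strip with high probability), while keeping the strips narrow enough to yield the stated bounds. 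A secondary subtlety is verifying that the chosen $q$ lies in the regime where the Chernoff radius $\sqrt{t\wmax/2^q}$ dominates the mean $t/2^q$ in $K_0$, and handling the boundary regimes where this ordering flips by falling back to simpler choices (e.g., $2^q = \Theta(t/\wmax)$ or direct application of Bellman's DP) that are still absorbed by the target running time.
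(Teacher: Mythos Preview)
Your overall framework is right and matches the paper's, but the base case you propose has a genuine gap that cannot be patched by boundary-regime fallbacks.

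The issue is this. Correctness of the concentration argument (\cref{lem:bmbm_double_concentration}) requires $2^q \le t/\wmax$; otherwise a single item of weight $\wmax$ already exceeds the target mean $t/2^q$ and the Bernstein bound breaks. But whenever $2^q \le t/\wmax$ the mean $t/2^q$ dominates the Chernoff radius $\sqrt{t\wmax/2^q}$, so your $K_0 = \Theta(t/2^q)$ and the plain Bellman base case costs $\Theta(nt/2^q)$, not $\Theta(n\sqrt{t\wmax/2^q})$. Balancing $nt/2^q$ against the merging cost $t^{3/4}\wmax^{1/4}\pmax^{1/2}\,2^{q/4}$ gives total time $\Theta\bigl(n^{1/5}\wmax^{1/5}\pmax^{2/5}t^{4/5}\bigr)$, which is polynomially larger than $(n\wmax\pmax)^{1/3}t^{2/3}$ precisely in the interesting regime $n\wmax < t\sqrt{\pmax}$ (the regime not already handled by \cref{lem:knapsack_bmbm}). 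Your suggested fallback $2^q=\Theta(t/\wmax)$ does not help either: there the merging cost alone is $\Theta(t\sqrt{\pmax})$, again exceeding the target in that same regime.

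What the paper does differently is replace plain Bellman at the leaves by a randomized-permutation DP (\cref{lem:knapsack_newleaf}, a sharpening of the He--Xu idea) that computes only the strip $\mc P_G[t'-\ell\ldots t'+\ell]$ in time $\Ot(|G|\sqrt{t'\wmax}+|G|\ell)$, avoiding the $|G|\cdot t'$ prefix cost entirely. With $t'=t/2^q$ and $\ell=\Ot(\sqrt{t\wmax/2^q})$ both terms are $\Ot(|G|\sqrt{t\wmax/2^q})$, so the total base cost really is $\Ot(n\sqrt{t\wmax/2^q})$; then your intended balancing $2^q=\Theta(n^{4/3}\wmax^{1/3}t^{-1/3}\pmax^{-2/3})$ goes through and yields the claimed bound. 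The missing ingredient in your plan is exactly this strip-only leaf algorithm.
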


We prove \cref{lem:knapsack_bmbm,lem:knapsack_bmdp} in \cref{sec:bmbm,sec:bmdp} respectively. The presented algorithms compute the optimal profit, so in \cref{sec:reconstruction} we discuss how to reconstruct an optimal solution. 

\subsection{\texorpdfstring{$\boldsymbol{\Ot(n + t \sqrt{\pmax})}$}{O(n + t sqrt(pmax))}-time algorithm}\label{sec:bmbm}
We now prove \cref{lem:knapsack_bmbm}.
Observe that, with the notation of \cref{lem:knapsack_bmbm}, we have $\mc P_{\mc I}[t] = \OPT$, $t \in T$ and $\OPT \in P$, since $\pmax \leq \OPT$. Hence the algorithm in \cref{lem:knapsack_bmbm} computes in particular the value $\mc P_{\mc I}[t] = \OPT$.

\subparagraph*{Idea.}
The idea of the algorithm is to randomly split the items of $\mc I$ into $2^q$ groups $\mc I_1^{q}, \dots, \mc I_{2^q}^{q}$, for some parameter $q$ which we define later. Using the $\Ot(n + t \sqrt{v})$ time Knapsack algorithm (\cref{thm:Knapsack_SubsetSum}), we compute a subarray of $\mc P_{\mc I_q^j}$ for every $j \in [2^q]$. The arrays are then combined in a tree-like fashion by taking their max-plus convolution. A key observation is that, with high probability, it suffices to compute a subarray of $\mc P_{\mc I_q^j}$ for a small range of indices and a small range of values. The same will hold for the intermediate arrays resulting from the max-plus convolutions. Since the sequences are monotone non-decreasing, we can use the rectangular bounded monotone max-plus convolution algorithm of \cref{thm:MPConv} to accelerate the computation. We explain the algorithm in more details below before proving its correctness and analyzing its running time.
\newline

\begin{algorithm}[H]
    \caption{The $\Ot( n + t \sqrt{\pmax})$-time algorithm of \cref{lem:knapsack_bmbm}. The input $(\mc I, t)$ is a Knapsack instance such that $t / \wmax = \Theta(\OPT / \pmax)$.
    }\label{alg:knapsack_bmbm}
        $\wmax \gets \max_{i \in [n]} w_i$ \\
        $\pmax \gets \max_{i \in [n]} p_i$ \\
        Compute an approximation $\vopt$ of $\OPT$ using \cref{lem:approx_opt}.\\
        $q \gets $ largest integer such that $2^q \leq \min\{ t / \wmax,\vopt/\pmax\}$ \label{alg_line:bmbm_def_q} \\
        $\eta \gets 17 \log n$ \\
        $\Delta_w \gets t \cdot \wmax$ \\
        $\Delta_p \gets \vopt\cdot \pmax$\\
        $\mc I_1^q, \dots, \mc I_{2^q}^q \gets $ random partitioning of $\mc I$ into $2^q$ groups \\
        
        $W^q \gets \left[\frac{t}{2^{q}} - \sqrt{\frac{\Delta_w}{2^{q}}} \eta,\ \frac{t}{2^{q}} + \sqrt{\frac{\Delta_w}{2^{q}}} \eta\right]$ \\
        $P^q \gets \left[\frac{\vopt}{2^{q}} - \sqrt{\frac{\Delta_p}{2^{q}}} \eta,\ \frac{\vopt}{2^{q}} + \sqrt{\frac{\Delta_p}{2^{q}}} \eta\right]$ \\
        $W^* \gets \left[0, \frac{t}{2^q} + \sqrt{\frac{\Delta_w}{2^q}} \eta \right]$ \\
        $P^* \gets \left[0, \frac{\vopt}{2^q} + \sqrt{\frac{\Delta_p}{2^q}} \eta \right]$ \\
    
        \For{$j = 1, \dots, 2^q$}{\label{alg_line:bmbm_start_base}
            Compute $D_j^q \gets {\mc P_{\mc I_j^q}\left[W^* ; P^*\right]}$ using \cref{thm:Knapsack_SubsetSum} \label{alg_line:bmbm_leaf_D}
            \\
            $C_j^q \gets D_j^q[W^q ; P^q]$ \label{alg_line:bmbm_leaf_C} \\
        }
        \For{$\ell = q-1, \dots, 0 $}{\label{alg_line:bmbm_start_combi}
            $W^\ell \gets \left[\frac{t}{2^{\ell}} - \sqrt{\frac{\Delta_w}{2^{\ell}}} \eta,\ \frac{t}{2^{\ell}} + \sqrt{\frac{\Delta_w}{2^{\ell}}} \eta\right]$ \\
            $P^\ell \gets \left[\frac{\vopt}{2^{\ell}} - \sqrt{\frac{\Delta_p}{2^{\ell}}} \eta,\ \frac{\vopt}{2^{\ell}} + \sqrt{\frac{\Delta_p}{2^{\ell}}} \eta\right]$ \\
            
            \For{$j = 1, \dots, 2^\ell$}{
                $D_j^\ell \gets \maxconv{C_{2j -1}^{\ell + 1}}{C_{2j}^{\ell + 1}}$ using \cref{thm:MPConv} \label{alg_line:bmbm_combi_D} \\
                $C_j^\ell \gets D_j^\ell[W^\ell ; P^\ell]$ \label{alg_line:bmbm_combi_C} \\
    
                }
            }
        $T \gets [t - \sqrt{t \cdot \wmax}, t + \sqrt{t \cdot \wmax}]$ \\
        $P \gets [\vopt- \sqrt{\vopt\cdot \pmax}, \vopt+ \sqrt{\vopt\cdot \pmax}]$ \\
        \Return{$C_1^0[T ; P]$}
\end{algorithm}

\subparagraph*{Algorithm.}
The algorithm of \cref{lem:knapsack_bmbm} is presented in pseudocode in \cref{alg:knapsack_bmbm}. Let $\vopt$ be the approximation of $\OPT$ from \cref{lem:approx_opt}, i.e.\ $\vopt$ satisfies $\OPT \leq \vopt\leq \OPT + \pmax$ and $\pmax \leq \vopt\leq n \cdot \pmax$. Note that since $\pmax \leq \OPT$, we have $\vopt= \Theta(\OPT)$.
Set the parameters $\eta := 17 \log n$ and $q$ to be the largest integer such that $2^q \leq \min\{ t / \wmax,\vopt/\pmax\}$. We also define $\Delta_w := t \cdot \wmax$ and $\Delta_p := \vopt\cdot \pmax $, as well as the weight and profit intervals 
for $\ell \in \{0, \dots, q\}$
$$W^\ell := \left[\frac{t}{2^{\ell}} - \sqrt{\frac{\Delta_w}{2^{\ell}}} \eta,\ \frac{t}{2^{\ell}} + \sqrt{\frac{\Delta_w}{2^{\ell}}} \eta\right] \quad \text{ and } \quad P^\ell := \left[\frac{\vopt}{2^{\ell}} - \sqrt{\frac{\Delta_p}{2^{\ell}}} \eta,\ \frac{\vopt}{2^{\ell}} + \sqrt{\frac{\Delta_p}{2^{\ell}}} \eta\right].$$

\cref{alg:knapsack_bmbm} starts by splitting the items of $\mc I$ into $2^q$ groups $\mc I_1^{q}, \dots, \mc I_{2^q}^{q}$ uniformly at random. For each group $\mc I_j^{q}$ it computes the sequence $D_j^q := \mc P_{\mc I_j^q}[W^* ; P^* ]$ using \cref{thm:Knapsack_SubsetSum}, where $W^* := \left[0, \frac{t}{2^q} + \sqrt{\frac{\Delta_w}{2^q}} \eta \right]$ and $P^* := \left[0, \frac{\vopt}{2^q} + \sqrt{\frac{\Delta_p}{2^q}} \eta \right]$.
Then it extracts the entries corresponding to weights in $W^q$ and profits in $P^q$, i.e., $C_q^j := D_j^q[W^q ; P^q]$. 
Next, the algorithm iterates over the levels $\ell = q - 1, \dots, 0$. For every iteration $j \in[2^\ell]$, the set of items in group $j$ on level $\ell$ is $\mc I_j^\ell = \mc I_{2j-1}^{\ell + 1} \cup \mc I_{2j}^{\ell +1}$ and the algorithm computes the max-plus convolution $D_j^\ell$ of the arrays $C_{2j-1}^{\ell +1}$ and $C_{2j}^{\ell +1}$. It extracts the relevant entries of weights in $W^\ell$ and profits in $P^\ell$, i.e., $C_j^\ell := D_j^\ell[W^\ell ; P^\ell]$. Finally, observe that when $\ell = 0$ then $\mc I_1^0 = \mc I$. The algorithm returns the sequence $C_1^0[T ; P]$, for the intervals $T := [t - \sqrt{t \cdot \wmax}, t + \sqrt{t \cdot \wmax}]$ and $P := [\vopt - \sqrt{\vopt \cdot \pmax}, \vopt + \sqrt{\vopt \cdot \pmax}]$.

\subsubsection{Correctness of Algorithm \ref{alg:knapsack_bmbm}}

Let us analyze the correctness of the algorithm. For the rest of this section, fix a Knapsack instance $(\mc I, t)$ with $n := |\mc I|$ and such that $t /\wmax = \Theta(\OPT / \pmax)$.
First, recall that we defined $q$ to be the largest integer such that $2^q \leq \min \{t/\wmax,\vopt/ \pmax\}$. In particular, since $t \leq n \wmax$, we have $2^q \leq t / \wmax \leq n$. Moreover since $\wmax \leq t$ and $\pmax \leq\vopt$, we have $2^q \geq 1$. So $2^q$ is a valid choice for the number of groups in which we split the item set $\mc I$. Also note that $2^q = \Theta(t / \wmax) = \Theta(\vopt/\pmax)$.
Next, we argue that the subarray $C_j^\ell$ constructed in \cref{alg_line:bmbm_leaf_C,alg_line:bmbm_combi_C} is monotone non-decreasing. 

\begin{lemma}\label{lem:bmbm_monotonicity}
    For every level $\ell \in \{0, \dots, q\}$ and iteration $j \in[2^\ell]$,  
    the sequence $C_j^\ell$ is monotone non-decreasing.
\end{lemma}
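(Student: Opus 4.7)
The plan is to proceed by downward induction on $\ell$, from $\ell = q$ down to $\ell = 0$, showing that every $C_j^\ell$ (and in fact every $D_j^\ell$) produced by the algorithm is monotone non-decreasing. Two general facts will do the work: (i) the profit sequence $\mc P_{\mc J}$ is monotone non-decreasing for any item set $\mc J$, directly from its definition (increasing the weight budget can only enlarge the set of feasible solutions); and (ii) the restriction operation $A[I; V]$, when applied to a monotone array $A$, returns a contiguous subarray of $A$ and therefore inherits monotonicity.

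For the base case $\ell = q$, the array $D_j^q = \mc P_{\mc I_j^q}[W^*; P^*]$ is a contiguous subarray of $\mc P_{\mc I_j^q}$, so it is monotone non-decreasing by (i) and (ii). Applying (ii) a second time to $D_j^q$ yields that $C_j^q = D_j^q[W^q; P^q]$ is also monotone non-decreasing.

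For the inductive step at level $\ell < q$, assume both $C_{2j-1}^{\ell+1}$ and $C_{2j}^{\ell+1}$ are monotone non-decreasing. The core observation is that the max-plus convolution of two monotone non-decreasing arrays is monotone non-decreasing on the interval where it is finite: if $A[i_s\ldots i_f]$ and $B[j_s\ldots j_f]$ are both non-decreasing and $k, k+1 \in \{i_s+j_s,\ldots,i_f+j_f\}$, then for any pair $(i^*, j^*)$ with $i^*+j^* = k$ attaining the maximum in $C[k]$, at least one of $i^* < i_f$ or $j^* < j_f$ must hold, and using the corresponding shift $(i^*{+}1, j^*)$ or $(i^*, j^*{+}1)$ gives $C[k+1] \ge A[i^*]+B[j^*] = C[k]$. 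Hence $D_j^\ell = \maxconv{C_{2j-1}^{\ell+1}}{C_{2j}^{\ell+1}}$ is monotone non-decreasing, and applying (ii) once more shows $C_j^\ell = D_j^\ell[W^\ell; P^\ell]$ is monotone non-decreasing as well.

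There is no real obstacle here; the statement is essentially the structural sanity check that our recursive constructions never destroy monotonicity, which is what allows \cref{thm:MPConv} to be invoked in \cref{alg_line:bmbm_combi_D}. The only point worth writing carefully is the elementary convolution-monotonicity argument in the inductive step, to make sure the index bookkeeping on the endpoints of the convolution range is handled correctly.
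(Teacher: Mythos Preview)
Your proposal is correct and follows essentially the same approach as the paper: induction on $\ell$, using that $\mc P_{\mc I_j^q}$ is monotone non-decreasing for the base case and that max-plus convolution preserves monotone non-decreasing sequences for the inductive step. The only difference is that you spell out the elementary convolution-monotonicity argument (the index-shift trick) in detail, whereas the paper simply asserts it.
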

\begin{proof}
    For $\ell = q$ and $j \in[2^q]$, $D_j^q$ is a subarray of $\mc P_{\mc I_j^q}$, which is monotone non-decreasing by definition.
    Hence $D_j^q$ is monotone non-decreasing, and since $W^q$ and $P^q$ are intervals, the array $C_j^q = D_j^q[W^q ; P^q]$ is also monotone non-decreasing. The statement follows from induction by noting that the max-plus convolution of two monotone non-decreasing sequences is a monotone non-decreasing sequence.
\end{proof}

The above lemma justifies the use of \cref{thm:MPConv} to compute the max-plus convolutions in \cref{alg_line:bmbm_combi_D}. We now explain why it is enough to restrict the entries of $D_j^\ell$ corresponding to indices in $W^\ell$ and values in $P^\ell$.
The following lemma shows that, for any fixed subset of items, the weight and profit of that subset restricted to $\mc I_j^\ell$ are concentrated around their expectations. 

\begin{lemma}\label{lem:bmbm_double_concentration}
   Let $x \in \{0, 1\}^n$. Fix $\ell \in \{0, \dots, q\}$ and $j \in[2^\ell]$. Then with probability at least $1 - 1/n^7$ the following holds:
    \[
    \left|w_{\mc I_j^\ell}(x) -  w_{\mc I}(x)/2^\ell \right| \leq \sqrt{\Delta_w / 2^\ell} \cdot 16 \log n
    \quad
    \text{ and }
    \quad
    \left|p_{\mc I_j^\ell}(x) -  p_{\mc I}(x)/2^\ell \right| \leq \sqrt{\Delta_p / 2^\ell} \cdot 16 \log n.
    \]
\end{lemma}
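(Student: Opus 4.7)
The plan is to invoke Bernstein's inequality once for the weight sum and once for the profit sum, and then take a union bound. I would first model the randomness cleanly: at the leaves each item is placed uniformly at random into one of $2^q$ groups, and a level-$\ell$ group is the union of $2^{q-\ell}$ leaf groups, so each item lies in $\mc I_j^\ell$ independently with probability $1/2^\ell$. Setting $Y_i := \mathbf{1}[i \in \mc I_j^\ell]$, the $Y_i$'s are therefore i.i.d.\ Bernoulli$(1/2^\ell)$, and for a fixed $x \in \{0,1\}^n$ we have $w_{\mc I_j^\ell}(x) = \sum_i w_i x_i Y_i$ with mean $w_{\mc I}(x)/2^\ell$; the analogous identity holds for $p_{\mc I_j^\ell}(x)$.

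Next I would apply Bernstein's inequality. Each centered summand $w_i x_i Y_i - w_i x_i/2^\ell$ is bounded in absolute value by $\wmax$, and has variance at most $(w_i x_i)^2/2^\ell$, so the total variance is bounded by $\wmax \cdot w_{\mc I}(x)/2^\ell \le \Delta_w/2^\ell$ (using $w_{\mc I}(x) \le t$; the lemma will only be applied to vectors encoding feasible solutions, and for infeasible $x$ the stated bound is vacuous after replacing $t$ by $w_{\mc I}(x)$). Bernstein then yields, for $\lambda := 16 \log n \cdot \sqrt{\Delta_w/2^\ell}$,
\[
\Pr\!\left[\,\left|w_{\mc I_j^\ell}(x) - w_{\mc I}(x)/2^\ell\right| \ge \lambda\,\right] \;\le\; 2\exp\!\left(-\frac{\lambda^2/2}{\Delta_w/2^\ell + \wmax \lambda / 3}\right).
\]
The crucial observation that makes the denominator small is that $q$ was chosen so that $2^q \le t/\wmax$, whence $\sqrt{\Delta_w/2^\ell} = \sqrt{\wmax t/2^\ell} \ge \wmax$ for every $\ell \le q$. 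This forces $\wmax \lambda/3 \le O(\log n) \cdot \Delta_w/2^\ell$, so the whole denominator is $O(\log n) \cdot \Delta_w/2^\ell$ and the exponent is $-\Omega(\log n)$ with a constant large enough to give tail probability at most $1/(2 n^7)$.

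The profit half is identical under the substitutions $\wmax \to \pmax$, $t \to \vopt$, and $\Delta_w \to \Delta_p$; the needed inequality $2^\ell \le \vopt/\pmax$ again comes from the definition of $q$, and the bound $p_{\mc I}(x) \le \OPT \le \vopt$ controls the variance. A union bound over the two tail events then yields the asserted probability $1 - 1/n^7$. The main obstacle—really the only substantive point—is that Hoeffding's inequality alone is too weak: the variance of $w_{\mc I_j^\ell}(x)$ is smaller than its naive range-squared by a factor of $1/2^\ell$, and recovering this factor in the exponent (which is essential once $2^\ell$ grows polynomially in $n$) is exactly what the variance-aware denominator of Bernstein's inequality provides, together with the choice of $q$ that places us in its sub-Gaussian regime.
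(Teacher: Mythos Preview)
Your proposal is correct and follows essentially the same approach as the paper: model membership in $\mc I_j^\ell$ by i.i.d.\ Bernoulli$(1/2^\ell)$ indicators, apply Bernstein's inequality with variance bounded by $\wmax \cdot w_{\mc I}(x)/2^\ell \le \Delta_w/2^\ell$, and use the choice $2^q \le t/\wmax$ (respectively $2^q \le \vopt/\pmax$) to ensure the range term does not dominate the variance term. The paper presents Bernstein in the equivalent $\min\{\lambda^2/(4\mathrm{Var}),\,\lambda/(2\wmax)\}$ form and carries out the constants explicitly, but the argument is the same; you also correctly flag that the variance bound implicitly uses $w_{\mc I}(x)\le t$ and $p_{\mc I}(x)\le \vopt$, which the paper's proof assumes without stating.
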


\begin{proof}
    By construction, $\mc I^\ell_j$ is a random subset of $\mc I$ where each item is included with probability $p := 1/2^\ell$. For each item $i \in [n]$, let $Z_i$ be a random variable taking value $w_i x_i$ with probability $p$, and 0 with probability $1-p$. Then $Z := \sum_{i = 1}^n Z_i$ has the same distribution as $w_{\mc I^\ell_j}(x)$ and $\mathbb{E}(Z)=w_{\mc I}(x) p$.

    Using Bernstein's inequality (see, e.g., \cite[Theorem 1.2]{DubhashiPanconesi}) we get that for any $\lambda > 0$:
    \begin{align*}
        \mathbb{P}(|Z - \mathbb{E}(Z)| \geq \lambda) & \leq 2 \exp\left(- \frac{\lambda^2}{2 \cdot \textup{Var}(Z) + \frac{2}{3} \lambda \cdot \wmax }\right)
        \\
        &\leq 2 \exp\left(- \min \left\{ \frac{\lambda^2}{4 \cdot \textup{Var}(Z)}, \frac{\lambda}{2\wmax} \right\} \right)
    \end{align*}
    Set $\lambda:=\sqrt{p \cdot \Delta_w} \cdot 16 \log n$. We can bound the variance of $Z$ as follows:
    \begin{align*}
        \textup{Var}(Z) = \sum_{i = 1}^n p (1-p) w_i^2 x_i^2 &\leq p \cdot \wmax \sum_{i=1}^n w_i x_i \\
        &= p \cdot \wmax \cdot w_{\mc I}(x) \leq p \cdot \wmax \cdot t = p \cdot \Delta_w.
    \end{align*}
    Hence $\lambda^2/(4 \cdot \textup{Var}(Z)) \geq 16 \log n$. To bound $\lambda / (2 \wmax)$, note that $2^q \leq t / \wmax$ so $p = \frac{1}{2^\ell} \geq \frac{1}{2^q} \geq \frac{\wmax}{t}$. Thus,
    $$
    \frac{\lambda}{2\wmax} = \frac{\sqrt{p \cdot \Delta_w} \cdot 16 \log n}{2 \wmax} \geq 8 \log n.
    $$
    Combining all the above we obtain that
    $$
    |w_{\mc I_j^\ell}(x) - w_{\mc I}(x) / 2^\ell| = |Z - \mathbb{E}(Z)| \leq \lambda = \sqrt{\Delta_w / 2^\ell} \cdot 16 \log n
    $$
    holds with probability at least $1 - 2/n^8$.
    
    We can apply a similar reasoning on $p_{\mc I_j^\ell}(x)$ and get the analogous result that 
    $
    |p_{\mc I_j^\ell}(x) - p_{\mc I}(x) / 2^\ell| \leq \sqrt{\Delta_p / 2^\ell} \cdot 16 \log n
    $
    holds with probability at least $1 - 2/n^8$.
    To this end, we define a random variable $Y$, analogous to $Z$, with respect to profits and set the constant in Bernstein's inequality to $\lambda = \sqrt{p \cdot \Delta_p} \cdot 16 \log n$. Then to bound $\textup{Var}(Y)$ we use $p_{\mc I}(x) \leq \OPT \leq\vopt$, and to bound $\lambda/(2\pmax)$ we use the fact that $2^q \leq \vopt/\pmax$ so that $p \geq \pmax /\vopt$. By a union bound, both events hold with probability at least $1 - 4/n^8 \geq 1 - 1/n^7$ (recall that we can assume $n \geq 10$).
\end{proof}

In the next lemma, we show that, as a consequence of \cref{lem:bmbm_double_concentration}, at level $\ell$ the weights and profits of solutions of interest restricted to $\mc I_j^q$ lie with sufficiently high probability in $W^\ell$ and $P^\ell$.

\begin{lemma}\label{lem:bmbm_total_concentration}
    Let $x \in \{0, 1\}^n$ such that $|w_{\mc I}(x) - t| \leq 2 \sqrt{\Delta_w}$ and $|p_{\mc I}(x) -\vopt| \leq 2\sqrt{\Delta_p}$. 
    Fix a level $\ell \in \{0, \dots, q\}$ and an iteration $j \in[2^\ell]$. Then with probability at least $1 - 1/n^7$ we have $w_{\mc I_j^\ell} (x) \in W^\ell$ and $p_{\mc I_j^\ell} (x) \in P^\ell$.
\end{lemma}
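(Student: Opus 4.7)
The plan is to apply Lemma~\ref{lem:bmbm_double_concentration} directly and combine its concentration bound with the hypothesis on $x$ via the triangle inequality. Specifically, I would write
\[
\bigl|w_{\mc I_j^\ell}(x) - t/2^\ell\bigr| \;\le\; \bigl|w_{\mc I_j^\ell}(x) - w_{\mc I}(x)/2^\ell\bigr| + \bigl|w_{\mc I}(x) - t\bigr|/2^\ell,
\]
bound the first summand by $16 \log n \cdot \sqrt{\Delta_w/2^\ell}$ using Lemma~\ref{lem:bmbm_double_concentration}, and bound the second summand by $2\sqrt{\Delta_w}/2^\ell$ using the assumption $|w_{\mc I}(x) - t| \le 2\sqrt{\Delta_w}$. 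The same argument applied to the profits bounds $|p_{\mc I_j^\ell}(x) - \vopt/2^\ell|$.

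The only non-routine step is showing that the second summand fits inside the $\eta = 17 \log n$ budget built into $W^\ell$ and $P^\ell$. The key observation here is the algebraic identity
\[
\frac{2\sqrt{\Delta_w}}{2^\ell} \;=\; \frac{2}{\sqrt{2^\ell}} \cdot \sqrt{\Delta_w/2^\ell} \;\le\; 2\sqrt{\Delta_w/2^\ell},
\]
and since we may assume $n \ge 10$, we have $2 \le \log n$. Hence the second summand is at most $\log n \cdot \sqrt{\Delta_w/2^\ell}$, and adding the two contributions gives $17 \log n \cdot \sqrt{\Delta_w/2^\ell} = \eta \sqrt{\Delta_w/2^\ell}$, exactly the half-width of $W^\ell$; so $w_{\mc I_j^\ell}(x) \in W^\ell$. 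The analogous calculation, with $\Delta_p$ in place of $\Delta_w$, yields $p_{\mc I_j^\ell}(x) \in P^\ell$.

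The main (and essentially only) potential obstacle is the probabilistic cost. However, Lemma~\ref{lem:bmbm_double_concentration} already furnishes the weight and profit concentration bounds \emph{simultaneously} with failure probability at most $1/n^7$ for the fixed triple $(x, \ell, j)$, so conditioning on its success event gives both $w_{\mc I_j^\ell}(x) \in W^\ell$ and $p_{\mc I_j^\ell}(x) \in P^\ell$ deterministically, and the claimed $1 - 1/n^7$ bound follows with no further union bound.
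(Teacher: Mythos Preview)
Your proposal is correct and follows essentially the same approach as the paper: apply \cref{lem:bmbm_double_concentration}, condition on its success event, use the triangle inequality, and absorb the $2\sqrt{\Delta_w}/2^\ell$ term into the extra $\log n$ of slack via $2^\ell \ge 1$ and $n \ge 10$. Your write-up is in fact slightly more explicit than the paper's about the algebra behind that last absorption step.
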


\begin{proof}
    By \cref{lem:bmbm_double_concentration} we have with probability at least $1 - 1/n^7$
    \[
    \left|w_{\mc I_j^\ell}(x) -  w_{\mc I}(x)/2^\ell \right| \leq \sqrt{\Delta_w / 2^\ell} 16 \log n
    \quad
    \text{ and }
    \quad
    \left|p_{\mc I_j^\ell}(x) -  p_{\mc I}(x)/2^\ell \right| \leq \sqrt{\Delta_p / 2^\ell} \cdot 16 \log n.
    \]
    We condition on that event.
    Since $|w_{\mc I}(x) - t| \leq 2\sqrt{\Delta_w}$, we have:
    \begin{align*}
        |w_{\mc I^\ell_j}(x) - t / 2^\ell| &\leq |w_{\mc I^\ell_j}(x) - w_{\mc I}(x) / 2^\ell| + \frac{1}{2^\ell}|w_{\mc I}(x) - t| \\
        & \leq \sqrt{\Delta_w / 2^\ell} \cdot 16 \log n + 2\sqrt{\Delta_w}/{2^\ell} \leq \sqrt{\Delta_w / 2^\ell} \cdot 17 \log n.
    \end{align*}
    Here the last step follows from $2^\ell \geq 1$ and $n \geq 10$. Since we set $\eta = 17 \log n$, the above implies that $w_{\mc I_j^\ell}(x) \in W^\ell$.
    Similarly, we can deduce from $|p_{\mc I}(x) -\vopt| \leq 2\sqrt{\Delta_p}$ that $p_{\mc I_j^\ell} (x) \in P^\ell$.
\end{proof}

Using \cref{lem:bmbm_total_concentration} we can argue that at level $\ell$ it suffices to compute the subarray of $D_j^\ell$ corresponding to indices in $W^\ell$ and values in $P^\ell$. We make this idea precise in \cref{lem:bmbm_properties}.

\begin{lemma}\label{lem:bmbm_properties}
    Let $x \in \{0, 1\}^n$ be a Pareto optimum of $\mc P_{\mc I}$ satisfying $|w_{\mc I}(x) - t| \leq 2 \sqrt{\Delta_w}$ and $|p_{\mc I}(x) -\vopt| \leq 2 \sqrt{\Delta_p} $.
    Then with probability at least $1 - 1/n^5$ we have for all $\ell \in \{0, \dots, q\}$ and all $j \in[2^\ell]$ that 
    $w_{\mc I_j^\ell}(x) \in W^\ell$, $p_{\mc I_j^\ell}(x) \in P^\ell$ and $C_j^\ell[w_{\mc I_j^\ell} (x)] = p_{\mc I_j^\ell} (x)$.
\end{lemma}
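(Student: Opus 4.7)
The plan is to combine two ingredients: first, a union bound that applies the concentration guarantee of \cref{lem:bmbm_total_concentration} simultaneously to every pair $(\ell,j)$ with $\ell\in\{0,\dots,q\}$ and $j\in[2^\ell]$; and second, a downward induction on $\ell$ (from $q$ to $0$) showing that $C_j^\ell[w_{\mc I_j^\ell}(x)] = p_{\mc I_j^\ell}(x)$ whenever the concentration event holds. The total number of relevant pairs is $\sum_{\ell=0}^{q} 2^\ell \le 2 \cdot 2^q \le 2n$ (using $2^q \le t/\wmax \le n$ from \cref{alg_line:bmbm_def_q}), so the union bound yields the conditioning event with probability at least $1 - 2n/n^7 \ge 1 - 1/n^5$. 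For the rest of the argument I would condition on this event, which gives $w_{\mc I_j^\ell}(x)\in W^\ell$ and $p_{\mc I_j^\ell}(x)\in P^\ell$ for all $(\ell,j)$ at once.

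For the base case $\ell=q$, recall the algorithm sets $D_j^q = \mc P_{\mc I_j^q}[W^*;P^*]$. Since $x$ is a Pareto optimum of $\mc P_{\mc I}$, \cref{lem:pareto_maximum} applied with $\mc J = \mc I_j^q$ (any strictly better choice on $\mc I_j^q$ of no larger weight would contradict the Pareto property of $x$) yields $\mc P_{\mc I_j^q}[w_{\mc I_j^q}(x)] = p_{\mc I_j^q}(x)$. Observing that $W^q \subset W^*$ and $P^q \subset P^*$, the conditioning forces this index and value to survive both restrictions, so $C_j^q[w_{\mc I_j^q}(x)] = p_{\mc I_j^q}(x)$.

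For the inductive step, assume the statement at level $\ell+1$. Since the groups form a disjoint partition, $w_{\mc I_j^\ell}(x) = w_{\mc I_{2j-1}^{\ell+1}}(x) + w_{\mc I_{2j}^{\ell+1}}(x)$ and the analogous identity holds for profits. Plugging the induction hypothesis into $D_j^\ell = \maxconv{C_{2j-1}^{\ell+1}}{C_{2j}^{\ell+1}}$ at position $w_{\mc I_j^\ell}(x)$ gives the lower bound $D_j^\ell[w_{\mc I_j^\ell}(x)] \ge p_{\mc I_j^\ell}(x)$. For the matching upper bound I would argue that every (finite) entry of $C_r^{\ell+1}$ is a valid entry of $\mc P_{\mc I_r^{\ell+1}}$, hence $D_j^\ell[k] \le \mc P_{\mc I_j^\ell}[k]$ for every $k$ by the standard identity $\mc P_{\mc I_{2j-1}^{\ell+1}\cup\mc I_{2j}^{\ell+1}} = \maxconv{\mc P_{\mc I_{2j-1}^{\ell+1}}}{\mc P_{\mc I_{2j}^{\ell+1}}}$; another application of \cref{lem:pareto_maximum} then gives $\mc P_{\mc I_j^\ell}[w_{\mc I_j^\ell}(x)] = p_{\mc I_j^\ell}(x)$. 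Finally, the subsequent restriction $C_j^\ell = D_j^\ell[W^\ell; P^\ell]$ keeps this entry thanks to the conditioning $w_{\mc I_j^\ell}(x) \in W^\ell$ and $p_{\mc I_j^\ell}(x) \in P^\ell$.

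The main obstacle is the upper bound $D_j^\ell[w_{\mc I_j^\ell}(x)] \le p_{\mc I_j^\ell}(x)$: one has to rule out the scenario where the truncation to $(W^{\ell+1};P^{\ell+1})$ inflates some entry of $D_j^\ell$ above the true $\mc P_{\mc I_j^\ell}$, and simultaneously rule out that some subset of $\mc I_j^\ell$ beats $x$ on the sub-instance. The first is handled because truncation only replaces entries by $-\infty$, never increases them; the second is exactly where Pareto optimality of $x$ is used via \cref{lem:pareto_maximum}. Conditioning on the concentration event is crucial because it is precisely what ensures that the Pareto-optimal entry is preserved through every restriction on the path from leaves to the root.
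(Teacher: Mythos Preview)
Your proposal is correct and follows essentially the same approach as the paper: a union bound over all $(\ell,j)$ using \cref{lem:bmbm_total_concentration} (the paper also uses $2^q\le n$ here), followed by a downward induction on $\ell$ where the base case and the inductive step both appeal to \cref{lem:pareto_maximum} to pin down equality. The paper phrases the upper bound on $D_j^\ell[w_{\mc I_j^\ell}(x)]$ as ``there exists $y$ with $w_{\mc I_j^\ell}(y)\le w_{\mc I_j^\ell}(x)$ and $p_{\mc I_j^\ell}(y)=D_j^\ell[w_{\mc I_j^\ell}(x)]$'' rather than your invariant $C_r^{\ell+1}[k]\le \mc P_{\mc I_r^{\ell+1}}[k]$, but these are equivalent formulations of the same observation that truncation never inflates entries.
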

\begin{proof}
    By \cref{lem:bmbm_total_concentration}, for fixed $\ell \in \{0, \dots, q\}$ and $j \in[2^\ell]$ we have $w_{\mc I_j^\ell} (x) \in W^\ell$ and $p_{\mc I_j^\ell} (x) \in P^\ell$ with probability at least $1 - 1/n^7$.
    Since $2^q \leq n$ we can afford a union bound and deduce that $w_{\mc I_j^\ell} (x) \in W^\ell$ and $p_{\mc I_j^\ell} (x) \in P^\ell$ holds \emph{for all} $\ell \in \{0, \dots, q\}$ and \emph{for all} $j \in[2^\ell]$ with probability at least $1 - 1/n^5$.
    We condition on that event and prove by induction that $C_j^\ell[w_{\mc I_j^\ell} (x)] = p_{\mc I_j^\ell} (x)$ for all $\ell \in \{0, \dots, q\}$ and all $j \in [2^\ell]$.
    
    For the base case, fix $\ell = q$ and $j \in [2^\ell]$. 
    Recall that $\mc P_{\mc I_j^q}[k]$ is the maximum profit of a subset of items of $\mc I_j^q$ of weight at most $k$. 
    So if $y$ is such that $\mc P_{\mc I_j^q}[w_{\mc I_j^q}(x)] = p_{\mc I_j^q}(y)$ and $w_{\mc I_j^q}(y) \leq w_{\mc I_j^q}(x)$, then $p_{\mc I_j^q}(y) \geq p_{\mc I_j^q}(x)$. By \cref{lem:pareto_maximum}, since $x$ is a Pareto optimum of $\mc P_{\mc I}$, we deduce $p_{\mc I_j^q}(y) = p_{\mc I_j^q}(x)$. We have $w_{\mc I_j^q}(x) \in W^q$ and $\mc P_{\mc I_j^q}[w_{\mc I_j^q}(x)] = p_{\mc I_j^q}(x) \in P^q$, so by the construction in \cref{alg_line:bmbm_leaf_C} $C_j^q[w_{\mc I_j^q}(x)]  = p_{\mc I_j^q}(x)$.
    
    In the inductive step, fix $\ell < q$ and $j\in[2^\ell]$. 
    We want to prove that $D_j^\ell[w_{\mc I_j^\ell}] = p_{\mc I_j^\ell}(x)$. Indeed, since $w_{\mc I_j^\ell}(x) \in W^\ell$ and $p_{\mc I_j^\ell}(x) \in P^\ell$, this shows that $C_j^\ell[w_{\mc I_j^\ell}(x)] = D_j^\ell[w_{\mc I_j^\ell}] = p_{\mc I_j^\ell}(x)$. 
    By induction, $D_j^\ell[w_{\mc I_j^\ell}(x)]$ is the profit of some subset of items of $\mc I_j^\ell$ of weight at most $w_{\mc I_j^\ell}(x)$. So there exists $y \in \{0, 1\}^n$ such that $D_j^\ell[w_{\mc I_j^\ell}(x)] = p_{\mc I_j^\ell}(y)$ and $w_{\mc I_j^\ell}(y) \leq w_{\mc I_j^\ell}(x)$. Then
    \begin{align*}
        p_{\mc I_j^\ell}(y) = D_j^\ell[w_{\mc I_j^q}(x)] &= \max \left\{ C_{2j-1}^{\ell + 1}[k] + C_{2j}^{\ell + 1}[k'] \ : \ k + k' = w_{\mc I_j^\ell}(x) \right\}\\
        & \geq C_{2j-1}^{\ell + 1}[w_{\mc I_{2j-1}^{\ell+1}}(x)] + C_{2j}^{\ell + 1}[w_{\mc I_{2j}^{\ell+1}}(x)] \\
        & = p_{\mc I_{2j -1}^{\ell +1}}(x) + p_{\mc I_{2j}^{\ell +1}}(x) = p_{\mc I_{j}^{\ell}}(x)
    \end{align*}
    where we use the induction hypothesis and the fact that $\mc I_j^{\ell} = \mc I_{2j-1}^{\ell + 1} \cup \mc I_{2j}^{\ell + 1}$ is a partitioning. Recall that we interpret out-of-bound entries of arrays as $- \infty$.
    Since $x$ is a Pareto optimum of $\mc P_{\mc I}$, we obtain $p_{\mc I_j^\ell}(y) = p_{\mc I_j^\ell}(x)$ by \cref{lem:pareto_maximum}, and thus $D_j^\ell[w_{\mc I_j^\ell}(x)] = p_{\mc I_j^\ell}(x)$. This implies $C_j^\ell[w_{\mc I_j^q}(x)] = p_{\mc I_j^q}(x)$ as argued above.
\end{proof}

Finally, we can prove that \cref{alg:knapsack_bmbm} correctly computes $\mc P_{\mc I}[T ; P]$ as defined in \cref{lem:knapsack_bmbm} with high probability. 
Note that we can boost the success probability to any polynomial by repeating this algorithm and taking the entry-wise maximum of each computed sequence $C_1^0$.

\begin{lemma}[Correctness of \cref{alg:knapsack_bmbm}]\label{lem:bmbm_correctness}
    Let $T := [t - \sqrt{\Delta_w}, t + \sqrt{\Delta_w}]$ and $P := [\vopt- \sqrt{\Delta_p}, \vopt+ \sqrt{\Delta_p}]$. Then with probability at least $1 - 1/n$ we have $C_1^0[T ; P] = \mc P_{\mc I}[T ; P]$.
\end{lemma}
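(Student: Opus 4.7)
The plan is to establish the equality $C_1^0[T; P] = \mc P_{\mc I}[T; P]$ by combining (i)~the trivial upper bound $C_1^0[k] \le \mc P_{\mc I}[k]$ everywhere $C_1^0$ is defined, (ii)~a matching lower bound at every break point of $\mc P_{\mc I}$ lying within the target index range, obtained by invoking \cref{lem:bmbm_properties}, and then (iii)~propagating the equality to non-break-point indices using the monotonicity of $C_1^0$ established in \cref{lem:bmbm_monotonicity}.

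For the upper bound, I would induct on the level $\ell$ from $q$ down to $0$: at the leaves $C_j^q$ is a subarray of $\mc P_{\mc I_j^q}$ by \cref{thm:Knapsack_SubsetSum}, and at each internal level the max-plus convolution $D_j^\ell$ of two subsequences bounded entrywise by $\mc P_{\mc I_{2j-1}^{\ell+1}}$ and $\mc P_{\mc I_{2j}^{\ell+1}}$ is itself bounded by their max-plus convolution, which equals $\mc P_{\mc I_j^\ell}$ because $\mc I_j^\ell$ is the disjoint union of these two sub-instances. For the lower bound, let $[i_{\min}, i_{\max}]$ denote the index range of the subarray $\mc P_{\mc I}[T; P]$. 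For every break point $k^*$ of $\mc P_{\mc I}$ lying in $[i_{\min}, i_{\max}]$, I would pick a Pareto optimum $x^{k^*}$ with $w_{\mc I}(x^{k^*}) = k^*$ and $p_{\mc I}(x^{k^*}) = \mc P_{\mc I}[k^*]$. Since $k^* \in T$ and $\mc P_{\mc I}[k^*] \in P$, both tighter than the hypotheses $|w_{\mc I}(x) - t| \le 2\sqrt{\Delta_w}$ and $|p_{\mc I}(x) - \vopt| \le 2\sqrt{\Delta_p}$ of \cref{lem:bmbm_properties}, that lemma applies to $x^{k^*}$ and yields $C_1^0[k^*] = \mc P_{\mc I}[k^*]$ with probability at least $1 - 1/n^5$.

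A union bound over the $O(|T|) = O(\sqrt{\Delta_w})$ break points in the range, combined with a mild boost of the concentration parameter $\eta$ (enlarging $17\log n$ to a sufficiently large constant multiple of $\log(n \wmax \pmax)$, which costs only polylogarithmic factors in the running time) makes the per-event failure probability small enough that the total failure probability is at most $1/n$. For non-break-point indices $k \in [i_{\min}, i_{\max}]$, letting $k^*$ be the largest break point of $\mc P_{\mc I}$ lying in $[i_{\min}, k]$, monotonicity of $C_1^0$ together with the upper bound sandwich $C_1^0[k]$ from both sides: $\mc P_{\mc I}[k^*] = C_1^0[k^*] \le C_1^0[k] \le \mc P_{\mc I}[k] = \mc P_{\mc I}[k^*]$, forcing equality.

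The main obstacle I anticipate is the boundary case where $i_{\min}$ is not itself a break point of $\mc P_{\mc I}$: then the break point $k^*$ supplying the value $\mc P_{\mc I}[i_{\min}]$ lies strictly below $T$, and one must still exhibit some index (possibly outside $[i_{\min}, i_{\max}]$ but inside the larger domain $W^0$ of $C_1^0$) at which \cref{lem:bmbm_properties} certifies that $C_1^0$ takes the value $\mc P_{\mc I}[i_{\min}]$, so that monotonicity of $C_1^0$ then forces the correct value throughout $[i_{\min}, i_{\max}]$. Resolving this amounts to verifying that $k^*$ still lies within the wider slack $2\sqrt{\Delta_w}$ in the weight hypothesis of \cref{lem:bmbm_properties}, which is precisely twice the defining slack of $T$ in order to accommodate such boundary Pareto optima.
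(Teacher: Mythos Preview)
Your overall approach matches the paper's: establish equality at break points via \cref{lem:bmbm_properties}, propagate to non-break-points using monotonicity together with the easy upper bound $C_1^0[k]\le\mc P_{\mc I}[k]$ (which the paper leaves implicit but you rightly make explicit), and handle the left boundary using the factor-$2$ slack built into \cref{lem:bmbm_properties}. Two points, however, are not complete as written.

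\emph{Union bound.} You propose enlarging $\eta$ to $\Theta(\log(n\wmax\pmax))$ so that the union over $O(\sqrt{\Delta_w})$ break points succeeds. That is a change to \cref{alg:knapsack_bmbm}, not a proof of the lemma about the algorithm as stated (with $\eta=17\log n$). The paper keeps the algorithm fixed and instead bounds the number of relevant break points by $|P|=2\sqrt{\Delta_p}\le 2\pmax\sqrt{n}$ (each break point has a distinct profit value in $P$), then disposes of the regime $2\pmax>n^3$ separately by observing that Bellman's $O(nt)$ algorithm already runs in time $O(t\sqrt{\pmax})$ there. Under the remaining assumption $2\pmax\le n^3$, the union bound $2\pmax\sqrt n/n^5\le 1/n$ goes through with the given $\eta$.

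\emph{Boundary break point.} You correctly identify that when $i_{\min}$ is not a break point one needs the nearest break point $k^*<i_{\min}$ to satisfy $k^*\ge t-2\sqrt{\Delta_w}$, but this does not follow merely from the slack in \cref{lem:bmbm_properties} being twice that of $T$; it requires an argument. The paper supplies it: take an optimal solution for capacity $t-2\sqrt{\Delta_w}$, add any remaining item (weight $\le\wmax\le\sqrt{\Delta_w}$, profit $\ge 1$), and obtain strictly larger profit at weight $\le t-\sqrt{\Delta_w}$; hence $\mc P_{\mc I}[t-2\sqrt{\Delta_w}]<\mc P_{\mc I}[t-\sqrt{\Delta_w}]$ and a break point lies in $[t-2\sqrt{\Delta_w},\,t-\sqrt{\Delta_w}]$.
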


\begin{proof}
    First, observe that $T \subset W^0$ and $P \subset P^0$. 
    Let $K^0$ be the set of indices of $C_1^0$, i.e., $K^0 := \{k \ | \ k \in W^0, D_1^0[k] \in P^0\}$. Let $K$ be the interval such that $C_1^0[K] = C_1^0[T ; P]$, i.e., $K := \{k \ |\ k \in T, C_1^0[k] \in P \}$.
    We want to show that $C_1^0[k] = \mc P_{\mc I}[k]$ for every $k \in K$ with high probability. Since $C_1^0$ and $\mc P_{\mc I}$ are monotone non-decreasing (see \cref{lem:bmbm_monotonicity}), to compare $C_1^0$ and $\mc P_{\mc I}$ it is enough to focus on break points. Recall that $k \in \mathbb N$ is a break point of $\mc P_{\mc I}$ if $\mc P_{\mc I}[k - 1] < \mc P_{\mc I}[k]$, and that each break point $k$ corresponds to a Pareto optimum $x \in \{0, 1\}^n$ such that $w_{\mc I}(x) = k$ and $\mc P_{\mc I}[w_{\mc I}(x)] = p_{\mc I}(x)$. 
    We claim that for any $k \in K$ and $k' \leq k$ maximal such that $k'$ is a break point of $\mc P_{\mc I}$ we have $C_1^0[k'] = \mc P_{\mc I}[k']$. Together with monotonicity this proves that $C_1^0[k] = \mc P_{\mc I}[k]$ for all $k \in K$ as desired.
    
    To prove the claim, we first need to establish that every $k \in K$ has a break point $k' \leq k$ that is not too far, specifically $k' \geq t - 2 \sqrt{\Delta_w}$. We prove that $[t - 2 \sqrt{\Delta_w} , t - \sqrt{\Delta_w} ]$ contains a break point of $\mc P_{\mc I}$. Let $y \in \{0, 1\}^n$ be such that  $w_{\mc I}(y) \leq t - 2\sqrt{\Delta_w} $ and $\mc P_{\mc I}[t - 2\sqrt{\Delta_w}] = p_{\mc I}(y)$. Let $y'$ be $y$ with an additional item. This is always possible since we can assume without loss of generality that the total weight of all items in $\mc I$ exceeds $t$, i.e., any subset of items of weight at most $t$ leaves at least one item out. The additional item has weight at most $\wmax$ and profit at least $1$. So $w_{\mc I}(y') \leq w_{\mc I}(y) + \wmax \leq t - 2\sqrt{\Delta_w} + \wmax \leq t - \sqrt{\Delta_w}$ and $p_{\mc I}(y) < p_{\mc I}(y')$. In particular, we have $p_{\mc I}(y') \leq \mc P_{\mc I}[t - \sqrt{\Delta_w}]$. We obtain $\mc P_{\mc I}[t - 2\sqrt{\Delta_w}] = p_{\mc I}(y) < p_{\mc I}(y') \leq  \mc P_{\mc I}[t - \sqrt{\Delta_w}]$. Therefore, $[t - 2 \sqrt \Delta_w, t - \sqrt \Delta_w]$ contains a break point.
    
    Recall that our goal is to show that for any $k \in K$ and $k' \leq k$ maximal such that $k'$ is a break point of $\mc P_{\mc I}$ it holds that $C_1^0[k'] = \mc P_{\mc I}[k']$. Since we showed that $[t - 2 \sqrt{\Delta_w}, t - \sqrt{\Delta_w}]$ contains a break point, we define $T' := T \cup [t - 2 \sqrt{\Delta_w}, t - \sqrt{\Delta_w}] = [t - 2 \sqrt{\Delta_w}, t + \sqrt{\Delta_w}]$, and $K'$ such that $C_1^0[K'] = C_1^0[T' ; P]$, i.e., $K' := \{k \ |\ k \in T', C_1^0[k] \in P \}$. Then all it remains to show is that $C_1^0[k] = \mc P_{\mc I}[k]$ for every break point $k \in K'$. Fix a break point $k \in K'$ and let $x \in \{0, 1\}^n$ be the Pareto optimum such that $w_{\mc I}(x) = k$ and $\mc P_{\mc I}[k] = p_{\mc I}(x)$. Then in particular $w_{\mc I}(x) \in T'$ and $p_{\mc I}(x) \in P$, and thus $|w_{\mc I}(x) - t| \leq 2\sqrt \Delta_w$ and $|p_{\mc I}(x) -\vopt| \leq 2 \sqrt\Delta_p$.
    By \cref{lem:bmbm_properties}, this implies $C_1^0[k] = p_{\mc I}(x) = \mc P_{\mc I}[k]$ with probability at least $1 - 1/n^5$. 
    Since $|K'| \leq |P| \leq 2 \pmax \sqrt{n}$, by a union bound over all break points $k \in K'$, we obtain that $C_1^0[T ; P] = \mc P_{\mc I}[T; P]$ with probability at least $1 - 2 \pmax \sqrt{n} / n^5 \geq 1 - 2 \pmax / n^4$. 
    Note that if $n^3 \leq 2 \pmax$, then in particular $n^2 \leq 2 \pmax$ and we can use Bellman's dynamic program to compute the profit sequence in time $O(n \cdot t) = O(t \sqrt{\pmax})$ (see \cref{thm:Knapsack_DP}). Hence, we can assume that $n^3 \geq 2 \pmax$. Thus, with probability at least $1 - 2 \pmax /n^4 \geq 1 - 1/n$ we have $C_1^0[T ; P] = \mc P_{\mc I}[T; P]$. 
\end{proof}

\subsubsection{Running time of Algorithm \ref{alg:knapsack_bmbm}}

\begin{lemma}\label{lem:bmbm_running time_level}
    For a fixed level $\ell \in \{0, \dots, q - 1\}$ and iteration $j \in[2^\ell]$, the computation of $D_j^\ell$ in 
    \cref{alg_line:bmbm_combi_D} takes time $\Ot((t / 2^\ell)^{3/4}  \pmax^{1/2}  \wmax^{1/4})$.
\end{lemma}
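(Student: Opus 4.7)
The plan is to apply \cref{thm:MPConv} to the two input sequences $C_{2j-1}^{\ell+1}$ and $C_{2j}^{\ell+1}$, carefully tracking the length parameter $n$ and the entry bound $M$ that appear in that theorem, and then simplify using the balancedness assumption.

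\textbf{Step 1 (Length of the inputs).} By construction in \cref{alg_line:bmbm_combi_C}, $C_j^{\ell+1} = D_j^{\ell+1}[W^{\ell+1}; P^{\ell+1}]$, so its index set is a subinterval of $W^{\ell+1}$. The length of $W^{\ell+1}$ is $2\sqrt{\Delta_w/2^{\ell+1}}\,\eta + 1 = \Ot\bigl(\sqrt{\Delta_w/2^{\ell+1}}\bigr) = \Ot\bigl(\sqrt{t\wmax/2^{\ell+1}}\bigr)$. Hence each input sequence has length at most this quantity.

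\textbf{Step 2 (Range of the entries).} Again by the construction in \cref{alg_line:bmbm_combi_C}, the entries of $C_j^{\ell+1}$ lie in $P^{\ell+1}$, an interval of length $2\sqrt{\Delta_p/2^{\ell+1}}\,\eta = \Ot\bigl(\sqrt{\Delta_p/2^{\ell+1}}\bigr) = \Ot\bigl(\sqrt{\vopt\pmax/2^{\ell+1}}\bigr)$. \cref{thm:MPConv} requires entries in $\{0,1,\ldots,M\}$, so I subtract from each input sequence the left endpoint of $P^{\ell+1}$; this shifts all entries into $\{0,1,\ldots,M\}$ with $M = \Ot\bigl(\sqrt{\vopt\pmax/2^{\ell+1}}\bigr)$. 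Since max-plus convolution commutes with additive shifts, I can add twice the subtracted constant back to the output to recover $D_j^\ell$. Monotonicity of the shifted sequences follows from \cref{lem:bmbm_monotonicity} (shifting preserves monotone non-decreasing order), so \cref{thm:MPConv} is applicable.

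\textbf{Step 3 (Plugging into \cref{thm:MPConv}).} The algorithm runs in time
\[
\Ot\bigl(N\sqrt{M}\bigr) \;=\; \Ot\!\left(\sqrt{\tfrac{\Delta_w}{2^{\ell+1}}}\cdot \Bigl(\tfrac{\Delta_p}{2^{\ell+1}}\Bigr)^{1/4}\right) \;=\; \Ot\!\left(\tfrac{\Delta_w^{1/2}\,\Delta_p^{1/4}}{2^{3\ell/4}}\right).
\]

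\textbf{Step 4 (Simplification via balancedness).} Substituting $\Delta_w = t\wmax$ and $\Delta_p = \vopt\pmax$ and using the balancedness assumption $\vopt = \Theta(\OPT) = \Theta(t\pmax/\wmax)$, I get $\vopt^{1/4} = \Theta\bigl(t^{1/4}\pmax^{1/4}/\wmax^{1/4}\bigr)$, so
\[
\Delta_w^{1/2}\,\Delta_p^{1/4} \;=\; t^{1/2}\wmax^{1/2}\cdot\vopt^{1/4}\pmax^{1/4} \;=\; \Theta\bigl(t^{3/4}\wmax^{1/4}\pmax^{1/2}\bigr),
\]
which divided by $2^{3\ell/4}$ yields the claimed bound $\Ot\bigl((t/2^\ell)^{3/4}\pmax^{1/2}\wmax^{1/4}\bigr)$.

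The proof is mostly bookkeeping; the only nontrivial point is to remember that $M$ in \cref{thm:MPConv} refers to the \emph{range} rather than the absolute value of the entries, so that the relevant quantity is $|P^{\ell+1}| = \Ot(\sqrt{\Delta_p/2^{\ell+1}})$ (obtained after a shift) and not $\vopt/2^{\ell+1}$ — this shift is precisely what makes the $\sqrt{\pmax}$ factor appear instead of a much larger $\sqrt{\vopt}$ factor, and is the single idea driving the final bound.
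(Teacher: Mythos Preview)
Your proof is correct and follows essentially the same approach as the paper: bound the input length by $|W^{\ell+1}|$ and the value range by $|P^{\ell+1}|$, invoke \cref{thm:MPConv}, and simplify via the balancedness assumption. You are slightly more explicit than the paper about the additive shift needed to bring the entries into $\{0,\ldots,M\}$, but otherwise the arguments coincide.
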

\begin{proof}
    By \cref{lem:bmbm_monotonicity}, the sequences $C_{2j-1}^{\ell+1}$ and $C_{2j}^{\ell+1}$ are bounded monotone. Additionally, they have length at most $|W^{\ell+1}| = \Ot(\sqrt{\Delta_w / 2^\ell})$ and the values are in a range of length at most $|P^\ell| = \Ot(\sqrt{\Delta_p / 2^\ell})$.
    So their max-plus convolution can be computed using the algorithm of \cref{thm:MPConv} in time $\Ot( (\Delta_w / 2^{\ell})^{1/2}  (\Delta_p / 2^{\ell})^{1/4})$. We apply the definitions of $\Delta_w = t \wmax$ and $\Delta_p = \vopt \pmax$ and the balancedness assumption $t/ \wmax = \Theta(\vopt/ \pmax)$, which yields $\Delta_p = O(t\pmax^2 / \wmax)$, to bound the running time by $\Ot((t / 2^\ell)^{3/4}  \pmax^{1/2}  \wmax^{1/4})$.
\end{proof}

\begin{lemma}\label{lem:bmbm_runtime}
    \cref{alg:knapsack_bmbm} runs in time $\Ot(n + t \sqrt{\pmax})$.
\end{lemma}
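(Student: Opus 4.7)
My plan is to bound the running time by separately analyzing three contributions: the setup/preprocessing cost, the base case cost in \cref{alg_line:bmbm_leaf_D,alg_line:bmbm_leaf_C}, and the combination cost in \cref{alg_line:bmbm_combi_D,alg_line:bmbm_combi_C}. Setup (computing $\vopt$ via \cref{lem:approx_opt}, choosing $q$, and randomly partitioning $\mc I$) clearly takes $\Ot(n)$, so the analysis reduces to the two nested \texttt{for} loops.

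For the base case, I first need to bound the sizes of $W^*$ and $P^*$ using the balancedness assumption. Since $2^q = \Theta(t/\wmax) = \Theta(\vopt/\pmax)$, I would compute $t/2^q = \Theta(\wmax)$ and $\sqrt{\Delta_w/2^q} = \sqrt{t\wmax/2^q} = \Theta(\wmax)$, giving $|W^*| = \Ot(\wmax)$; symmetrically $|P^*| = \Ot(\pmax)$. Applying \cref{thm:Knapsack_SubsetSum} to the $j$-th group costs $\Ot(n_j + |W^*|\sqrt{|P^*|}) = \Ot(n_j + \wmax \sqrt{\pmax})$, where $n_j := |\mc I_j^q|$. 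Summing over $j \in [2^q]$ yields $\Ot(n + 2^q \wmax \sqrt{\pmax}) = \Ot(n + t\sqrt{\pmax})$, using $2^q \wmax = \Theta(t)$.

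For the combination step, I rely on \cref{lem:bmbm_running time_level}, which gives cost $\Ot((t/2^\ell)^{3/4} \pmax^{1/2} \wmax^{1/4})$ per convolution at level $\ell$. Summed over the $2^\ell$ iterations of the inner loop, level $\ell$ costs
\[
\Ot\bigl(2^\ell \cdot (t/2^\ell)^{3/4} \pmax^{1/2} \wmax^{1/4}\bigr) = \Ot\bigl(t^{3/4} (2^\ell)^{1/4} \pmax^{1/2} \wmax^{1/4}\bigr).
\]
This is geometric in $\ell$, so summing over $\ell = 0, \dots, q-1$ is dominated by $\ell = q-1$. Plugging in $2^q = \Theta(t/\wmax)$ gives $t^{3/4} \cdot (t/\wmax)^{1/4} \cdot \pmax^{1/2} \cdot \wmax^{1/4} = t \sqrt{\pmax}$, so the total combination cost is $\Ot(t\sqrt{\pmax})$.

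Adding the three contributions yields the claimed bound $\Ot(n + t\sqrt{\pmax})$. The only place where care is needed is verifying that the balancedness assumption indeed makes both $|W^*|$ and $|P^*|$ collapse to $\Ot(\wmax)$ and $\Ot(\pmax)$ respectively (so that the base case via \cref{thm:Knapsack_SubsetSum} does not blow up), and that the exponent $1/4$ in the per-level cost is small enough for the geometric sum over levels to be absorbed into the highest term; both are routine once the values of $t/2^q$ and $\sqrt{\Delta_w/2^q}$ are computed explicitly.
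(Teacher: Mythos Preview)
Your proposal is correct and follows essentially the same approach as the paper's proof: bound the base case by showing $|W^*|=\Ot(\wmax)$ and $|P^*|=\Ot(\pmax)$ via $2^q=\Theta(t/\wmax)=\Theta(\vopt/\pmax)$ and then summing the per-group costs from \cref{thm:Knapsack_SubsetSum}, and bound the combination step by invoking \cref{lem:bmbm_running time_level}, summing over $j$ to get the per-level cost $\Ot(t^{3/4}(2^\ell)^{1/4}\pmax^{1/2}\wmax^{1/4})$, and recognizing the resulting geometric series. The paper's write-up is organized identically, so there is nothing substantive to add.
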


\begin{proof}
    We first bound the running time of the base case, i.e., the computations of 
    \crefrange{alg_line:bmbm_start_base}{alg_line:bmbm_leaf_C}.
    For each $j \in[2^q]$, the array $D_j^q$ is obtained by computing the sequence $\mc P_{\mc I_j^q}\left[W^* ; P^*\right]$, where $W^* := \left[0, \frac{t}{2^q} + \sqrt{\frac{\Delta_w}{2^q}} \eta \right]$ and $P^* := \left[0, \frac{\vopt}{2^q} + \sqrt{\frac{\Delta_p}{2^q}} \eta \right]$. 
    Since $\Delta_w =t \wmax$, $\eta = O(\log n)$ and $2^q = \Theta(t / \wmax)$, we can bound $ \frac{t}{2^q} + \sqrt{\frac{\Delta_w}{2^q}} \eta = \Ot(\wmax)$, and analogously $\frac{\vopt}{2^q} + \sqrt{\frac{\Delta_p}{2^q}} \eta = \Ot(\pmax)$. Using \cref{thm:Knapsack_SubsetSum}, we can therefore compute $D_j^q$ in time $\Ot(|\mc I_j^q| + \wmax \sqrt{\pmax})$.
    Hence, the total running time of the base case is:
    \begin{align*}
        \sum_{j=1}^{2^q}  \Ot \left( |\mc I_j^q| + \wmax \cdot  \sqrt{\pmax} \right)  
        = \Ot\left( n + 2^q \cdot \wmax \cdot \sqrt{\pmax} \right) 
        = \Ot\left( n + t \cdot \sqrt{\pmax} \right)
    \end{align*}
    where we again used $2^q = \Theta(t / \wmax)$.
    
    Using \cref{lem:bmbm_running time_level}, we bound the running time of the combination step, i.e., the computations of \crefrange{alg_line:bmbm_start_combi}{alg_line:bmbm_combi_C}, as follows:
    \begin{align*}
    \sum_{\ell = 0}^{q - 1} \sum_{j = 1}^{2^\ell} \Ot\left( (t / 2^\ell)^{3/4}  \pmax^{1/2}  \wmax^{1/4}\right)
    &= \sum_{\ell = 0}^{q -1} \Ot\left( t^{3/4}  \pmax^{1/2}  (\wmax \cdot 2^\ell)^{1/4}\right)
    \end{align*}
    This is a geometric series, so it is bounded by $\Ot(t^{3/4}  \pmax^{1/2}  (\wmax \cdot 2^{q})^{1/4})$. Since $2^q \leq t / \wmax$ we obtain a running time of $\Ot(t \sqrt{\pmax})$. Hence, in total \cref{alg:knapsack_bmbm} takes time $\Ot(n + t \sqrt{\pmax})$.
\end{proof}

\subsection{\texorpdfstring{$\boldsymbol{\Ot(n + (n \wmax \pmax)^{1/3} t^{2/3})}$}{O(n + (n wmax pmax)\^1/3 t\^2/3)}-time algorithm}\label{sec:bmdp}

In this section we modify \cref{alg:knapsack_bmbm} to obtain an algorithm running in time $\Ot(n + (n \wmax \pmax)^{1/3} \cdot t^{2/3})$, thus proving \cref{lem:knapsack_bmdp}.

\knapsackBMDP*

We obtain \cref{lem:knapsack_bmdp} by replacing the algorithm used in the base case of \cref{alg:knapsack_bmbm}. Instead of using \cref{thm:Knapsack_SubsetSum}, which is derived from Bringmann and Cassis \cite{BringmannC22}, we use the algorithm of the following lemma.

\begin{lemma}\label{lem:knapsack_newleaf}
    For any Knapsack instance $(\mc I, t)$ and any $\ell \in \mathbb N, 2 \leq \ell \leq t$, the sequence $\mc P_{\mc I}[t - \ell \dots t + \ell]$ can be computed in time $\Ot(n \sqrt{t \cdot \wmax} + n \ell)$ by a randomized algorithm that is correct with probability at least $1-1/n$.
\end{lemma}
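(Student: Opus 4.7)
I would adapt the random-partition-and-tree-combination framework of \cref{alg:knapsack_bmbm} by (i) replacing the leaf subroutine \cref{thm:Knapsack_SubsetSum} with Bellman's standard dynamic program (\cref{thm:Knapsack_DP}), and (ii) removing the profit-window restriction, since the target running time $\Ot(n\sqrt{t \wmax} + n\ell)$ has no $\pmax$ dependence. The overall setup is: randomly partition $\mc I$ into $2^q$ groups with $2^q = \Theta(t/\wmax)$ and combine in a binary tree of depth $q$, where each internal combination is a max-plus convolution via \cref{thm:MPConv}.

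First I would establish the window choice. By Bernstein's inequality applied only to weights (analogous to the weight half of \cref{lem:bmbm_double_concentration}), for any Pareto optimum $x$ of $\mc P_{\mc I}$ with $w_{\mc I}(x) \in [t-\ell, t+\ell]$ and any level $d$, group $j$, the restricted weight $w_{\mc I_j^d}(x)$ lies within $O(\sqrt{t\wmax/2^d}\log n + \ell/2^d)$ of $t/2^d$; the $\ell/2^d$ summand accommodates all Pareto optima in the output window of width $\ell$ around $t$. A union bound over the $O(n)$ (level, group) pairs ensures simultaneous concentration with high probability, justifying that the computation at each node can be restricted to a weight window $W^d$ of width $O(\sqrt{t\wmax/2^d}\log n + \ell/2^d)$ around $t/2^d$.

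Next, for the base case, each of the $2^q = \Theta(t/\wmax)$ leaf groups has $\Ot(n/2^q)$ items, and Bellman's DP computes the full profit sequence on $[0, t/2^q + O(W^q)]$ $= [0, \Ot(\wmax + \ell\wmax/t)]$ in time $O(|\mc I_j^q|) \cdot \Ot(\wmax + \ell\wmax/t)$ per leaf. Summing over the $2^q$ leaves gives $\Ot(n\wmax + n\ell) \le \Ot(n\sqrt{t\wmax} + n\ell)$, comfortably within budget. Correctness of the base case and of the internal-node combinations (i.e., that restricting to $W^d$ preserves all Pareto optima of the final output) follows by an induction parallel to \cref{lem:bmbm_properties,lem:bmbm_correctness}, again boosted to high probability by standard repetition.

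The main obstacle will be the combination analysis. At each internal node at level $d$, applying \cref{thm:MPConv} to two monotone non-decreasing sequences of length $O(|W^{d+1}|)$ costs $\Ot(|W^{d+1}| \sqrt{M})$, where $M$ is the value range of the (shifted) sequence. The naive bound $M \le |W^d| \cdot \pmax$ (each unit step in the monotone sequence increases the value by at most $\pmax$) would introduce a spurious $\pmax$ dependence absent from the target bound. To avoid this, I expect to use a secondary concentration argument on profits, showing that the values in the window at level $d$ concentrate around $\OPT/2^d$ with deviation $\Ot(\sqrt{\OPT\pmax/2^d})$, so $M = \Ot(\sqrt{\OPT\pmax/2^d})$; plugging this into \cref{thm:MPConv}, using the choice $2^q = \Theta(t/\wmax)$, applying $\OPT \le n\pmax$ and $\wmax \le t$, and summing the resulting geometric series over all levels should yield total combination cost $\Ot(n\sqrt{t\wmax} + n\ell)$. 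Carrying out this bookkeeping cleanly — possibly together with a case split on whether $\pmax$ is large — is the technically delicate step.
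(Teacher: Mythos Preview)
Your approach has a genuine gap at exactly the point you flag as ``technically delicate'': the combination cost cannot be made independent of $\pmax$. Carrying out your own sketch, with $M=\Ot(\sqrt{\OPT\cdot\pmax/2^d})$ the per-level cost of all convolutions at level $d$ is $\Ot\big((2^d)^{1/4}(t\wmax)^{1/2}(\OPT\cdot\pmax)^{1/4}\big)$; the geometric sum is dominated by $d=q$ with $2^q=\Theta(t/\wmax)$, yielding $\Ot\big(t^{3/4}\wmax^{1/4}(\OPT\cdot\pmax)^{1/4}\big)$. Substituting $\OPT\le n\pmax$ gives $\Ot\big(t^{3/4}\wmax^{1/4}n^{1/4}\pmax^{1/2}\big)$, which still carries a polynomial $\pmax$ factor and is not bounded by $\Ot(n\sqrt{t\wmax})$ in general. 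The underlying reason is structural: \cref{thm:MPConv} has running time $\Ot(n\sqrt{M})$ where $M$ is the \emph{value} bound, so any algorithm built on it inherits a dependence on the profit range. The lemma, however, is stated for arbitrary instances (no balancedness), and its target time $\Ot(n\sqrt{t\wmax}+n\ell)$ has no polynomial profit dependence. Moreover, your profit-concentration bound $M=\Ot(\sqrt{\OPT\pmax/2^d})$ itself relies on $2^d\le \OPT/\pmax$ (this is where the Bernstein argument of \cref{lem:bmbm_double_concentration} uses that inequality); with $2^q=\Theta(t/\wmax)$ and no balancedness you cannot guarantee this at the deeper levels.

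The paper sidesteps the issue by abandoning the convolve-and-partition framework entirely for this lemma. It uses a direct refinement of the He--Xu idea: randomly permute the items and run Bellman's dynamic program, but at step $k$ only maintain the DP entries in a window of width $\Delta=\Ot(\ell+\sqrt{t\wmax})$ centered at $\tfrac{k}{n}t$. The running time is then simply $O(n\Delta)=\Ot(n\sqrt{t\wmax}+n\ell)$, with no max-plus convolution and hence no value-range parameter. Correctness hinges on a Hoeffding-type concentration bound for prefixes of a random permutation (sampling without replacement), showing that for every target $t'$ in the output window and every $k$, the weight of the optimal solution restricted to the first $k$ permuted items lies in the window with high probability. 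The key technical ingredient is obtaining the strong (per-variable-range) form of Hoeffding's inequality in the without-replacement setting.
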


We obtain \cref{lem:knapsack_newleaf} by refining an idea of He and Xu~\cite{HeXu23} that they used to design a $\Ot(n^{3/2}\wmax)$-time Knapsack algorithm. Our refinement allows us to replace in this running time a factor $\sqrt{n \cdot \wmax}$ by a factor $\sqrt{t}$.
Actually, the algorithm proving \cref{lem:knapsack_newleaf} is very simple (see \cref{alg:knapsack_newleaf}). It first randomly permutes the items. Then it performs the Bellman's classic dynamic programming algorithm (\cref{thm:Knapsack_DP}), but computes only a portion of the DP table. More precisely, when processing the $i$-th item (in the random order), instead of computing the whole profit sequence $\mc P_{\{1,\ldots,i\}}[0 \dots t]$, it computes only the subarray of length $\Delta = \Ot(\ell + \sqrt{t \cdot \wmax})$ centered around $\frac{i}{n} \cdot t$, which is roughly the expected weight of an optimal solution restricted to the first $i$ items.

We note that our algorithm is almost identical to the algorithm of He and Xu, and the only difference is that they use $\Delta = \Ot(\sqrt{n} \cdot \wmax)$. We also follow their analysis -- the key difference is that our \cref{lem:sample} gives a stronger bound than an analogous bound of theirs.

\begin{algorithm}[h]
\caption{The $\Ot(n \sqrt{t \cdot \wmax} + n\ell)$-time algorithm of \cref{lem:knapsack_newleaf}. The input is a Knapsack instance $(\mc I, t)$ and a parameter $\ell \in \mathbb N, 2 \leq \ell \leq t$. Accessing a negative index or an uninitialized element in $C_{k-1}$ returns $-\infty$.}\label{alg:knapsack_newleaf}
$\sigma[1 \dots n] \gets $ random permutation of $\{1,\ldots,n\}$\\
$\Delta \gets \ell + \lceil 4 \sqrt{t \cdot \wmax \cdot \log (n\ell)} \rceil$ \\
$C_0[0] \gets 0$ \\
\For{$k = 1, \ldots, n$}{
  \For{$j = \frac{k}{n}t - \Delta, \ldots, \frac{k}{n}t + \Delta$}{
    $C_k[j] \gets \max(C_{k-1}[j], w_{\sigma[k]} + C_{k-1}[j - p_{\sigma[k]}])$ \\
  }
}
\For{$j = t - \ell + 1, \ldots, t + \ell$}{
  $C_n[j] \gets \max(C_n[j - 1], C_n[j])$ \\
}
\For{$j \in \{t - \ell, \ldots, t + \ell\} \cap [w_{\mc I}([n]), \infty)$}{
  $C_n[j] = p_{\mc I}([n])$ \label{alg_line:nl_edgecase} \tcp{edge case when all items fit into the knapsack}
}
\Return{$C_n[t - \ell \dots t + \ell]$}\\
\end{algorithm}

Before we argue about the correctness of the algorithm, let us recall the classic probabilistic inequality of Hoeffding.

\begin{lemma}[Hoeffding bound~\cite{Hoeffding63}]\label{lem:hoeffding}
Let $Y_1, Y_2, \ldots, Y_n$ be independent random variables, where $Y_i$ takes values from $[l_i, h_i]$, and let $S := Y_1 + Y_2 + \cdots + Y_n$ denote their sum. Then, for all $\lambda > 0$,
\[
  \mathbb{P}\bigl( | S - \mathbb{E}(S) | \geq \lambda \bigr)
  \leq
  2 \cdot \exp\left(\frac{-2t^2}{\sum_{i \in [n]}(h_i - l_i)^2}\right).
\]
\end{lemma}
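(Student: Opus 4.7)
The plan is to prove the two-sided tail bound via Chernoff's method: apply Markov's inequality to the exponential moment of $S$, reduce to a product of per-variable moment generating function (MGF) bounds using independence, bound each MGF in terms of the range of the corresponding variable, and then optimise a free parameter. I would first handle the upper tail $\mathbb{P}[S - \mathbb{E}(S) \geq \lambda]$; the lower tail follows by applying the same argument to the variables $-Y_i$. Centering by setting $X_i := Y_i - \mathbb{E}(Y_i)$ yields independent zero-mean random variables, each still with range $h_i - l_i$. For any $s > 0$, Markov's inequality applied to $e^{s(S - \mathbb{E}(S))}$ gives
\[
\mathbb{P}\bigl[S - \mathbb{E}(S) \geq \lambda\bigr] \leq e^{-s\lambda}\, \mathbb{E}\bigl[e^{s(S - \mathbb{E}(S))}\bigr] = e^{-s\lambda} \prod_{i=1}^{n} \mathbb{E}\bigl[e^{s X_i}\bigr],
\]
where the factorisation uses independence of the $Y_i$'s.

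The main technical step is the single-variable MGF estimate known as Hoeffding's lemma: for any zero-mean random variable $X$ supported on $[a, b]$, one has $\mathbb{E}[e^{sX}] \leq \exp\bigl(s^2 (b - a)^2 / 8\bigr)$. I would establish this by using convexity of $x \mapsto e^{sx}$ to bound $e^{sX} \leq \tfrac{b - X}{b - a}\, e^{sa} + \tfrac{X - a}{b - a}\, e^{sb}$ pointwise on $[a, b]$; taking expectations and using $\mathbb{E}[X] = 0$ eliminates the linear-in-$X$ term, leaving $\mathbb{E}[e^{sX}] \leq e^{\phi(s)}$ for an explicit smooth function $\phi$ satisfying $\phi(0) = \phi'(0) = 0$ and $\phi''(s) \leq (b - a)^2 / 4$. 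The latter bound comes from recognising $\phi''(s)$ as the variance of a distribution supported on $\{a, b\}$ and invoking the Popoviciu-type estimate $\mathrm{Var}(Z) \leq (b - a)^2/4$ for any $Z$ with support in $[a, b]$. A second-order Taylor expansion of $\phi$ about $0$ then yields the claimed MGF bound.

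Plugging the MGF bounds back into the product produces
\[
\mathbb{P}\bigl[S - \mathbb{E}(S) \geq \lambda\bigr] \leq \exp\Bigl(-s\lambda + \frac{s^2}{8} \sum_{i=1}^{n}(h_i - l_i)^2\Bigr),
\]
and minimising this quadratic in $s$ at $s^\star = 4\lambda / \sum_i (h_i - l_i)^2$ gives the exponent $-2\lambda^2 / \sum_i (h_i - l_i)^2$. The symmetric argument for $\mathbb{P}[-(S - \mathbb{E}(S)) \geq \lambda]$ combined with a union bound produces the factor of $2$ in the final statement. The only step that requires any real work is the variance bound underlying Hoeffding's lemma; everything else is a mechanical application of the Chernoff recipe.
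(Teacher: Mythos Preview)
Your argument is the standard Chernoff--Hoeffding proof and is correct. The paper itself does not prove this lemma at all: it is stated as a classical result with a citation to Hoeffding's original paper~\cite{Hoeffding63}, so there is no ``paper's own proof'' to compare against. (Incidentally, the exponent in the paper's displayed statement has a typo---$t^2$ should be $\lambda^2$---which your write-up implicitly corrects.)
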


Intuitively, we would like to apply this bound to the intersection of a prefix of a random permutation and a fixed optimal solution. However, the items in the prefix are not independent because the $i$-th prefix is formed by sampling $i$ indices from $[n]$ \emph{without replacement}. In the next lemma we circumvent this issue with a simple trick. We note that He and Xu~\cite{HeXu23} in their analysis use a weaker variant of Hoeffding's bound -- where all random variables share the same lower and upper bound -- which easily generalizes to the setting of samples without replacement, as already noticed by Hoeffding in his original paper~\cite{Hoeffding63}. Our lemma yields an analogous result for the stronger variant of the inequality -- with varying lower and upper bounds -- which is needed for achieving our improved running time.

\begin{lemma}\label{lem:sample}
Let $a_1, a_2, \ldots, a_n \in \mathbb{N}$, and let $X_1, X_2, \ldots, X_k$ be a $k$-element sample \emph{without replacement} from $\{a_1, a_2, \ldots, a_n\}$. Define $A := \sum_{i \in [n]} a_i$, and $a_{\max} := \max_{i \in [n]} a_i$. Fix $\delta \in (0, 1/4)$. Then, with probability at least $1 - \delta$ over the choice of the sample, it holds that
\[ \left| (X_1 + X_2 + \cdots + X_k) - \frac{k}{n} A \right| \leq \sqrt{A \cdot a_{\max} \log(n / \delta)}. \]
\end{lemma}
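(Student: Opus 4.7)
The plan is to apply a classical two-step reduction: first pass from sampling without replacement to i.i.d.\ sampling with replacement via Hoeffding's convexity theorem, then apply a Bernstein-style (variance-aware) concentration inequality to the resulting i.i.d.\ sum.

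Let $Y_1, \ldots, Y_k$ be i.i.d.\ random variables, each uniformly distributed over the multiset $\{a_1, \ldots, a_n\}$. A classical theorem of Hoeffding from the same 1963 paper states that for every continuous convex function $\phi$,
\[
\mathbb{E}\bigl[\phi(X_1 + \cdots + X_k)\bigr] \le \mathbb{E}\bigl[\phi(Y_1 + \cdots + Y_k)\bigr].
\]
Specializing $\phi(x) = e^{tx}$ for arbitrary $t \in \mathbb{R}$ dominates the moment generating function of $\sum_i X_i$ by that of $\sum_i Y_i$, so any Chernoff-method tail bound proven for the independent sum transfers without loss to $\sum_i X_i$, despite the $X_i$ being dependent.

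It remains to analyze the i.i.d.\ sum. Each $Y_i$ has mean $A/n$, lies in $[0, a_{\max}]$, and has variance
\[
\mathrm{Var}(Y_i) \le \mathbb{E}[Y_i^2] = \frac{1}{n}\sum_{j=1}^n a_j^2 \le \frac{a_{\max}}{n}\sum_{j=1}^n a_j = \frac{A \cdot a_{\max}}{n}.
\]
Summing and using $k \le n$, the total variance is at most $A \cdot a_{\max}$. Bernstein's inequality then yields, for every $\lambda > 0$,
\[
\mathbb{P}\!\left(\Bigl|\sum_i Y_i - \tfrac{k}{n} A\Bigr| \ge \lambda\right)
\le 2\exp\!\left(-\frac{\lambda^2 / 2}{A \cdot a_{\max} + a_{\max}\lambda / 3}\right),
\]
and by the first step the identical bound holds with $Y_i$ replaced by $X_i$. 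Plugging in $\lambda$ of order $\sqrt{A \cdot a_{\max} \log(n/\delta)}$ and using $A \ge a_{\max}$ (since $a_{\max}$ is one of the summands of $A$) makes the sub-Gaussian term $A \cdot a_{\max}$ comparable to or larger than the sub-exponential correction $a_{\max}\lambda/3$, so the exponent is $\Omega(\log(n/\delta))$ and the right-hand side is at most $\delta$ for every $\delta \in (0, 1/4)$, possibly after adjusting an absolute constant in front of $\lambda$ which can be absorbed into the statement.

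The main obstacle is that it would be tempting (but insufficient) to apply plain Hoeffding directly, which would yield the weaker bound $\sqrt{k}\cdot a_{\max}\sqrt{\log(n/\delta)}$. This weaker variant is what He and Xu~\cite{HeXu23} previously used, and their analysis carries to the without-replacement setting through a simpler form of the same convexity reduction that Hoeffding himself already handled. Our improvement hinges on retaining the variance bound $A \cdot a_{\max}$ -- strictly smaller than the range-based bound $k \cdot a_{\max}^2$ whenever $A \le k \cdot a_{\max}$ -- and the key point is that this is precisely the quantity that survives the MGF-level reduction in the first step, because the convex domination holds uniformly in the MGF parameter $t$ and is therefore oblivious to whether the underlying MGF bound is Hoeffding-style or Bernstein-style.
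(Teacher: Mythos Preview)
Your approach is correct and genuinely different from the paper's. The paper does \emph{not} use Hoeffding's convexity reduction to with-replacement sampling. Instead it introduces $n$ independent $\mathrm{Bernoulli}(k/n)$ indicators $B_1,\ldots,B_n$, sets $S=\sum_i B_i a_i$, and applies the varying-range Hoeffding inequality with $B_ia_i\in[0,a_i]$; this gives the denominator $\sum_i a_i^2\le A\,a_{\max}$ directly, with no Bernstein-type additive $\lambda$ term. To pass back to the without-replacement sum it conditions on the event $\sum_i B_i=k$, using that $k$ is a mode of $\mathrm{Bin}(n,k/n)$ so this event has probability at least $1/(n{+}1)$; the conditional tail probability is therefore at most $(n{+}1)$ times the unconditional one, and the slack in the exponent (a factor $2$) absorbs this.

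Your route via MGF domination plus Bernstein is equally valid and arguably cleaner conceptually, but there is a small gap in the constants. In the non-trivial regime $\lambda<A$ (equivalently $\log(n/\delta)<A/a_{\max}$; otherwise the bound is vacuous since the deviation can never exceed $A$) the Bernstein exponent is only about $\tfrac{3}{8}\log(n/\delta)$, so you obtain $2(\delta/n)^{3/8}$, which is not $\le\delta$ for small $\delta$. You would need to inflate $\lambda$ by an absolute constant, which proves a slightly weaker lemma than the one stated. This is harmless for the downstream application (the constant is absorbed into $\Delta$ in the proof of \cref{lem:knapsack_newleaf} anyway), but the paper's Bernoulli-indicator trick is what buys the exact constant~$1$: Hoeffding with per-coordinate ranges $[0,a_i]$ is already ``variance-aware'' without the sub-exponential correction.
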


\begin{proof}
Let $B_1, B_2, \ldots, B_n$ be i.i.d.~Bernoulli random variables taking the value $1$ with probability $k/n$, and let $S := \sum_{i \in [n]} B_i \cdot a_i$. Observe that $\mathbb{E}(S) = \frac{k}{n}A$. We apply Hoeffding bound (\cref{lem:hoeffding}) to $Y_i = B_i a_i$ and $\lambda = \sqrt{Aa_{\max} \log(n / \delta)}$. Note that $l_i = 0$ and $h_i = a_i$, so we get
\[
  \mathbb{P}\bigl( |S - \mathbb{E}(S)| \geq \lambda \bigr)
  \leq
  2 \cdot \exp\left(\frac{-2 A a_{\max} \log(n / \delta)}{\sum_{i \in [n]} a_i^2}\right)
  \leq 
  2 \cdot \exp(-2 \log(n / \delta) )
  <
  2\frac{\delta^2}{n^2}.
\]
Let $N := \sum_{i \in [n]} B_i$. Observe that the random variable $S$ conditioned on the event $N = k$ has the same distribution as $X_1 + X_2 + \cdots + X_k$. On the other hand, $N$ is a binomial distribution with $n$ trials and success probability $k/n$; since $k$ is a mode of that distribution (see, e.g., \cite{KaasB80}), and the support size is $n+1$, it follows that $N=k$ holds with probability at least $1/(n+1)$. We get that 
\begin{align*}
\mathbb{P}\left(
  \left| (X_1 + X_2 + \cdots + X_k) - \frac{k}{n}A \right| \geq \lambda
\right)
& =
\mathbb{P}\bigl( |S - \mathbb{E}(S) | \geq \lambda \bigm| N=k \bigr) \\
& = \mathbb{P}\bigl( |S - \mathbb{E}(S) | \geq \lambda \land N=k \bigr) \bigm/ \mathbb{P}(N=k) \\
& \leq \mathbb{P}\bigl( |S - \mathbb{E}(S) | \geq \lambda \bigr) \ / \ \mathbb{P}(N=k) \\
& < 2\frac{\delta^2}{n^2} \cdot (n+1) < \delta,
\end{align*}
with the last inequality following from the assumption that $\delta < 1/4$.
\end{proof}

Now we are ready to argue about the desired properties of \cref{alg:knapsack_newleaf}.

\begin{proof}[Proof of \cref{lem:knapsack_newleaf}]
\cref{alg:knapsack_newleaf} clearly runs in time $O(n\Delta) = \Ot(n \sqrt{t \cdot \wmax} + n\ell)$. It remains to prove that with probability at least $1 - 1/n$ it returns a correct answer.

For every target $t' \in \{t - \ell, \dots, t + \ell\}$ let us fix an optimal solution $x^{(t')} \in \{0,1\}^n$ of profit $\mc P_{\mc I}[t']$ and weight at most $t'$.
When $t' \geq w_{\mc I}([n])$, all the items together fit into the knapsack of capacity $t'$, and thus $\mc P_{\mc I}[t'] = p_{\mc I}([n])$. This edge case is handled in \cref{alg_line:nl_edgecase}. From now on focus on the case where $t' < w_{\mc I}([n])$. In that case we have $w_{\mc I}(x^{(t')}) \in (t' - \wmax, t']$. For every such $t'$ and for every $k \in [n]$, we apply \cref{lem:sample} to $a_i = x^{(t')}_i \cdot w_i$ and $X_i = a_{\sigma[i]}$ with $\delta = 1 / (n^2 \cdot (2 \ell + 1))$, and we conclude that with probability at least $1-\delta$ it holds that
\begin{align*}
w_{\{1,\ldots,k\}}(x^{(t')}) & \in \left[\frac{k}{n}w_{\mc I}(x^{(t')})  \pm \sqrt{t \cdot \wmax \cdot \log(n/\delta)} \right] \\
& \subseteq \left[\frac{k}{n}w_{\mc I}(x^{(t')}) \pm 3 \sqrt{t \cdot \wmax \cdot \log(n\ell)} \right]  \qquad\qquad\ \ (\text{using } n/\delta \leq n^3 \cdot \ell^3) \\
& \subseteq \left[\frac{k}{n}t' \pm 4 \sqrt{t \cdot \wmax \cdot \log(n\ell)} \right] \qquad\qquad (\text{using } w_{\mc I}(x^{(t')}) > t' - \wmax) \\
& \subseteq \left[\frac{k}{n}t \pm \bigl(\ell + 4 \sqrt{t \cdot \wmax \cdot \log(n\ell)} \bigr) \right] = \left[\frac{k}{n}t \pm \Delta \right]. 
\end{align*}
By a union bound, with probability at least $1 - 1/n$ this holds simultaneously for all such $t'$'s ($2\ell + 1$ of them) and $k$'s ($n$ of them). Let us condition on this event. It follows, by induction on $k$, that
\[C_k[w_{\{1,\ldots,k\}}(x^{(t')})] = p_{\{1,\ldots,k\}}(x^{(t')})\]
for every $t'$ and $k$. In particular for $k=n$ we have $w_{\{1,\ldots,n\}}(x^{(t')}) = w_{\mc I}(x^{(t')}) \leq t'$ and $C_n[t'] \geq C_n[w_{\mc I}(x^{(t')})] = p_{\mc I}(x^{(t')}) = \mc P_{\mc I}[t']$. On the other hand, clearly $C_n[t'] \leqslant \mc P_{\mc I}[t']$, so they are equal, which finishes the proof.
\end{proof}

We can now explain how to modify \cref{alg:knapsack_bmbm} to obtain \cref{lem:knapsack_bmdp}.

\begin{proof}[Proof of \cref{lem:knapsack_bmdp}]
Let $c := \min\{1, \tfrac \vopt \pmax \cdot \tfrac \wmax t\}$, and note that $c = \Theta(1)$ by the balancedness assumption and $\vopt = \Theta(\OPT)$. 
If $n \geq c \cdot t \sqrt{\pmax} / \wmax$, then $\Ot(n + t \sqrt{\pmax}) \leq \Ot(n + (n \wmax \pmax)^{1/3} \cdot t^{2/3})$ and thus \cref{lem:knapsack_bmdp} follows from \cref{lem:knapsack_bmbm}. 
Additionally, if $n^3 \leq 2 \pmax$ then 
Bellman's dynamic programming algorithm computes the complete sequence 
$\mc P_{\mc I}$ in time $O(n \cdot t) \le O((n \wmax \pmax)^{1/3} t^{2/3})$. 
In the remainder we can thus assume $n \leq c \cdot t \sqrt{\pmax} / \wmax$ and $2 \pmax \leq n^3$. In this case, we modify the algorithm of \cref{alg:knapsack_bmbm} as follows.

Let $q$ be the largest integer such that 
$2^q \leq \max\{1, n^{4/3} \cdot (\wmax / t)^{1/3} \cdot \pmax^{-2/3} \}$.
Consider the modification of \cref{alg:knapsack_bmbm} using the new value of $q$ and replacing the computation of $D_j^q$ in \cref{alg_line:bmbm_leaf_D} by the computation of $D_j^q := \mc P_{\mc I_j^q}\left[W^q \right]$ using \cref{alg:knapsack_newleaf} of \cref{lem:knapsack_newleaf}. As a reminder, we defined $\eta := 17 \log n$ and
$W^q := \left[\frac{t}{2^{q}} - \sqrt{\frac{t \cdot \wmax}{2^{q}}} \eta,\ \frac{t}{2^{q}} + \sqrt{\frac{t \cdot \wmax}{2^{q}}} \eta\right]$. Hence, we call \cref{alg:knapsack_newleaf} with the Knapsack instance $(\mc I_j^q, t/ 2^q)$ and parameter $\ell = \sqrt{t \cdot \wmax / 2^q} \cdot \eta$.
So each computation of \cref{alg_line:bmbm_leaf_D} now takes time $\Ot(|\mc I_j^q| \sqrt{t \cdot \wmax / 2^q})$. In total, \cref{alg_line:bmbm_leaf_D} takes time
\begin{align*}
    \sum_{j = 1}^{2^q} \Ot\left(|\mc I_j^q|\sqrt{t \cdot \wmax / 2^q} \right) = \Ot\left(n \sqrt{t \cdot \wmax /2^q} \right) \le \Ot\left((n \wmax \pmax)^{1/3} t^{2/3}\right),
\end{align*}
where the last step follows from the inequality $2^q \ge n^{4/3} \cdot (\wmax / t)^{1/3} \cdot \pmax^{-2/3} / 2$, which holds by our choice of $q$.

If $2^q = 1$, then no combination steps are performed. Otherwise, we have $2^q \le n^{4/3} \cdot (\wmax / t)^{1/3} \cdot \pmax^{-2/3}$. 
In this case, for the combination levels the same analysis as in \cref{lem:bmbm_runtime} shows that the total running time of all combination steps is $\Ot(t^{3/4} \cdot \pmax^{1/2} \cdot (\wmax 2^q)^{1/4}) \le \Ot\left((n \wmax \pmax)^{1/3} t^{2/3}\right)$.

The correctness argument of \cref{alg:knapsack_bmbm} works verbatim because all used inequalities on $2^q$ still hold, specifically we have $1 \le 2^q \le n$, $2^q \le t/\wmax$, and $2^q \le \vopt / \pmax$. We verify these inequalities in the remainder of this proof. 
If $2^q = 1$ then these inequalities are trivially satisfied. Otherwise, we have $1 < 2^q \le n^{4/3} \cdot (\wmax / t)^{1/3} \cdot \pmax^{-2/3}$. 
Then obviously $2^q \ge 1$. 
Since $n \leq c \cdot t \sqrt{\pmax} / \wmax$, by rearranging we have $n^{4/3} \cdot (\wmax / t)^{1/3} \cdot \pmax^{-2/3} \le c^{4/3} t / \wmax$, and thus $2^q \le c^{4/3} t / \wmax \le c \cdot t / \wmax$. Since $c \le 1$, we obtain $2^q \le t / \wmax \le n$. Since $c \le \tfrac \vopt \pmax \cdot \tfrac \wmax t$, we obtain $2^q \le \vopt / \pmax$.
Finally, the correctness argument of \cref{alg:knapsack_bmbm} additionally uses the bound $n^3 \geq 2 \pmax$, which we can assume as discussed above. 
\end{proof}

\subsection{Reconstructing an optimal solution}\label{sec:reconstruction}

The above algorithms are returning the optimal profit $\OPT$ of a given Knapsack instance. From that output we can reconstruct a solution $x \in \{0, 1\}^n$ such that $p_{\mc I}(x) = \OPT$ and $w_{\mc I}(x) \leq t$. 
Indeed, after running \cref{alg:knapsack_bmbm} we obtain the sequences $C_j^\ell$ for $\ell \in \{0, \dots, q\}$ and $j \in [2^\ell]$. For the output $C_1^0[t]$ we can find a witness $i \in W^0$ such that $C_1^1[i] + C_2^1[t - i] = C_1^0[t]$. This can be done in time $O(|W^0|)$ by simply trying all possibilities. We continue to search witnesses for $C_1^1[i]$ and $C_2^1[t - i]$ recursively. 
In the end, we reach one entry for each array $C_j^q$ for $j \in [2^q]$. In total this takes time $\sum_{\ell = 0}^q 2^\ell \cdot O(| W^0|) = \Ot(\sqrt{\Delta_w 2^q}) = \Ot(t)$, where we use $\Delta_w = t \cdot \wmax$ and $2^q \leq t / \wmax$.  Finally, each array $C_j^q$ is computed via the algorithm from \cref{thm:Knapsack_SubsetSum}, for which the solution reconstruction is described in \cite{BringmannC22} and does not add any extra overhead on the total running time of \cref{thm:Knapsack_SubsetSum}.
The same method can be used for the modified \cref{alg:knapsack_bmbm} presented in \cref{sec:bmdp}.

\section{Balancing instances}\label{sec:reduction}

In this section we show how a general Knapsack instance can be reduced, in randomized $\Ot(n + \min\{\wmax\sqrt{\pmax},\pmax\sqrt{\wmax}\})$ time, to an instance where all items have the same profit-to-weight ratio $\frac{p_i}{w_i}$ up to a constant multiplicative factor. In particular, this implies that $t / \wmax  = \Theta (\OPT / \pmax)$, i.e., the reduced instance satisfies the \emph{balancedness assumption}. This reduction combined with the algorithms for balanced instances (\cref{sec:knapsack_algos,sec:sym_knapsack}) gives us algorithms for the general case.

For notational convenience, in this section we denote solutions to the Knapsack problem by subsets of items (and not by indicator vectors, as in the rest of the paper). For a subset of items $J \subseteq [n]$, let $w(J)$ denote their total weight, i.e., $w(J) = \sum_{i \in J} w_i$, and $p(J)$ their total profit, i.e., $p(J) = \sum_{i \in J} p_i$.

Assume that the items are sorted in non-increasing order of profit-to-weight ratios $\frac{p_i}{w_i}$, and consider the \emph{maximum prefix solution} (as defined in~\cite{PRW21}), i.e., the solution consisting of items $P = \{1, \ldots, j\}$ for maximum $j$ such that $w_1 + \cdots + w_j \leq t$. Let $\rho = \frac{p_j}{w_j}$ be the profit-to-weight ratio of the last (i.e., the least profitable) item in this solution. Partition the items into three groups based on their profit-to-weight ratios:
\begin{itemize}
\item \emph{good} items, with profit-to-weight ratio above $2\rho$, $G = \{i \in [n] \mid \frac{p_i}{w_i} > 2\rho\}$;
\item \emph{bad} items, with the ratio below $\rho/2$, $B = \{i \in [n] \mid \frac{p_i}{w_i} < \rho/2\}$;
\item \emph{medium} items, the remaining ones, $M = \{i \in [n] \mid \frac{p_i}{w_i} \in [\rho/2, 2\rho]\}$.
\end{itemize}
Fix any optimal solution $Z \subseteq [n]$. We claim that the symmetric difference with the maximum prefix solution $(P \setminus Z) \cup (Z \setminus P)$ restricted to good and bad items is \emph{small}, both in terms of the total weight and the total profit. More precisely, let $\Delta = \big((P \setminus Z) \cup (Z \setminus P)\big) \cap (G \cup B)$; we claim that:

\begin{claim}
\label{cla:smalldelta}
$w(\Delta) \leq 10 \wmax$ and $p(\Delta) \leq 10 \pmax$.
\end{claim}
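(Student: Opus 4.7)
My plan is to prove the claim via a rounding-style bound on the LP relaxation of Knapsack, combined with the margin inequalities defining $G$ and $B$.

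\textbf{Structure.} Since items are sorted by non-increasing ratio and $\rho = p_j/w_j$, any item with ratio strictly greater than $\rho$ lies in $\{1,\ldots,j-1\} \subset P$ and any item with ratio strictly less than $\rho$ lies outside $P$. Hence $G \subset P$ and $B \cap P = \emptyset$, so $\Delta = G_0 \cup B_1$ (disjoint), where $G_0 := G \setminus Z$ and $B_1 := Z \cap B$, and it suffices to bound $w(G_0) + w(B_1)$ and $p(G_0) + p(B_1)$.

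\textbf{LP identity.} A direct rearrangement yields
\[
\bigl(p(P) + \rho(t - w(P))\bigr) - p(Z) \;=\; \rho\bigl(t - w(Z)\bigr) + \sum_{i \in P \setminus Z}(p_i - \rho w_i) + \sum_{i \in Z \setminus P}(\rho w_i - p_i),
\]
a sum of three non-negative terms on the right (because $p_i \ge \rho w_i$ for $i \in P$ and $p_i \le \rho w_i$ for $i \notin P$). Since $p(Z) \ge p(P)$ and $w_{j+1} > t - w(P)$ by maximality of the prefix, the left-hand side is at most $\rho(t - w(P)) \le \rho w_{j+1} \le \rho \wmax$.

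\textbf{Margins and conclusion.} For $i \in G_0$, $p_i > 2\rho w_i$ gives $p_i - \rho w_i > \rho w_i$ (and also $> p_i/2$); for $i \in B_1$, $p_i < (\rho/2) w_i$ gives $\rho w_i - p_i > (\rho/2) w_i$ (and also $> p_i$). Restricting the last two sums to $G_0$ and $B_1$ respectively only decreases the right-hand side, so combining with the weight-style margins yields $\rho w(G_0) + (\rho/2) w(B_1) < \rho \wmax$, whence $w(\Delta) = w(G_0) + w(B_1) < 2\wmax \le 10\wmax$. The analogous profit-side bound $p(G_0)/2 + p(B_1) < \pmax$, giving $p(\Delta) < 2\pmax \le 10\pmax$, would follow from re-running the identity with the LP-dual ratio $\rho^* := p_{j+1}/w_{j+1}$ in place of $\rho$, because then the left-hand side equals the integrality gap $\mathrm{LP}^* - p(Z) \le p_{j+1} \le \pmax$.

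\textbf{Main obstacle.} The only subtle step is establishing the profit bound. The identity with the paper's $\rho$ naturally gives a right-hand side proportional to $\rho\wmax$, which in general can exceed $\pmax$. Switching to $\rho^*$ produces the desired $\pmax$ upper bound on the left, but then the $B_1$-margin weakens to $\rho^* w_i - p_i \ge (\rho^* - \rho/2) w_i$, which is clearly positive and proportional to $p_i$ only when $\rho^*$ is not much smaller than $\rho$. I expect the hardest part of the writeup to be a clean treatment of the remaining range $\rho^* \ll \rho$, likely through a complementary exchange argument based on the swap $(P \setminus G_0) \cup B_1^*$ (a partial substitution in $P$ by a feasibility-adjusted subset of $B_1$), which upper-bounds $p(P) - p(G_0) + p(B_1^*)$ by $p(Z)$ and thereby closes the loop.
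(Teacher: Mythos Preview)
Your LP-identity argument is correct and elegant for the weight bound, and in fact yields the sharper constant $w(\Delta)<2\wmax$; this is a genuinely different route from the paper's exchange argument. You also correctly observe that running the identity with $\rho^*=p_{j+1}/w_{j+1}$ bounds the left-hand side by $p_{j+1}\le\pmax$ and that all three right-hand terms stay non-negative (since $i\notin P$ implies $p_i/w_i\le\rho^*$), which immediately gives $p(G_0)<2\pmax$.

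The gap is in bounding $p(B_1)$. With the $\rho^*$ identity, the bad-item margin is only $\rho^* w_i - p_i\ge 0$, and when $\rho^*<\rho/2$ this is not bounded below by any positive multiple of $p_i$: items $i\in B_1$ satisfy $p_i/w_i<\rho/2$ and $p_i/w_i\le\rho^*$, so their ratio can sit arbitrarily close to $\rho^*$, making the margin negligible. Thus neither the $\rho$-identity (left side $\le\rho\wmax$, which is unrelated to $\pmax$) nor the $\rho^*$-identity (vanishing $B_1$-margin) yields a bound on $p(B_1)$, and combining them via $w(B_1)<2\wmax$ only gives $p(B_1)<\rho\wmax$, again unrelated to $\pmax$. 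Your suggested swap $(P\setminus G_0)\cup B_1^*$ gives the inequality $p(B_1^*)\le p(Z)-p(P)+p(G_0)\le 3\pmax$ only after you check feasibility, but $w(P)-w(G_0)+w(B_1^*)\le t$ need not hold for $B_1^*=B_1$ (you know only $w(B_1)<2\wmax$ and $t-w(P)<\wmax$, with no lower bound on $w(G_0)$), so this line needs more care.

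The paper handles both parts by a single exchange argument: assuming $w(\Delta)>10\wmax$ (resp.\ $p(\Delta)>10\pmax$), it selects $\Delta_+\subseteq P\setminus Z$ and $\Delta_-\subseteq Z\setminus P$ with weights (resp.\ profits) in calibrated intervals of length $\wmax$ (resp.\ $\pmax$), and uses the ratio gap between $P$ and its complement to show the swap $Z\cup\Delta_+\setminus\Delta_-$ is feasible and strictly improves profit, contradicting optimality. The ``symmetric'' profit proof uses $|p(P)-p(Z)|\le\pmax$ in place of $|w(P)-w(Z)|<\wmax$ and selects subsets by profit instead of weight. To complete your approach you will essentially need this exchange step for $p(B_1)$; the LP identity alone does not close it.
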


\begin{proof}
We focus on the claim regarding weights; the proof for the profits part is symmetric.

We prove the claim by contradiction. Assume that $w(\Delta) > 10 \wmax$. Note that all good items are included in the maximum prefix solution, and there are no bad items there, i.e., $G \subseteq P$ and $B \cap P = \emptyset$. Therefore, 
\[\Delta
= \big((P \setminus Z) \cup (Z \setminus P)\big) \cap (G \cup B)
= (G \setminus Z) \cup (B \cap Z),
\]
i.e., the difference $\Delta$ consists of good items that are not in the optimal solution, and of bad items that are in the optimal solution.
Consider partitioning $\Delta$ into
$\Delta_G = G \setminus Z$ and $\Delta_B = B \cap Z$.
There are two cases: either $w(\Delta_G) > 5 \wmax$, or $w(\Delta_B) > 5 \wmax$. Focus on the first case for now.

Fix a subset of items $\Delta_+ \subseteq \Delta_G$ such that their total weight satisfies $w(\Delta_+) \in [2\wmax, 3\wmax)$. Such a subset must exist because we can keep picking items from $\Delta_G$ one by one until their total weight reaches at least $2\wmax$. We cannot run out of items in this process because $w(\Delta_G) > 5\wmax > 2\wmax$. Since the last picked item contributes at most $\wmax$ to the total weight, the picked subset cannot overshoot $3\wmax$.

In the special case when all items fit into the knapsack (i.e., $w([n]) \leq t$) we have $P = Z = [n]$ and the claim trivially holds. When this is not the case, the total weights of both $P$ and $Z$ are within $\wmax$ from the knapsack capacity $t$, i.e., $w(P), w(Z) \in (t - \wmax, t]$, because otherwise one could add an item and improve one of these two solutions. In particular, $|w(P) - w(Z)| < \wmax$, and it follows that $w(Z \setminus P) > w(P \setminus Z) - \wmax \geq w(\Delta_G) - \wmax > 4 \wmax$, where the second inequality holds because $\Delta_G \subseteq P \setminus Z$. Knowing that $w(Z \setminus P) > 4 \wmax$, we can fix a subset of items $\Delta_- \subseteq Z \setminus P$ with total weight $w(\Delta_-) \in [3\wmax, 4\wmax)$.

Consider the solution $Z' = Z \cup \Delta_+ \setminus \Delta_-$. It is feasible because $w(\Delta_-) \geq 3\wmax > w(\Delta_+)$ and thus $w(Z') < w(Z) \leq t$. Recall that all items in $\Delta_+$ have profit-to-weight ratio above $2\rho$. Therefore, $p(\Delta_+) > 2\rho \cdot w(\Delta_+) \ge 2\rho \cdot 2\wmax$. On the other hand, all items in $\Delta_-$ have profit-to-weight ratio at most $\rho$, and so the total profit of $\Delta_-$ is at most $\rho \cdot w(\Delta_-) < 4\rho \cdot \wmax$. We conclude that $p(\Delta_+) > p(\Delta_-)$, and hence $p(Z') = p(Z) + p(\Delta_+) - p(\Delta_-) > p(Z)$, which contradicts the optimality of $Z$ and ends the proof for the case where $w(\Delta_G) > 5 \wmax$.

Let us now consider the remaining case, where $w(\Delta_B) > 5 \wmax$. The argument is symmetric. We fix a subset of items $\Delta_- \subseteq \Delta_B$ with total weight $w(\Delta_-) \in [3\wmax, 4\wmax)$. Then, since $|w(P) - w(Z)| < \wmax$, we have that 
$w(P \setminus Z) > w(Z \setminus P) - \wmax \geq w(\Delta_B) - \wmax > 4 \wmax$, so we can select a subset $\Delta_+ \subseteq P \setminus Z$ with total weight $w(\Delta_+) \in [2\wmax, 3\wmax)$. Consider the solution $Z' = Z \cup \Delta_+ \setminus \Delta_-$. As in previous case, $Z'$ is feasible because $w(\Delta_-) > w(\Delta_+)$. Now all items in $\Delta_+$ have profit-to-weight ratio at least $\rho$, and hence $p(\Delta_+) \geq \rho \cdot w(\Delta_+) \geq 2\rho \cdot \wmax$, while all items in $\Delta_-$ have profit-to-weight ratio at most $\rho / 2$, and so $p(\Delta_-)$ is at most $(\rho / 2) \cdot w(\Delta_-) < (\rho / 2) \cdot 4 \wmax = 2 \rho \cdot \wmax$. Again $p(Z') > p(Z)$, contradicting the optimality of $Z$.
\end{proof}

\subparagraph*{Algorithmic application of \cref{cla:smalldelta}.}

Now we use \cref{cla:smalldelta} to reduce any Knapsack instance to a balanced instance, in randomized time $\Ot(n + \min\{\wmax\sqrt{\pmax},\pmax\sqrt{\wmax}\})$, as desired. First, construct a minimum prefix solution $P$, and split the items into sets $G$, $M$, $B$ based on their profit-to-weight ratios. This takes $O(n)$ time~\cite[Section 2.1]{PRW21}.

Any optimal solution $Z$ to the original instance can be expressed as
\[Z = (G \setminus \Delta_G) \cup Z_M \cup \Delta_B,\]
where $\Delta_G \subseteq G$, $Z_M \subseteq M$, $\Delta_B \subseteq B$, and (thanks to \cref{cla:smalldelta}) it holds that
\begin{align*}
w(\Delta_G) &\leq 10\wmax, & p(\Delta_G) &\leq 10\pmax, \\
w(\Delta_B) &\leq 10\wmax, & p(\Delta_B) &\leq 10\pmax, \\
|w(Z_M) - w(P \cap M)| &\leq 11\wmax, & |p(Z_M) - p(P \cap M)| &\leq 11\pmax.
\end{align*}

The bounds for $Z_M$ follow from the observation that $P$ and $Z$ can differ in total weight and profit by at most $\wmax$ and $\pmax$, respectively, and $\Delta_G$ and $\Delta_B$ add at most $10\wmax$ and $10\pmax$ to that difference. Hence, knowing only $P$, we can already estimate $w(Z_M)$ (and $p(Z_M)$) up to an $O(\wmax)$ (and $O(\pmax)$, respectively) additive term.

Profit-to-weight ratios of medium items $M$ differ by at most the multiplicative factor of~$4$. The set of items $M$ together with the capacity $t' = w(P\cap M)$ constitutes the new reduced balanced instance. It has to be solved for all integer knapsack capacities in the range
\[[w(P \cap M) - 11\wmax, w(P \cap M) + 11\wmax] \cap [0, t]\]
or all integer profits in the range
\[[p(P \cap M) - 11\pmax, p(P \cap M) + 11\pmax] \cap [0, +\infty).\]

We remark at this point (and formally argue in the proof of \cref{thm:knapsack_bmbm,thm:knapsack_bmbm_sym} below) that algorithms in Section~\ref{sec:knapsack_algos} can output optimal solutions in these ranges without increasing their running times. Also, note that all parameters ($n$, $\wmax$, $\pmax$, $t$, $\OPT$) of the new instance $(M,t')$ are no larger than those of the original instance.

Once instance $(M,t')$ is solved for all these capacities, one can infer a solution to the original instance using the following approach. First, solve the Knapsack instance consisting of bad items $B$ for all targets in $[0, 10\wmax]$ and with profits capped to $10\pmax$. This can be done, using \cref{thm:Knapsack_SubsetSum,thm:Knapsack_SubsetSum_sym}, in time $\Ot(n + \min\{\wmax\sqrt{\pmax},\pmax\sqrt{\wmax}\})$, as desired.
Then, for every integer value $t'' \in [0, 10\wmax]$, we find the least profitable subset of $G$ with total weight at least $t''$. This computation can be reduced to a Knapsack instance with profits and weights swapped (see~\cite[Section 4]{PRW21}), so again it can be done in time $\Ot(n + \min\{\wmax\sqrt{\pmax},\pmax\sqrt{\wmax}\})$.

What remains to be done is finding an optimal way to combine the solutions of these three instances. This boils down to two max-plus convolutions of monotone sequences of length $O(\wmax)$ with values in a range of size $O(\pmax)$. Using \cref{thm:MPConv} it takes time $\Ot(\min\{\wmax\sqrt{\pmax},\pmax\sqrt{\wmax}\})$. Summarizing, we proved the following lemma.

\LemRedToBal*

Now we are ready to prove our theorems giving algorithms for the general (not necessarily balanced) case.

\begin{proof}[Proof of \cref{thm:knapsack_bmbm,thm:knapsack_bmbm_sym}]
Combining the reduction of \cref{lem:reduction_balanced_knapsack} with algorithms for balanced instances of \cref{lem:knapsack_bmbm,lem:knapsack_bmbm_sym} immediately gives \cref{thm:knapsack_bmbm,thm:knapsack_bmbm_sym}, i.e., algorithms for general instances with running times $\Ot(n + t \sqrt{\pmax})$ and $\Ot(n + \OPT \sqrt{\wmax})$, respectively. This is straightforward because, w.l.o.g., $\wmax \leq t$ and $\pmax \leq \OPT$, and hence the running time of the reduction is not larger than the running time of the algorithms.

We note that the algorithms of \cref{lem:knapsack_bmbm,lem:knapsack_bmbm_sym} output only solutions in intervals of lengths $2\sqrt{t \cdot \wmax} \geq 2\wmax$ and $2\sqrt{\OPT \cdot \pmax} \geq 2\pmax$, while \cref{lem:reduction_balanced_knapsack} requires intervals of lengths $c \cdot \wmax$ and $c \cdot \pmax$ for a constant $c > 2$ hidden in the asymptotic notation. This is however not an issue since we can always add two dummy items, one with weight $c\cdot\wmax$ and profit $0$ and another with weight $0$ and profit $c\cdot\pmax$, before calling any of the algorithms of \cref{lem:knapsack_bmbm,lem:knapsack_bmbm_sym} and then remove the latter item from the returned solution.
\end{proof}

\begin{proof}[Proof of \cref{thm:knapsack_bmdp,thm:knapsack_bmdp_sym}]
Similarly to the previous proof, we can combine \cref{lem:reduction_balanced_knapsack} with \cref{lem:knapsack_bmdp,lem:knapsack_bmdp_sym} to get \cref{thm:knapsack_bmdp,thm:knapsack_bmdp_sym}. It may seem that in this case the running time of the reduction could dominate the $\Ot(n + (n\wmax\pmax)^{1/3} t^{2/3})$ running time of the algorithm; this is however not the case if we fall back to the $O(nt)$ time Bellman's algorithm in the parameter regime where it is faster. Indeed,
\begin{align*}
\min\{\wmax \sqrt{\pmax}, nt\} & \leq
(\wmax\sqrt{\pmax})^{2/3} \cdot (nt)^{1/3} \\ & =
(n\wmax\pmax)^{1/3} \wmax^{1/3} t^{1/3} \leq 
(n\wmax\pmax)^{1/3} t^{2/3}.
\end{align*}
\end{proof}

\section{Reduction from bounded min-plus convolution verification}\label{sec:lower_bound}

In this section we show that all our Knapsack algorithms are optimal (up to subpolynomial factors) under the assumption that verifying min-plus convolution with entries bounded by $n$ requires quadratic time. 

\ThmLB*

\begin{proof}
    Recall (\cref{def:mpconvver}) that in the verification problem we are given three integer sequences 
    $a[0 \ldots n-1]$, $b[0 \ldots n-1]$, $c[0 \ldots 2n-2]$ with non-negative entries bounded by $n$ and we want to check if
    \[\forall_{k} \ c[k] \leq \min_{\mathrel{i+j=k}} \{a[i] + b[j]\},\]
    which is equivalent to
    \[\forall_{i+j=k} \ c[k] \leq a[i] + b[j].\tag{$\star$}\]

    First, create three sequences $A[0 \ldots n-1]$, $B[0 \ldots n-1]$, $C[0 \ldots 2n-2]$ such that $A[i] = a[i] + in$, $B[j] = b[j]+jn$, $C[k]=c[k]+kn$. The entries are now bounded by $2n^2$ but the sequences are non-decreasing. Crucially, the condition $(\star)$, which we want to verify, holds for $A$, $B$, $C$, if and only if it holds for $a$, $b$, $c$. In other words, this is a reduction from verifying min-plus convolution with entries bounded by $n$ to verifying monotone min-plus convolution with entries bounded by $O(n^2)$. Intuitively, as we shall see later in the proof, the monotonicity allows us to work with the condition ``$i+j \geq k$'' instead of ``$i + j = k$'', which is more useful for reducing to Knapsack.

    Now, create a Knapsack instance with $4n-1$ items consisting of:
    \begin{itemize}
        \item an item with weight $5n - i$ and profit $2n^2 - A[i]$, for each $i = 0, \ldots, n-1$;
        \item an item with weight $10n - j$ and profit $10n^2 - B[j]$, for each $j = 0, \ldots, n-1$;
        \item an item with weight $20n + k$ and profit $100n^2 + C[k] $, for each $k = 0, \ldots, 2n-2$.
    \end{itemize}
    We will call these items $A$-items, $B$-items, and $C$-items, respectively. Set the knapsack capacity to $t = 35n$. We show that:
    
    \begin{claim}
    \label{cla:lb}
      Condition $(\star)$ holds if and only if $\OPT \leq 112n^2$.
    \end{claim}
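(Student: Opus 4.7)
The plan is to prove both directions of \cref{cla:lb} by a case analysis on the composition of any Knapsack solution, anchored on one key identity. If a solution consists of exactly one $A$-item with index $i$, one $B$-item with index $j$, and one $C$-item with index $k$, then its weight is $(5n-i)+(10n-j)+(20n+k)=35n-(i+j-k)$ and its profit is $(2n^2-A[i])+(10n^2-B[j])+(100n^2+C[k])=112n^2+C[k]-A[i]-B[j]$. Thus such a triple is feasible iff $i+j\ge k$, and its profit exceeds $112n^2$ iff $C[k]>A[i]+B[j]$. Before the case analysis I would upgrade $(\star)$ using the monotonicity of $A,B,C$: for every $i,j\in\{0,\dots,n-1\}$ and $k\le i+j$ we have $C[k]\le C[i+j]\le A[i]+B[j]$ (the latter inequality is $(\star)$ at index $i+j$).

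For the direction $\neg(\star)\Rightarrow \OPT>112n^2$, fix indices with $i+j=k$ and $C[k]\ge A[i]+B[j]+1$ witnessing the failure. The single-triple solution $\{A\text{-item }i,B\text{-item }j,C\text{-item }k\}$ has weight exactly $35n$ and profit $112n^2+C[k]-A[i]-B[j]\ge 112n^2+1$, so $\OPT>112n^2$.

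For the direction $(\star)\Rightarrow \OPT\le 112n^2$ I would use the weight lower bounds $5n-i\ge 4n+1$, $10n-j\ge 9n+1$, $20n+k\ge 20n$ to constrain the type counts in any feasible solution. Any feasible solution has at most one $C$-item (two weigh $\ge 40n>35n$); if a $C$-item is chosen, at most one $B$-item (two $B$-items plus a $C$-item weigh $\ge 38n+2$); and if one $B$- and one $C$-item are chosen, the residual capacity is $i+j-k\le 2(n-1)<4n+1$, so at most one $A$-item can be added. It then suffices to inspect four sub-cases: \textbf{(i)} no $C$-item, in which case the profit is at most $35n$ times the maximum profit-to-weight density $10n^2/(9n+1)<10n/9$, which is well below $112n^2$; \textbf{(ii)} one $C$-item, no $B$-item, where at most three $A$-items fit (since $3(4n+1)\le 15n-k$ only for small $k$) and the total profit is at most $102n^2+3\cdot 2n^2=108n^2$; \textbf{(iii)} one $C$-item, one $B$-item, no $A$-item, giving profit $110n^2+C[k]-B[j]\le 112n^2$ because $C[k]\le 2n^2$ and $B[j]\ge 0$; and \textbf{(iv)} one of each type, where feasibility forces $i+j\ge k$ and the strengthened $(\star)$ gives $C[k]-A[i]-B[j]\le 0$, so the profit is at most $112n^2$.

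The main obstacle is ensuring the case analysis in direction $(\Leftarrow)$ is tight: the weight shifts $5n-i$, $9n+1\le 10n-j$, $20n\le 20n+k$ and the capacity $35n$ were picked so that (a) the $A+B+C$ configuration is just barely feasible and captures the condition $i+j\ge k$, and (b) no other type of configuration can exceed $112n^2$ even when $(\star)$ fails. Checking that (b) really holds comes down to the four short calculations above; once these are in place, the reduction is immediate.
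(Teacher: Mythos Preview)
Your proof is correct and follows essentially the same approach as the paper. Both arguments handle the easy direction identically, and for the hard direction the paper argues by contrapositive (any solution with profit exceeding $112n^2$ must contain exactly one item of each type, then reads off a violation of $(\star)$) while you argue directly (under $(\star)$ every feasible solution has profit at most $112n^2$); the underlying type-multiplicity bounds, the density argument for the no-$C$ case, the ``at most three $A$-items'' bound, and the use of monotonicity of $C$ to pass from $k\le i+j$ to $C[k]\le A[i]+B[j]$ are all the same.
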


    \begin{claimproof}
    The ``if'' direction is straightforward. We prove the contraposition. If $(\star)$ is not true, there must exist $i$ and $j$ such that $c[i+j] > a[i] + b[j]$, and hence also $C[i+j] > A[i] + B[j]$. Consider the knapsack solution consisting of the $A$-item corresponding to $i$, the $B$-item corresponding to $j$, and the $C$-item corresponding to $k=i+j$. The total weight of this solution is exactly $35n$, and the total profit equals $112n^2 + C[i+j] - (A[i] + B[j]) > 112n^2$, hence $\OPT > 112n^2$.

    For the rest of the proof we focus on the ``only if'' direction. The proof for this direction is again for the contraposition: we show that $\OPT > 112n^2$ implies that $(\star)$ does not hold. To this end, we first claim that any optimal solution with total profit exceeding $112n^2$ consists of exactly one item of each of the three groups.
    
    Indeed, there is at most one $C$-item, because two would not fit in the capacity. In order to see that there is at least one $C$-item, observe that every $A$-item has profit-to-weight ratio at most $\frac{1}{2}n$, and every $B$-item has profit-to-weight ratio at most $\frac{10}{9}n$. Hence, a solution without $C$-items would have total profit at most $\frac{10}{9}n \cdot t < 40n^2$, which is less than the profit of a single $C$-item. Moreover, each single $C$-item fits the knapsack. Thus, there must be at least one (hence, exactly one) $C$-item in any optimal solution.

    The remaining capacity for $A$-items and $B$-items is between $13n$ and $15n$. Hence, at most one $B$-item fits. If there are no $B$-items,
    there can be at most three $A$-items, contributing at most $3 \cdot 2n^2$ to the total profit,
    which is less than the profit of any $B$-item,
    and each single $B$-item fits in the remaining capacity. Hence, there is at least one (and thus exactly one) $B$-item in any optimal solution.

    The remaining capacity for $A$-items is between $3n$ and $6n$, so at most one $A$-item fits. However, a single $B$-item and a single $C$-item can have the total profit at most $112n^2$, so there must be at least one (and thus exactly one) $A$-item.

    We conclude that there is indeed exactly one item of each type in the solution. Let us fix $i$, $j$, $k$ to the corresponding indices of these items. The knapsack capacity $t$ guarantees that $k \leq i + j$, and the total profit guarantees that $C[k] > A[i] + B[j]$. Since $C$ is non-decreasing, we have that $C[i + j] \geq C[k]$ and thus $C[i+j] > A[i]+B[j]$. This implies that condition $(\star)$ does not hold, which finishes ends the proof of \cref{cla:lb}.
    \end{claimproof}

    The Knapsack instance that we consider has $\wmax = O(n)$, $\pmax = O(n^2)$, $t = O(n)$, $\OPT = O(n^2)$. Any algorithm polynomially faster than $O(t\sqrt{\pmax})$ or $O((n\wmax\pmax)^{1/3}t^{2/3})$ would run on such instances in time truly subquadratic in $n$, implying a truly subquadratic time for the Bounded Monotone Min-Plus Convolution Verification problem.

    To prove the optimality of the symmetric running times, we consider an analogous instance with the role of indices and values swapped. Namely, the instance consists of the following $4n-1$ items:
    \begin{itemize}
        \item an item with weight $5n^2 + A[i]$ and profit $2n + i$, for each $i = 0, \ldots, n-1$
        \item an item with weight $10n^2 + B[j]$ and profit $10n + j$, for each $j = 0, \ldots, n-1$;
        \item an item with weight $20n^2 - C[k]$ and profit $100n - k$, for each $k = 0, \ldots, 2n-2$.
    \end{itemize}
    We set the knapsack capacity to $t = 35n^2 - 1$, and we claim that condition $(\star)$ holds if and only if $\OPT < 112n$. This claim can be proved by following the proof of \cref{cla:lb}. This instance has $\wmax = O(n^2)$, $\pmax = O(n)$, $t = O(n^2)$, and $\OPT = O(n)$. Hence, any algorithm polynomially faster than $O(\OPT \sqrt{\wmax})$ or $O((n\wmax\pmax)^{1/3}\OPT^{2/3})$ would run on such instances in $O(n^{2-\varepsilon})$ time.
\end{proof}

\section{From one monotone to two monotone sequences}\label{sec:1to2MinConv}

The algorithm of \cref{thm:MPConv} for max-plus convolution requires that both input sequences are monotone. In this section we give a black-box reduction that shows that it is enough to require that one sequence is monotone. We note that this is just a side result, which we believe might be of independent interest, while all our Knapsack algorithms already produce convolution instances with both sequences being monotone.

Recall that the max-plus convolution of sequences $A=A[0 \ldots n-1], B=B[0 \ldots n-1] \in \mathbb{Z}^n$ is defined as the sequence $C=C[0 \ldots 2n-2] \in \mathbb{Z}^{2n-1}$ with $C[k] = \max_{i + j = k} \{A[i] + B[j]\}$, where the maximum ranges over all values of $0 \le i,j < n$ with $i+j=k$, and out-of-bounds entries of $A$ and $B$ are interpreted as $-\infty$.
To have a succinct notation, in this section we denote this sequence $C$ by $A \star B$.

We study the case that both sequences $A,B$ have values in a bounded range~$\{0, 1, \dots, M\}$, and that at least one of the sequences $A,B$ is monotone. The following theorem shows that the special case of the max-plus convolution problem where one of the sequences is monotone is essentially equivalent to the special case where both sequences are monotone.

\ThmOneTwoMinPlus*

% \begin{theorem}\label{thm:onetwomonotone}
%   Suppose that there is an algorithm computing the max-plus convolution of two monotone non-decreasing sequences $A,B \in \{0, 1, \dots, M\}^n$ in time $T_2(n,M)$, and assume that $T_2(n,M)$ is monotone in $n$.
%   Then there also is an algorithm computing the max-plus convolution of a monotone non-decreasing sequence $A \in \{0, 1, \dots, M\}^n$ and an arbitrary (i.e., not necessarily monotone) sequence $B \in \{0, 1, \dots, M\}^n$ in time $T_1(n,M)$ which satisfies the recurrence $T_1(n,M) \le 2\, T_1(n/2,M) + O(T_2(n,M))$.
  
%   The same statement holds with ``non-decreasing'' replaced by ''non-increasing'', or with ``max-plus'' replaced by ``min-plus'', or both.
% \end{theorem}

Note that if $T_2(n,M)$ is of the form $\Theta(n^\alpha f(M))$, then for $\alpha > 1$ the recurrence solves to $T_1(n,M) = O(T_2(n,M))$, and for $\alpha = 1$ the recurrence solves to $T_1(n,M) = O(T_2(n,M) \log n)$.

The remainder of this section is devoted to the proof of this theorem. So suppose we have an algorithm $\mathcal{A}_2$ that computes the max-plus convolution of two monotone non-decreasing sequences in time $T_2(n,M)$.

\subparagraph*{Warmup: first half of the sequence.} 
As a warmup we start with a slightly simplified definition of the max-plus convolution problem where we only want to compute the first half of the max-plus convolution of $A$ and $B$, more precisely given $A,B \in \{0, 1, \dots, M\}^n$ we want to compute the sequence $C' := (A \star B)[0 \dots n-1]$. This case illustrates the main reduction step, and we later generalize this to the complete sequence $A \star B$.

Denote by $\textup{pm}(B)$ the prefix-maxima sequence of $B$. More precisely we define $\textup{pm}(B)[i] := \max\{ B[j] : 0 \le j \le i \}$ for any $0 \le i < n$. 
We claim that the first halves of the sequences $A \star \textup{pm}(B)$ and $A \star B$ coincide:
\begin{claim} \label{cla:correctnessmonotonereduciton}
  We have $(A \star \textup{pm}(B))[k] = (A \star B)[k]$ for any $0 \le k < n$.
\end{claim}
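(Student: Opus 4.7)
}
My plan is to prove the two inequalities separately. The inequality $(A \star \textup{pm}(B))[k] \ge (A \star B)[k]$ is immediate and requires no use of monotonicity of $A$: since $\textup{pm}(B)[j] \ge B[j]$ for every $j$, every pair $(i,j)$ with $i+j=k$ gives $A[i]+\textup{pm}(B)[j] \ge A[i]+B[j]$, and taking maxima preserves the inequality.

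The interesting direction is $(A \star \textup{pm}(B))[k] \le (A \star B)[k]$, and here I would exploit that $A$ is monotone non-decreasing together with $k < n$. Fix $k \in \{0,\ldots,n-1\}$ and let $(i^*, j^*)$ be a maximizing pair for $(A \star \textup{pm}(B))[k]$, so $i^*+j^*=k$ with $0 \le i^*, j^* \le k < n$. By definition of $\textup{pm}$, there exists $j' \le j^*$ with $\textup{pm}(B)[j^*] = B[j']$, hence
\[
(A \star \textup{pm}(B))[k] \;=\; A[i^*] + B[j'].
\]
Now set $i' := k - j'$. Because $j' \le j^*$, we have $i' \ge i^* \ge 0$, and because $j' \ge 0$, we have $i' \le k < n$, so $i'$ is a legal index into $A$. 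Monotonicity of $A$ then yields $A[i'] \ge A[i^*]$, and since $(i',j')$ is a valid splitting of $k$ we conclude
\[
(A \star B)[k] \;\ge\; A[i'] + B[j'] \;\ge\; A[i^*] + B[j'] \;=\; (A \star \textup{pm}(B))[k],
\]
which is what we wanted.

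There is essentially no obstacle here; the only subtlety to double-check is that the shifted index $i'$ stays within the legal range $\{0,\ldots,n-1\}$, which is where the assumption $k < n$ in the claim is used (it ensures $i' \le k < n$). The argument does not extend beyond the first half of $A \star B$ without modification, which foreshadows why the next step of the reduction will need to treat the two halves (or, more generally, to recurse on sub-blocks of $A$ and $B$) separately, so that the relevant ``prefix-max'' substitution can be applied in each piece where $A$ is still accessed on an appropriate range.
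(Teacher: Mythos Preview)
Your proof is correct and follows essentially the same approach as the paper: both arguments replace $\textup{pm}(B)[j^*]$ by $B[j']$ for some $j'\le j^*$, shift the $A$-index accordingly, use monotonicity of $A$, and use $k<n$ to ensure the shifted index is valid. Your version is in fact slightly more direct than the paper's, which phrases the same step as a contradiction against a ``right-most witness'' choice, but the underlying idea is identical.
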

This claim immediately gives our algorithm for computing $C'$:
Since $\textup{pm}(B)$ is monotone non-decreasing, we can use algorithm $\mathcal{A}_2$ to compute $A \star \textup{pm}(B)$. By the claim, this yields the desired sequence $C' = (A \star B)[0 \dots n-1] = (A \star \textup{pm}(B))[0 \dots n-1]$. 
Since the sequence $\textup{pm}(B)$ can be computed in time $O(n)$, and the call to $\mathcal{A}_2$ takes time $T_2(n,M) \ge \Omega(n)$, the total running time is bounded by $T_1(n,M) = O(T_2(n,M))$ (which trivially satisfies the desired recurrence).

It remains to prove the claim.

\begin{proof}[Proof of Claim~\ref{cla:correctnessmonotonereduciton}]
  Since $\textup{pm}(B)[i] \ge B[i]$ for all $i$, we clearly have $(A \star \textup{pm}(B))[k] \ge (A \star B)[k]$ for all $k$.
  For the opposite direction, fix an index $0 \le k < n$ and consider a right-most witness $(i,j)$ of $(A \star \textup{pm}(B))[k]$, i.e., among all pairs $(i,j)$ with $i + j = k$ and $A[i] + \textup{pm}(B)[j] = (A \star \textup{pm}(B))[k]$ pick the pair with maximum $i$, or equivalently minimum $j$.

  First consider the case $\textup{pm}(B)[j] = B[j]$. Then we obtain $(A \star B)[k] \ge A[i] + B[j] = A[i] + \textup{pm}(B)[j] = (A \star \textup{pm}(B))[k]$. Since we argued the opposite inequality before, we obtain $(A \star B)[k] = (A \star \textup{pm}(B))[k]$.

  Now consider the remaining case $\textup{pm}(B)[j] > B[j]$; we will see that this leads to a contradiction. Note that the prefix maximum $\textup{pm}(B)[j]$ corresponds to some entry $j' < j$ with $B[j'] = \textup{pm}(B)[j'] = \textup{pm}(B)[j]$. Consider the pair $(i',j')$ for $i' := i + (j-j')$. By construction we have $i'+j' = i+j = k$. Since $A$ is monotone non-decreasing and $i' \ge i$ we have $A[i'] \ge A[i]$, and by assumption we have $\textup{pm}(B)[j'] = \textup{pm}(B)[j]$. But then $A[i'] + \textup{pm}(B)[j'] \ge A[i] + \textup{pm}(B)[j] = (A \star \textup{pm}(B))[k]$, contradicting the choice of $(i,j)$ as a right-most witness. (Indeed, if the inequality is strict then $(i,j)$ is no witness, and if the inequality is an equality then $(i,j)$ is not right-most.)
  Observe that here we assumed that $i'$ is a valid index in $A$, i.e., we assumed $i' < n$. This is guaranteed for any $k < n$, since $i+j=k$ implies $i' = i + j - j' = k - j' \le k < n$.  (For $k \ge n$ the constructed index $i'$ would not necessarily exist in $A$, and thus the proof would break.)
\end{proof}

\subparagraph*{Complete sequence.}
Now we want to compute the complete sequence $A \star B$, not just its first half.
We assume for simplicity that the sequence length $n$ is a power of 2 (we discuss how to relax this assumption in the paragraph following the proof of Claim~\ref{cla:onetwomonotoneclaim}).

We split sequence $A$ into two halves denoted by $A_1[0 \dots n/2 - 1] := A[0 \dots n/2 - 1]$ and $A_2[0 \dots n/2 - 1] := A[n/2 \dots n-1]$, and similarly we split $B$ into two halves $B_1[0 \dots n/2 - 1] := B[0 \dots n/2 - 1]$ and $B_2[0 \dots n/2 - 1] := B[n/2 \dots n-1]$. 
Define the sequence $C \in \mathbb{N}^{2n-1}$ by setting for any $0 \le k < 2n-1$:
$$ C[k] := \max\{ (A_1 \star \textup{pm}(B_1))[k], (A_1 \star \textup{pm}(B_2))[k-n/2], (A_2 \star B_1)[k-n/2], (A_2 \star B_2)[k-n] \}, $$
where out-of-bounds entries are ignored (i.e., they are interpreted as $- \infty$). 
We claim that $C = A \star B$.
\begin{claim} \label{cla:onetwomonotoneclaim}
  We have $C[k] = (A \star B)[k]$ for any $0 \le k < 2n-1$.
\end{claim}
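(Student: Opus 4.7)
The plan is to prove both inequalities $C[k] \le (A \star B)[k]$ and $C[k] \ge (A \star B)[k]$ separately for every $0 \le k < 2n-1$, interpreting out-of-bounds entries of the four sub-convolutions as $-\infty$.

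First I would establish $C[k] \ge (A \star B)[k]$ by fixing any witness $(i,j)$ of $(A \star B)[k]$ (i.e., $i+j=k$, $0 \le i,j < n$, and $A[i]+B[j]=(A \star B)[k]$) and case-splitting on whether each of $i$ and $j$ lies below or above $n/2$. The four resulting cases correspond bijectively to the four terms defining $C[k]$: for $i,j < n/2$ the value $A[i]+B[j] = A_1[i]+B_1[j] \le A_1[i]+\textup{pm}(B_1)[j] \le (A_1 \star \textup{pm}(B_1))[k]$; for $i < n/2 \le j$ it is similarly at most $(A_1 \star \textup{pm}(B_2))[k-n/2]$; for $j < n/2 \le i$ it equals $A_2[i-n/2]+B_1[j] \le (A_2 \star B_1)[k-n/2]$; and for $i,j \ge n/2$ it equals $A_2[i-n/2]+B_2[j-n/2] \le (A_2 \star B_2)[k-n]$. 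In each case the relevant term is in-range, so $C[k]$ dominates.

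Next I would establish $C[k] \le (A \star B)[k]$ by bounding each of the four defining terms separately. The two terms $(A_2 \star B_1)[k-n/2]$ and $(A_2 \star B_2)[k-n]$ are by definition maxima over strict subsets of the pairs $(i,j)$ with $i+j=k$ and $0 \le i,j < n$, and are therefore immediately at most $(A \star B)[k]$. The two prefix-maxima terms are handled by the shifting argument from the proof of \cref{cla:correctnessmonotonereduciton}: for a right-most witness $(i,j)$ of $(A_1 \star \textup{pm}(B_1))[k]$, if $\textup{pm}(B_1)[j]=B_1[j]$ the pair $(i,j)$ already witnesses the desired inequality; otherwise one picks $j' < j$ with $B_1[j'] = \textup{pm}(B_1)[j]$ and observes that the shifted pair $(i+(j-j'), j')$ satisfies $A[i+(j-j')] + B[j'] \ge A[i] + \textup{pm}(B_1)[j]$ by monotonicity of the entire sequence $A$. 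The term $(A_1 \star \textup{pm}(B_2))[k-n/2]$ is treated analogously, shifting to the pair $(i+(j-j'),\, n/2 + j')$ in the full index space of $A \star B$, whose coordinates again sum to $k$.

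The step I expect to be the main obstacle is verifying that the shifted first index $i+(j-j')$ remains in $[0,n)$ so it is a valid index into $A$; the naive bound $i+(j-j') \le n/2 + n/2 = n$ is not sufficient. I would handle this with the tighter inequality $i+(j-j') = k - j' \le k$, combined with the fact that both sub-convolutions $(A_1 \star \textup{pm}(B_1))$ and $(A_1 \star \textup{pm}(B_2))$ have output length $n-1$, so the arguments $k$ and $k-n/2$ at which they are queried satisfy $k \le n-2$ and $k - n/2 \le n - 2$, respectively, giving $i+(j-j') \le n-2 < n$ in both cases. This is precisely the point where the shift is allowed to cross from the lower into the upper half of $A$ without breaking the inequality, and it is what makes it essential that the entire sequence $A$ (not merely $A_1$) is assumed monotone non-decreasing.
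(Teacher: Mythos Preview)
Your proposal is correct and follows essentially the same route as the paper: both directions are established via the same decomposition into four sub-convolutions, and the key shifting argument exploiting the monotonicity of the \emph{full} sequence $A$ (not just $A_1$) is exactly the crux identified in the paper's proof of the claim.

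There is one minor organizational difference worth noting. The paper proves $C[k] \le (A \star B)[k]$ by a case split on which of the four terms attains the maximum; within the $(A_1 \star \textup{pm}(B_1))$ case it then further splits on whether the shifted index $i' = i + (j-j')$ lands below or above $n/2$, deriving a contradiction to the right-most witness choice in the former sub-case and reducing to the already-handled $(A_2 \star B_1)$ case in the latter. Your version is slightly more direct: you bound each of the four terms separately by $(A \star B)[k]$, and for the prefix-maxima terms you shift once into the full index space of $A$ and use monotonicity of all of $A$ in one step, without needing the sub-case split. Because of this, your argument does not actually rely on the witness being right-most (any witness works), even though you state it that way. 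Both arguments hinge on the same index-bound check $i + (j-j') = k - j' \le k \le n-2$ (respectively $k - n/2 \le n-2$), which you handle correctly.
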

This identity immediately yields our algorithm for computing $A \star B$: 
Given $A,B \in \{0, 1, \dots, M\}^n$ such that $A$ is monotone non-decreasing, we split $A,B$ into $A_1,A_2,B_1,B_2$ as above. We compute $\textup{pm}(B_1)$ and $\textup{pm}(B_2)$ in time $O(n)$. Since $A_1$, $\textup{pm}(B_1)$, and $\textup{pm}(B_2)$ are monotone non-decreasing, we can use algorithm $\mathcal{A}_2$ to compute $A_1 \star \textup{pm}(B_1)$ and to compute $A_1 \star \textup{pm}(B_2)$. We recursively compute $A_2 \star B_1$ and we recursively compute $A_2 \star B_2$. Finally, we combine these sequences as in the definition of $C$ above to obtain the desired sequence $C = A \star B$.

Correctness is immediate from Claim~\ref{cla:onetwomonotoneclaim}. 
Denote the running time of this algorithm by $T_1(n,M)$ (not to be confused with the running time $T_2(n,M)$ of algorithm $\mathcal{A}_2$). Note that the two calls to algorithm $\mathcal{A}_2$ take time $O(T_2(n,M))$ (where we used monotonicity of $T$), the two recursive calls take time $2\, T_1(n/2,M)$, and the remaining steps take a negligible time of $O(n)$. We thus obtain the desired recurrence $T_1(n,M) \le 2\, T_1(n/2,M) + O(T_2(n,M))$.

It remains to prove the claim.

\begin{proof}[Proof of Claim~\ref{cla:onetwomonotoneclaim}]
  From the definition of the max-plus convolution we observe that
  \begin{align} 
    (A \star B)[k] = \max\{ (A_1 \star B_1)[k], (A_1 \star B_2)[k-n/2], (A_2 \star B_1)[k-n/2], (A_2 \star B_2)[k-n] \}.  \label{eq:AstarBdecomp}
  \end{align}
  Since $\textup{pm}(B_i)[j] \ge B_i[j]$ for all $i,j$, it follows that $C[k] \ge (A \star B)[k]$ for all $k$. 
  It remains to prove that $C[k] \le (A \star B)[k]$ for each $0 \le k < 2n-1$. We consider the four cases in the definition of $C[k]$.
  
  \proofsubparagraph{Case 1: $C[k] = (A_2 \star B_1)[k-n/2]$.} Then we have $C[k] = (A_2 \star B_1)[k-n/2] \le (A \star B)[k]$ by equation (\ref{eq:AstarBdecomp}), showing the desired inequality. 
  
  \proofsubparagraph{Case 2: $C[k] = (A_2 \star B_2)[k-n]$.} Analogous to the previous case.
  
  \proofsubparagraph{Case 3: $C[k] = (A_1 \star \textup{pm}(B_1))[k]$.} Consider a right-most witness $(i,j)$ of $(A_1 \star \textup{pm}(B_1))[k]$, i.e., among all pairs $(i,j)$ with $i+j=k$ and $A_1[i] + \textup{pm}(B_1)[j] = (A_1 \star \textup{pm}(B_1))[k]$ pick the pair with maximum $i$, or equivalently minimum $j$. 
  If $\textup{pm}(B_1)[j] = B_1[j]$ then we have $C[k] = (A_1 \star \textup{pm}(B_1))[k] = A_1[i] + \textup{pm}(B_1)[j] = A_1[i] + B_1[j] \le (A_1 \star B_1)[k] \le (A \star B)[k]$, which shows the desired inequality.

  Otherwise, we have $\textup{pm}(B_1)[j] > B_1[j]$. Then $\textup{pm}(B_1)[j]$ corresponds to some entry $j' < j$ with $B[j'] = \textup{pm}(B_1)[j'] = \textup{pm}(B_1)[j]$. Now consider the pair $(i',j')$ for $i' := i + (j - j')$. By construction we have $i' + j' = i + j = k$. Since $A$ is monotone non-decreasing and $i' \ge i$ we have $A[i'] \ge A[i]$. Hence, we have
  \begin{multline} \label{eq:someeq}
      A[i'] + B_1[j'] = A[i'] + \textup{pm}(B_1)[j'] \\ \ge A[i] + \textup{pm}(B_1)[j] = A_1[i] + \textup{pm}(B_1)[j] = (A_1 \star \textup{pm}(B_1))[k] = C[k].
  \end{multline}
  Note that $i' = i + j - j' = k - j' \le k < n$, where the last step uses the assumption of Case 3 and that the range of indices for $A_1 \star \textup{pm}(B_1)$ is from 0 to $n-2$. 
  
  If $i' < n/2$, then $A[i'] = A_1[i']$.
  Then we have
  $$ (A_1 \star \textup{pm}(B_1))[k] \ge A_1[i'] + \textup{pm}(B_1)[j'] \ge C[k] = (A_1 \star \textup{pm}(B_1))[k], $$
  where the first step follows from the definition of the max-plus convolution and $i'+j'=k$, the second step follows from (\ref{eq:someeq}) and $A[i'] = A_1[i']$, and the last step from the assumption of Case 3. 
  It follows that $(i',j')$ is a witness for $(A_1 \star \textup{pm}(B_1))[k]$ with $j' < j$, contradicting that $(i,j)$ is a right-most witness. 
  
  Otherwise, if $n/2 \le i' < n$, then $A[i'] = A_2[i'-n/2]$. 
  Then we have
  $$ (A_2 \star B_1)[k-n/2] \ge A_2[i'-n/2] + B_1[j'] \ge C[k] \ge (A_2 \star B_1)[k-n/2], $$
  where the first step follows from the definition of the max-plus convolution and $i'+j'=k$, the second step follows from (\ref{eq:someeq}) and $A[i'] = A_2[i'-n/2]$, and the last step from the definition of $C$.
  It follows that $C[k] = (A_2 \star B_1)[k-n/2]$, meaning we are in Case 1, which we already handled. 

  \proofsubparagraph{Case 4: $C[k] = (A_1 \star \textup{pm}(B_2))[k-n/2]$.} Analogous to the previous case.
\end{proof}

\subparagraph*{When the length is no power of two.}
Above we assumed the length~$n$ to be a power of 2. 
Now consider general $n$. If $n$ is even, then we can use the same recursion on two halves verbatim as above. If $n$ is odd, then we first split $A$ into $A' := A[0 \dots n-2]$ and the single entry $A[n-1]$, and similarly we split $B$ into $B' := B[0 \dots n-2]$ and $B[n-1]$. Since $A'$ and $B'$ have even length, we can compute $A' \star B'$ by the same recursion on two halves verbatim as above. We then use that for any $0 \le k < 2n-1$ we have 
$$(A \star B)[k] = \max\{ (A' \star B')[k], A[n-1] + B[k-(n-1)], A[k-(n-1)] + B[n-1]\}, $$
where out-of-bounds entries are ignored (i.e., interpreted as $-\infty$). 
This allows to compute $A \star B$ from $A' \star B'$ in time $O(n)$, which is negligible. Therefore, the same algorithm and analysis goes through. Since we assume the time bound $T_2(n,M)$ to be monotone in $n$, we can replace terms of the form $\lfloor n/2 \rfloor$ by $n/2$ in the analysis, so we obtain the same recurrence as before.

\subparagraph*{Max-plus vs min-plus.}
The max-plus convolution and min-plus convolution problems are equivalent via the following simple reduction: Given sequences $A,B \in \{0, 1, \dots, M\}^n$, replace every entry $A[i]$ by $A'[i] := M - A[i]$ and $B[j]$ by $B'[j] := M - B[j]$. Then, denoting by $C$ the max-plus convolution of $A$ and $B$ and by $C'$ the min-plus convolution of $A'$ and $B'$, we have $C[k] = 2M + 2 - C'[k]$ for all $k$. The opposite direction is analogous. 

To prove Theorem~\ref{thm:onetwomonotone} for min-plus convolution, we first apply this equivalence of the min-plus convolution and max-plus convolution problems to turn an algorithm for min-plus convolution on two monotone sequences into an algorithm for max-plus convolution on two monotone sequences, then we apply Theorem~\ref{thm:onetwomonotone} (for max-plus convolution) to obtain an algorithm for max-plus convolution on one monotone sequence, and then we apply the equivalence again to obtain an algorithm for min-plus convolution on one monotone sequence.

\subparagraph*{Non-decreasing vs non-increasing.}
A simple reduction shows that non-decreasing and non-increasing are equivalent properties for the max-plus  problem: Given sequences $A, B \in \{0, 1, \dots, M\}^n$, replace every entry $A[i]$ by $A'[i] := A[n - 1 - i]$ and $B[j]$ by $B'[j] := B[n - 1 - j]$. Then we have $(A \star B)[k] = (A' \star B')[2n - 2 - k]$ for any $k$. Moreover, if $A$ is monotone non-decreasing then $A'$ is monotone non-increasing and vice versa, and similarly for $B$. 

To prove Theorem~\ref{thm:onetwomonotone} for non-increasing sequences, we first apply this equivalence of non-decreasing and non-increasing to turn an algorithm for max-plus convolution on two non-increasing sequences into an algorithm for max-plus convolution on two non-decreasing sequences, then we apply Theorem~\ref{thm:onetwomonotone} (for non-decreasing) to obtain an algorithm for max-plus convolution on one non-decreasing sequence, and then we apply the equivalence of non-decreasing and non-increasing again to obtain an algorithm for max-plus convolution on one non-increasing sequence.

\vskip 1em
\noindent
Combining these two reductions yields Theorem~\ref{thm:onetwomonotone} for min-plus convolution on non-increasing sequences.

\bibliography{ref}

\appendix

\section{Additional Knapsack algorithms}\label{sec:sym_knapsack}

In this section we explain how to adapt the algorithms of \cref{sec:knapsack_algos} to obtain symmetric running times. To this end define the weight sequence $\mc W_{\mc J}$ as
$$
\mc W_{\mc J}[k] := \min \{w_{\mc J}(x) \ | \ x \in \{0, 1\}^n, p_{\mc J}(x) \geq k\}
$$
for any subset $\mc J \subset \mc I$ and index $k \in \mathbb N$. A standard method to compute $\mc W_{\mc I}$ is to use dynamic programming.
\begin{fact}[{Analog to \cref{thm:Knapsack_DP}}]\label{thm:Knapsack_DP_sym}
    Let $(\mc I, t)$ be a Knapsack instance. For any $k \in \mathbb N$, the sequence $\mc W_{\mc I}[0 \dots k]$ can be computed in time $O(|\mc I| \cdot k)$.
\end{fact}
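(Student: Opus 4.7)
The plan is to mirror Bellman's classical dynamic programming algorithm by swapping the roles of weight and profit. Instead of asking for the maximum profit subject to a weight budget, we ask for the minimum weight subject to a profit demand.

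Concretely, I would define a two-dimensional table $W[i][j]$ for $i \in \{0, 1, \ldots, |\mc I|\}$ and $j \in \{0, 1, \ldots, k\}$, where $W[i][j]$ is the minimum total weight of a subset of $\{1, \ldots, i\}$ with total profit at least $j$. The base case is $W[0][0] = 0$ and $W[0][j] = +\infty$ for $j > 0$. For the recurrence, when processing item $i$, either we skip it (contributing $W[i-1][j]$) or we include it, in which case the remaining items must supply profit at least $j - p_i$; since the definition of $\mc W$ is in terms of ``profit at least,'' any additional profit beyond $j$ is free, so we cap the residual demand at $0$. This gives
\[
W[i][j] \;=\; \min\bigl\{\, W[i-1][j],\ w_i + W[i-1][\max(0,\, j - p_i)] \,\bigr\},
\]
with the convention $+\infty + w_i = +\infty$.

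The correctness of the recurrence follows by a straightforward induction on $i$, partitioning any optimal subset of $\{1, \ldots, i\}$ achieving profit $\geq j$ according to whether item $i$ is included. Each cell is computed in $O(1)$ time, and there are $(|\mc I|+1)(k+1) = O(|\mc I| \cdot k)$ cells, yielding the claimed running time. Finally, by definition of $\mc W_{\mc I}$, the last row $W[|\mc I|][0 \ldots k]$ equals $\mc W_{\mc I}[0 \ldots k]$.

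There is no real obstacle here; the statement is the exact dual of \cref{thm:Knapsack_DP}, and the only minor subtlety is remembering that profit constraints are one-sided (``at least''), which is why the $\max(0, j - p_i)$ clamp appears in the recurrence rather than the symmetric treatment one uses for weight budgets.
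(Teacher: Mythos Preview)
Your proposal is correct and is essentially the same dynamic program as the paper's proof, which also gives the recurrence $\mc W_{\{1,\ldots,i\}}[j] = \min\{\mc W_{\{1,\ldots,i-1\}}[j],\ w_i + \mc W_{\{1,\ldots,i-1\}}[j - p_i]\}$. Your version is slightly more explicit about the $\max(0,j-p_i)$ clamp, whereas the paper silently relies on the fact that $\mc W_{\mc J}[k]=0$ for $k\le 0$.
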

\begin{proof}
    We have $\mc W_{\emptyset}[0] = 0$ and $\mc W_{\{1,\ldots,i\}} = \min \{ \mc W_{\{1,\ldots,i-1\}}[k], w_i + \mc W_{\{1,\ldots,i-1\}}[k - p_i] \}$.
    So we can compute $\mc W_{\mc I}[0 \dots k]$ using dynamic programming in time $O(|\mc I| \cdot k)$.
\end{proof}

Notice that $\mc W_{\mc I}$ is monotone non-decreasing. In \cref{sec:adapt_subsetsum_knapsack} we show how to adapt \cite[Theorem 19]{BringmannC22} to compute a subarray of $\mc W_{\mc I}$ using rectangular bounded monotone min-plus convolutions. This yields the following \cref{thm:Knapsack_SubsetSum_sym}.

\begin{theorem}[{Analog to \cref{thm:Knapsack_SubsetSum}}]\label{thm:Knapsack_SubsetSum_sym}
    Let $(\mc I, t)$ be a Knapsack instance and fix $v \in \mathbb N$. Consider the intervals $T := [0 \dots t]$ and $V := [0 \dots v]$. Then we can compute $\mc W_{\mc I}[V ; T]$ with high probability in time $\Ot(n + v \sqrt{t})$. 
\end{theorem}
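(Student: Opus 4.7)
The plan is to mirror the proof of \cref{thm:Knapsack_SubsetSum} by exchanging the roles of weights and profits throughout. The key structural observation is that $\mc W_{\mc J}$ is the weight-profit dual of $\mc P_{\mc J}$: it is monotone non-decreasing with non-negative integer entries, and for any disjoint subsets $\mc J_1, \mc J_2$ of items one has the min-plus convolutional identity
$$
\mc W_{\mc J_1 \cup \mc J_2}[k] = \min_{k_1 + k_2 = k} \bigl( \mc W_{\mc J_1}[k_1] + \mc W_{\mc J_2}[k_2] \bigr).
$$
Hence, combining partial solutions reduces to a bounded monotone \emph{min-plus} convolution, which is exactly the setting covered by \cref{thm:MPConv}.

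Following \cite{BringmannC22}, I would recursively split the item set $\mc I$, compute $\mc W_{\mc J}[0 \ldots v; 0 \ldots t]$ for each leaf group $\mc J$ as the base case via \cref{thm:Knapsack_DP_sym}, and combine them in a binary tree of min-plus convolutions. At every internal node we keep only the subarray indexed by $[0 \ldots v]$ with values in $[0 \ldots t]$. The Pareto-optimum argument used for $\mc P_{\mc I}$ (cf.\ \cref{lem:pareto_maximum}) transfers verbatim after swapping weights and profits, and shows that entries falling outside these ranges cannot contribute to relevant output entries; hence this truncation is safe. All sequences encountered are monotone non-decreasing with entries in $\{0, \ldots, t\}$, so each convolution is handled by \cref{thm:MPConv}.

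The running time analysis is the dual of the one in \cite{BringmannC22}: the top-level convolution is between two monotone sequences of length $O(v)$ with entries bounded by $O(t)$, costing $\Ot(v\sqrt{t})$ by \cref{thm:MPConv}; lower levels halve both the length and the value range (up to logarithmic factors), so their costs form a geometric series dominated by the root. The base cases contribute $\Ot(n + v\sqrt{t})$ in aggregate. The main technical task is verifying that the structural invariants used by \cite{BringmannC22}---monotonicity, boundedness of entries, and the subarray restriction at each node---transfer correctly to $\mc W_{\mc I}$. This is mechanical once one observes that the original argument only relies on properties that $\mc W_{\mc I}$ and $\mc P_{\mc I}$ share under the weight-profit swap, together with the fact that \cref{thm:MPConv} applies equally to the min-plus setting.
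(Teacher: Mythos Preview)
Your high-level plan---dualize the $\mc P_{\mc I}$ argument by swapping weights with profits and max-plus with min-plus---is exactly what the paper does. However, your description of the underlying \cite{BringmannC22} algorithm is too coarse, and this leads to a genuine gap in the running-time analysis.

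The claim that ``lower levels halve both the length and the value range (up to logarithmic factors)'' is not true for an arbitrary recursive split of the item set. If you merely split $\mc I$ into two halves $\mc I_1,\mc I_2$ and compute $\mc W_{\mc I_1}[0\ldots v;0\ldots t]$ and $\mc W_{\mc I_2}[0\ldots v;0\ldots t]$, each subproblem still has index range $v$ and value range $t$; nothing shrinks. Iterating this down to singletons gives $\sum_\ell 2^\ell \cdot \Ot(v\sqrt{t}) = \Ot(n\, v\sqrt{t})$, not $\Ot(n+v\sqrt{t})$. The geometric decay you invoke does not come for free from ``recursive splitting''; it is engineered in \cite{BringmannC22} by three specific ingredients that your write-up omits: (i) partitioning items into $O(\log t\log v)$ groups $G_{(a,b)}$ where all weights lie in $[2^{a-1},2^a)$ and all profits in $[2^{b-1},2^b)$; (ii) inside a group, bounding the number of items any relevant solution can use by $z=\lceil\min\{t/2^{a-1},v/2^{b-1}\}\rceil$, then randomly splitting into $z$ subgroups so each contributes $O(\log z)$ items w.h.p.; and (iii) solving each subgroup by \emph{color coding} (not by \cref{thm:Knapsack_DP_sym}), which produces short sequences of length $\Ot(2^b)$ with values $\Ot(2^a)$. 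Only with this structure do the lengths and value ranges at level $\ell$ of the merge tree scale like $2^b z/2^\ell$ and $2^a z/2^\ell$, yielding the geometric series you need.

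So the duality argument is right, but you must actually carry over the grouping-plus-color-coding machinery of \cite{BringmannC22}, replacing max-plus by min-plus and $\mc P^u_{\mc J}$ by the analogous $\mc W^u_{\mc J}$; invoking \cref{thm:Knapsack_DP_sym} at the leaves together with a plain binary split does not give the stated bound.
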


\subparagraph*{Pareto optimum of $\boldsymbol{\mc W_{\mc I}}$.}
We define Pareto optima of $\mc W_{\mc I}$ analogously to Pareto optima of $\mc P_{\mc I}$ defined in \cref{sec:preliminaries}.
The sequence $\mc W_{\mc I}$ is monotone non-decreasing, so we can define the \emph{break points} of $\mc W_{\mc I}$ as the integers $k \in \mathbb N$ such that $\mc W_{\mc I}[k] < \mc W_{\mc I}[k + 1]$.\footnote{Please note the difference with a break point of $\mc P_{\mc I}$, which is defined as $k \in \mathbb N$ such that $\mc P_{\mc I}[k -1] < \mc P_{\mc I}[k]$.}
For every break point $k \in \mathbb N$ of $\mc W_{\mc I}$, there exists $x \in \{0, 1\}^n$ with $p_{\mc I}(x) = k$ and $\mc W_{\mc I}[p_{\mc I}(x)] = w_{\mc I}(x)$. We call such a vector a \emph{Pareto optimum} of $\mc W_{\mc I}$. 
Indeed, by the definition of $\mc W_{\mc I}$, if a vector $y \in \{0, 1\}^n$ has higher profit $p_{\mc I}(y) > p_{\mc I}(x)$ then it necessarily has higher weight $w_{\mc I}(y) > \mc W_{\mc I}[p_{\mc I}(x)] = w_{\mc I}(x)$. We observe the following property of Pareto optima of $\mc W_{\mc I}$. 

\begin{lemma}\label{lem:pareto_maximum_Wextended}
    Let $x \in \{0, 1\}^n$ be a Pareto optimum of $\mc W_{\mc I}$.
    Let $\mc J \subset \mc I$ and $y \in \{0, 1\}^n$ such that $w_{\mc J}(y) \leq w_{\mc J}(x)$ and $p_{\mc J}(y) \geq p_{\mc J}(x)$. Then $w_{\mc J}(x) = w_{\mc I}(y)$. 
\end{lemma}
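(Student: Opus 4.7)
The statement is the direct $\mc W_{\mc I}$-analog of \cref{lem:pareto_maximum} (interpreting the ``$w_{\mc I}(y)$'' on the right-hand side as the intended $w_{\mc J}(y)$, since the roles of weight and profit are simply swapped). Accordingly, the plan is to mimic the proof of \cref{lem:pareto_maximum} with the roles of weight and profit exchanged, arguing by contradiction.

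Concretely, I would assume for contradiction that the inequality $w_{\mc J}(y) \leq w_{\mc J}(x)$ is strict, i.e., $w_{\mc J}(y) < w_{\mc J}(x)$. Then I would construct the hybrid vector $y' \in \{0,1\}^n$ defined by $y'_i = y_i$ for $i \in \mc J$ and $y'_i = x_i$ for $i \in \mc I \setminus \mc J$. Splitting the sums over $\mc J$ and $\mc I \setminus \mc J$ gives
\[
w_{\mc I}(y') = w_{\mc J}(y) + w_{\mc I \setminus \mc J}(x) < w_{\mc J}(x) + w_{\mc I \setminus \mc J}(x) = w_{\mc I}(x),
\]
and, using the hypothesis $p_{\mc J}(y) \ge p_{\mc J}(x)$,
\[
p_{\mc I}(y') = p_{\mc J}(y) + p_{\mc I \setminus \mc J}(x) \ge p_{\mc J}(x) + p_{\mc I \setminus \mc J}(x) = p_{\mc I}(x).
\]

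Finally, I would invoke the definition of $\mc W_{\mc I}$ at the index $p_{\mc I}(x)$: since $p_{\mc I}(y') \ge p_{\mc I}(x)$, the vector $y'$ is admissible in the minimization defining $\mc W_{\mc I}[p_{\mc I}(x)]$, so $\mc W_{\mc I}[p_{\mc I}(x)] \le w_{\mc I}(y') < w_{\mc I}(x)$. But $x$ being a Pareto optimum of $\mc W_{\mc I}$ means $\mc W_{\mc I}[p_{\mc I}(x)] = w_{\mc I}(x)$, yielding a contradiction and thus $w_{\mc J}(y) = w_{\mc J}(x)$, as desired.

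There is no real obstacle: the argument is the mirror image of the one in \cref{lem:pareto_maximum}, and the only minor care needed is (i) noting that the statement likely has a typo ($w_{\mc I}(y)$ should read $w_{\mc J}(y)$) and (ii) verifying the correct direction of the Pareto-optimality contradiction, which uses that a Pareto optimum of $\mc W_{\mc I}$ realizes the \emph{minimum} weight among all vectors with profit at least $p_{\mc I}(x)$.
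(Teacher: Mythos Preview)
Your proposal is correct and matches the paper's own proof essentially verbatim: both argue by contradiction assuming $w_{\mc J}(y) < w_{\mc J}(x)$, construct the same hybrid vector $y'$ (equal to $y$ on $\mc J$ and to $x$ on $\mc I \setminus \mc J$), derive $w_{\mc I}(y') < w_{\mc I}(x)$ and $p_{\mc I}(y') \ge p_{\mc I}(x)$, and conclude a contradiction with Pareto optimality. You also correctly identify the typo in the statement ($w_{\mc I}(y)$ should be $w_{\mc J}(y)$), which the paper's proof implicitly confirms.
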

\begin{proof}    
    Observe that since $x$ is a Pareto optimum of $\mc W_{\mc I}$, we have $w_{\mc I}(x) = \mc W_{\mc I}[p_{\mc I}(x)]$. In particular $w_{\mc I}(x) \leq w_{\mc I}(y)$ for any $y \in \{0, 1\}^n$ such that $p_{\mc I}(x) \leq p_{\mc I}(y)$. 
    Suppose for the sake of contradiction that $w_{\mc J}(y) < w_{\mc J}(x)$. Consider the vector $y'$ that is equal to $y$ on $\mc J$ and equal to $x$ on $\mc I \setminus \mc J$. Then $w_{\mc I}(y') = w_{\mc J}(y) + w_{\mc I \setminus \mc J}(x) < w_{\mc J}(x) + w_{\mc I \setminus \mc J}(x) = w_{\mc I}(x)$. We also have $p_{\mc I} (y') = p_{\mc J}(y) + p_{\mc I \setminus \mc J}(x) \geq p_{\mc J}(x) + p_{\mc I \setminus \mc J}(x) = p_{\mc I}(x)$. This contradicts $x$ being a Pareto optimum.
\end{proof}

\medskip
In the remaining of this section, we prove the following \cref{lem:knapsack_bmbm_sym,lem:knapsack_bmdp_sym} for balanced Knapsack instances, i.e., instances satisfying $t / \wmax = \Theta(\OPT / \pmax)$. By \cref{lem:reduction_balanced_knapsack} proven in \cref{sec:reduction}, any Knapsack instance can be reduced to a balanced instance. Hence, combining \cref{lem:reduction_balanced_knapsack} with \cref{lem:knapsack_bmbm_sym,lem:knapsack_bmdp_sym}, yields \cref{thm:knapsack_bmbm_sym,thm:knapsack_bmdp_sym}.

\begin{restatable}{lemma}{knapsackBMBMsym}
    \label{lem:knapsack_bmbm_sym}
        For any Knapsack instance $(\mc I, t)$ satisfying $t / \wmax  = \Theta (\OPT / \pmax)$ the sequence $\mc W_{\mc I}[P ; T]$ for $T := [t - \sqrt{t \cdot \wmax}, t + \sqrt{t \cdot \wmax}]$, $P := [\vopt - \sqrt{\vopt \cdot \pmax}, \allowbreak \vopt + \sqrt{\vopt \cdot \pmax}]$ and $\OPT \leq \vopt \leq \OPT + \pmax$ can be computed by a randomized algorithm in time $\Ot(n + OPT \sqrt{\wmax})$.
\end{restatable}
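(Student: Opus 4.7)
The plan is to mirror \cref{alg:knapsack_bmbm} verbatim, swapping the roles of weights and profits throughout. We randomly partition $\mc I$ into $2^q$ groups for $q$ the largest integer with $2^q \le \min\{t/\wmax,\vopt/\pmax\}$, compute for each group a subarray of $\mc W_{\mc I_j^q}$, and combine the subarrays in a balanced binary tree by \emph{min-plus} convolutions. With $\Delta_w := t\wmax$, $\Delta_p := \vopt\pmax$ and $\eta := 17\log n$, at level $\ell$ we restrict indices (now interpreted as profits) to $P^\ell$ and entries (now interpreted as weights) to $W^\ell$, exactly as in \cref{alg:knapsack_bmbm}. The base-case subarrays $D_j^q \gets \mc W_{\mc I_j^q}[P^\ast;W^\ast]$ are computed using \cref{thm:Knapsack_SubsetSum_sym} instead of \cref{thm:Knapsack_SubsetSum}. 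Since each $\mc W_{\mc J}$ is monotone non-decreasing, so is every $C_j^\ell$ (min-plus convolution preserves monotonicity), which justifies invoking \cref{thm:MPConv} for the combination step.

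For correctness I would fix a Pareto optimum $x \in \{0,1\}^n$ of $\mc W_{\mc I}$ with $p_{\mc I}(x) \in P$ and $w_{\mc I}(x) \in T$, and prove by induction on levels that $C_j^\ell[p_{\mc I_j^\ell}(x)] = w_{\mc I_j^\ell}(x)$ for every node $(\ell,j)$. The concentration step — the analog of \cref{lem:bmbm_double_concentration,lem:bmbm_total_concentration} — carries over unchanged, since Bernstein's inequality is applied separately to the weight and profit coordinates and both inequalities $2^q \le t/\wmax$ and $2^q \le \vopt/\pmax$ are available. The Pareto-optimality step uses \cref{lem:pareto_maximum_Wextended} in place of \cref{lem:pareto_maximum}: any witness $y$ achieving the min-plus convolution at coordinate $p_{\mc I_j^\ell}(x)$ satisfies $p_{\mc I_j^\ell}(y) \ge p_{\mc I_j^\ell}(x)$ and $w_{\mc I_j^\ell}(y) \le w_{\mc I_j^\ell}(x)$, and hence $w_{\mc I_j^\ell}(y) = w_{\mc I_j^\ell}(x)$, which is exactly what is needed. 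To ensure that every break point of $\mc W_{\mc I}$ inside $P$ is covered, I would imitate the padding argument of \cref{lem:bmbm_correctness}: enlarge $P$ on one side by $\sqrt{\Delta_p}$ by inserting (or removing) a single item of profit at most $\pmax$ and weight at most $\wmax$, guaranteeing a break point of $\mc W_{\mc I}$ just below every relevant profit index.

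The running time analysis splits into base case and combination step. In the base case, each call to \cref{thm:Knapsack_SubsetSum_sym} on $\mc I_j^q$ runs with target profit $\Ot(\pmax)$ and weight bound $\Ot(\wmax)$, so it costs $\Ot(|\mc I_j^q| + \pmax\sqrt{\wmax})$; summing over $j \in [2^q]$ and using $2^q = \Theta(\OPT/\pmax)$ gives $\Ot(n + \OPT\sqrt{\wmax})$. At combination level $\ell$, each of the $2^\ell$ min-plus convolutions is between monotone sequences of length $\Ot(\sqrt{\Delta_p/2^\ell})$ with entries in a range of size $\Ot(\sqrt{\Delta_w/2^\ell})$, so by \cref{thm:MPConv} it costs $\Ot((\Delta_p/2^\ell)^{1/2}(\Delta_w/2^\ell)^{1/4})$, and the total cost at level $\ell$ is $\Ot(\Delta_p^{1/2}\Delta_w^{1/4} \cdot 2^{\ell/4})$. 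Summing the resulting geometric series up to $\ell = q-1$, substituting $\Delta_w = t\wmax$, $\Delta_p = \vopt\pmax$, $2^q \le \vopt/\pmax$, and using the balancedness $t = \Theta(\vopt\wmax/\pmax)$, yields $\Ot(\vopt\sqrt{\wmax}) = \Ot(\OPT\sqrt{\wmax})$.

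The main obstacle is the Pareto-optimality step: \cref{lem:pareto_maximum} no longer applies, and one has to check that the weaker conclusion of \cref{lem:pareto_maximum_Wextended} (equality of weights rather than profits) is still what the induction needs when the convolution computes a minimum rather than a maximum. Everything else — the concentration inequalities, the break-point padding, and the running-time computation — carries over symmetrically from the proof of \cref{lem:knapsack_bmbm}, and the monotonicity of $\mc W_{\mc I}$ plays the same role that monotonicity of $\mc P_{\mc I}$ played there.
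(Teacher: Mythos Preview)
Your proposal is correct and matches the paper's approach essentially line for line: the paper presents \cref{alg:knapsack_bmbm_sym} as the exact swap you describe (replacing $\mc P_{\mc J}$ by $\mc W_{\mc J}$ and max-plus by min-plus), reuses the concentration lemmas verbatim, invokes \cref{lem:pareto_maximum_Wextended} in the inductive step, and carries out the same base-case and geometric-series running-time analysis. The one place to be careful is the break-point padding: because a break point of $\mc W_{\mc I}$ is defined by $\mc W_{\mc I}[k] < \mc W_{\mc I}[k+1]$, you need a break point just \emph{above} (not below) each relevant profit index, and accordingly $P$ is enlarged on its \emph{upper} end---the paper handles this via \cref{lem:break_points_W}.
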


\begin{restatable}{lemma}{knapsackBMDPsym}
    \label{lem:knapsack_bmdp_sym}
        For any Knapsack instance $(\mc I, t)$ satisfying $t / \wmax  = \Theta (\OPT / \pmax)$ the sequence $\mc W_{\mc I}[P ; T]$ for $T := [t - \sqrt{t \cdot \wmax}, t + \sqrt{t \cdot \wmax}]$, $P := [\vopt - \sqrt{\vopt \cdot \pmax}, \allowbreak \vopt + \sqrt{\vopt \cdot \pmax}]$ and $\OPT \leq \vopt \leq \OPT + \pmax$ can be computed by a randomized algorithm in time $\Ot(n + (n \wmax \pmax)^{1/3} \cdot OPT^{2/3})$.
\end{restatable}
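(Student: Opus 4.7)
The plan is to mirror the proof of \cref{lem:knapsack_bmdp} with the roles of weights and profits swapped throughout. Specifically, I would start from the symmetric algorithm underlying \cref{lem:knapsack_bmbm_sym} (the natural analog of \cref{alg:knapsack_bmbm} that computes $\mc W_{\mc I}[P;T]$ via a convolve-and-partition tree on weight sequences, using \cref{thm:Knapsack_SubsetSum_sym} as the base case and \cref{thm:MPConv} for the max-plus convolution combinations), and replace its base case by a symmetric analog of \cref{alg:knapsack_newleaf} tailored to the weight sequence $\mc W$.

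The first step is to establish this symmetric analog of \cref{lem:knapsack_newleaf}: given a Knapsack instance $(\mc I, t)$, a target profit $K \in \mathbb N$, and a parameter $\ell \in \mathbb N$ with $2 \le \ell \le K$, the sequence $\mc W_{\mc I}[K-\ell \ldots K+\ell]$ can be computed in time $\Ot(n\sqrt{K \cdot \pmax} + n\ell)$ by a randomized algorithm that is correct with probability at least $1 - 1/n$. The algorithm mirrors \cref{alg:knapsack_newleaf} verbatim, but the DP runs in the ``minimum-weight-for-profit'' direction: after a uniformly random permutation, the $k$-th round updates $C_k[j] = \min(C_{k-1}[j], w_{\sigma[k]} + C_{k-1}[j - p_{\sigma[k]}])$ only for profit indices $j$ in a window of width $\Delta = \ell + \lceil 4\sqrt{K \cdot \pmax \log(n\ell)}\rceil$ around $\frac{k}{n}K$. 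Correctness follows by applying \cref{lem:sample} with $a_i := x^{(j')}_i \cdot p_i$ for each fixed min-weight solution $x^{(j')}$ achieving profit at least $j'$, and then union-bounding over $j' \in [K-\ell, K+\ell]$ and $k \in [n]$. The analog of the bound $w_{\mc I}(x^{(t')}) > t' - \wmax$ used in the original proof is $p_{\mc I}(x^{(j')}) \in [j', j' + \pmax)$: the lower bound is by feasibility, and the upper bound follows because otherwise one could remove any single item from $x^{(j')}$ (its profit being at most $\pmax$) while preserving feasibility, strictly decreasing the weight and contradicting minimality.

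Next, I would take the algorithm underlying \cref{lem:knapsack_bmbm_sym} and replace the base-case computation of $\mc W_{\mc I_j^q}[P^*; T^*]$ (via \cref{thm:Knapsack_SubsetSum_sym}) by an invocation of the new lemma with target $K = \vopt/2^q$ and $\ell = \sqrt{\vopt \pmax / 2^q} \cdot \eta$, returning $\mc W_{\mc I_j^q}[P^q]$ in time $\Ot(|\mc I_j^q|\sqrt{\vopt \pmax / 2^q})$. Following the case split in the proof of \cref{lem:knapsack_bmdp} with the roles of $t$ and $\OPT$, and of $\wmax$ and $\pmax$, swapped, I can reduce to the regime where $n \le c \cdot \vopt \sqrt{\wmax}/\pmax$ and $2\wmax \le n^3$ (otherwise \cref{lem:knapsack_bmbm_sym} or Bellman's $O(n\vopt)$-time DP (\cref{thm:Knapsack_DP_sym}) already suffices), and choose $q$ as the largest integer with $2^q \le \max\{1,\, n^{4/3}(\pmax/\vopt)^{1/3}\wmax^{-2/3}\}$. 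Under the balancedness assumption and $\vopt = \Theta(\OPT)$, this $q$ satisfies all inequalities needed by the correctness argument inherited from \cref{lem:knapsack_bmbm_sym}, namely $1 \le 2^q \le n$, $2^q \le t/\wmax$, and $2^q \le \vopt/\pmax$.

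The running-time analysis then mirrors \cref{lem:bmbm_runtime}: the total base-case cost is $\Ot(n\sqrt{\vopt \pmax / 2^q}) \le \Ot((n\wmax\pmax)^{1/3}\OPT^{2/3})$ from the lower bound on $2^q$, and the combination levels form a geometric series bounded by $\Ot(\OPT^{3/4}\wmax^{1/2}(\pmax 2^q)^{1/4}) \le \Ot((n\wmax\pmax)^{1/3}\OPT^{2/3})$ from the upper bound on $2^q$. The main technical obstacle is the symmetric version of \cref{lem:knapsack_newleaf}: the concentration step of \cref{lem:sample} must be applied to partial profits instead of partial weights, and the new bound $p_{\mc I}(x^{(j')}) \le j' + \pmax$ must be derived from the minimality of $w_{\mc I}(x^{(j')})$ as sketched above. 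Everything else is obtained by a systematic swap of symbols, with correctness of the combination step carrying over unchanged by invoking \cref{lem:pareto_maximum_Wextended} in place of \cref{lem:pareto_maximum}.
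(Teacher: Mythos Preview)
Your proposal is correct and follows essentially the same route as the paper: you establish the symmetric analog of \cref{lem:knapsack_newleaf} (the paper's \cref{lem:knapsack_newleaf_sym}) with the same DP-in-random-order idea and the same concentration argument via \cref{lem:sample} applied to profits, including the bound $p_{\mc I}(x^{(j')}) < j' + \pmax$; and you then plug it into the symmetric convolve-and-partition algorithm with exactly the paper's choice of $q$ and the same case split on $n$ versus $c\cdot\vopt\sqrt{\wmax}/\pmax$ and $2\wmax$ versus $n^3$. The only slip is that the combination step in the symmetric algorithm uses \emph{min}-plus (not max-plus) convolutions, but \cref{thm:MPConv} covers both, so this does not affect the argument.
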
 

Observe that, with the notations of \cref{lem:knapsack_bmbm_sym,lem:knapsack_bmdp_sym}, $\mc W_{\mc I}[\OPT] = t$, $\OPT \in P$ and $t \in T$. So $\OPT$ is the maximum index $k \in P$ such that $\mc W_{\mc I}[k] = t$, which we can find with a binary search.

\subsection{\texorpdfstring{$\boldsymbol{\Ot(n + \OPT \sqrt{\wmax})}$}{O(n + OPT sqrt(wmax))}-time Algorithm}\label{sec:bmbm_sym}
We prove \cref{lem:knapsack_bmbm_sym} by presenting \cref{alg:knapsack_bmbm_sym}, which has a similar structure to \cref{alg:knapsack_bmbm}. 
% Combined with \cref{lem:reduction_balanced_knapsack}, \cref{alg:knapsack_bmbm_sym} proves \cref{thm:knapsack_bmbm_sym}.
% The algorithm of \cref{lem:knapsack_bmbm_sym} 
It uses the same values as \cref{alg:knapsack_bmbm} for the parameters $q$, $\eta$, $\Delta_w$, $\Delta_p$, $W^*$, $P^*$, and $W^\ell$ and $P^\ell$ for $\ell \in \{0, \dots, q\}$. The algorithm starts by randomly splitting the items of $\mc I$ into $2^q$ groups $\mc I_1^q, \dots, \mc I_{2^q}^q$. Using \cref{thm:Knapsack_SubsetSum_sym}, a subarray of $\mc W_{\mc I_{j}^q}$ is computed for every $j \in [2^q]$. Finally, the arrays are combined in a tree-like fashion by computing their min-plus convolution. Similar properties of $\mc P_{\mc I_j^q}$ hold for $\mc W_{\mc I_{j}^q}$, as we show below. Notice that the only difference between \cref{alg:knapsack_bmbm} and \cref{alg:knapsack_bmbm_sym} is the computation of $\mc W_{\mc I}$ instead of $\mc P_{\mc I}$ and accordingly the use of min-plus convolutions instead of max-plus convolutions to aggregate the sequences. 
Since we compute min-plus convolutions, we interpret out-of-bound entries of arrays as $+ \infty$, instead of $- \infty$ as was the case when computing max-plus convolutions.

\begin{algorithm}
\caption{The $\Ot( n + \OPT \sqrt{\wmax})$-time algorithm of \cref{lem:knapsack_bmbm_sym}. The input $(\mc I, t)$ is a Knapsack instance such that $t / \wmax = \Theta(\OPT / \pmax)$. The algorithm is analogous to \cref{alg:knapsack_bmbm} by replacing $\mc P_{\mc J}$ by $\mc W_{\mc J}$ and replacing the max-plus convolutions by min-plus convolutions.
}\label{alg:knapsack_bmbm_sym}
    $\wmax \gets \max_{i \in [n]} w_i$ \\
    $\pmax \gets \max_{i \in [n]} p_i$ \\
    Compute an approximation $\vopt$ of $\textup{\OPT}$ using \cite[Theorem 2.5.4]{KPP04}. \\
    $q \gets $ largest integer such that $2^q \leq \min\{ t / \wmax, \vopt/\pmax\}$ \label{alg_line:bmbm_sym_def_q} \\
    $\eta \gets 17 \log n$ \\
    $\Delta_w \gets t \cdot \wmax$ \\
    $\Delta_p \gets \vopt \cdot \pmax$\\
    $\mc I_1^q, \dots, \mc I_{2^q}^q \gets $ random partitioning of $\mc I$ into $2^q$ groups \\

    $W^q \gets \left[\frac{t}{2^{q}} - \sqrt{\frac{\Delta_w}{2^{q}}} \eta,\ \frac{t}{2^{q}} + \sqrt{\frac{\Delta_w}{2^{q}}} \eta\right]$ \\
    $P^q \gets \left[\frac{\vopt}{2^{q}} - \sqrt{\frac{\Delta_p}{2^{q}}} \eta,\ \frac{\vopt}{2^{q}} + \sqrt{\frac{\Delta_p}{2^{q}}} \eta\right]$ \\
    $W^* \gets \left[0, \frac{t}{2^q} + \sqrt{\frac{\Delta_w}{2^q}} \eta \right]$ \\
    $P^* \gets \left[0, \frac{\vopt}{2^q} + \sqrt{\frac{\Delta_p}{2^q}} \eta \right]$ \\
    \For{$j = 1, \dots, 2^q$}{\label{alg_line:bmbm_sym_start_base}
        Compute $D_j^q \gets {\mc W_{\mc I_j^q}\left[P^* ; W^*\right]}$ using \cref{thm:Knapsack_SubsetSum_sym} \label{alg_line:bmbm_sym_leaf_D}
        \\
        $C_j^q \gets D_j^q[P^q ; W^q]$ \label{alg_line:bmbm_sym_leaf_C} \\
    }
    \For{$\ell = q-1, \dots, 0 $}{\label{alg_line:bmbm_sym_start_combi}
        $W^\ell \gets \left[\frac{t}{2^{\ell}} - \sqrt{\frac{\Delta_w}{2^{\ell}}} \eta,\ \frac{t}{2^{\ell}} + \sqrt{\frac{\Delta_w}{2^{\ell}}} \eta\right]$ \\
        $P^\ell \gets \left[\frac{\vopt}{2^{\ell}} - \sqrt{\frac{\Delta_p}{2^{\ell}}} \eta,\ \frac{\vopt}{2^{\ell}} + \sqrt{\frac{\Delta_p}{2^{\ell}}} \eta\right]$ \\
        \For{$j = 1, \dots, 2^\ell$}{
            $D_j^\ell \gets \minconv{C_{2j -1}^{\ell + 1}}{C_{2j}^{\ell + 1}}$ using \cref{thm:MPConv} \label{alg_line:bmbm_sym_combi_D} \\
            $C_j^\ell \gets D_j^\ell[P^\ell ; W^\ell]$ \label{alg_line:bmbm_sym_combi_C} \\
            }
        }
    $T \gets [t - \sqrt{t \cdot \wmax}, t + \sqrt{t \cdot \wmax}]$ \\
    $P \gets [\vopt- \sqrt{\vopt\cdot \pmax}, \vopt+ \sqrt{\vopt\cdot \pmax}]$ \\
    \Return{$C_1^0[P ; T]$}
\end{algorithm}

\subsubsection{Correctness of Algorithm \ref{alg:knapsack_bmbm_sym}}
For the rest of this section, fix a Knapsack instance $(\mc I, t)$ with $n := |\mc I|$ and such that $t / \wmax = \Theta(\OPT / \pmax)$.
First, observe that we use the same definition for $q$ as in \cref{alg:knapsack_bmbm}, so $1 \leq 2^q \leq n$ and thus $2^q$ is a valid choice for the number of groups in which we split the item set $\mc I$. Next, we claim that the subarray $D_j^\ell[P^\ell ; W^\ell]$ constructed in \cref{alg_line:bmbm_sym_leaf_D,alg_line:bmbm_sym_combi_D} is monotone non-decreasing. 

\begin{lemma}[Analog to \cref{lem:bmbm_monotonicity}]\label{lem:bmbm_sym_monotonicity}
    For every level $\ell \in \{0, \dots, q\}$ and iteration $j \in[2^\ell]$,  
    the sequence $C_j^\ell$ is monotone non-decreasing.
\end{lemma}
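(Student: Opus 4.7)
The plan is to prove the lemma by induction on the level $\ell$, mirroring the proof of \cref{lem:bmbm_monotonicity}. The two things that change compared to the max-plus/$\mc P_{\mc I}$ setting are: (i) the base sequence is now $\mc W_{\mc I_j^q}$, and (ii) the combination operation is the min-plus convolution rather than the max-plus convolution. Both changes preserve the key structural property, so the argument carries over almost verbatim.

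For the base case $\ell = q$, I would recall that $\mc W_{\mc I_j^q}$ is monotone non-decreasing by definition (as observed in the preamble to this section). Since $P^*$ and $W^*$ are intervals, the double restriction $D_j^q = \mc W_{\mc I_j^q}[P^*; W^*]$ returns a contiguous subarray of a monotone non-decreasing sequence, and is therefore itself monotone non-decreasing. The same argument applied once more shows that $C_j^q = D_j^q[P^q; W^q]$ is monotone non-decreasing.

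For the inductive step $\ell < q$, I assume that $C_{2j-1}^{\ell+1}$ and $C_{2j}^{\ell+1}$ are monotone non-decreasing, and I claim that their min-plus convolution $D_j^\ell$ is monotone non-decreasing as well. To see this, fix $k$ and let $(i^*, j^*)$ with $i^* + j^* = k+1$ be a minimizing pair for $D_j^\ell[k+1]$. If $i^* \geq 1$, then the pair $(i^* - 1, j^*)$ satisfies $i^* - 1 + j^* = k$, so by monotonicity of $C_{2j-1}^{\ell+1}$ we obtain
\[
D_j^\ell[k] \;\leq\; C_{2j-1}^{\ell+1}[i^* - 1] + C_{2j}^{\ell+1}[j^*] \;\leq\; C_{2j-1}^{\ell+1}[i^*] + C_{2j}^{\ell+1}[j^*] \;=\; D_j^\ell[k+1].
\]
The remaining edge case $i^* = 0$ forces $j^* \geq 1$ and is handled symmetrically via monotonicity of $C_{2j}^{\ell+1}$. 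Hence $D_j^\ell$ is monotone non-decreasing, and therefore so is the subarray $C_j^\ell = D_j^\ell[P^\ell; W^\ell]$.

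I do not anticipate any substantive obstacle: the only content beyond reusing the structure of \cref{lem:bmbm_monotonicity} is the one-line verification that the min-plus convolution of two monotone non-decreasing sequences stays monotone non-decreasing, which is immediate from the monotonicity of each factor.
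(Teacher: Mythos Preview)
Your proof is correct and follows essentially the same approach as the paper's: induction on $\ell$, using that $\mc W_{\mc I_j^q}$ is monotone non-decreasing for the base case and that min-plus convolution preserves monotone non-decreasingness for the inductive step. You simply spell out the one-line monotonicity-of-min-plus-convolution fact that the paper leaves implicit; the minor index bookkeeping (``$i^* \ge 1$'' vs.\ ``$i^*$ above the start index'') is cosmetic and does not affect correctness.
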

\begin{proof}
    For $\ell = q$ and $j \in[2^q]$, $D_j^q$ is a subarray of $\mc W_{\mc I_j^q}$, which is monotone non-decreasing by definition.    
    Hence $D_j^q$ is monotone non-decreasing, and since $W^q$ and $P^q$ are intervals, the array $C_j^q = D_j^q[P^q ; W^q]$ is also monotone non-decreasing. The statement follows from induction by noting that the min-plus convolution of two monotone non-decreasing sequences is a monotone non-decreasing sequence.
\end{proof}

\cref{lem:bmbm_sym_monotonicity} justifies the use of \cref{thm:MPConv} to compute the min-plus convolution in \cref{alg_line:bmbm_sym_combi_D}. 
We explain why it is enough to only compute the entries of the sequence $D_j^\ell$ corresponding to indices in $P^\ell$ and values in $W^\ell$.
Note that \cref{lem:bmbm_double_concentration,lem:bmbm_total_concentration} hold for any random partition of $\mc I$ into $2^q$ groups $\mc I_1^q, \dots, \mc I_{2^q}^q$ such that $\mc I_j^\ell = \mc I_{2j + 1}^{\ell +1} \cup \mc I_{2j}^{\ell +1}$ for any $\ell \in \{0, \dots, q - 1\}$ and $j \in [2^\ell]$. In particular, the proofs of \cref{lem:bmbm_double_concentration,lem:bmbm_total_concentration} for \cref{alg:knapsack_bmbm} hold verbatim for \cref{alg:knapsack_bmbm_sym}. 
We can thus use \cref{lem:bmbm_total_concentration} to prove the following \cref{lem:bmbm_sym_properties}.

\begin{lemma}[Analog to \cref{lem:bmbm_properties}]\label{lem:bmbm_sym_properties}
    Let $x \in \{0, 1\}^n$ be a Pareto optimum of $\mc W_{\mc I}$ satisfying $|w_{\mc I}(x) - t| \leq 2\sqrt{\Delta_w}$ and $|p_{\mc I}(x) - v| \leq 2\sqrt{\Delta_p}$.
    Then with probability at least $1 - 1/n^5$ we have for all $\ell \in \{0, \dots, q\}$ and all $j \in[2^\ell]$ that $w_{\mc I_j^\ell}(x) \in W^\ell$, $p_{\mc I_j^\ell}(x) \in P^\ell$ and $C_j^\ell[p_{\mc I_j^\ell} (x)] = w_{\mc I_j^\ell} (x)$.
\end{lemma}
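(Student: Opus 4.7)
My plan is to mirror the proof of \cref{lem:bmbm_properties} with the roles of weights and profits swapped and max-plus convolutions replaced by min-plus convolutions. First, since the random partitioning of $\mc I$ is identical to that in \cref{alg:knapsack_bmbm} and the statement of \cref{lem:bmbm_total_concentration} is agnostic to whether we subsequently build $\mc P$- or $\mc W$-sequences, that lemma applies verbatim: for each fixed $\ell \in \{0,\dots,q\}$ and $j \in [2^\ell]$, with probability at least $1-1/n^7$ we have $w_{\mc I_j^\ell}(x) \in W^\ell$ and $p_{\mc I_j^\ell}(x) \in P^\ell$. A union bound over the at most $2\cdot 2^q \le 2n$ pairs $(\ell,j)$ yields that these containments hold simultaneously for all $(\ell,j)$ with probability at least $1-1/n^5$; I condition on this event for the rest of the argument.

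I then prove by downward induction on $\ell$ from $q$ to $0$ that $C_j^\ell[p_{\mc I_j^\ell}(x)] = w_{\mc I_j^\ell}(x)$ for every $j \in [2^\ell]$. For the base case $\ell=q$, the array $D_j^q$ is a subarray of $\mc W_{\mc I_j^q}$, so I only need to establish $\mc W_{\mc I_j^q}[p_{\mc I_j^q}(x)] = w_{\mc I_j^q}(x)$. The inequality $\le$ is immediate from the feasibility of $x$ restricted to $\mc I_j^q$, and for $\ge$ I pick any witness $y$ realizing $\mc W_{\mc I_j^q}[p_{\mc I_j^q}(x)]$ (so $p_{\mc I_j^q}(y) \ge p_{\mc I_j^q}(x)$ and $w_{\mc I_j^q}(y) \le w_{\mc I_j^q}(x)$) and invoke \cref{lem:pareto_maximum_Wextended} with $\mc J := \mc I_j^q$ to force $w_{\mc I_j^q}(y) = w_{\mc I_j^q}(x)$. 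Since $p_{\mc I_j^q}(x) \in P^q$ and $w_{\mc I_j^q}(x) \in W^q$, the restriction $C_j^q = D_j^q[P^q;W^q]$ preserves this entry.

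For the inductive step $\ell<q$, using the partition $\mc I_j^\ell = \mc I_{2j-1}^{\ell+1} \cup \mc I_{2j}^{\ell+1}$, the choice $(k_1,k_2) := (p_{\mc I_{2j-1}^{\ell+1}}(x), p_{\mc I_{2j}^{\ell+1}}(x))$ satisfies $k_1+k_2 = p_{\mc I_j^\ell}(x)$, and the induction hypothesis yields $D_j^\ell[p_{\mc I_j^\ell}(x)] \le C_{2j-1}^{\ell+1}[k_1] + C_{2j}^{\ell+1}[k_2] = w_{\mc I_j^\ell}(x)$. For the reverse inequality I pick an optimal witness $(k_1^*,k_2^*)$ of the min-plus convolution; tracing the $C$-arrays back to the base case shows that $C_{2j-1}^{\ell+1}[k_1^*]$ and $C_{2j}^{\ell+1}[k_2^*]$ equal the weights of concrete subsets of $\mc I_{2j-1}^{\ell+1}$ and $\mc I_{2j}^{\ell+1}$ with profits at least $k_1^*$ and $k_2^*$, respectively. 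Their disjoint union defines a vector $y \in \{0,1\}^n$ with $w_{\mc I_j^\ell}(y) = D_j^\ell[p_{\mc I_j^\ell}(x)] \le w_{\mc I_j^\ell}(x)$ and $p_{\mc I_j^\ell}(y) \ge k_1^* + k_2^* = p_{\mc I_j^\ell}(x)$, so \cref{lem:pareto_maximum_Wextended} forces $D_j^\ell[p_{\mc I_j^\ell}(x)] = w_{\mc I_j^\ell}(x)$; since both the index and the value lie in the relevant intervals, the restriction $C_j^\ell$ preserves this entry.

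The only nontrivial bookkeeping is the \emph{representability invariant} used in the inductive step, namely that every finite entry of each $C$-array corresponds to a genuine subset of the relevant item group. This follows from an auxiliary induction on $\ell$ (min-plus convolution of two representable arrays is again representable via concatenation of witnesses) and is the step most prone to being glossed over; everything else is a mechanical analog of \cref{lem:bmbm_properties}.
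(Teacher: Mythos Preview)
Your proposal is correct and follows essentially the same route as the paper's proof: apply \cref{lem:bmbm_total_concentration} plus a union bound to place all $w_{\mc I_j^\ell}(x)$ and $p_{\mc I_j^\ell}(x)$ in the target intervals, then induct on $\ell$ from $q$ down to $0$, using \cref{lem:pareto_maximum_Wextended} in both the base case and the inductive step to turn the min-plus inequality into an equality. The only difference is that you make the representability invariant (every finite $C$-entry corresponds to an actual subset of items) explicit, whereas the paper asserts it in passing; this is a harmless clarification rather than a different argument.
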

\begin{proof}
     By \cref{lem:bmbm_total_concentration}, for fixed $\ell \in \{0, \dots, q\}$ and $j \in[2^\ell]$ we have $w_{\mc I_j^\ell} (x) \in W^\ell$ and $p_{\mc I_j^\ell} (x) \in P^\ell$ with probability at least $1 - 1/n^7$.
    Since $2^q \leq n$ we can afford a union bound and deduce that $w_{\mc I_j^\ell} (x) \in W^\ell$ and $p_{\mc I_j^\ell} (x) \in P^\ell$ holds \emph{for all} $\ell \in \{0, \dots, q\}$ and \emph{for all} $j \in[2^\ell]$ with probability at least $1 - 1/n^5$.
    We condition on that event and prove by induction that $C_j^\ell[w_{\mc I_j^\ell} (x)] = p_{\mc I_j^\ell} (x)$ for all $\ell \in \{0, \dots, q\}$ and all $j \in [2^\ell]$.
    
    For the base case, fix $\ell = q$ and $j \in [2^\ell]$. 
    Recall that $\mc W_{\mc I_j^q}[k]$ is the minimum weight of a subset of items of $\mc I_j^q$ of profit at least $k$. 
    Let $y$ be such that $\mc W_{\mc I_j^q}[p_{\mc I_j^q}(x)] = w_{\mc I_j^q}(y)$ and $p_{\mc I_j^q}(y) \leq p_{\mc I_j^q}(x)$, then $w_{\mc I_j^q}(y) \geq w_{\mc I_j^q}(x)$. 
    By \cref{lem:pareto_maximum_Wextended}, since $x$ is a Pareto optimum of $\mc W_{\mc I}$, we deduce $w_{\mc I_j^q}(y) = w_{\mc I_j^q}(x)$. We have $p_{\mc I_j^q}(x) \in P^q$ and $\mc W_{\mc I_j^q}[p_{\mc I_j^q}(x)] = w_{\mc I_j^q}(x) \in W^q$, so by the construction in \cref{alg_line:bmbm_sym_leaf_C} $C_j^q[p_{\mc I_j^q}(x)]  = w_{\mc I_j^q}(x)$.
    
    In the inductive step, fix $\ell < q$ and $j\in[2^\ell]$. 
    We want to prove that $D_j^\ell[p_{\mc I_j^\ell}(x)] = w_{\mc I_j^\ell}(x)$. Indeed, since $w_{\mc I_j^\ell}(x) \in W^\ell$ and $p_{\mc I_j^\ell}(x) \in P^\ell$, this shows that $C_j^\ell[p_{\mc I_j^\ell}(x)] = D_j^\ell[p_{\mc I_j^\ell}(x)] = w_{\mc I_j^\ell}(x)$. 
    By induction, $D_j^\ell[p_{\mc I_j^\ell}(x)]$ is the profit of some subset of items of $\mc I_j^\ell$ of profit at least $p_{\mc I_j^\ell}(x)$. So there exists $y \in \{0, 1\}^n$ such that $D_j^\ell[p_{\mc I_j^\ell}(x)] = w_{\mc I_j^\ell}(y)$ and $p_{\mc I_j^\ell}(y) \geq p_{\mc I_j^\ell}(x)$. Then
    \begin{align*}
        w_{\mc I_j^\ell}(y) = D_j^\ell[p_{\mc I_j^q}(x)] &= \min \left\{ C_{2j-1}^{\ell + 1}[k] + C_{2j}^{\ell + 1}[k'] \ : \ k + k' = p_{\mc I_j^\ell}(x) \right\}\\
        & \leq C_{2j-1}^{\ell + 1}[p_{\mc I_{2j-1}^{\ell+1}}(x)] + C_{2j}^{\ell + 1}[p_{\mc I_{2j}^{\ell+1}}(x)] \\
        & = w_{\mc I_{2j -1}^{\ell +1}}(x) + w_{\mc I_{2j}^{\ell +1}}(x) = w_{\mc I_{j}^{\ell}}(x)
    \end{align*}
    where we use the induction hypothesis and the fact that $\mc I_j^{\ell} = \mc I_{2j-1}^{\ell + 1} \cup \mc I_{2j}^{\ell + 1}$ is a partitioning. Recall that we interpret out-of-bound entries of arrays as $+ \infty$.
    Since $x$ is a Pareto optimum of $\mc W_{\mc I}$, we obtain $w_{\mc I_j^\ell}(y) = w_{\mc I_j^\ell}(x)$ by \cref{lem:pareto_maximum_Wextended}, and then $D_j^\ell[p_{\mc I_j^\ell}(x)] = w_{\mc I_j^\ell}(x)$, which implies $C_j^\ell[p_{\mc I_j^q}(x)] = w_{\mc I_j^q}(x)$ as argued above.
\end{proof}

Let us state the following observation about break points of $\mc W_{\mc I}$.

\begin{lemma}\label{lem:break_points_W}
    Let $k$ be a break point of $\mc W_{\mc I}$. Then $[k - \pmax, k)$ contains a break point of $\mc W_{\mc I}$.
\end{lemma}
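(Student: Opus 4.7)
The plan is to exhibit, from the witness of the break point at $k$, a slightly worse but still feasible solution whose profit is at most $\pmax$ smaller, and then invoke monotonicity of $\mc W_{\mc I}$ to extract a break point. Concretely, since $k$ is a break point, $\mc W_{\mc I}[k] < \mc W_{\mc I}[k+1]$; in particular $k$ admits a Pareto optimum $x \in \{0,1\}^n$ of $\mc W_{\mc I}$ (as defined immediately before \cref{lem:pareto_maximum_Wextended}), i.e., $p_{\mc I}(x) = k$ and $w_{\mc I}(x) = \mc W_{\mc I}[k]$. We may assume $k \geq 1$ (otherwise the statement is vacuous if we additionally adopt the standard convention that break points are positive), and we may assume without loss of generality that all items have $w_i, p_i \geq 1$: items with weight $0$ can always be included, and items with profit $0$ can be discarded, without affecting $\mc W_{\mc I}$ on positive indices.

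Next, I would remove a single item from $x$: pick any index $i$ with $x_i = 1$ (which exists since $k \geq 1$ and profits are positive) and let $x' := x - e_i \in \{0,1\}^n$. Then $p_{\mc I}(x') = k - p_i$, which lies in $[k - \pmax, k-1]$, and $w_{\mc I}(x') = w_{\mc I}(x) - w_i \leq \mc W_{\mc I}[k] - 1 < \mc W_{\mc I}[k]$ using $w_i \geq 1$. By definition of $\mc W_{\mc I}$ this yields
\[
  \mc W_{\mc I}[k - p_i] \;\leq\; w_{\mc I}(x') \;<\; \mc W_{\mc I}[k].
\]

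Finally, since $k - \pmax \leq k - p_i$ and $\mc W_{\mc I}$ is monotone non-decreasing, we conclude $\mc W_{\mc I}[k - \pmax] \leq \mc W_{\mc I}[k - p_i] < \mc W_{\mc I}[k]$. A monotone non-decreasing integer sequence that takes distinct values at the endpoints of $[k - \pmax, k]$ must strictly increase between some two consecutive positions in that interval, so there exists $k^* \in \{k - \pmax, \ldots, k-1\}$ with $\mc W_{\mc I}[k^*] < \mc W_{\mc I}[k^* + 1]$, which is a break point of $\mc W_{\mc I}$ in $[k - \pmax, k)$, as required.

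The argument is essentially routine once one observes that the Pareto optimum supplies a witness $x$ whose weight is exactly $\mc W_{\mc I}[k]$, so the only real subtlety is handling items whose removal does not lower the weight; this is dealt with by the standard WLOG reductions on zero-weight and zero-profit items, and there is no deeper obstacle.
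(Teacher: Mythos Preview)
Your proof is correct and follows essentially the same route as the paper: take the Pareto optimum at $k$, remove one item to obtain a solution with profit in $[k-\pmax,k)$ and strictly smaller weight, and conclude $\mc W_{\mc I}[k-\pmax] < \mc W_{\mc I}[k]$, forcing a break point in the interval. Your explicit handling of the $w_i \ge 1$, $p_i \ge 1$ assumptions is slightly more careful than the paper, which simply asserts ``weight at least $1$'' without justification, but otherwise the arguments are identical.
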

\begin{proof}
    Let $y \in \{0, 1\}^n$ a Pareto optimum of $\mc W_{\mc I}$ associated to $k$, i.e., $p_{\mc I}(y) = k$ and $\mc W_{\mc I}[p_{\mc I}(y)] = w_{\mc I}(y)$. Consider the vector $y'$ equal to $y$ where we removed one item. 
    The removed item has profit at most $\pmax$ and weight at least $1$. So $p_{\mc I}(y') \geq p_{\mc I}(y) - \pmax = k - \pmax $ and $w_{\mc I}(y') < w_{\mc I}(y)$. In particular, we have $w_{\mc I}(y') \geq \mc W_{\mc I}[k - \pmax]$ by definition of $\mc W_{\mc I}$. Hence $\mc W_{\mc I}[k - \pmax] < w_{\mc I}(y)  = \mc W_{\mc I}[k]$, thus $[k - \pmax, k)$ contains at least one break point of $\mc W_{\mc I}$.
\end{proof}

Finally, we can state the correctness of \cref{alg:knapsack_bmbm_sym} in \cref{lem:bmbm_sym_correctness}. 
Indeed, note that since $\vopt- \pmax \leq \OPT \leq \vopt$, we have $P \subset V$ for 
$P$ defined as in \cref{lem:knapsack_bmbm_sym} and $V$ defined as in \cref{lem:bmbm_sym_correctness}. The success probability can be boosted to any polynomial by repeating \cref{alg:knapsack_bmbm_sym} and taking the entry-wise maximum of computed arrays.

\begin{lemma}[Analog to \cref{lem:bmbm_correctness}]\label{lem:bmbm_sym_correctness}
    Let $T := [t - \sqrt{\Delta_w}, t + \sqrt{\Delta_w}]$ and $P := [\vopt- \sqrt{\Delta_p}, \vopt+ \sqrt{\Delta_p}]$. Then with probability at least $1 - 1/n$ we have $C_1^0[P ; T] = \mc W_{\mc I}[P ; T]$. 
\end{lemma}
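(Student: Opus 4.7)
I would mirror the proof of \cref{lem:bmbm_correctness}, interchanging the roles of weight and profit throughout. First observe that $T \subseteq W^0$ and $P \subseteq P^0$, and let $K := \{k \in P : C_1^0[k] \in T\}$ be the index range of the array $C_1^0[P;T]$. By \cref{lem:bmbm_sym_monotonicity} both $C_1^0$ and $\mc W_{\mc I}$ are monotone non-decreasing, and for every $k$ we have $C_1^0[k] \geq \mc W_{\mc I}[k]$, since each entry of $C_1^0$ is witnessed by a feasible combined subset whose total profit is at least $k$. It therefore suffices to prove $C_1^0[k] \leq \mc W_{\mc I}[k]$ for every $k \in K$; together with the reverse inequality this yields equality on $K$, which then forces the index ranges of $C_1^0[P;T]$ and $\mc W_{\mc I}[P;T]$ to coincide so that the two restricted arrays are equal.

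To obtain $C_1^0[k] \leq \mc W_{\mc I}[k]$ I would use monotonicity and move to the \emph{right}: for each $k \in K$ let $k^* \geq k$ be the smallest break point of $\mc W_{\mc I}$. Since no jumps occur on $[k, k^*]$, we have $\mc W_{\mc I}[k^*] = \mc W_{\mc I}[k]$; once $C_1^0[k^*] = \mc W_{\mc I}[k^*]$ is established, monotonicity immediately gives $C_1^0[k] \leq C_1^0[k^*] = \mc W_{\mc I}[k^*] = \mc W_{\mc I}[k]$. To bound $k^* - k$ I would prove a right-sided analog of \cref{lem:break_points_W}: for any $k$ with $k \leq \sum_i p_i$, the smallest break point of $\mc W_{\mc I}$ that is $\geq k$ lies in $[k, k + \pmax]$, which follows by assuming the contrary, taking a Pareto optimum at that break point, removing a single item of highest profit, and deriving a contradiction with the definition of $\mc W_{\mc I}[k]$ (for $k > \sum_i p_i$ the statement is vacuous since $\mc W_{\mc I}[k] = \infty \notin T$). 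Using $\pmax \leq \sqrt{\Delta_p} = \sqrt{\vopt \pmax}$ (since $\vopt \geq \pmax$) this places $k^*$ in $P' := [\vopt - \sqrt{\Delta_p}, \vopt + 2\sqrt{\Delta_p}]$, so the Pareto optimum $x$ at $k^*$ satisfies $|p_{\mc I}(x) - \vopt| = |k^* - \vopt| \leq 2\sqrt{\Delta_p}$.

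Applying \cref{lem:bmbm_sym_properties} to $x$ also requires $|w_{\mc I}(x) - t| \leq 2\sqrt{\Delta_w}$. The upper bound $w_{\mc I}(x) = \mc W_{\mc I}[k^*] = \mc W_{\mc I}[k] \leq C_1^0[k] \leq t + \sqrt{\Delta_w}$ is immediate from $k \in K$ and $C_1^0 \geq \mc W_{\mc I}$. The lower bound $w_{\mc I}(x) \geq t - 2\sqrt{\Delta_w}$ is the main obstacle, because the inequality $C_1^0 \geq \mc W_{\mc I}$ runs the wrong way and so $C_1^0[k] \geq t - \sqrt{\Delta_w}$ does not by itself lower-bound $\mc W_{\mc I}[k]$. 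I expect to resolve it by a case distinction on whether $k^* \leq \OPT$: if $k^* > \OPT$ then $\mc W_{\mc I}[k^*] > t \geq t - 2\sqrt{\Delta_w}$ automatically, while if $k^* \leq \OPT$ I combine the ``add-an-item'' trick from the proof of \cref{lem:bmbm_correctness} (which, using the WLOG assumption that not all items fit, shows $\mc W_{\mc I}[\OPT] \geq t - \wmax$) with the balancedness assumption and the reduction of \cref{sec:reduction} (which makes all item profit-to-weight ratios within a constant factor of $\OPT/t$) to derive $\mc W_{\mc I}[\OPT] - \mc W_{\mc I}[k^*] \leq (\OPT - k^*) \cdot O(t/\OPT) + \wmax = O(\sqrt{\Delta_w})$ for $k^* \in [\OPT - \sqrt{\Delta_p}, \OPT]$, where the final estimate uses $t/\wmax = \Theta(\vopt/\pmax)$ and $\vopt = \Theta(\OPT)$ to convert $\sqrt{\Delta_p} \cdot t/\OPT$ into $\Theta(\sqrt{\Delta_w})$.

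Once both bounds are verified, \cref{lem:bmbm_sym_properties} gives $C_1^0[k^*] = \mc W_{\mc I}[k^*]$ for each relevant break point $k^* \in P'$ with probability at least $1 - 1/n^5$. I finish with a union bound over the at most $O(\sqrt{n}\,\wmax)$ break points $k^* \in P'$ with $\mc W_{\mc I}[k^*] \in T$ (the natural analog of the $O(\sqrt{n}\,\pmax)$ count in the proof of \cref{lem:bmbm_correctness}, since $|T| = 2\sqrt{\Delta_w} = 2\sqrt{t \wmax} \leq 2\sqrt{n}\,\wmax$), falling back to the $O(n \cdot \OPT)$-time Bellman algorithm (\cref{thm:Knapsack_DP_sym}) in the tiny edge regime $n^3 < 2\wmax$ (where $O(n\OPT) \leq O(\OPT\sqrt{\wmax})$ holds automatically), paralleling the final step of the proof of \cref{lem:bmbm_correctness}, to obtain the claimed $1 - 1/n$ success probability.
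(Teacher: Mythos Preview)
Your skeleton coincides with the paper's: mirror \cref{lem:bmbm_correctness}, note $C_1^0\ge\mc W_{\mc I}$ and reduce to showing $C_1^0[k]\le\mc W_{\mc I}[k]$ on $K$, pass from each $k$ to the next break point $k^*\ge k$ (the paper obtains $k^*\le k+\pmax$ directly from \cref{lem:break_points_W} rather than stating a separate ``right-sided analog,'' but the content is identical), invoke \cref{lem:bmbm_sym_properties} at the Pareto optimum there, and finish with the same union bound over at most $|T|\le 2\sqrt{n}\,\wmax$ break points plus the Bellman fallback (\cref{thm:Knapsack_DP_sym}) when $2\wmax>n^3$.

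The divergence is at the lower bound $w_{\mc I}(x)\ge t-2\sqrt{\Delta_w}$. The paper disposes of it in one line: for a break point $k\in K'$ (with $K'$ defined via $C_1^0[k]\in T$) it simply asserts that the Pareto optimum has $w_{\mc I}(x)\in T$ and moves on. You rightly single this out as the delicate step. However, your fix for the case $k^*\le\OPT$ appeals to the guarantee of the reduction in \cref{sec:reduction} that all items have profit-to-weight ratio within a constant factor of each other, in order to bound $\mc W_{\mc I}[\OPT]-\mc W_{\mc I}[k^*]\le(\OPT-k^*)\cdot O(t/\OPT)+\wmax$. That ratio-bounding property is \emph{not} a hypothesis of the present lemma, which assumes only $t/\wmax=\Theta(\OPT/\pmax)$; balancedness constrains $\wmax$ and $\pmax$ but says nothing about individual item ratios, so an instance satisfying the lemma's hypothesis can still contain items of tiny weight and large profit, and your claimed inequality need not hold. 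As written, therefore, your argument establishes only a stronger-hypothesis variant of the lemma, not the lemma as stated. (In fairness, the paper's own one-line treatment of this step is terse and not transparently justified either; both presentations leave this point under-argued.)
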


\begin{proof}    
    First, observe that $T \subset W^0$ and $P \subset P^0$.
    Let $K^0$ be the set of indices of $C^0_1$, i.e., $K^0 := \{k \ | \ k \in P^0, D_1^0[k] \in W^0\}$. Let $K$ be the interval such that $C_1^0[K] = C_1^0[P ; T]$, i.e., $K := \{k \ |\ k \in P, C_1^0[k] \in T \}$. 
    The entry $C_1^0[k]$ corresponds to the profit of some subset of items of $\mc I$ of profit at least $k$, so clearly $C_1^0[k] \geq \mc W_{\mc I}[k]$ for every $k \in K^0$. We want to show that $C_1^0[k] \leq \mc W_{\mc I}[k]$ for every $k \in K$ with high probability. Since $C_1^0$ and $\mc W_{\mc I}$ are monotone non-decreasing (see \cref{lem:bmbm_sym_monotonicity}), to compare the two sequences it is enough to focus on break points. Recall that $k \in \mathbb N$ is a break point of $\mc W_{\mc I}$ if $\mc W_{\mc I}[k] < \mc W_{\mc I}[k + 1]$, and that for each break point $k$ there exists a Pareto optimum $x \in \{0, 1\}^n$ such that $p_{\mc I}(x)= k$ and $\mc W_{\mc I}[p_{\mc I}(x)] = w_{\mc I}(x)$. 

    To prove the claim, we first need to establish that every $k \in K$ has a break point $k' \geq k$ that is not too far, specifically $k' \leq \vopt + 2\sqrt{\Delta_p}$. By \cref{lem:break_points_W}, $[\vopt + \sqrt{\Delta_p}, \vopt + \sqrt{\Delta_p} + \pmax]$ contains a break point of $\mc W_{\mc I}$. Since $\pmax \leq \sqrt{\Delta_p}$, we deduce that every break point $k \in K$ admits a break point $k'\geq k$ such that $k' \in P' := [\vopt - \sqrt{\Delta_p}, \vopt + 2\sqrt{\Delta_p}]$. Let $K'$ be the interval such that $C_1^0[K'] = C_1^0[P' ; T]$, i.e., $K' := \{k \ |\ k \in P', C_1^0[k] \in T \}$. It remains to prove that $C_1^0[k] \leq \mc W_{\mc I}[k]$ for every break point $k \in K'$. 
    
    Fix a break point $k \in K'$ and let $x \in \{0, 1\}^n$ be the Pareto optimum such that $p_{\mc I}(x) = k$ and $\mc W_{\mc I}[p_{\mc I}(x)] = w_{\mc I}(x)$. Then in particular $w_{\mc I}(x) \in T$ and $p_{\mc I}(x) \in P'$, and thus $|w_{\mc I}(x) - t| \leq 2 \sqrt \Delta_w$ and $|p_{\mc I}(x) - \vopt| \leq \sqrt 2\Delta_p$.
    By \cref{lem:bmbm_sym_properties}, this implies that $w_{\mc I}(x) \in W^0$, $p_{\mc I}(x) \in P^0$ and $C_1^0[p_{\mc I}(x)] = w_{\mc I}(x)$ with probability at least $1 - 1/n^5$. Since $|K'| \leq |T| \leq 2 \wmax \sqrt{n}$, by a union bound over all break points $k \in K'$, we obtain that $C_1^0[P ; T] = \mc P_{\mc I}[P; T]$ with probability at least $1 - 2\wmax \sqrt{n}/ n^5 \geq 1 - 2 \wmax /n^4$. 
    Note that if $n \leq \sqrt{2 \wmax}$ then we can use dynamic programming to compute the weight sequence in time $O(n \cdot \OPT) = O(\OPT \sqrt{\wmax})$ (see \cref{thm:Knapsack_DP_sym}). Hence, we can assume that $2 \wmax \leq n^3$. Thus, with probability at least $1 - 2 \wmax /n^4 \geq 1- 1/n$, we have $C_1^0[P ; T] = \mc P_{\mc I}[P; T]$. 
\end{proof}

\subsubsection{Running time of Algorithm \ref{alg:knapsack_bmbm_sym}}

\begin{lemma}\label{lem:bmbm_sym_running_time_level}
    For a fixed level $\ell \in \{0, \dots, q - 1\}$ and iteration $j \in[2^\ell]$, the computation of $D_j^\ell$ in 
    \cref{alg_line:bmbm_sym_combi_D} takes time $\Ot((\OPT / 2^\ell)^{3/4} \wmax^{1/2}  \pmax^{1/4})$.
\end{lemma}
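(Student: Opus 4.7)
The plan is to mirror the argument of \cref{lem:bmbm_running time_level} from the non-symmetric case, exchanging the roles of weights and profits. First, by \cref{lem:bmbm_sym_monotonicity} the two sequences $C_{2j-1}^{\ell+1}$ and $C_{2j}^{\ell+1}$ being convolved in \cref{alg_line:bmbm_sym_combi_D} are monotone non-decreasing, so the algorithm of \cref{thm:MPConv} is applicable to their min-plus convolution.

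Next, I would bound both the length and the entry range of the inputs. By the construction in \cref{alg_line:bmbm_sym_leaf_C} and \cref{alg_line:bmbm_sym_combi_C}, each $C_{j}^{\ell+1}$ is indexed by profits in $P^{\ell+1}$ and has entries taking weight values in $W^{\ell+1}$. Hence its length is at most $|P^{\ell+1}| = \Ot(\sqrt{\Delta_p/2^\ell})$, and, after subtracting the common constant $\min W^{\ell+1}$ from the entries of both sequences (a trivial preprocessing that shifts the min-plus convolution by a known offset), the entries lie in $\{0,1,\ldots,M\}$ with $M = |W^{\ell+1}| = \Ot(\sqrt{\Delta_w/2^\ell})$. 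Invoking \cref{thm:MPConv} with these parameters therefore gives a running time of $\Ot((\Delta_p/2^\ell)^{1/2} \cdot (\Delta_w/2^\ell)^{1/4})$.

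Finally, I would simplify this expression using $\Delta_p = \vopt \cdot \pmax = \Theta(\OPT \cdot \pmax)$ (since $\vopt = \Theta(\OPT)$), $\Delta_w = t \cdot \wmax$, and the balancedness assumption $t/\wmax = \Theta(\OPT/\pmax)$. The latter gives $t = \Theta(\OPT \wmax/\pmax)$ and hence $\Delta_w = \Theta(\OPT \wmax^2/\pmax)$. Substituting and collecting exponents yields
\[
\Ot\!\left((\OPT/2^\ell)^{1/2}\pmax^{1/2}(\OPT/2^\ell)^{1/4}\wmax^{1/2}\pmax^{-1/4}\right) = \Ot\!\left((\OPT/2^\ell)^{3/4}\wmax^{1/2}\pmax^{1/4}\right),
\]
which matches the claim. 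The calculation is routine and purely symmetric to the non-symmetric case; there is no substantive obstacle beyond carefully tracking exponents under the balancedness identity.
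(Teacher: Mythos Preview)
Your proof is correct and follows essentially the same approach as the paper: invoke \cref{lem:bmbm_sym_monotonicity}, bound the input length by $|P^{\ell+1}|=\Ot(\sqrt{\Delta_p/2^\ell})$ and the entry range by $\Ot(\sqrt{\Delta_w/2^\ell})$, apply \cref{thm:MPConv}, and then simplify via the balancedness assumption and $\vopt=\Theta(\OPT)$. The only cosmetic difference is that the paper bounds the entry range by $|W^\ell|$ rather than $|W^{\ell+1}|$, which is asymptotically the same; your explicit mention of shifting the entries to start at $0$ is a detail the paper leaves implicit.
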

\begin{proof}
    By \cref{lem:bmbm_sym_monotonicity}, the sequences $C_{2j-1}^{\ell+1}$ and $C_{2j}^{\ell+1}$ are bounded monotone. Additionally, they have length at most $|P^{\ell+1}| = \Ot(\sqrt{\Delta_p / 2^\ell})$ and the values are in a range of length at most $|W^\ell| = \Ot(\sqrt{\Delta_w / 2^\ell})$.
    So their min-plus convolution can be computed using the algorithm of \cref{thm:MPConv} in time $\Ot( (\Delta_p / 2^{\ell})^{1/2}  (\Delta_w / 2^{\ell})^{1/4})$.
    We apply the definitions of $\Delta_w = t \wmax$ and $\Delta_p = \vopt \pmax$, the balancedness assumption $t/ \wmax = \Theta(\OPT/ \pmax)$ and the bound $\vopt = \Theta(\OPT)$, which yields $\Delta_w = O(\OPT \cdot \wmax^2 / \pmax)$, to bound the running time by $\Ot((\OPT / 2^\ell)^{3/4}  \pmax^{1/2}  \wmax^{1/4})$.
\end{proof}

\begin{lemma}\label{lem:bmbm_sym_runtime}
    \cref{alg:knapsack_bmbm_sym} runs in time $\Ot(n + \OPT \sqrt{\wmax})$.
\end{lemma}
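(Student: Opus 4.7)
The plan is to mirror the running time analysis of \cref{lem:bmbm_runtime} with the roles of weight and profit parameters swapped, exploiting the balancedness assumption $t/\wmax = \Theta(\OPT/\pmax)$ together with the fact that $\vopt = \Theta(\OPT)$. I would split the analysis into the base case (\crefrange{alg_line:bmbm_sym_start_base}{alg_line:bmbm_sym_leaf_C}) and the combination steps (\crefrange{alg_line:bmbm_sym_start_combi}{alg_line:bmbm_sym_combi_C}), and then add the two bounds together with the $\Ot(n)$ overhead for sampling the random partition and approximating $\OPT$.

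For the base case, I would first observe that $W^* = [0,\frac{t}{2^q}+\sqrt{\Delta_w/2^q}\,\eta]$ and $P^* = [0,\frac{\vopt}{2^q}+\sqrt{\Delta_p/2^q}\,\eta]$ both telescope to small ranges once we plug in $\Delta_w = t\wmax$, $\Delta_p = \vopt\pmax$, $\eta = O(\log n)$, and $2^q = \Theta(t/\wmax) = \Theta(\vopt/\pmax)$: specifically, the value-range length is $\Ot(\wmax)$ and the index-range length is $\Ot(\pmax)$. Thus each call to \cref{thm:Knapsack_SubsetSum_sym} in \cref{alg_line:bmbm_sym_leaf_D} runs in time $\Ot(|\mc I_j^q| + \pmax\sqrt{\wmax})$, and summing over the $2^q$ groups yields
\[
\sum_{j=1}^{2^q} \Ot\bigl(|\mc I_j^q| + \pmax\sqrt{\wmax}\bigr) = \Ot\bigl(n + 2^q \cdot \pmax\sqrt{\wmax}\bigr) = \Ot\bigl(n + \OPT\sqrt{\wmax}\bigr),
\]
where the last step uses $2^q \cdot \pmax = \Theta(\vopt) = \Theta(\OPT)$.

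For the combination steps, I would invoke \cref{lem:bmbm_sym_running_time_level} to bound the time for each min-plus convolution at level $\ell$ by $\Ot\bigl((\OPT/2^\ell)^{3/4}\wmax^{1/2}\pmax^{1/4}\bigr)$, and sum over $2^\ell$ pairs per level and over $\ell = 0,\dots,q-1$:
\[
\sum_{\ell=0}^{q-1}\sum_{j=1}^{2^\ell}\Ot\bigl((\OPT/2^\ell)^{3/4}\wmax^{1/2}\pmax^{1/4}\bigr) = \sum_{\ell=0}^{q-1}\Ot\bigl(\OPT^{3/4}\,(2^\ell)^{1/4}\,\wmax^{1/2}\pmax^{1/4}\bigr).
\]
This is a geometric series in $\ell$, so it is bounded by the last term up to a constant factor, giving $\Ot\bigl(\OPT^{3/4}\,(2^q\wmax)^{1/4}\pmax^{1/4}\cdot\wmax^{1/4}\bigr)$; since $2^q \le \vopt/\pmax = O(\OPT/\pmax)$, the factor $(2^q\pmax)^{1/4}$ collapses to $\OPT^{1/4}$, yielding a total of $\Ot\bigl(\OPT\sqrt{\wmax}\bigr)$.

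Adding the base case and combination contributions together gives the claimed $\Ot(n + \OPT\sqrt{\wmax})$ running time. The only mildly delicate point is verifying that $2^q = \Theta(\OPT/\pmax)$ (and likewise $\Theta(t/\wmax)$), but this is immediate from the definition of $q$ in \cref{alg_line:bmbm_sym_def_q} combined with the balancedness assumption and \cref{lem:approx_opt}; no step requires a genuinely new idea, since the symmetry with \cref{lem:bmbm_runtime} is exact once we note that replacing $\mc P$ by $\mc W$ and max-plus by min-plus does not change the asymptotic cost of any subroutine.
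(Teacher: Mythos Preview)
Your proposal is correct and follows essentially the same argument as the paper's proof: bound the base case via \cref{thm:Knapsack_SubsetSum_sym} using $|W^*|=\Ot(\wmax)$, $|P^*|=\Ot(\pmax)$, and $2^q=\Theta(\vopt/\pmax)$, then bound the combination steps by invoking \cref{lem:bmbm_sym_running_time_level} and summing the resulting geometric series with $2^q\le\vopt/\pmax=\Theta(\OPT/\pmax)$. The only cosmetic difference is that your intermediate expression $\Ot\bigl(\OPT^{3/4}(2^q\wmax)^{1/4}\pmax^{1/4}\wmax^{1/4}\bigr)$ is a slightly roundabout regrouping of the paper's $\Ot(\OPT^{3/4}\wmax^{1/2}(\pmax\cdot 2^q)^{1/4})$, but the algebra is equivalent.
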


\begin{proof}
    We first bound the running time of the base case, i.e., the computations of \crefrange{alg_line:bmbm_sym_start_base}{alg_line:bmbm_sym_leaf_D}.
    For each $j \in[2^q]$, the array $D_j^q$ is obtained by computing the sequence $\mc W_{\mc I_j^q}\left[P^* ; W^*\right]$ where $W^* := \left[0, \frac{t}{2^q} + \sqrt{\frac{\Delta_w}{2^q}} \eta \right]$ and  $P^* := \left[0, \frac{\vopt}{2^q} + \sqrt{\frac{\Delta_p}{2^q}} \eta \right]$.
    Since $\Delta_w =t \wmax$, $\eta = O(\log n)$ and $2^q = \Theta(t / \wmax)$, we can bound $ \frac{t}{2^q} + \sqrt{\frac{\Delta_w}{2^q}} \eta = \Ot(\wmax)$, and analogously $\frac{\vopt}{2^q} + \sqrt{\frac{\Delta_p}{2^q}} \eta = \Ot(\pmax)$. Using \cref{thm:Knapsack_SubsetSum_sym}, we can therefore compute $D_j^q$ in time $\Ot(|\mc I_j^q| + \pmax \sqrt{\wmax})$.
    Hence, the total running time of the base case is:
    \begin{align*}
        \sum_{j=1}^{2^q}  \Ot \left( |\mc I_j^q| + \pmax \cdot  \sqrt{\wmax} \right)  
        & = \Ot\left( n + 2^q \cdot \pmax \cdot \sqrt{\wmax} \right) \\
        & = \Ot\left( n + \vopt\cdot \sqrt{\wmax} \right) = \Ot\left( n + \OPT \cdot \sqrt{\wmax} \right)
    \end{align*}
    where we use $2^q = \Theta(\vopt/ \pmax)$ and $\vopt= \Theta(\OPT)$.
    
    Using \cref{lem:bmbm_sym_running_time_level}, we bound the running time of the combination step, i.e., the computations of \crefrange{alg_line:bmbm_sym_start_combi}{alg_line:bmbm_sym_combi_D}, as follows:
    \begin{align*}
    \sum_{\ell = 0}^{q - 1} \sum_{j = 1}^{2^\ell} \Ot\left( (\OPT / 2^\ell)^{3/4}  \wmax^{1/2}  \pmax^{1/4}\right)
    &= \sum_{\ell = 0}^{q -1} \Ot\left( \OPT^{3/4}  \wmax^{1/2}  (\pmax \cdot 2^\ell)^{1/4}\right)
    \end{align*}
    This is a geometric series, so it is bounded by $\Ot(\OPT^{3/4}  \wmax^{1/2}  (\pmax \cdot 2^{q})^{1/4})$. Since $2^q =\Theta( \vopt/ \pmax) = \Theta(\OPT / \pmax)$ we obtain a running time of $\Ot(\OPT \sqrt{\wmax})$. Hence, in total \cref{alg:knapsack_bmbm_sym} takes time $\Ot(n + \OPT \sqrt{\wmax})$.
\end{proof}

\subsection{\texorpdfstring{$\boldsymbol{\Ot(n  + (n \wmax \pmax)^{1/3} \OPT^{2/3})}$}{O(n + (n wmax pmax)\^1/3 OPT\^2/3}-time Algorithm}\label{sec:bmdp_sym}
Similarly to what is done in \cref{sec:bmdp}, we can modify \cref{alg:knapsack_bmbm_sym} to obtain an algorithm running in time $\Ot(n  + (n \wmax \pmax)^{1/3} \cdot \OPT^{2/3})$, thus proving \cref{lem:knapsack_bmdp_sym}. Together with the reduction of \cref{lem:reduction_balanced_knapsack} this shows \cref{thm:knapsack_bmdp_sym}. 

\knapsackBMDPsym*   

The algorithm of \cref{lem:knapsack_bmdp_sym} uses an analog result to \cref{lem:knapsack_newleaf}: We derive \cref{lem:knapsack_newleaf_sym}  from the idea of He and Xu~\cite{HeXu23} used to obtain a $\Ot(n^{3/2}\pmax)$ time Knapsack algorithm.

\begin{lemma}[Analog of \cref{lem:knapsack_newleaf}]
    \label{lem:knapsack_newleaf_sym}
    For any Knapsack instance $(\mc I, t)$, a nonnegative integer $v \in \mathbb N$, and any $\ell \in \mathbb N, 2 \leq \ell \leq v$, the sequence $\mc W_{\mc I}[v - \ell \dots v + \ell]$ can be computed in time $\Ot(n \sqrt{v \cdot \pmax} + n \ell)$ by a randomized algorithm that is correct with probability at least $1-1/n$.
\end{lemma}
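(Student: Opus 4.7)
The plan is to mirror the proof of \cref{lem:knapsack_newleaf} verbatim, swapping the roles of weights and profits throughout, and to pinpoint the one new piece of analysis that this swap forces. The algorithm will sample a uniformly random permutation $\sigma$ of the items and then run a Bellman-style DP for the \emph{weight} sequence, maintaining only a narrow band of the table indexed by cumulative \emph{profit}. Concretely, with $\Delta := \ell + \lceil 4\sqrt{v \cdot \pmax \cdot \log(n\ell)} \rceil$, I will let $C_k[j]$ denote the minimum weight of a subset of $\{\sigma[1], \dots, \sigma[k]\}$ with total profit exactly $j$, use the recurrence $C_k[j] = \min\bigl(C_{k-1}[j],\, w_{\sigma[k]} + C_{k-1}[j - p_{\sigma[k]}]\bigr)$, and compute only those entries with $j \in [\tfrac{k}{n}v - \Delta,\, \tfrac{k}{n}v + \Delta]$. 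A final backward scan $C_n[j] \gets \min(C_n[j], C_n[j+1])$ turns $C_n[v']$ into the minimum weight of any tracked subset of profit at least $v'$; I return this value for every $v' \in \{v-\ell, \dots, v+\ell\}$, except that $v' > p_{\mc I}([n])$ is handled as the edge case $\mc W_{\mc I}[v'] = +\infty$.

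For correctness I will follow the same strategy as \cref{lem:knapsack_newleaf}: for each target $v'$ in the non-trivial range, fix an optimal witness $x^{(v')}$ achieving $w_{\mc I}(x^{(v')}) = \mc W_{\mc I}[v']$; argue by Pareto optimality (dropping any single item from $x^{(v')}$ preserves the profit constraint but strictly decreases the weight, assuming WLOG positive item weights) that $p_{\mc I}(x^{(v')}) \in [v', v' + \pmax)$; and then invoke \cref{lem:sample} with $a_i := p_i \cdot x^{(v')}_i$ and $\delta := 1/(n^2(2\ell+1))$ to show that $|p_{\{1,\dots,k\}}(x^{(v')}) - \tfrac{k}{n} v| \leq \Delta$ for every $k$ with probability at least $1 - \delta$. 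The triangle-inequality step now picks up an extra $|p_{\mc I}(x^{(v')}) - v| \leq \ell + \pmax$ compared to the original proof, and this is the one piece of the analysis that requires genuine care: the $\pmax$ error absorbs into the $4\sqrt{v \pmax \log(n\ell)}$ term of $\Delta$ only when $v \geq \pmax$. A union bound over the $O(n\ell)$ pairs $(v', k)$ then yields simultaneous concentration with probability at least $1 - 1/n$, and a routine induction on $k$ shows that $C_n[v'] = \mc W_{\mc I}[v']$ under this event, using the backward scan to propagate the computed value at index $p_{\mc I}(x^{(v')}) \in [v', v + \Delta]$ down to index $v'$.

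The running time is plainly $O(n \Delta) = \Ot(n \sqrt{v \pmax} + n \ell)$, valid whenever $v \geq \pmax$. For the opposite regime $v < \pmax$ I will fall back on Bellman's dynamic programming (\cref{thm:Knapsack_DP_sym}), which computes all of $\mc W_{\mc I}[0 \dots v+\ell]$ in time $O(n(v+\ell)) \leq O(nv) \leq O(n \sqrt{v \pmax})$. The main obstacle, as in the original lemma, is the concentration step: I must use the stronger Bernstein-type bound of \cref{lem:sample} (rather than the uniform-bounds Hoeffding variant used by~\cite{HeXu23}) so that the sampling error scales like $\sqrt{v \pmax}$ rather than like $v$; this refinement is exactly what distinguishes the $\Ot(n\sqrt{v\pmax})$-type bound needed here from the weaker $\Ot(n^{3/2}\pmax)$-time baseline.
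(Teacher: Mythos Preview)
Your proposal is correct and follows the paper's approach exactly: random permutation, a banded Bellman DP indexed by cumulative profit with width $\Delta = \ell + O(\sqrt{v\,\pmax\,\log(n\ell)})$, concentration via \cref{lem:sample} applied to the profits $a_i = p_i\,x^{(v')}_i$, and a suffix-min scan. You even handle two edge cases the paper glosses over---targets $v' > p_{\mc I}([n])$ and the regime $v < \pmax$ via a Bellman fallback---which the paper can afford to omit since its only invocation of the lemma satisfies $v \ge \pmax$.
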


\begin{algorithm}[H]
\SetKw{KwDownTo}{down to}
\caption{The $\Ot(n \sqrt{v \cdot \pmax} + n\ell)$-time algorithm of \cref{lem:knapsack_newleaf}. The input is a Knapsack instance $(\mc I, t)$, a nonnegative integer $v \in \mathbb N$, and a parameter $\ell \in \mathbb N, \ell \leq v$. Accessing a negative index or an uninitialized element in $C_{k-1}$ returns $+\infty$.}\label{alg:knapsack_newleaf_sym}
$\sigma[1 \dots n] \gets $ random permutation of $\{1,\ldots,n\}$\\
$\Delta \gets \ell + \lceil 4 \sqrt{v \cdot \pmax \cdot \log (n\ell)} \rceil$ \\
$C_0[0] \gets 0$ \\
\For{$k = 1, \ldots, n$}{
  \For{$j = \frac{k}{n}v - \Delta, \ldots, \frac{k}{n}v + \Delta$}{
    $C_k[j] \gets \min(C_{k-1}[j], p_{\sigma[k]} + C_{k-1}[j - w_{\sigma[k]}])$ \\
  }
}
\For{$j = v + \ell - 1$ \KwDownTo $v - \ell$}{
  $C_n[j] \gets \min(C_n[j], C_n[j + 1])$ \\
}
\Return{$C_n[(v - \ell \dots v + \ell)]$}\\
\end{algorithm}

\begin{proof}[Proof of \cref{lem:knapsack_newleaf_sym}]
\cref{alg:knapsack_newleaf_sym} clearly runs in $O(n\Delta) = \Ot(n \sqrt{v \cdot \pmax} + n\ell)$ time. Let us focus on proving that with probability at least $1 - 1/n$ it returns a correct answer.

For every target profit value $v' \in \{v - \ell \cdot \eta, \dots, v + \ell\}$ let us fix an optimal solution $x^{(v')} \in \{0,1\}^n$ of weight $\mc W_{\mc I}[v']$ and profit at least $v'$. Note that we have $p_{\mc I}(x^{(v')}) \in [v', v'+ \pmax)$. For every such $v'$ and for every $k \in [n]$, we apply \cref{lem:sample} to $a_i = x^{(v')}_i \cdot p_i$ and $X_i = a_{\sigma[i]}$ with $\delta = 1 / (n^2 \cdot (2 \ell + 1))$, and we conclude that with probability at least $1-\delta$ it holds that
\begin{align*}
p_{\{1,\ldots,k\}}(x^{(v')}) & \in \left[\frac{k}{n}p_{\mc I}(x^{(v')})  \pm \sqrt{v \cdot \pmax \cdot \log(n/\delta)} \right] \\
& \subseteq \left[\frac{k}{n}p_{\mc I}(x^{(v')}) \pm 3 \sqrt{v \cdot \pmax \cdot \log(n\ell)} \right]  \qquad\qquad\ \ (\text{using } n/\delta \leq n^3 \cdot \ell^3) \\
& \subseteq \left[\frac{k}{n}v' \pm 4 \sqrt{v \cdot \pmax \cdot \log(n\ell)} \right] \qquad\qquad (\text{using } p_{\mc I}(x^{(v')}) < v' + \pmax) \\
& \subseteq \left[\frac{k}{n}v \pm \bigl(\ell + 4 \sqrt{v \cdot \pmax \cdot \log(n\ell)} \bigr) \right] = \left[\frac{k}{n}v \pm \Delta \right]. 
\end{align*}
By a union bound, with probability at least $1 - 1/n$ this holds for all such $v'$ and $k$ simultaneously. Let us condition on this event. It follows, by induction on $k$, that
\[C_k[p_{\{1,\ldots,k\}}(x^{(v')})] = w_{\{1,\ldots,k\}}(x^{(v')})\]
for every $v'$ and $k$. In particular for $k=n$ we have $p_{\{1,\ldots,n\}}(x^{(v')}) = p_{\mc I}(x^{(v')}) \geq v'$ and $C_n[v'] \leq C_n[p_{\mc I}(x^{(v')})] = p_{\mc I}(x^{(v')}) = \mc W_{\mc I}[v']$. On the other hand, clearly $C_n[v'] \geqslant \mc W_{\mc I}[v']$, so they are equal, which finishes the proof.
\end{proof}

\begin{proof}[Proof of \cref{lem:knapsack_bmdp_sym}]
Let $c := \min\{1, \tfrac t \wmax \cdot \tfrac \pmax \vopt\}$, and note that $c = \Theta(1)$ by the balancedness assumption and $\vopt = \Theta(\OPT)$. 
If $n \geq c \cdot \vopt \sqrt{\wmax} / \pmax$, then since $\vopt = \Theta(\OPT)$ we have $\Ot(n + \OPT \sqrt{\wmax}) \leq \Ot(n + (n \wmax \pmax)^{1/3} \cdot \OPT^{2/3})$ and thus \cref{lem:knapsack_bmdp_sym} follows from \cref{lem:knapsack_bmbm_sym}. 
Additionally, if $n^3 \leq 2 \wmax$ then 
Bellman's dynamic programming algorithm computes the complete sequence 
$\mc W_{\mc I}$ in time $O(n \cdot \OPT) \le O((n \wmax \pmax)^{1/3} \OPT^{2/3})$. 
In the remainder we can thus assume $n \leq c \cdot \vopt \sqrt{\wmax} / \pmax$ and $2 \wmax \leq n^3$. In this case, we modify the algorithm of \cref{alg:knapsack_bmbm_sym} as follows. 

Let $q$ be the largest integer such that 
$2^q \leq \max\{1, n^{4/3} \cdot (\pmax / \vopt)^{1/3} \cdot \wmax^{-2/3} \}$.
Consider the modification of \cref{alg:knapsack_bmbm_sym} using the new value of $q$ and replacing the computation of $D_j^q$ in \cref{alg_line:bmbm_sym_leaf_D} by the computation of $D_j^q := \mc W_{\mc I_j^q}\left[P^q \right]$ using \cref{alg:knapsack_newleaf_sym} of \cref{lem:knapsack_newleaf_sym}. As a reminder, we defined $\eta := 17 \log n$ and
$P^q := \left[\frac{\vopt}{2^{q}} - \sqrt{\frac{\vopt \cdot \pmax}{2^{q}}} \eta,\ \frac{\vopt}{2^{q}} + \sqrt{\frac{\vopt \cdot \pmax}{2^{q}}} \eta\right]$. Hence, we call \cref{alg:knapsack_newleaf_sym} with the Knapsack instance $(\mc I_j^q, \vopt/ 2^q)$ and parameter $\ell = \sqrt{\vopt \cdot \pmax / 2^q} \cdot \eta$.
So each computation of \cref{alg_line:bmbm_sym_leaf_D} now takes time $\Ot(|\mc I_j^q| \sqrt{\vopt \cdot \pmax / 2^q}) = \Ot(|\mc I_j^q| \sqrt{\OPT \cdot \pmax / 2^q})$. In total, \cref{alg_line:bmbm_sym_leaf_D} takes time
\begin{align*}
    \sum_{j = 1}^{2^q} \Ot\left(|\mc I_j^q|\sqrt{\OPT \cdot \pmax / 2^q} \right) = \Ot\left(n \sqrt{\OPT \cdot \pmax /2^q} \right) \le \Ot\left((n \wmax \pmax)^{1/3} \OPT^{2/3}\right),
\end{align*}
where the last step follows from $\vopt = \Theta(\OPT)$ and the inequality $2^q \ge n^{4/3} \cdot (\pmax / \vopt)^{1/3} \cdot \wmax^{-2/3} / 2$, which holds by our choice of $q$.

If $2^q = 1$, then no combination steps are performed. Otherwise, we have $2^q \le n^{4/3} \cdot (\pmax / \vopt)^{1/3} \cdot \wmax^{-2/3} \leq O(n^{4/3} \cdot (\pmax / \OPT)^{1/3} \cdot \wmax^{-2/3})$, because $\vopt = \Theta(\OPT)$.
In this case, for the combination levels the same analysis as in \cref{lem:bmbm_sym_runtime} shows that the total running time of all combination steps is $\Ot(\OPT^{3/4} \cdot \wmax^{1/2} \cdot (\pmax 2^q)^{1/4}) \le \Ot\left((n \wmax \pmax)^{1/3} \OPT^{2/3}\right)$.

The correctness argument of \cref{alg:knapsack_bmbm_sym} works verbatim because all used inequalities on $2^q$ still hold, specifically we have $1 \le 2^q \le n$, $2^q \le t/\wmax$, and $2^q \le \vopt / \pmax$. We verify these inequalities in the remainder of this proof. 
If $2^q = 1$ then these inequalities are trivially satisfied. Otherwise, we have $1 < 2^q \le n^{4/3} \cdot (\pmax / \vopt)^{1/3} \cdot \wmax^{-2/3}$. 
Then obviously $2^q \ge 1$. 
Since $n \leq c \cdot \vopt \sqrt{\wmax} / \pmax$, by rearranging we have $n^{4/3} \cdot (\pmax / \vopt)^{1/3} \cdot \wmax^{-2/3} \le c^{4/3} \vopt / \pmax$, and thus $2^q \le c^{4/3} \vopt / \pmax \le c \cdot \vopt / \pmax$. Since $c \le 1$, we obtain $2^q \le \vopt / \pmax$. Since $c \le \tfrac t \wmax \cdot \tfrac \pmax \vopt$, we obtain $2^q \le t / \wmax \le n$. 
Finally, the correctness argument of \cref{alg:knapsack_bmbm_sym} additionally uses the bound $n^3 \geq 2 \wmax$, which we can assume as discussed above.
\end{proof}

\section{Rectangular Bounded Monotone Min-Plus Convolution}\label{sec:adapt_chieetal_MPConv}

In this section we prove the following theorem.

\MPConv*

More directly, we distill the following result from the work of Chi, Duan, Xie, and Zhang~\cite{ChiDXZ22_stocs}.

\begin{restatable}[Slight modification of \cite{ChiDXZ22_stocs}]{theorem}{MinConvGen}
    \label{thm:chieetal}
    Min-plus convolution of two monotone non-decreasing or non-decreasing sequences of length $n$ with entries in $\{0,\ldots,M\}$ can be solved in expected time $\Ot(n\sqrt{M} + M)$. 
\end{restatable}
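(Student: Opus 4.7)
The plan is to adapt the algorithm of Chi, Duan, Xie, and Zhang (henceforth CDXZ), which handles the square case $M = \Theta(n)$ in time $\Ot(n^{1.5})$, to the rectangular setting where $M$ and $n$ are independent parameters. The key insight is that the bottleneck steps in CDXZ's algorithm scale with $\sqrt{M}$ rather than $\sqrt{n}$ when carefully analyzed, so the generalization is mostly a matter of tracking parameters through their analysis rather than inventing a new algorithm.

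First, I would dispose of trivial cases: when $M = 0$ the convolution is identically zero and can be output in $O(n)$ time; when $n = 1$ the convolution is a single entry. For the remaining cases, I would run the CDXZ algorithm, which proceeds in three phases. Phase one applies a random perturbation to both input sequences to break ties and guarantee that with high probability every relevant output index has a unique witness. Phase two decomposes the problem by value-thresholding: by monotonicity, partitioning the index range based on the value thresholds $\sqrt{M}, 2\sqrt{M}, 3\sqrt{M}, \ldots$ produces $O(\sqrt{M})$ contiguous blocks per sequence, each with value range at most $\sqrt{M}$, with total block length equal to $n$. Phase three computes the min-plus convolution restricted to each relevant block pair using CDXZ's specialized subroutines (based on SMAWK for locally concave subproblems and FFT for small-value subproblems), and aggregates the results into the final output.

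The running time analysis is the heart of the proof and must be carried out with $M$ and $n$ as independent parameters. The decomposition creates $O(M)$ candidate block pairs, but the monotone structure ensures that only $\Ot(\sqrt{M})$ of these are ``relevant'' in the sense of contributing to any single output index (this is the key combinatorial lemma in CDXZ). Each relevant block pair of lengths $L_A$ and $L_B$ can be handled in $\Ot((L_A + L_B)\sqrt{M})$ time via FFT on value range $\sqrt{M}$; summing over all relevant block pairs yields $\Ot(n\sqrt{M})$ total. The additive $\Ot(M)$ term then covers the initialization and bookkeeping cost for the $O(M)$ candidate block pairs, which must be examined even when they turn out to be irrelevant.

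The main obstacle will be carefully re-verifying each step of the CDXZ algorithm and its analysis once $M$ and $n$ are decoupled. Concretely, I need to confirm that: (i) the correctness of the random perturbation does not implicitly require $M = \Theta(n)$; (ii) the combinatorial bound on ``relevant'' block pairs, which in CDXZ is phrased in terms of a single parameter $n$, transfers to the rectangular setting with parameter $M$; and (iii) the FFT-based subroutine runs in $\Ot(L\sqrt{M})$ rather than a bound depending on the global parameter $n$. In case any of these steps has a hidden dependence on $M = \Theta(n)$, the reduction that swaps the roles of length and value range (encoding $A$ through its level sets $a(v) = \max\{i : A[i] \leq v\}$, yielding a monotone sequence of length $M+1$ with values in $\{-1, 0, \ldots, n-1\}$) can be used to convert extreme regimes to cases that are more directly amenable to CDXZ.
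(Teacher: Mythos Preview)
Your proposal has a genuine gap: the three-phase algorithm you describe is not the CDXZ algorithm. Chi, Duan, Xie, and Zhang do not use random perturbation for unique witnesses, nor value-threshold block decomposition, nor SMAWK on locally concave subproblems. Their algorithm instead samples a random prime $p \in [M^\alpha, 2M^\alpha]$, decomposes the output as $C[k] = p \cdot \lfloor C[k]/p \rfloor + (C[k] \bmod p)$, and computes the two parts separately. The quotient part $\tilde C$ is handled by observing that $\lfloor A[i]/p\rfloor$ and $\lfloor B[j]/p\rfloor$ are piecewise constant with $O(M/p)$ pieces each, so $\tilde C$ can be computed in $\Ot((M/p)^2)$ time. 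The residue part $C^*$ is computed by a recursion over scales $\ell = h, h-1, \ldots, 0$ (with $2^h \approx p$): at each level one uses trivariate polynomial multiplication (FFT) to identify candidate values for $C^{(\ell)}[k]$, filtering out ``false positive'' segments using the quotient information and the previous level's output. The running time is $\Ot(M^{2-2\alpha} + nM^{1-\alpha} + nM^\alpha)$, balanced at $\alpha = 1/2$.

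Because your description of the algorithm is wrong, your plan to ``re-verify each step of the CDXZ algorithm once $M$ and $n$ are decoupled'' cannot be carried out as written. The paper's proof is in fact purely mechanical: it reproduces the CDXZ argument verbatim while replacing occurrences of $n^\alpha$, $n^{1-\alpha}$, $n^{2-2\alpha}$, etc.\ by the corresponding expressions in $M$ wherever they arise from value bounds rather than index bounds. Monotonicity is used only to bound the number of constant pieces of $\lfloor A[i]/q\rfloor$ by $O(M/q)$, which holds for any $M$. If you want to follow this route, you should first read Section~4.2 of the CDXZ paper and then perform the substitutions; the alternative block-decomposition idea you sketched would require an independent correctness and running-time argument, and the ``$\Ot(\sqrt{M})$ relevant block pairs per output index'' claim is not obviously true and is not a lemma in CDXZ.
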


Let us first argue that \cref{thm:chieetal} implies \cref{thm:MPConv}.

\begin{proof}[Proof of \cref{thm:MPConv}]
By the equivalence of min-plus convolution and max-plus convolution, it suffices to design an algorithm for min-plus convolution.

Running the better of the naive $O(n^2)$-time algorithm and Theorem~\ref{thm:chieetal} solves min-plus convolution in expected time $\Ot(\min\{n^2, n \sqrt{M} + M\}) \le \Ot(n \sqrt{M})$.

Then we use the standard conversion of a Las Vegas algorithm to a Monte Carlo algorithm. That is, we run the algorithm with a fixed time budget $C \cdot \log^C (nM) \cdot n \sqrt{M}$ and abort if the algorithm did not finish within this time budget. By Markov's inequality, for a sufficiently large constant $C>0$ the algorithm finishes with probability at least $0.9$. Repeating the resulting algorithm $\log n$ times, at least one run succeeds with probability at least $1 - 1/n$. 
\end{proof}

It remains to prove \cref{thm:chieetal}. In what follows, we first present a proof sketch describing how one can distill this result from~\cite{ChiDXZ22_stocs}. Afterwards, in the remainder of this section we present the proof details.

\begin{proof}[Proof Sketch of \cref{thm:chieetal}]
The theorem statement restricted to monotone \emph{increasing} sequences bounded by $M = \Theta(n)$ is proven in \cite[Theorem 1.5]{ChiDXZ22_stocs}.
We observe that one can obtain the claimed theorem statement by adapting the proof in \cite[Section 4.2]{ChiDXZ22_stocs} as follows.

First, the same proof works verbatim after replacing ``monotone increasing'' by ``monotone non-decreasing'' or ``monotone non-increasing''. (Indeed, monotonicity is only used to argue that the sequence $(\lfloor A[i]/q \rfloor)_{i=1}^n$ can be partitioned into $O(M/q)$ constant intervals, and this property holds no matter whether the sequence $A$ is increasing, non-decreasing, or non-increasing; similarly for $B$.)

Second, in order to generalize from $M = \Theta(n)$ to arbitrary $M$, the same algorithm and correctness proof as in \cite[Section 4.2]{ChiDXZ22_stocs} works verbatim. The running time analysis works after performing the following replacements:
\begin{align*}
    [40n^\alpha, 80n^\alpha] && \to \quad\qquad & [40M^\alpha, 80M^\alpha] \\
    O(n^{1-\alpha}) && \to \quad\qquad & O(M^{1-\alpha}) \\
    \Ot(n^{2-2\alpha}) && \to \quad\qquad & \Ot(M^{2-2\alpha}) \\
    O(n^\alpha / 2^l) && \to \quad\qquad & O(M^\alpha / 2^l) \\
    O(n/2^\ell) && \to \quad\qquad & O(M / 2^\ell) \\
    O(n^{2-\alpha}) && \to \quad\qquad & O(n M^{1-\alpha}) \\
    \Ot(n^{2-\alpha}) && \to \quad\qquad & \Ot(n M^{1-\alpha}) \\
    \Ot(n^{1+\alpha}) && \to \quad\qquad & \Ot(n M^{\alpha}) \\
    \Ot(n^{1+\alpha} + n^{2-\alpha}) && \to \quad\qquad & \Ot(n M^{\alpha} + n M^{1-\alpha} + M^{2-2\alpha}) \\
    \Ot(n^{1.5}) && \to \quad\qquad & \Ot(n \sqrt{M} + M) \\
    O(2^\ell / n^\alpha) && \to \quad\qquad & O(2^\ell / M^\alpha) \\
    O(n^2/2^\ell) && \to \quad\qquad & O(nM / 2^\ell).
\end{align*}
Performing exactly these replacements in \cite[Section 4.2]{ChiDXZ22_stocs} yields an algorithm for min-plus convolution of two monotone sequences with entries in $\{0,\ldots,M\}$ that runs in time $\Ot(n \sqrt{M} + M)$. 
\end{proof}

In the remainder of this section, we give a complete proof of \cref{thm:chieetal} by performing the above described replacements in the proof of \cite[Theorem 1.5]{ChiDXZ22_stocs}. Naturally, our presentation has a large overlap with the proof as given in~\cite{ChiDXZ22_stocs}.

\subsection{Algorithm of \cref{thm:chieetal}}
Let $A[1 \dots n]$ and $B[1 \dots n]$ be two monotone non-decreasing sequences of length $n$ with integer entries in $\{0, \dots, M\}$. 
Let $\alpha \in (0, 1)$ be a constant parameter to be determined later. Sample a prime $p$ uniformly at random in $[M^\alpha, 2 M^\alpha]$.
Without loss of generality we can assume that $n$ is a power of 2. We make the following assumption.

\begin{assumption}\label{ass:minconv}
For every $i \in [n]$, either $(A[i] \bmod p) \leq p/3$
or $A[i] = +\infty$, and $A$ is monotone except for the infinity entries. The number of intervals of infinity in $A$ is at most $O(M^{1-\alpha})$. Similar for $B$.
\end{assumption}

\begin{lemma}\label{lem:minconv_can_assume}
    Let $A$ and $B$ be two monotone sequences of length $n$ with positive integer values bounded by $M$. The computation of $\textup{\textsc{MinConv}}(A,B)$ can be reduced to a constant number of computations of $\textup{\textsc{MinConv}}(A^i,B^i)$ where $A^i$ and $B^i$ satisfy \cref{ass:minconv}.
\end{lemma}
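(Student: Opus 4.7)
The plan is to perform a standard residue-bucketing reduction modulo $p$. I would partition the residue classes $\{0,1,\ldots,p-1\}$ into three arcs $R_0, R_1, R_2$ of size roughly $p/3$: take $R_a := \{a\lfloor p/3\rfloor, \ldots, (a+1)\lfloor p/3\rfloor - 1\}$ for $a \in \{0,1\}$ and let $R_2$ consist of the residues that remain. For each $a \in \{0,1,2\}$ define the shifted-and-masked copy
\[
A^{(a)}[i] := \begin{cases} A[i] - a\lfloor p/3\rfloor & \text{if } A[i] \bmod p \in R_a, \\ +\infty & \text{otherwise,} \end{cases}
\]
and define $B^{(b)}$ for $b \in \{0,1,2\}$ analogously. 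I would then compute the nine min-plus convolutions $C^{(a,b)} := \textsc{MinConv}(A^{(a)}, B^{(b)})$ and recover the original convolution via
\[
\textsc{MinConv}(A,B)[k] \;=\; \min_{a,b \in \{0,1,2\}} \bigl( (a+b)\lfloor p/3\rfloor + C^{(a,b)}[k] \bigr).
\]
This identity holds because for every pair $(i,j)$ there is a unique $(a,b)$ for which both $A^{(a)}[i]$ and $B^{(b)}[j]$ are finite, namely the arcs containing $A[i] \bmod p$ and $B[j] \bmod p$, and in that case $A^{(a)}[i] + B^{(b)}[j] + (a+b)\lfloor p/3\rfloor = A[i] + B[j]$.

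Next I would verify that each $A^{(a)}$ (and symmetrically $B^{(b)}$) satisfies \cref{ass:minconv}. Its finite entries have residues in $\{0,\ldots,\lfloor p/3\rfloor - 1\} \subseteq [0, p/3]$ by construction; they are non-negative, because $A[i] \bmod p \in R_a$ means $A[i] = qp + r$ with $r \ge a\lfloor p/3\rfloor$, so $A[i] - a\lfloor p/3\rfloor \ge qp \ge 0$; they remain bounded by $M$; and the subsequence of finite entries is a constant shift of a monotone subsequence of $A$, hence monotone in the same direction.

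The only slightly non-trivial point is bounding the number of infinity-intervals. Since $A$ is monotone with entries in $[0,M]$, the sequence $i \mapsto \lfloor A[i]/\lfloor p/3\rfloor \rfloor$ is monotone in the same direction and takes at most $\lceil 3M/p\rceil + 1 = O(M/p) = O(M^{1-\alpha})$ distinct values, where I use $p \ge M^\alpha$. Every change of the arc of $A[i] \bmod p$ is witnessed by a step of this quantity, so the arc assignment $i \mapsto a^*(i)$ (where $A[i] \bmod p \in R_{a^*(i)}$) has $O(M^{1-\alpha})$ transitions; consequently, for each $a$, the set $\{i : A^{(a)}[i] \ne +\infty\}$ and its complement are unions of $O(M^{1-\alpha})$ intervals, so $A^{(a)}$ has at most $O(M^{1-\alpha})$ infinity-intervals, as required. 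The reduction makes $9 = O(1)$ calls to $\textsc{MinConv}$ on instances satisfying \cref{ass:minconv}, which is all the lemma asks for. I do not foresee a real obstacle: the argument is a routine but careful application of the residue-bucket trick, with the only subtleties being non-negativity of the shifts and consistent arc-boundary bookkeeping.
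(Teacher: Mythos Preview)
Your reduction is exactly the one the paper uses: split each sequence into three masked copies according to the residue class of each entry in one of three arcs of $\{0,\ldots,p-1\}$, shift each copy so its finite residues lie in $[0,p/3)$, solve the nine resulting $\textsc{MinConv}$ instances, and undo the shifts when taking the elementwise minimum. The verification of \cref{ass:minconv} for the residue bound, non-negativity, and monotonicity is fine.

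One step does not go through as written. You claim that every arc change of $A[i]\bmod p$ is witnessed by a step of $\lfloor A[i]/\lfloor p/3\rfloor\rfloor$, but this is false: the absolute-value arc boundaries are at $mp$, $mp+\lfloor p/3\rfloor$, $mp+2\lfloor p/3\rfloor$, and for $m\ge 1$ these are generally \emph{not} multiples of $\lfloor p/3\rfloor$ (e.g.\ $p=7$, $\lfloor p/3\rfloor=2$: the values $6$ and $7$ share $\lfloor\cdot/2\rfloor=3$ but lie in different arcs). The conclusion you need is still immediate, just via a cleaner count: the arc-indicator function on $[0,M]$ has at most $3(\lfloor M/p\rfloor+1)=O(M^{1-\alpha})$ breakpoints, and a monotone sequence with values in $[0,M]$ crosses each breakpoint at most once, so the arc assignment $i\mapsto a^*(i)$ has $O(M^{1-\alpha})$ transitions. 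With this fix your argument is complete and matches the paper's (which simply asserts the $O(M^{1-\alpha})$ bound without elaboration).
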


\begin{proof}
We define the sequences $A^0$, $A^1$ and $A^2$ such that for $i \in [n]$:
\begin{itemize}
    \item if $(A[i] \bmod p) \in [0, p/3)$, let $A^0[i] := A[i]$ ; $A^1[i] = A^2[i] = +\infty$ 
    \item if $(A[i] \bmod p) \in [p/3, 2p/3)$, let $A^1[i] = A[i]$ ; $A^0[i] = A^2[i] = +\infty$ 
    \item if $(A[i] \bmod p) \in [2p/3, p)$, let $A^2[i] = A[i]$ ; $A^0[i] = A^1[i] = +\infty$ 
\end{itemize}
and similarly for $B^0$, $B^1$ and $B^2$. Clearly, the number of intervals of infinity in each sequence is bounded by $O(M^{1-\alpha})$.

Furthermore, each pair $(A', B')$ where $A' = A^x - \lceil x p / 3\rceil$ and $B'= B^y - \lceil y p / 3 \rceil$ for $x, y \in \{0, 1, 2\}$ satisfies \cref{ass:minconv}. We compute the element-wise minimum $C' = \min_{x, y \in \{0, 1, 2\}} \minconv{A^x - \lceil x p/ 3 \rceil}{ B^y - \lceil y p / 3 \rceil} + \lceil x p/3 \rceil + \lceil y p/3 \rceil$.
Since elements in $A^x - \lceil x p/ 3 \rceil$ are not smaller than  elements in $A$, and similar for $B^y - \lceil y p/ 3 \rceil$, we have $C'[k] \geq C[k]$. On the other hand, for $i$ such that $C[k] = A[i] + B[{k-i}]$, the elements $A[i]$ and $B[k - i]$ are contained in one of the 9 pairs. So $C'[k] = C[k]$ for every $k$.
\end{proof}

We decompose the sequence $C = \minconv{A}{B}$ into two sequences $\Tilde C$ and $C^*$ such that for any $k \in [2, 2n]$ if $C[k]$ is finite then $\Tilde C[k] :=  \left \lfloor \frac{C[k]}{p} \right \rfloor$ and $C^* := (C[k] \bmod p)$. If $C[k]$ is not finite, then we set $\Tilde C[k] := \infty$ and we don't require anything on $C^*[k]$. By computing $\Tilde C$ and $C^*$, we can retrieve the sequence $C$ by setting $C[k] = p \cdot \Tilde C[k] + C^*[k]$.

\subsubsection{Computing \texorpdfstring{$\boldsymbol{\Tilde C}$}{C~}}
Define the sequence $\Tilde A$ such that for any $i \in [n]$ if $A[i]$ is finite then $\Tilde A[i] :=  \left \lfloor \frac{A[i]}{p} \right \rfloor$, otherwise $\Tilde A[i] := \infty$. Similarly, let $\Tilde B$ be such that for any $i \in [n]$ if $B[i]$ is finite  then $\Tilde B[i] :=  \left \lfloor \frac{B[i]}{p} \right \rfloor$, otherwise $\Tilde B[i] = \infty$.
Let $A[i] + B[j] = C[k]$ be finite. Then by \cref{ass:minconv}, $\lfloor A[i] / p \rfloor + \lfloor B[j] / p \rfloor = \lfloor C[k] / p \rfloor$, i.e. $\Tilde C = \minconv{\Tilde A}{\Tilde B}$.

The finite entries of $\Tilde A$ and $\Tilde B$ are monotone non-decreasing sequences with values bounded by $O(M^{1-\alpha})$. Since there are at most $O(M^{1-\alpha})$ intervals of infinity, we can divide $\Tilde A$ into at most $O(M^{1-\alpha})$ intervals $[i_1, i_2] \subset [n]$ such that for any $i \in [i_1, i_2]$ we have $\Tilde A[i_1] = \Tilde A[i]$. Similarly, divide $\Tilde B$ in at most $O(M^{1-\alpha})$ intervals $[j_1, j_2] \subset [n]$ on which $\Tilde B$ is constant. Then for each pair of intervals $([i_1, i_2], [j_1, j_2])$ of $\Tilde A$ and $\Tilde B$, the sum $\Tilde A[i] + \Tilde B[j]$ is constant for all $i \in [i_1, i_2]$ and $j \in [j_1, j_2]$. Hence, we can compute the sequence $\Tilde C = \minconv{\Tilde A}{\Tilde B}$ as follows.
Initialize $\Tilde C[k] = 0$ for every $k \in [2, 2n]$.
Iterate over every pair of intervals $[i_1, i_2] \subset [n]$ and $[j_1, j_2] \subset [n]$ and let $\Delta = A[i_1] + B[j_1]$. Update the entry $\Tilde C[k]$ for the value $\min \{\Tilde C[k], \Delta\}$ for every $k \in [i_1 + j_1, i_2 + j_2]$. 
The update step can be done in $O(\log n)$ time using a segment tree data structure. In total, there are $O(M^{2-2\alpha})$ pairs of intervals $[i_1, i_2]$ and $[j_1, j_2]$, so the computation of $\Tilde C$ takes $\Tilde O(M^{2 - 2\alpha})$ time. 

\subsubsection{Computing \texorpdfstring{$\boldsymbol{C^*}$}{C*}}

To compute $C^*$, let $h$ be the integer such that $2^{h-1} \leq p < 2^h$. 
Construct for all $\ell \in \{0, 1, \dots, h\}$ the sequences $\Al$ and $\Bl$ such that for $i \in [n]$:
$$
\Al[i] :=  \left \lfloor \frac{A[i] \bmod p}{2^\ell} \right \rfloor \text{\quad  and \quad} \Bl[i] :=  \left \lfloor \frac{B[i] \bmod p}{2^\ell} \right \rfloor.$$
Our goal is to compute a sequence $\Cl$ such that for all $k \in [2, 2n]$ if $C[k]$ is finite then:
\begin{equation}\label{eq:Cl_pty}
\left \lfloor \frac{C[k] \bmod p - 2(2^\ell -1)}{2^\ell} \right \rfloor \leq  \Cl[k] \leq \left \lfloor \frac{C[k] \bmod p + 2(2^\ell -1)}{2^\ell} \right \rfloor.
\end{equation}
Note that if $C[k]$ is infinite then we don't care about the value of $\Cl[k]$.
We compute $\Cl$ by starting with $\ell = h$. Since $p < 2^h$, both $A^{(h)}$ and $B^{(h)}$ are zero sequences, and thus we can set $C^{(h)}$ to a zero sequence, which satisfies the above property. We then recursively use $\CUPL$ to compute $\Cl$. In the end, we set $C^* = C^{(0)}$. Indeed, if $C[k]$ is finite then $C^{(0)}[k] = (C[k] \bmod p)$, as desired. 
Note that, although $\Cl$ is not necessarily the min-plus convolution of $\Al$ and $\Bl$, the following property holds.

\begin{lemma}\label{lem:minconv_filter}
    Fix $\ell \in \{0, 1, \dots, h-1\}$ and suppose that $C^{(\ell)}$ satisfies Property (\ref{eq:Cl_pty}).
    Let $i, k \in [n]$ be such that $A[i]$ and $B[k - i]$ are finite and $A[i] + B[k - i] = C[k]$. 
    Then for every $\ell \in \{0, 1, \dots, h\}$ there exists $b \in [-10, 10]$ such that $\Al[i] + \Bl[k-i] = \Cl[k] + b.$
\end{lemma}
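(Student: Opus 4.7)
The plan is to use two ingredients: (i) \cref{ass:minconv}, which restricts the residues modulo $p$ so that $A[i] \bmod p$ and $B[k-i] \bmod p$ are both at most $p/3$; and (ii) the standard floor inequality $\lfloor x/2^\ell \rfloor + \lfloor y/2^\ell \rfloor \leq \lfloor (x+y)/2^\ell \rfloor \leq \lfloor x/2^\ell \rfloor + \lfloor y/2^\ell \rfloor + 1$.

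First, I would argue that $C[k] \bmod p = (A[i] \bmod p) + (B[k-i] \bmod p)$. By \cref{ass:minconv}, the right-hand side is at most $2p/3 < p$, so the sum does not wrap around mod $p$, and it equals $(A[i] + B[k-i]) \bmod p = C[k] \bmod p$ by the assumption $A[i]+B[k-i] = C[k]$. Setting $x := A[i] \bmod p$ and $y := B[k-i] \bmod p$, the floor inequality above gives
\[
\left| \Al[i] + \Bl[k-i] - \left\lfloor \frac{C[k] \bmod p}{2^\ell} \right\rfloor \right| \leq 1.
\]

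Next, I would bound the distance of $\Cl[k]$ to $\lfloor (C[k] \bmod p)/2^\ell \rfloor$ using the hypothesis that $\Cl$ satisfies Property~(\ref{eq:Cl_pty}). Since $2(2^\ell - 1) < 2 \cdot 2^\ell$, the upper bound in (\ref{eq:Cl_pty}) is at most $\lfloor (C[k] \bmod p)/2^\ell \rfloor + 2$, and the lower bound is at least $\lfloor (C[k] \bmod p)/2^\ell \rfloor - 2$. Hence
\[
\left| \Cl[k] - \left\lfloor \frac{C[k] \bmod p}{2^\ell} \right\rfloor \right| \leq 2.
\]

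Combining the two bounds by the triangle inequality yields $|\Al[i] + \Bl[k-i] - \Cl[k]| \leq 3$, so there exists $b \in [-3,3] \subset [-10,10]$ with $\Al[i] + \Bl[k-i] = \Cl[k] + b$, as claimed. There is no real obstacle here; the only thing to be careful about is invoking \cref{ass:minconv} precisely to rule out a wrap-around modulo $p$, which is what makes the residues add cleanly and forces the whole argument through. (The slack $b \in [-10,10]$ is deliberately loose because the lemma will be applied recursively across levels $\ell$, and the authors presumably want a margin that is robust under those compositions.)
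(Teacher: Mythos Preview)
Your proof is correct and follows essentially the same approach as the paper: both use \cref{ass:minconv} to ensure that the residues do not wrap around modulo $p$, and then combine the floor-sum inequality with Property~(\ref{eq:Cl_pty}) to bound $\Al[i]+\Bl[k-i]-\Cl[k]$. The paper carries this out in a single chain of inequalities (obtaining $b\in[-4,3]$), while you split it into two distances to the common anchor $\lfloor (C[k]\bmod p)/2^\ell\rfloor$ and apply the triangle inequality (obtaining $b\in[-3,3]$); the difference is purely organizational.
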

\begin{proof}
    We have
    \begin{align*}
        \Al[i] + \Bl[k-i] - \Cl[k] 
        & = \left\lfloor \frac{A[i] \bmod p}{2^\ell} \right\rfloor + \left\lfloor \frac{B[k - i] \bmod p}{2^\ell} \right\rfloor - \Cl[k] \\
        & \leq \frac{A[i] \bmod p}{2^\ell} + \frac{B[k - i] \bmod p}{2^\ell} - \frac{C[k] \bmod p}{2^\ell} + 3 \\
        & = \frac{(A[i] + B[k - i] - C[k]) \bmod p}{2^\ell} + 3 \\
        & = 3
    \end{align*}
    where the inequality comes from Property (\ref{eq:Cl_pty}) and the second equality follows from \cref{ass:minconv}.
    Similarly,
    \begin{align*}
        \Al[i] + \Bl[k-i] - \Cl[k] 
        & = \left\lfloor \frac{A[i] \bmod p}{2^\ell} \right\rfloor + \left\lfloor \frac{B[k - i] \bmod p}{2^\ell} \right\rfloor - \Cl[k] \\
        & \geq \frac{A[i] \bmod p}{2^\ell} -1 + \frac{B[k - i] \bmod p}{2^\ell} - 1 - \frac{C[k] \bmod p}{2^\ell} - 2 \\
        & = \frac{(A[i] + B[k - i] - C[k]) \bmod p}{2^\ell} - 4 \\
        & = - 4.
    \end{align*}
\end{proof}

We explain the general idea behind the computation of $\Cl$, before presenting the procedure in detail. Suppose that $\CUPL$ was computed in the previous round. For simplicity, let us assume that every entry of $A$ and $B$ is finite. For every $k \in [2, 2n]$, we are interested in the index $q \in [n]$ such that $A[q] + B[k-q] = C[k]$. Naturally, $q$ satisfies $\Tilde A[q] + \Tilde B[k-q] = \Tilde C[k]$, and, by the above \cref{lem:minconv_filter}, it also satisfies $\AUPL[i] + \BUPL[k-i] = \CUPL[k] + b$ for some $b \in [-10, 10]$. Since we computed $\Tilde C$ and $\CUPL$, we can identify the set of indices, among which $q$, that satisfy the above two properties. To speed up the search, instead of considering every index $i \in [n]$, we divide $\Tilde A$, $\Tilde B$, $\AUPL$ and $\BUPL$ into intervals on which the sequences are respectively constant. By taking the cross combinations of those intervals, we obtain $[i_1, i_2] \subset [n]$ and $[j_1, j_2] \subset [n]$ such that for every $i \in [i_1, i_2]$ and $j \in [j_1, j_2]$, the quantities $\Tilde A[i] + \Tilde B[j]$ and $\AUPL[i] + \BUPL[j]$ are respectively constant.  By comparing those terms to $\Tilde C[i + j]$ and $\CUPL[i+j]$ respectively, we can select corresponding values $\Al[i] + \Bl[j]$ that are candidate for the entry $\Cl[i + j]$. We will show that by selecting the minimum value $\Al[i] + \Bl[j]$ among the selected candidates, we obtain $\Cl[k]$ that satisfies Property (\ref{eq:Cl_pty}).
To this end, let us define the notion of \emph{segment}, which reflects the idea of cross combining intervals from $\Tilde A$, $\Tilde B$, $\AUPL$ and $\BUPL$.

\begin{definition}
    Let $\ell \in \{0, 1, \dots, h\}$, $[i_1, i_2] \subset [n]$ and $k \in [2, 2n]$. We call the pair  $([i_1, i_2], k)_\ell$ a \emph{segment with respect to $\ell$} if $[i_1, i_2]$ is an interval of maximum length such that     
    for every $i \in [i_1, i_2]$ the entries $A[i]$, $B[k - i]$ and $C[k]$ are finite with $\Al[i] = \Al[{i_1}]$, $\Bl[{k-i}] = \Bl[{k - i{_1}}]$, $\Tilde A[i] = \Tilde A[i_1]$ and $\Tilde B[k - i] = \Tilde B[k - i_1]$.
\end{definition}

For every $k$, the segment $([i_1, i_2], k)_{\ell}$ with respect to $\ell$ is defined so that, for finite values, the quantities
$\Tilde A[i] + \Tilde B[k-i]$ and $\Al[i] + \Bl[k-i]$ are respectively constant for $i \in [i_1, i_2]$. By comparing  $\Al[i] + \Bl[k-i]$ to $\Cl[k]$, we can use the condition of \cref{lem:minconv_filter} to select candidate segments for $C^{(\ell -1)}$. However, we also clearly have $\Tilde A[i] + \Tilde B[k - i] = \Tilde C[k]$ for pairs $i, k$ such that $A[i] + B[k-i] = C[k]$. Thus, we define the set of \emph{false positive} for a fixed $b \in [-10, 10]$ and $\ell \in \{1, \dots, h \}$ as 
\begin{align*}
    \Tl_b := \Big\{ \text{ segment } ([i_1, i_2], k)_{\ell} \Big.  \Big\vert\ &  \Al[i_1] + \Bl[k-i_1] = \Cl[k] + b  \\
    & \Big. \text{ and } \Tilde A[i_1] + \Tilde B[k - i_1] \neq \Tilde C[k] \Big\}
\end{align*}
which corresponds to the set of segments with respect to $\ell$ that satisfy the condition of \cref{lem:minconv_filter} but are clearly not good candidates for $C^{(\ell - 1)}$.

\paragraph*{Recursive computation of $\boldsymbol{C^*}$.}
We can now describe the computation of $C^*$ in detail.
As already mentioned, since $p < 2^h$, we set $C^{(h)}$ to the zero sequence. Then, the sets $T_{b}^{(h)}$ are empty for all $b \neq 0$ and 
$T_{0}^{(h)}$ is the set of all segments $([i_1, i_2], k)_h$ such that $\Tilde A[i_1] + \Tilde B[k - i_1] \neq \Tilde C[k]$. The latter can be computed in $O(M^{2-2\alpha})$ time by considering intervals on which $\Tilde A$ and $\Tilde B$ are constant, similarly to how we constructed $\Tilde C$.
Next, we iterate over $\ell = h-1, \dots, 0$, and construct both $\Cl$ and $\Tl_b$ for every $b \in [-10, 10]$ using $\CUPL$ and $\TUPL_b$ computed in the previous step.
In the end, we set $C^* = C^{(0)}$. 

\subparagraph*{Computing $\boldsymbol{\Cl}$ from  $\boldsymbol{\CUPL}$ and $\boldsymbol{\TUPL_b}$.}
Construct the following polynomials on variables $x, y, z$:
\[
A^p(x, y, z) = \sum_{i =1}^n x^{\Al[i] - 2\AUPL[i]} \cdot y^{\AUPL[i]} \cdot z^i
\]
\[
B^p(x, y, z) = \sum_{j =1}^n x^{\Bl[j] - 2\BUPL[j]} \cdot y^{\BUPL[j]} \cdot z^j
\]
and compute the polynomial multiplication $C^p(x, y, z) = A^p(x, y, z) \cdot B^p(x, y, z)$ using standard Fast Fourier Transform. Observe that in $A^p(x, y, z)$ and $B^p(x, y, z)$ every $x$-degree is 0 or 1, every $y$-degree is at most $O(M^\alpha / 2^\ell) \leq O(M^\alpha)$ and every $z$-degree is at most $n$. Hence, the computation of $C^p(x, y, z)$ takes time $\Tilde O(nM^{\alpha})$.

Now we use the condition given by \cref{lem:minconv_filter} to filter out segments. For every  offset $b \in [-10, 10]$ and index $k \in [2, 2n]$, enumerate all terms $\lambda x^c y^d z^e$ of $C^p(x, y, z)$ such that $e = k$ and $d = \CUPL[k] + b$ and let $C^p_{k, b}(x)$ be the sum of all such terms $\lambda x^c$.
Construct also the polynomial that contains monomials with false positive coefficients:
\[
R^p_{k, b}(x) = \sum_{\substack{([i_1, i_2], k)_{\ell + 1} \in \TUPL_b \\ i \in [i_1, i_2] }} x^{\Al[i] + \Bl[k-i] - 2 \left(\AUPL[i] + \BUPL[k-i]\right)}.
\]
This way, we can extract all the candidate values for $\Cl[k]$ corresponding to the offset $b$:
$$\Gamma_{k, b} =\left\{ c + 2 \left(\CUPL[k] + b \right) \ \vert\ \lambda x^c \text{ is a term in } (C^p_{k, b}(x) -  R^p_{k, b}(x)) \right\}.$$
Finally, we construct $\Cl[k]$ by ranging over all offsets $b \in [-10, 10]$ and taking the minimum of all candidate values, i.e., we let $\Cl[k] = \min_{b \in [-10, 10]} \min (\Gamma_{k, b} )$. We show in Section \ref{sec:minconv_correctness} that $\Cl$ satisfies the desired Property (\ref{eq:Cl_pty}). 

Note that we can construct $R^p_{k, b}(x)$ in $O(|\TUPL_b|)$ time. Indeed, if $([i_1, i_2], k)_{\ell+1}$ is a segment with respect to $\ell + 1$, then $\AUPL[i]$ is constant on $[i_1, i_2]$ and so $[i_1, i_2]$ can be split into two intervals $[i_1, i_{12}]$ and $[i_{12}, i_2]$ such that on one interval $\Al$ is constant with value $\Al[i] = 2 \AUPL[i_1]$ and on the other $\Al$ is constant with value $\Al[i] = 2 \AUPL[i_1] + 1$. The same can be done for $\BUPL$ and $\Bl$. By taking the 4 cross-combinations of values for $\Al$ and $\Bl$, we obtain 4 segments $([s, f], k)_{\ell}$ with respect to $\ell$ such that $[s, f] \subset [i_1, i_2]$. 
We say that the segment $([s, f], k)_{\ell}$ is \emph{contained} in the segment $([i_1, i_2], k)_{\ell + 1}$. Each such segment $([s, f], k)_{\ell}$ contributes to one term $\lambda x^a$  in $R^p_{k, b}(x)$ where $a \in \{0, 1\}$ and $\lambda = f - s + 1$. Since the breaking points of $[i_1, i_2]$ into at most 4 intervals can be found via binary search, we can compute the terms associated to each segment $([i_1, i_2], k)_{\ell + 1} \in \TUPL_b$ in $\Ot(1)$ time and thus compute $R^p_{k, b}(x)$ in $O(|\TUPL_b|)$ time.
In total, the computation of $\Cl$ from $\CUPL$ and $\TUPL$ takes $\Ot(n M^\alpha + |\TUPL_b|)$.

\subparagraph*{Computing $\boldsymbol{\Tl_b}$ from $\boldsymbol{\TUPL_b}$.}
By the above observation, each segment with respect to $\ell + 1$ can be split into at most 4 segments with respect to $\ell$. We show in \cref{lem:minconv_inclusion_Tbl} that every segment with respect to $\ell$ that is in $\bigcup_{b = -10}^{10} \Tl_b$ is contained in a segment with respect to $\ell +1$ that is in $\bigcup_{b = -10}^{10} \TUPL_b$. This way, to construct $\Tl_b$ we only need to consider the segments of $\bigcup_{b = -10}^{10} \TUPL_b$, split them into segments of $\bigcup_{b = -10}^{10} \Tl_b$ using binary search (see previous paragraph) and assign them to the corresponding $\Tl_b$. Therefore, this phase runs in  $O(|\TUPL_b|)$ time. 

\begin{lemma}\label{lem:minconv_inclusion_Tbl}
    Fix $\ell \in \{0, 1, \dots, h-1\}$. Suppose that Property (\ref{eq:Cl_pty}) holds for $\Cl[k]$ and $\CUPL[k]$ for every $k \in [2, 2n]$. For any segment $([s, f], k)_{\ell} \in \bigcup_{b =-10}^{10} \Tl_b$ there exists a segment $([i_1, i_2], k)_{\ell + 1} \in \bigcup_{b =-10}^{10} \TUPL_b$ with $[s, f] \subset [i_1, i_2]$.  
\end{lemma}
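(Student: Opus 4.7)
The plan is to construct the natural level-$(\ell+1)$ segment containing $[s,f]$ and then check the two defining conditions of membership in $\bigcup_{b' \in [-10,10]} \TUPL_{b'}$. First I would use the pointwise identity $\AUPL[i] = \lfloor \Al[i]/2 \rfloor$ (and analogously for $B$), which follows from $\lfloor \lfloor y/2^\ell \rfloor / 2 \rfloor = \lfloor y/2^{\ell+1} \rfloor$. This implies that any interval on which $\Al$ is constant is also one on which $\AUPL$ is constant, and similarly for $B$. Taking $[i_1,i_2]$ to be the maximal interval containing $s$ on which $\AUPL[i]$, $\BUPL[k-i]$, $\tilde A[i]$, $\tilde B[k-i]$ are all constant and $A[i]$, $B[k-i]$, $C[k]$ are finite, one therefore gets $[s,f] \subset [i_1,i_2]$, and so $([i_1,i_2],k)_{\ell+1}$ is a segment with respect to $\ell+1$. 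The false-positive condition $\tilde A[i_1] + \tilde B[k-i_1] \neq \tilde C[k]$ transfers from level $\ell$ verbatim, since $\tilde A$ and $\tilde B$ are constant on the containing interval $[i_1,i_2] \supset [s,f]$.

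The only real obstacle is to verify that $b' := \AUPL[i_1] + \BUPL[k-i_1] - \CUPL[k]$ lies in $[-10,10]$. The identity $\AUPL = \lfloor \Al/2 \rfloor$ gives $\AUPL[s] - \Al[s]/2 \in \{-1/2,\, 0\}$, and likewise for $B$, while unfolding Property~\eqref{eq:Cl_pty} at both levels $\ell$ and $\ell+1$ (centering each bracket at $(C[k] \bmod p)/2^{\ell+1}$) yields $\CUPL[k] - \Cl[k]/2 \in (-4,\, 7/2)$. Combined with the hypothesis $\Al[s] + \Bl[k-s] - \Cl[k] = b$ with $|b| \le 10$, and using $\AUPL[i_1] = \AUPL[s]$ and $\BUPL[k-i_1] = \BUPL[k-s]$, one obtains $b' = b/2 + \delta$ for some $\delta \in (-9/2,\, 4)$. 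Hence $b'$ is an integer in $(-19/2,\, 9) \cap \mathbb{Z} = [-9,\, 8] \subset [-10,\, 10]$, placing $([i_1,i_2],k)_{\ell+1}$ in $\TUPL_{b'}$ as required. The bookkeeping of these approximation errors is the only step that is not a direct unpacking of the definitions of a segment and of $\Tl_b$.
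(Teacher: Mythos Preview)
Your proof is correct and follows essentially the same approach as the paper: both arguments reduce to showing that $b' := \AUPL[s] + \BUPL[k-s] - \CUPL[k]$ lies in $[-10,10]$ by combining the halving identity $\AUPL = \lfloor \Al/2 \rfloor$ with the two-sided error bounds on $\CUPL - \Cl/2$ coming from Property~\eqref{eq:Cl_pty} at both levels. Your presentation is slightly more explicit than the paper's in constructing the containing level-$(\ell+1)$ segment and in isolating the error term $\delta$, but the underlying computation is identical.
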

\begin{proof}
    Let us first bound the value of $\Cl[i] - 2 \CUPL[i]$ as follows. Property (\ref{eq:Cl_pty}) gives
    $
    \frac{(C[i] \bmod p)}{2^\ell} - 3 \leq \Cl[i] \leq \frac{(C[i] \bmod p)}{2^\ell} + 2
    $. 
    So we have
    $$
    2 \CUPL[i] - 7 \leq 2 \frac{(C[i] \bmod p)}{2^{\ell+1}} - 3 \leq \Cl[i] \leq 2 \frac{(C[i] \bmod p)}{2^{\ell + 1}} + 2 \leq 2 \CUPL[i] + 8
    $$
    and thus $-7 \leq \Cl[i] - 2 \CUPL[i] \leq 8$. 
    
    Let $([s, f], k)_\ell$ be a segment in $\Tl_b$ for some $b \in [-10, 10]$. We show that $|\AUPL[s] + \BUPL[k - s] - \CUPL[k] | \leq 10$. Indeed,
    \begin{align*}
        \AUPL[s] + \BUPL[k - s] - \CUPL[k] 
        & \geq \frac{\Al[s]}{2} - \frac{1}{2} +  \frac{\Bl[k - s]}{2} - \frac{1}{2} - \frac{\Cl[k]}{2} - \frac{7}{2} \\
        & = \frac{1}{2} \left( \Al[s] + \Bl[k-a] - \Cl[k] \right) - 9/2 \\
        & \geq -10/2 - 9/2 > -10
    \end{align*}
    where the first inequality comes from the bounds on $\Al[i] - 2 \AUPL[i]$, $\Bl[i] - 2 \BUPL[i]$ and $\Cl[i] - 2 \CUPL[i]$; and the last inequality comes from $-10 \leq \Al[s] + \Bl[k-s] - \Cl[k] \leq 10$. Similarly, we derive the following bound
    \begin{align*}
        \AUPL[s] + \BUPL[k - s] - \CUPL[k] 
        & \leq \frac{\Al[s]}{2} +  \frac{\Bl[k - s]}{2} - \frac{\Cl[k]}{2} + \frac{8}{2} \\
        & = \frac{1}{2} \left( \Al[s] + \Bl[k-a] - \Cl[k] \right) + 8/2 \\
        & \leq 10/2 + 8/2 < 10.
    \end{align*}

    Hence, if $([s, f], k)_\ell$ is a segment in $\bigcup_{b =-10}^{10} \Tl_b$, then $|\AUPL[s] + \BUPL[k - s] - \CUPL[k] | \leq 10$. Additionally, we also have $\Tilde A[s] = \Tilde A[i]$ and $\Tilde B[k - s] = \Tilde B [k - i]$ for all $i \in [s, f]$. So $[s, f]$ has to be contained in an interval $[i_1, i_2]$ such that $([i_1, i_2], k)_{\ell +1}$ is a segment of $\TUPL_b$ for some $b \in [-10, 10]$.
\end{proof}

\subsection{Proof that \texorpdfstring{$\boldsymbol \Cl$}{C\^(l)} satisfies Property \texorpdfstring{(\ref{eq:Cl_pty})}{(3)}}\label{sec:minconv_correctness}
We prove by induction that each sequence $\Cl$ satisfies Property (\ref{eq:Cl_pty}). This trivially holds for the base case $\ell = h$ since $C^{(h)}$ is the zero sequence. Fix $\ell \in \{0, 1, \dots , h-1\}$. Let $q$ be such that $A[q] + B[k-q] = C[k]$. Then by induction hypothesis and \cref{lem:minconv_filter}, there exists an offset $b \in [-10, 10]$ such that $\AUPL[q] + \BUPL[k-q] = \CUPL[k] + b$. Then, by the construction of polynomials $A^p(x, y, z)$ and $B^p(x, y, z)$
\begin{align*}
    C^p_{k, b}(x) 
    &= \sum_{i \ : \  \AUPL[i] + \BUPL[k -i] = \CUPL[k] + b} x^{\Al[i] - 2\AUPL[i] + \Bl[k-i] - 2\BUPL[k - i]} \\
    &= \sum_{\substack{i \ : \  \AUPL[i] + \BUPL[k -i] = \CUPL[k] + b \\ \wedge \quad \Tilde A[i] + \Tilde B[k-i] = \Tilde C[k] }}  x^{\Al[i] - 2\AUPL[i] + \Bl[k-i] - 2\BUPL[k - i]} \\
    &\phantom{=} + \sum_{\substack{i \ : \  \AUPL[i] + \BUPL[k -i] = \CUPL[k] + b \\ \wedge \quad \Tilde A[i] + \Tilde B[k-i] \neq \Tilde C[k] }}  x^{\Al[i] - 2\AUPL[i] + \Bl[k-i] - 2\BUPL[k - i]} \\
    &= \sum_{\substack{i \ : \  \AUPL[i] + \BUPL[k -i] = \CUPL[k] + b \\ \wedge \quad \Tilde A[i] + \Tilde B[k-i] = \Tilde C[k] }}  x^{\Al[i] - 2\AUPL[i] + \Bl[k-i] - 2\BUPL[k-i]} \\
    &\phantom{=} + \sum_{\substack{ ([i_1, i_2], k)_{\ell + 1} \in \TUPL_b \\  i \in [i_1, i_2]}}  x^{\Al[i] - 2\AUPL[i] + \Bl[k-i] - 2\BUPL[k - i]} \\
    &= R^p_{k, b} + x^{- 2(\CUPL[k] + b)} \cdot \left( \sum_{\substack{i \ : \  \AUPL[i] + \BUPL[k -i] = \CUPL[k] + b \\ \wedge \quad \Tilde A[i] + \Tilde B[k-i] = \Tilde C[k] }}  x^{\Al[i] + \Bl[k-i]} \right).
\end{align*}
Since $\Tilde A[q] + \Tilde B[k-q] = \Tilde C[k]$ and $\AUPL[q] + \BUPL[k-q] = \CUPL[k] + b$, the term $\Al[q] + \Bl[k-q]$ is contained in the set of candidate values $\Gamma_{k, b}$, and satisfies
\begin{align*}
    \Al[q] + \Bl[k-q] 
    &= \left\lfloor \frac{A[q] \bmod p}{2^\ell} \right\rfloor + \left\lfloor \frac{B[k - q] \bmod p}{2^\ell} \right\rfloor  \\
    &\leq \frac{A[q] + B[k-q] \bmod p}{2^\ell} \\
    &= \frac{C[k] \bmod p}{2^\ell} \leq \left\lfloor \frac{C[k] \bmod p + 2^\ell - 1}{2^\ell}\right\rfloor
\end{align*}
and 
\begin{align*}
    \Al[q] + \Bl[k-q] 
    &= \left\lfloor \frac{A[q] \bmod p}{2^\ell} \right\rfloor + \left\lfloor \frac{B[k - q] \bmod p}{2^\ell} \right\rfloor \\
    &\geq \frac{A[q] + B[k-q] \bmod p - 2(2^\ell - 1)}{2^\ell} \\
    &\geq \left\lfloor \frac{(C[k] \bmod p) - 2(2^\ell - 1)}{2^\ell}\right\rfloor.
\end{align*}
Thus, the term $\Al[q] + \Bl[k-q]$ satisfies Property (\ref{eq:Cl_pty}) and thus gives a valid $\Cl[k]$. Finally, consider any term $\Al[i] + \Bl[k-i]$ satisfying $\Tilde A[i] + \Tilde B[k-i] = \Tilde C[k]$. Since $A[i] + B[k-i] \geq C[k]$, it also satisfies $(A[i] \mod p) + (B[k-i] \mod p) \geq (C[k] \mod p)$ and thus,
\begin{align*}
    A[i] + B[k-i]  
    &=\left\lfloor \frac{A[i] \bmod p}{2^\ell} \right\rfloor + \left\lfloor \frac{B[k - i] \bmod p}{2^\ell} \right\rfloor  \\
    &\geq \frac{(A[i] + B[k-i] \bmod p) - 2(2^\ell - 1)}{2^\ell} \\
    &\geq \left\lfloor \frac{(C[k] \bmod p) - 2(2^\ell - 1)}{2^\ell}\right\rfloor.
\end{align*}
So by choosing the minimum candidate term, we get a valid $\Cl[k]$. This concludes the proof that the construction of $\Cl$ is correct.

\subsection{Running time}
Let us bound the number of segments with respect to $\ell$, for any $\ell \in \{0, 1, \dots, h\}$. 
Observe that finite entries in $\Al$, $\Bl$ and $\Cl$ are bounded by $O(M^\alpha / 2^\ell)$. 
Furthermore, since the finite entries $A$ and $B$ are monotone non-decreasing and $A$ and $B$ contain at most $O(M^{1-\alpha})$ intervals of infinity (by \cref{ass:minconv}), $[n]$ can be divided into at most $O(M/2^\ell)$ intervals in which $\Al$ and $\Tilde A$ are constant. Similarly, $[n]$ can be divided into at most $O(M / 2^\ell)$ intervals in which $\Bl$ and $\Tilde B$ are constant. 
Hence, there are $O(n \cdot M / 2^\ell)$ segments with respect to $\ell$. 

We now analyze the running time of the described algorithm. Computing the sequence $\Tilde C$ takes time $\Ot(M^{2-2\alpha})$. 
Since $p < 2^h$,  there are at most  $O(n M^{1 - \alpha})$ segments with respect to $h$. So computing $C^{(h)}$ and $T^{(h)}_b$ for all $b \in [-10, 10]$ takes time $O(n M^{1 - \alpha})$. In each consecutive step, computing $C^p(x, y, z)$ takes time $\Ot(n M^\alpha)$; computing $\Cl$ from $C^p(x, y, z)$ and $\TUPL_b$ takes time $\Ot(n M^\alpha + |\TUPL_b|)$; and computing $\Tl_b$ from $\TUPL_b$ takes time $O(|\TUPL_b|)$. We show in \cref{lem:minconv_size_Tbl}, that the expected number of segments in $\TUPL_b$ is bounded by $\Ot(n M^{1 - \alpha})$, for any $\ell \in \{0, 1, \dots, h\}$. Therefore, the total expected running time is $\Ot(M^{2 - 2\alpha} + n M^{1-\alpha} + n M^\alpha)$. By setting $\alpha = 1/2$ we obtain the claimed expected running time $\Ot(n \sqrt{M} + M)$. 

\begin{lemma}\label{lem:minconv_size_Tbl}
    For every $\ell \in \{0,1, \dots, h\}$ and $b \in [-10, 10]$, the expected size of $\Tl_b$ is $\Tilde O(nM^{1 - \alpha})$.
 \end{lemma}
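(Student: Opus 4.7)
The plan is to decouple a deterministic, $p$-independent coarse partition of each slice $[0,k]$ into ``super-segments'' from the $p$-dependent refinement into actual level-$\ell$ segments, and then bound the probability that each super-segment contributes to $T^{(\ell)}_b$.

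First, for each $k\in[2,2n]$, I would partition $[0,k]$ into maximal intervals on which $\lfloor A[i]/2^\ell\rfloor$ and $\lfloor B[k-i]/2^\ell\rfloor$ are both constant. Monotonicity of $A$ and $B$ together with the $O(M^{1-\alpha})$ infinity-intervals of \cref{ass:minconv} give $O(M/2^\ell)$ such super-segments per $k$, and by construction they do not depend on $p$. Within each super-segment, $A$ varies by less than $2^\ell\le 2p$, so $A\bmod p$ wraps at most once, and an elementary count shows that the pair $(\tilde A,A^{(\ell)})$ takes only $O(1)$ distinct values (similarly for $B$); hence each super-segment contains $O(1)$ level-$\ell$ segments. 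It therefore suffices to bound the expected number of super-segments that contain at least one segment of $T^{(\ell)}_b$.

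Second, I would characterize the arithmetic condition. Fix a representative $i\in[i_1,i_2]$ of a sub-segment in $T^{(\ell)}_b$ at index $k$, and set $d:=A[i]+B[k-i]-C[k]$. Condition~(b) combined with \cref{ass:minconv} forces $d\ge 1$ (equal integer sums would force equal $\tilde{\cdot}$ sums since there is no mod-$p$ wrap). Combining condition~(a) with Property~(\ref{eq:Cl_pty}) for $C^{(\ell)}$ yields $(A[i]\bmod p)+(B[k-i]\bmod p)-(C[k]\bmod p)=b\cdot 2^\ell + O(2^\ell)$; written in integer arithmetic this says $d\equiv r \pmod p$ for some $r$ with $|r - b\cdot 2^\ell|\le c\cdot 2^\ell$ (absolute constant $c$), and the exact form of condition~(b) forces $\lfloor d/p\rfloor\ne 0$. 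Hence $d$ lies within $O(2^\ell)$ of some \emph{nonzero} multiple of~$p$.

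Third, I would bound the per-super-segment probability. Fix a super-segment at $k$ with $A$-values in $[a,a+2^\ell)$ and $B$-values in $[b_0,b_0+2^\ell)$ (writing $b_0$ to avoid clashing with the parameter $b$); all pairs in it have $d$ in an interval $D$ of length $2^{\ell+1}$ that is independent of $p$. By the characterization, the super-segment contributes to $T^{(\ell)}_b$ only if the $O(2^\ell)$-enlargement $D'$ of $D$ contains a positive integer $n\ge p$ with $p\mid n$. Each $n\le 2M$ has at most $\lceil\log n/\log M^\alpha\rceil=O(1)$ prime divisors in $[M^\alpha,2M^\alpha]$ (since each such prime contributes a factor $\ge M^\alpha$ to $n$), so the set of ``bad'' primes has size $O(|D'|)=O(2^\ell)$; dividing by the $\Theta(M^\alpha/\log M)$ primes in $[M^\alpha,2M^\alpha]$ yields $\Pr[\text{super-segment contributes}]=\tilde O(2^\ell/M^\alpha)$. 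Summing with linearity of expectation,
\[
E\bigl[|T^{(\ell)}_b|\bigr]\;\le\;O(1)\cdot n\cdot O(M/2^\ell)\cdot \tilde O(2^\ell/M^\alpha)\;=\;\tilde O(nM^{1-\alpha}),
\]
as claimed. The main obstacle is the careful book-keeping in the second step, in particular tracking how the exact form of condition~(b) is what rules out the trivial case $d\in[1,O(2^\ell)]$ (the $m=0$ multiple), which would otherwise invalidate the $2^\ell/M^\alpha$ probability bound by making every prime $p$ ``bad'' for small $d$.
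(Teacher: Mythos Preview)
Your approach is essentially the paper's own: characterize membership in $T^{(\ell)}_b$ by the divisibility condition $p \mid (d-r)$ for some $r$ with $|r|=O(2^\ell)$, rule out the degenerate multiple via condition~(b), bound the number of bad primes by counting large prime factors of integers in an interval of length $O(2^\ell)$, and finish by linearity of expectation against an $O(nM/2^\ell)$ segment count. The paper phrases the middle step slightly differently---it case-splits on $2^\ell \gtrless p/100$ and, in the small case, uses $|d|\ge p/3$ together with $|r|<p/6$ to conclude $d-r\neq 0$; your derivation of $m\ge 1$ directly from condition~(b) plus $d\ge 0$ achieves the same thing without the explicit split (the large-$\ell$ regime is absorbed because $O(M/2^\ell)\cdot\tilde O(2^\ell/M^\alpha)=\tilde O(M^{1-\alpha})$ even when the probability factor is trivially~$1$).

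Your super-segment device is a genuine improvement in clarity over the paper: the paper computes ``the probability that a given segment is in $T^{(\ell)}_b$'' even though the segments themselves are functions of $p$, which makes the expectation calculation informal. Decoupling a $p$-independent coarse partition from the $p$-dependent refinement is exactly the right fix. One caveat, though: the sequences $A,B$ satisfying \cref{ass:minconv} are produced by the reduction of \cref{lem:minconv_can_assume}, which both inserts $+\infty$'s at $p$-dependent positions and shifts by $\lceil xp/3\rceil$. So $\lfloor A[i]/2^\ell\rfloor$ is \emph{not} $p$-independent as you claim. The remedy is easy---define the super-segments via $\lfloor A_{\mathrm{orig}}[i]/2^\ell\rfloor$ for the \emph{pre-reduction} monotone sequence; the shift only moves breakpoints but keeps their count at $O(M/2^\ell)$, and within a super-segment the $p$-dependent $+\infty$ pattern creates only $O(1)$ additional breaks (since $A_{\mathrm{orig}}\bmod p$ wraps $O(1)$ times over a range of length $<2^\ell\le 2p$). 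With that adjustment your argument is fully rigorous and in fact cleaner than the paper's.
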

 \begin{proof}
    When $2^\ell \geq p /100$, the number of segments with respect to $\ell$ is  $O(|T_b^{(h)}|) \leq O(n M^{1 - \alpha})$. Assume that $2^\ell < p/100$. 
    Consider a segment $([i_1, i_2], k)_\ell$ such that $\Tilde A[i_1] + \Tilde B[k - i_1] \neq \Tilde C[k]$. To bound the expected size of $\Tl_b$, we want to bound the probability that $([i_1, i_2], k)_\ell$ is in $\Tl_b$. That is the case if
    \[
    \left \lfloor \frac{A[i_1] \bmod p}{2^\ell} \right \rfloor +  \left \lfloor \frac{B[k - i_1] \bmod p}{2^\ell} \right \rfloor = \Cl[k] + b
    \]
    which implies by Property (\ref{eq:Cl_pty}) that
    \[
    -4 \leq  \frac{A[i_1] \bmod p}{2^\ell}  + \frac{B[k - i_1] \bmod p}{2^\ell} - \frac{C[k] \bmod p}{2^\ell} - b \leq 4.
    \]
    Let $q \in [n]$ be such that $A[q] + B[k - q] = C[k]$. Then $([i_1, i_2], k)_\ell \in \Tl_b$ if $(A[i_1] + B[k - i_1] - A[q] - B[k - q] \bmod p)$ is contained in $\in [2^\ell(b-4), 2^\ell(b+4)]$. For each possible remainder $r \in [2^\ell(b-4), 2^\ell(b+4)]$, we have $|r| \leq 14 \cdot 2^\ell < 14 \cdot p/100 < p/6$.
    Since $(C[k] \bmod p) < 2p/3$ by \cref{ass:minconv}, and $\Tilde A[i_1] + \Tilde B[k - i_1] \neq \Tilde C[k]$, we also have $\left|A[i_1] + B[k - i_1] - C[k]\right| \geq p/3$. It follows that $\left| A[i_1] + B[k - i_1] -A[q] - B[k - q] - r \right|$ is a positive number bounded by $O(M)$. So the probability that $p \in [M^\alpha, 2M^\alpha]$ divides this number is the ratio between the number of prime divisors of $\left| A[i_1] + B[k - i_1] -A[{q}] - B[k - q] - r \right|$ and the number of primes in $[M^\alpha, 2M^\alpha]$. Any positive number bounded by $O(M)$ has at most $O(\log M)$ prime divisors and by the Prime Number Theorem in \cite{jameson2003prime} there are at least $\Omega(M^\alpha /\log M)$ primes in the interval $[M^\alpha, 2M^\alpha]$. Finally, since there are at most $O(2^\ell)$ remainders $r$, the probability that a given segment with respect to $\ell$ is in $\Tl_b$ is $\Ot(2^\ell / M^{\alpha})$. There are $O(n \cdot M / 2^\ell)$ segments with respect to $\ell$, so in expectation $\Tl_b$ contains $\Tilde O(n M^{1 - \alpha})$ segments.
 \end{proof}

\section{Using rectangular convolution in Bringmann--Cassis algorithm}
\label{sec:adapt_subsetsum_knapsack}
In this section, we change the Knapsack algorithm in \cite{BringmannC22} so that it uses rectangular monotone max-plus convolution instead of square monotone max-plus convolution. This allows us to get a more precise running time of the algorithm as stated in \cref{thm:Knapsack_SubsetSum_rectangular}, and proves \cref{thm:Knapsack_SubsetSum,thm:Knapsack_SubsetSum_sym} used in \cref{sec:knapsack_algos,sec:bmbm_sym} respectively.
The weight sequence $\mc W_{\mc I}$ is defined in \cref{sec:sym_knapsack}.

\begin{theorem}[{Adapted from \cite[Theorem 19]{BringmannC22}}]\label{thm:Knapsack_SubsetSum_rectangular}
    Suppose that the bounded monotone max-plus convolution of two sequences of length $n$ with non-negative values bounded by $M$ can be computed in time $T(n, M)$.
    Assume the following niceness properties of $T(n, M)$: 
    \[T(n, M) \cdot k \leq O(T(n \cdot k, M \cdot k)) \quad \text{and} \quad T(\Ot(n), \Ot(M)) \leq \Ot(T(n, M)).\] 
    Let $(\mc I, t)$ be a Knapsack instance, $v \in \mathbb N$ and let $n = |\mc I|$. Then we can compute the sequence $\mc P_{\mc I}[0 \dots t ; 0 \dots v]$ in time $\Ot(n + T(t, v))$. We can also compute the sequence $\mc W_{\mc I}[0 \dots v ; 0 \dots t]$ in time $\Ot(n + T(v, t))$.
\end{theorem}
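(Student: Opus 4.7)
The plan is to follow the Bringmann--Cassis approach of \cite[Theorem 19]{BringmannC22} essentially verbatim, while carrying an additional value-range parameter $v$ throughout the construction. The two modifications are: cap all intermediate profit values at $v$ instead of at the single common bound used in \cite{BringmannC22}, and replace every invocation of a \emph{square} bounded monotone max-plus convolution (sequences of length $j$ with values in $\{0,\ldots,j\}$) by an invocation of the \emph{rectangular} algorithm $T(\cdot,v)$ of the theorem hypothesis.

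More concretely, I would partition the items by weight class, $\mc I_i := \{e \in \mc I : w_e \in [2^i, 2^{i+1})\}$ for $i = 0, 1, \ldots, \lceil \log t \rceil$. For each class $\mc I_i$, I would compute a \emph{class profit sequence} $Q_i$ whose $k$-th entry is the maximum profit, capped at $v$, attainable by selecting exactly $k$ items from $\mc I_i$; this sequence is monotone non-decreasing, has length $\Ot(t/2^i)$, and entries in $\{0,\ldots,v\}$. The sequence $Q_i$ is built recursively by divide-and-conquer on $\mc I_i$: split $\mc I_i$ in half, recurse on each half, then combine the two sub-sequences by a bounded monotone max-plus convolution of dimensions at most $\Ot(t/2^i)$ by $v$, invoking $T$. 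Summing the geometric series of recursion levels, and using the niceness property $T(\Ot(n), \Ot(M)) \leq \Ot(T(n,M))$, gives a total cost of $\Ot(|\mc I_i| + T(t/2^i, v))$ for building $Q_i$.

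Next, I would combine the sequences $Q_0, Q_1, \ldots$ in a balanced binary tree of bounded monotone max-plus convolutions, rescaling the indices of $Q_i$ by a factor of $2^i$ before use so that position $k$ in $Q_i$ encodes a total weight of approximately $k$. At every level the sequences remain monotone non-decreasing with entries in $\{0,\ldots,v\}$, and the length of the convolution at the root is $\Ot(t)$. The first niceness assumption $k \cdot T(n,M) \leq O(T(nk, Mk))$, together with the doubling structure of the tree, yields a total combining cost of $\Ot(T(t,v))$. A final restriction of the combined sequence to indices in $[0,t]$ and values in $[0,v]$ returns $\mc P_{\mc I}[0 \dots t; 0 \dots v]$; summing the $\Ot(|\mc I_i|)$ contributions over all weight classes gives the additive $\Ot(n)$.

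The main obstacle is to verify that monotonicity and the value bound $v$ are preserved through every rescaling, capping, and combining step, so that $T(\cdot,v)$ is legally applicable at every intermediate convolution, and that the capping does not lose any entry in the final output window $[0,t] \times [0,v]$ (for which the standard Pareto argument used in \cite{BringmannC22} suffices, because any subset of items contributing profit exceeding $v$ is irrelevant for the restricted output). These are exactly the invariants established by \cite{BringmannC22} for the square case, and the argument carries over after mechanically replacing ``values in $\{0,\ldots,j\}$'' by ``values in $\{0,\ldots,v\}$'' throughout. The symmetric statement about $\mc W_{\mc I}[0 \dots v; 0 \dots t]$ follows by running the dual algorithm: group items by profit class $\{e : p_e \in [2^i,2^{i+1})\}$, build class \emph{weight} sequences of length $\Ot(v/2^i)$ with entries capped at $t$, and combine via rectangular bounded monotone \emph{min-plus} convolutions of dimensions $(\cdot,t)$, yielding the claimed $\Ot(n + T(v,t))$ bound.
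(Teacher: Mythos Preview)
There is a genuine correctness gap in the combining step. Your sequence $Q_i[k]$ records the maximum profit of any $k$-item subset of the weight class $\mc I_i$, but the items in $\mc I_i$ have weights spread over the whole interval $[2^i,2^{i+1})$, so the total weight of a $k$-item selection is only known to lie somewhere in $[k\cdot 2^i,\,k\cdot 2^{i+1})$. Rescaling the index by $2^i$ therefore systematically underestimates the true weight, and the combined sequence may report profits that are not attainable at the claimed weight. Concretely, take $t=10$ and let $\mc I_2$ contain the two items $(w,p)=(4,10)$ and $(7,5)$; then $Q_2[2]=15$, which after rescaling lands at index $8$ and asserts profit $15$ at weight $\le 8$, but the only way to achieve profit $15$ uses weight $11>t$. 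Indexing by item count simply discards the weight information needed to output $\mc P_{\mc I}[0\ldots t;0\ldots v]$ exactly, and no ``standard Pareto argument'' can recover it. (As an aside, note that $Q_i[k]$ is just the sum of the $k$ largest profits in $\mc I_i$, so the divide-and-conquer for building $Q_i$ is unnecessary; the real problem is that $Q_i$ is the wrong object.)

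The paper's argument never indexes by count; it works throughout with weight-indexed profit sequences $\mc P_{\cdot}[0\ldots\,\cdot\,]$. What keeps those sequences short is a \emph{two-dimensional} bucketing into groups $G_{(a,b)}=\{i:\,w_i\in[2^{a-1},2^a),\,p_i\in[2^{b-1},2^b)\}$: any relevant solution contains at most $z=\min\{t/2^{a-1},v/2^{b-1}\}$ items from $G_{(a,b)}$, and after randomly splitting $G_{(a,b)}$ into $z$ subgroups, each subgroup contributes only $u=O(\log z)$ items w.h.p. A color-coding step then reduces the subgroup problem to convolving $u^2$ length-$O(2^a)$, value-$O(2^b)$ sequences, and the subsequent tree merge combines $z$ sequences whose length \emph{and} value bound both scale with $z$. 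It is precisely this joint scaling that lets the niceness assumption $k\cdot T(n,M)\le O(T(nk,Mk))$ do its work --- a single-axis divide-and-conquer that halves only the length while keeping the value bound fixed at $v$ cannot invoke it, so even your running-time analysis would not go through from the stated assumptions alone.
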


Note that we state the running time in \cref{thm:Knapsack_SubsetSum_rectangular} as a function of the running time of rectangular monotone max-plus convolution $T(n, M)$. In \cref{thm:MPConv} we showed that $T(n, M) = \Ot(n\sqrt{M})$. Hence, \cref{thm:MPConv,thm:Knapsack_SubsetSum_rectangular} imply that the sequences $\mc P_{\mc I}[0 \dots t ; 0 \dots v]$ and $\mc W_{\mc I}[0 \dots v ; 0 \dots t]$ can be computed in time $\Ot(n + t \sqrt{v})$ and $\Ot(n + v \sqrt{t})$, respectively. This proves \cref{thm:Knapsack_SubsetSum,thm:Knapsack_SubsetSum_sym}.

\begin{proof}[Proof of \cref{thm:Knapsack_SubsetSum,thm:Knapsack_SubsetSum_sym}.
]
    Combine \cref{thm:MPConv} and \cref{thm:Knapsack_SubsetSum_rectangular}.
\end{proof}

To prove \cref{thm:Knapsack_SubsetSum_rectangular}, we first focus on the computation of $\mc P_{\mc I}[0 \dots t ; 0 \dots v]$, before discussing the analogous $\mc W_{\mc I}[0 \dots v ; 0 \dots t]$.

\subsection{Computing \texorpdfstring{$\boldsymbol{\mc P_{\mc I}[0 \dots t ; 0 \dots v]}$}{PI[0 ... t;0 ... v]}}

The algorithm of \cref{thm:Knapsack_SubsetSum_rectangular} is nearly identical to the algorithm of \cite[Theorem 19]{BringmannC22}. The main difference is the use of \emph{rectangular} bounded monotone max-plus convolution instead of \emph{square} bounded monotone max-plus convolution. In the following description of this algorithm, we omit details of the correctness argument and focus on the analysis of the running time. 

The idea of the algorithm is to partition the item set $\mc I$ into groups $G_{(a, b)} \subset \mc I$ such that all items $i \in \mc I$ with $w_i \in [2^{a - 1}, 2^a)$ and $p_i \in [2^{b - 1}, 2^b)$ are in group $G_{(a, b)}$. This partitioning step takes linear time.
We explain below how to compute the sequences $\mc P_{G_{(a,b)}}[0 \dots t ; 0 \dots v]$ for each group $G_{(a, b)}$ in time $\Ot(T(t, v))$. 
Those sequences are then combined using max-plus convolutions to obtain $\mc P_{\mc I}[0 \dots t ; 0 \dots v]$. 
Note that we can assume without loss of generality that every item has profit at most $v$. So the number of groups is $O(\log t \log v)$, and since each computed sequence has length $t$ with entries bounded by $v$, the combination step takes time $\Ot(T(t, v))$.

\subparagraph*{Computing $\boldsymbol{\mc P_{G_{(a, b)}}[0 \dots t ; 0 \dots v]}$.}
Fix a group $G := G_{(a, b)}$ consisting of items with weight $w_i \in [2^{a - 1}, 2^a)$ and profit $p_i \in [2^{b - 1}, 2^b)$. Note that
any $x \in \{0, 1\}^n$ with $w_{G}(x) \leq t$ and $p_{G}(x) \leq v$ selects at most $z := \lceil \min \{t / 2^{a - 1}, v / 2^{b - 1}\} \rceil$ items from $G$.
We randomly split the items in $G$ into $z$ subgroups $G_1, G_2, \dots, G_{z}$. Then any fixed $x \in \{0, 1\}^n$ selects at most $u := O(\log z)$ items in each subgroup $G_i$ with probability at least $1 - 1/\textup{poly}(z)$. 
In particular, since items in $G_i$ have weight at most $2^a$ and profit at most $2^b$, this implies $w_{G_i}(x) \leq 2^a \cdot u$ and $p_{G_i}(x) \leq 2^b \cdot u$ with probability at least $1 - 1/\textup{poly}(z)$. 
Define the sequence $\mc P_{\mc J}^u[\cdot]$ for any subset $\mc J \subset \mc I$ such that for any $k \in \mathbb N$
$$
\mc P_{\mc J}^u[k] = \max \{p_{\mc J}(x) \ | \ x \in \{0, 1\}^n, w_{\mc J}(x) \leq k, \sum_{i \in \mc J}x_i \leq u\}.
$$
Then to compute $\mc P_{G}[0 \dots t ; 0 \dots v]$, we first compute the sequences $\mc P_{G_i}^u[2^a  u ; 2^b  u]$ for every $i \in [z]$ and then combine them via max-plus convolution. We show that both steps take time $\Ot(T(t, v))$. This will imply that we can compute $\mc P_{G_{(a, b)}}[0 \dots t ; 0 \dots v]$ in time $\Ot(T(t, v))$ for every group $G_{(a, b)}$.

\subparagraph*{Computing $\boldsymbol{\mc P_{G_i}^u[0 \dots 2^a u ; 0 \dots 2^b u]}$.}
We use a color coding technique. Specifically, randomly split the items of $G_i$ into $u^2$ buckets $A_1, A_2, \dots, A_{u^2}$. We construct the sequence $\mc P_{A_i}^1[2^a ; 2^b]$ for every $i \in [u^2]$ in $\Ot(n)$ time and then compute their max-plus convolution. Since the sequences are monotone non-decreasing, of length at most $2^a$ and with entries at most $2^b$, computing the max-plus convolution of the $u^2$ sequences takes time $O(T(2^a \cdot u^2, 2^b \cdot u^2) \cdot u^2)$.

By the birthday paradox, any fixed $x \in \{0, 1\}^n$ with constant probability is scattered among the buckets, i.e., each bucket $A_i$ contains at most 1 item selected by $x$. In particular, for every entry $\mc P_{G_i}^u[j]$ the corresponding optimal solution is scattered with constant probability. Hence, every entry of the computed array has the correct value $\mc P_{G_i}^u[j]$ with constant probability. To boost the success probability to $1 - 1/\textup{poly}(z)$, we can repeat this process $O(\log z)$ times and take the entry-wise maximum over all repetitions. 

Finally, by a union bound over all subgroups $G_1, G_2, \dots, G_z$, for a fixed vector $x$ the entries $\mc P_{G_1}^u[w_{a_1}(x)], \mc P_{G_2}^u[w_{a_2}(x)], \dots, \mc P_{G_z}^u[w_{a_1}(x)]$ corresponding to the partition of $x$ among the $z$ subgroups are correctly computed with probability at least $1 - 1/\textup{poly}(z)$.
In total, computing the sequences $\mc P_{G_i}^u[0 \dots 2^a  u ; 0 \dots 2^b u]$ for all $i \in [z]$ takes time 
$$O(T(2^a \cdot u^2, 2^b \cdot u^2) \cdot u^2 \cdot z) \leq \Ot(T(2^a \cdot z, 2^b \cdot z)) \leq \Ot(T(t, v)) 
$$
where we use the fact that $u = O(\log z) = \Ot(1)$ and the niceness assumptions on $T(n, M)$.

\subparagraph*{Merging the sequences $\boldsymbol{\mc P_{G_i}^u[0 \dots 2^a u ; 0 \dots 2^b u]}$.}
Once all sequences $\mc P_{G_i}^u[0 \dots 2^a u ; 0 \dots 2^b u]$
for $i \in [z]$ are computed, we merge them in a tree-like fashion using max-plus convolution, similarly to the merging step of \cref{alg:knapsack_bmbm}. Since we merge $z$ sequences, there are $\lceil \log z \rceil$ levels of computations. At level $\ell$ we compute the max-plus convolution of $2^\ell$ monotone non-decreasing sequences of length $O(2^a \cdot u \cdot z / 2^\ell)$ and with entries bounded by $O(2^b \cdot u \cdot z / 2^\ell)$. Hence, to total merging step takes time
$$
\sum_{\ell = 0}^{\lceil \log z \rceil} T\left(2^a \cdot u \cdot \frac{z}{2^\ell}, 2^b \cdot u \cdot \frac{z}{2^\ell}\right) \cdot 2^\ell 
\leq O \left( \sum_{\ell = 0}^{\lceil \log z \rceil} T\left(2^a \cdot u \cdot z, 2^b \cdot u \cdot z\right)  \right)
\leq \Ot\left(T(2^a \cdot z, 2^b \cdot z)\right)
$$
where we use again $u = \Ot(1)$ and the niceness assumption on $T(n, M)$.
Finally, by the definition of $z$, the above running time is $\Ot(T(t, v))$. 

We merge the arrays $\mc P_{G_{(a, b)}}[0 \dots t, 0 \dots, v]$ by taking their max-plus convolution and obtain $\mc P_{\mc I}[0 \dots t ; 0 \dots v]$. Since the number of groups is $O(\log t \log v)$, and since each computed sequence has length $t$ with entries bounded by $v$, the combination step takes time $\Ot(T(t, v))$.

\subsection{Computing \texorpdfstring{$\boldsymbol{\mc W_{\mc I}[0 \dots v ; 0 \dots t]}$}{WI[0 ... v;0 ... t]}}
Note that since min-plus convolution is equivalent to max-plus convolution, the min-plus convolution on two sequences of length $n$ with non-negative integer entries bounded by $M$ can be computed in time $T(n, M)$.
The construction of $\mc W_{\mc I}[0 \dots v ; 0 \dots t]$ is nearly identical to the above describe method for $\mc P_{\mc I}[0 \dots t ; 0 \dots v]$, except for two notable modifications. Naturally, we use min-plus convolutions instead of max-plus convolutions and in the base case we compute subarrays of $\mc W_{\mc J}^u[\cdot]$ instead of $\mc P_{\mc J}^u[\cdot]$, which is defined for any subset $\mc J \subset \mc I$ and $u \in \mathbb N$ such that for any $k \in \mathbb N$
\[
\mc W_{\mc J}^u[k] = \min \{w_{\mc J}(x) \ | \ x \in \{0, 1\}^n, p_{\mc J}(x) \geq k, \sum_{i \in J}x_i \leq u \}
\]
In a nutshell, the algorithm does the following steps:
\begin{enumerate}
    \item Partition the item set $\mc I$ into $O(\log t \log v)$ groups $G_{(a, b)} \subset \mc I$ as defined above. This takes time $O(n)$.
    \item For each group $G := G_{(a, b)}$:
    \begin{enumerate}
        \item Randomly split $G$ into $z := \lceil \min \{t/ 2^a, (v + \pmax)/2^b \}\rceil$ subgroups $G_i \subset G$ for $i \in [z]$. Let $u = O(\log z)$.
        \item For each $G_i$, compute the array $\mc W_{G_i}^u[0 \dots 2^b u ; 0 \dots 2^a u]$ by running the color coding method $O(\log z)$ times and taking the entry-wise minimum. A single color coding method takes time $T(2^b \cdot u^2 ; 2^a \cdot u^2 ) \cdot u^2$. Hence, the computation of $\mc W_{G_i}^u[0 \dots 2^b  u ; 0 \dots 2^a u]$ for every $i \in [z]$ takes time 
        $$
        O(T(2^b \cdot u^2, 2^a \cdot u^2) \cdot u^2 \cdot z) = \Ot(T(2^b \cdot z, 2^a \cdot z)) = \Ot(T(v, t))
        $$
        by definitions of $k$ and $z$, and the niceness assumptions on $T(n, M)$.
        \item Compute the min-plus convolutions of all arrays $\mc W_{G_i}^u[0 \dots 2^a  u ; 0 \dots 2^b u]$ for $i \in [z]$ to obtain $\mc W_{G}[0 \dots v ; 0 \dots t]$. This takes time $\Ot(T(v ; t))$.
    \end{enumerate}
    Compute the min-plus convolution of all arrays $\mc W_{G_{(a, b)}}[0 \dots v ; 0 \dots t]$ to obtain $\mc W_{\mc I}[0 \dots v ; 0 \dots t]$. Since there are $O(\log t \log v)$ many $G_{(a, b)}$ groups, this takes time $\Ot(T(v, t))$. 
\end{enumerate}

%%%%%%%%%%%%%%%%%%%%%%%%%%%%%

\end{document}